\def\BibTeX{{\rm B\kern-.05em{\sc i\kern-.025em b}\kern-.08em
    T\kern-.1667em\lower.7ex\hbox{E}\kern-.125emX}}
\newcommand{\keepcomment}{0} % 1 - Keep comments, 0 - Hide comments
\newcommand{\isextended}{1} % 1 - keep the appendix, 0 - cite the technical report
\algnewcommand{\IfThen}[2]{% \IfThenElse{<if>}{<then>}{<else>}
  \State \algorithmicif\ #1\ \algorithmicthen\ #2}
	\newcommand{\fabio}[1]{{\leavevmode\color{PineGreen}{[FAB: #1]}}}
	\newcommand{\gabriele}[1]{{\leavevmode\color{DarkOrchid}{[GAB: #1]}}}
	\newcommand{\giovanni}[1]{{\leavevmode\color{brown}{[GIO: #1]}}}
	\newcommand{\tareq}[1]{{\leavevmode\color{blue}{[TAR: #1]}}}
	\newcommand{\note}[1]{\noindent\textcolor{gray}{#1}}
	\newcommand{\delete}[1]{\leavevmode\sout{#1}}
    \newcommand{\gc}[1]{\textcolor{DarkOrchid}{#1}}
	\newcommand{\gcdelete}[1]{\textcolor{DarkOrchid}{\leavevmode\sout{#1}}}
	\newcommand{\gndelete}[1]{\textcolor{brown}{\leavevmode\sout{#1}}}
    \newcommand{\stkout}[1]{\ifmmode\text{\sout{\ensuremath{#1}}}\else\sout{#1}\fi}
    \newcommand\rev[3]{\textcolor{red}{\begin{scriptsize}{#1}\end{scriptsize}\stkout{#2}}\textcolor{blue}{#3}}
    \newcommand{\fabio}[1]{\leavevmode\ignorespaces\unskip}
	\newcommand{\gabriele}[1]{\leavevmode\ignorespaces\unskip}
	\newcommand{\giovanni}[1]{\leavevmode\ignorespaces\unskip}
	\newcommand{\tareq}[1]{\leavevmode\ignorespaces\unskip}
	\newcommand{\fda}[1]{\leavevmode\ignorespaces\unskip}
    \newcommand{\gc}[1]{\ignorespaces#1\ignorespaces\unskip}
    \newcommand{\note}[1]{\leavevmode\ignorespaces\unskip}
	\newcommand{\delete}[1]{\leavevmode\ignorespaces\unskip}
	\newcommand{\gcdelete}[1]{\leavevmode\ignorespaces\unskip}
	\newcommand{\gndelete}[1]{\leavevmode\ignorespaces\unskip}
	\newcommand\rev[3]{\ignorespaces#3\ignorespaces\unskip}
\newtheorem{theorem}{Theorem}[section]
\newtheorem{proposition}{Proposition}[theorem]
\newtheorem{lemma}[theorem]{Lemma}
\DeclareMathOperator*{\argmax}{\arg\max}
\DeclareMathOperator*{\argmin}{\arg\min}
\algnewcommand{\IIf}[1]{\State\algorithmicif\ #1\ \algorithmicthen}
\algnewcommand{\ElseIIf}[1]{\State\algorithmicelse\ }
\algnewcommand{\EndIIf}{\algorithmicend\ \algorithmicif\ }
\algnewcommand{\EndIIff}{\algorithmicend\ \algorithmicif\ }
\newcommand{\advSet}[0]{\mathcal{A}}
\renewcommand{\vec}[1]{\pmb{{#1}}}
\newcommand{\OnlineGreedy}[0]{\textsc{OLAG}}
\newcommand{\SG}[0]{\textsc{SG}}
\newcommand{\consSet}{\mathcal{X}}
\newcommand{\alloc}[0]{x}
\newcommand{\allocVec}[0]{\vec{\alloc}}
\newcommand{\allocVecFrac}[0]{\vec{y}}
\newcommand{\allocFrac}[0]{y}
\newcommand{\convexhull}[1]{\mathrm{conv}\left(#1\right)}
\newcommand{\auxalloc}[0]{z}
\newcommand{\Auxalloc}[0]{Z}
\newcommand{\repo}[0]{\omega}
\newcommand{\auxload}[0]{\lambda}
\newcommand{\repoVec}[0]{\vec{\repo}}
\newcommand{\requestPath}{p}
\newcommand{\requestPathVec}{\vec{\requestPath}}
\newcommand{\requestBatch}[0]{r}
\newcommand{\requestBatchVec}[0]{\vec{\requestBatch}}
\newcommand{\requestWithPath}{\rho}
\newcommand{\request}{\rho}
\newcommand{\edges}{\mathcal{E}}
\newcommand{\vertices}{\mathcal{V}}
\newcommand{\weight}{w}
\newcommand{\taskcatalogCard}{N}
\newcommand{\taskcatalog}{\mathcal{\taskcatalogCard}}
\newcommand{\qualityset}{\mathcal{Q}}
\newcommand{\timehorizon}{T}
\newcommand{\requestSet}{\mathcal{R}}
\newcommand{\capacity}{b}
\newcommand{\loadmodel}{L}
\newcommand{\load}{l}
\newcommand{\loadVec}{\vec{l}}
\newcommand{\supp}[1]{\mathrm{supp} (#1)}
\newcommand{\smallCost}{\gamma}
\newcommand{\systemCost}{C}
\newcommand{\systemGain}{G}
\newcommand{\modelSet}{\mathcal{M}}
\newcommand{\reals}{\mathbb{R}}
\newcommand{\naturals}{\mathbb{N}}
\newcommand{\modelsNo}{K}
\newcommand{\subgrad}{g}
\newcommand{\subgradVec}{\vec{\subgrad}}
\DeclarePairedDelimiter{\floor}{\lfloor}{\rfloor}
\newcommand{\DepRound}{\textsc{DepRound}}
\newcommand{\AlgoName}{\textsc{INFIDA}}
\newcommand{\AlgoNameOffline}{$\textsc{INFIDA}_{\textsc{Offline}}$}
\newcommand{\AlgoFullName}{INFerence Intelligent Distributed Allocation}
\newcommand{\allocSet}{\mathcal{X}}
\newcommand{\allocSetFrac}{\mathcal{Y}}
\newcommand{\El}{\Lambda}
\newcommand{\loadSet}{\mathcal{L}}
\newcommand{\requestBatchSet}{\mathcal{B}}
\newcommand{\BigO}[1]{\mathcal{O} \left(#1\right)}
\newcommand{\Simplify}{\textsc{Simplify}}
\newcommand\vGroup[2]{\underset{#2}{\underbrace{#1} } }
\newcolumntype{C}[1]{>{\raggedright\arraybackslash}p{#1}}
\newcommand{\newton}[1]{\textcolor{black}{#1}}
\newcommand{\set}[1]{\left\{#1\right\}}
\newcommand{\parentheses}[1]{\left(#1\right)}
\newcommand{\card}[1]{|#1|}
\DeclareRobustCommand{\change}{%
  \@bsphack
  \leavevmode
  \normalcolor
  \@esphack
}
\DeclareRobustCommand{\stopchange}{%
  \@bsphack
  \normalcolor
  \@esphack
}
\begin{document}

\title{Towards Inference Delivery Networks: \\ Distributing Machine Learning \\with Optimality Guarantees}
\date{}

\author{ 
	\IEEEauthorblockN{
		Tareq {Si Salem}\IEEEauthorrefmark{1},
		Gabriele Castellano\IEEEauthorrefmark{1}\IEEEauthorrefmark{2},
		Giovanni Neglia\IEEEauthorrefmark{1},
		Fabio Pianese\IEEEauthorrefmark{2},
		Andrea Araldo\IEEEauthorrefmark{3}
		\\
		\IEEEauthorblockA{
			\IEEEauthorrefmark{1}Inria, Universit\'e C\^ote d'Azur, France, \{tareq.si-salem, gabriele.castellano, giovanni.neglia\}@inria.fr,
		}\\
		\IEEEauthorblockA{
			\IEEEauthorrefmark{2}Nokia Bell Labs, France, \{fabio.pianese, gabriele.castellano.ext\}@nokia.com,
		}\\
		\IEEEauthorblockA{
			\IEEEauthorrefmark{3}Samovar, T\'el\'ecom SudParis, Institut Polytechnique de Paris, France, andrea.araldo@telecom-sudparis.eu
		}
		%Email: \IEEEauthorrefmark{1}\{tareq.si-salem, gabriele.castellano, giovanni.neglia\}@inria.fr,\\
		%\IEEEauthorrefmark{2}fabio.pianese@nokia.com,
		%\IEEEauthorrefmark{3}andrea.araldo@telecom-sudparis.eu
	}
}

\maketitle
\bstctlcite{IEEEexample:BSTcontrol}

\ifnum\isextended=1
% page numbers
\thispagestyle{plain}
\pagestyle{plain}
\fi

\begin{abstract}
An increasing number of applications rely on complex inference tasks that are based on machine learning (ML). 
Currently, there are two options to run such tasks: either they are served directly by the end device (e.g., smartphones, IoT equipment, smart vehicles), or offloaded to a remote cloud.
Both options may be unsatisfactory for many applications: local models may have inadequate accuracy, while the cloud may fail to meet delay constraints.
In this paper, we present the novel idea of \emph{inference delivery networks} (IDNs), networks of computing nodes that coordinate to satisfy ML inference requests achieving the best trade-off between latency and accuracy. IDNs bridge the dichotomy between device and cloud execution by integrating inference delivery at the various tiers of the infrastructure continuum (access, edge, regional data center, cloud).
We propose a distributed dynamic policy for ML model allocation in an IDN by which each node dynamically updates its local set of inference models based on requests observed during the recent past plus limited information exchange with its neighboring nodes. Our policy offers strong performance guarantees in an adversarial setting and shows improvements over greedy heuristics with similar complexity in
realistic scenarios.
\end{abstract}

% Note that keywords are not normally used for peerreview papers.
%\begin{IEEEkeywords}

%\end{IEEEkeywords}

% For peer review papers, you can put extra information on the cover
% page as needed:
% \ifCLASSOPTIONpeerreview
% \begin{center} \bfseries EDICS Category: 3-BBND \end{center}
% \fi
%
% For peerreview papers, this IEEEtran command inserts a page break and
% creates the second title. It will be ignored for other modes.
% \IEEEpeerreviewmaketitle

% \setlength{\textfloatsep}{0pt}

\section{Introduction}
\label{sec:introduction}

Machine learning (ML) models are often trained to perform inference, that is to elaborate predictions based on input data. ML model training is a computationally and I/O intensive operation and its streamlining is the object of much research effort. Although inference does not involve complex iterative algorithms and is therefore generally assumed to be easy, it also presents fundamental challenges that are likely to become dominant as ML adoption increases~\cite{stoica2017berkeley}. In a future where AI systems are ubiquitously deployed and need to make timely and safe decisions in unpredictable environments, inference requests will have to be served in real-time and the aggregate rate of predictions needed to support a pervasive ecosystem of sensing devices will become overwhelming. 

Today, two deployment options for ML models are common: inferences can be served by the end devices (smartphones, IoT equipment, smart vehicles, etc.), where only simple models can run, or by a remote cloud infrastructure, where powerful ``machine learning as a service'' (MLaaS) solutions 
%by the major cloud providers 
rely on sophisticated models and provide  inferences at extremely high throughput.

% Currently, there are two main deployment options for ML inferences: either they are served directly by the end device, e.g., smartphones, IoT equipment, smart vehicles, where only simple models can run, or by the remote cloud, where the big cloud players---Amazon, Microsoft, and Google---have all started pushing their ``machine learning as a service'' (MLaaS) solutions.
However, there exist applications for which both options may be unsuitable: local models may have inadequate accuracy, while the cloud may fail to meet delay constraints.
As an example, popular applications such as recommendation systems, voice assistants, and ad-targeting, need to serve predictions from ML models in less than 200~ms. Future wireless services, such as connected and autonomous cars, industrial robotics, mobile gaming, augmented/virtual reality, have even stricter latency requirements, often below 10~ms and in the order of 1~ms for what is known as the tactile Internet~\cite{simsek16}.
%With the advent of Edge Computing, computational resources (memory, storage, CPUs, GPUs) are also deployed at the edge of the networks (at base stations, access points, or ad-hoc servers), but those resources will still be very limited in comparison to the cloud and need to be wisely used. In conclusion, inference will require complex resource orchestration across users' devices, edge computing servers, and the cloud.
In enabling such strict latency requirements, the advent of Edge Computing plays a key role, as it deployes computational resources at the edge of the network (base stations, access points, ad-hoc servers). However, edge resources have limited capacity in comparison to the cloud and need to be wisely used. Therefore, integrating ML inference in the continuum between end devices and the cloud---passing through edge servers and regional micro data-centers---will require complex resource orchestration.

%We believe then that it is fundamental to study how to integrate ML inference in the continuum between the user device and the cloud---passing through edge servers and regional micro data-centers---to achieve the best accuracy/latency/resource-utilization trade-offs adapted to the requirements of the specific application. 
%In fact, inference accuracy and, in general, resource efficiency increase toward the cloud (larger ML models can be placed with resource economies thanks to scale effects) but latency and communication-resources usage increase as well. 
We believe that, to allocate resources properly, it will be crucial to study the trade-offs between accuracy, latency and resource-utilization, adapted to the requirements of the specific application. In fact, inference accuracy and, in general, resource efficiency increase toward the cloud, but so does communication latency.
In this paper, 
%{In our previous work~\cite{salem2021towards}} 
we present the novel idea of \emph{inference delivery networks} (IDN): networks of computing nodes that coordinate to satisfy inference requests achieving the best trade-off. 
An IDN may be deployed directly by the ML application provider, or by new IDN operators that offer their service to different ML applications, similarly to what happens for content delivery networks. 
The same inference task can be served by a set of heterogeneous models featuring diverse performance and resource requirements (e.g., different model architectures~\cite{howard17mobilenets}, multiple downsized versions of the same pre-trained model~\cite{deng20}, different configurations and execution setups).
Therefore, we study the novel problem of how to deploy the available ML models on the available IDN nodes, where a deployment strategy consists in two coupled decisions: \textit{(i)}~where to place models for serving a certain task and \textit{(ii)}~how to select their size/complexity among the available alternatives.
%The strategy must jointly \textit{(i)} meet models and nodes capacity and \textit{(ii)} optimize the latency/accuracy trade-off, while continuously adapting to changes in requests load and tasks popularity.
% ---- GABRIELE: PROPOSED ALTERNATIVE PIECE [END]

%In this paper, after defining a specific optimization problem and characterizing its complexity, we introduce \AlgoName{} (\AlgoFullName{}), a distributed dynamic allocation policy.\footnote{{A first version of \AlgoName{} was previously presented in~\cite{salem2021towards}.}}
In this paper, we first define a specific optimization problem for ML model allocation in IDNs.
%; {compared to our previous work \cite{salem2021towards}, we here improve the description of the system model with a new formalization that is both more intuitive and rigorous. 
We characterize the complexity of such problem and then introduce \AlgoName{} (\AlgoFullName{}), a distributed dynamic allocation policy.
Following this policy, each IDN node periodically updates its local allocation of inference models on the basis of the requests observed during the recent past and limited information exchange with its neighbors. 
The policy offers strong performance guarantees in an adversarial setting~\cite{hazan2016introduction}, that is a worst case scenario where the environment %\gcdelete{(requests and ML model capacities)} 
evolves in the most unfavorable way. 
Numerical experiments in realistic settings show that our policy outperforms heuristics with similar complexity. Our contributions are as follows:
%obtained adapting dynamic service caching policies.

%In summary, our contributions are as follows:
\begin{enumerate}[{(1)}]
    \item We present the novel idea of inference delivery networks {(IDNs)}.
    \item We frame the allocation of ML models in IDNs as an (NP-hard) optimization problem that captures the trade-off between latency and accuracy, \change and study how this problem diverges from settings considered in previous works %known settings 
    (Sec.~\ref{sec:design}).\stopchange
    \item We propose \AlgoName, a distributed and dynamic allocation algorithm for IDNs (Sec.~\ref{sec:algorithm}), and we show it provides strong guarantees in the adversarial setting, \change providing novel theoretical results in approximating budget-additive (submodular) set functions (Sec.~\ref{sec:guarantees}).\stopchange
    \item We evaluate \AlgoName{} in a realistic {simulation} scenario and compare its performance {both with an offline greedy heuristic and with its online variant under different topologies and} trade-off settings (Sec.~\ref{sec:validation}).
\end{enumerate}
%The paper is organized as follows. In Sec.~\ref{sec:related-work}) we discuss related work and other relevant background with respect to IDNs. We formalize the problem of allocating ML model in IDNs in Sec.~\ref{sec:design}. We introduce our \AlgoName{} online heuristic in (Sec.~\ref{sec:algorithm}) and discuss its theoretical guarantees in Sec.~\ref{sec:guarantees}. We show our experimental results in Sec.~\ref{sec:validation}.

%!TEX root = ../paper.tex
\section{Related Work}
\label{sec:related-work}

The problem of machine learning is often reduced to the training task, i.e., producing statistical models that can map input data to certain predictions. A considerable amount of existing works addresses the problem of model training: production systems such as Hadoop~\cite{babu2013massively} and Spark~\cite{zaharia2012resilient} provide scalable platforms for analyzing large amount of data on centralized systems, and even the problem of distributing the training task over the Internet has been largely addressed recently by many works on federated learning~\cite{fed1,fed2, konevcny2016federateda,konevcny2016federatedb, wu2020accelerating,neglia2019role}. However,  there is surprisingly less research on how to manage the deployment of ML models once they have been trained (inference provisioning).

%Inference provisioning is a well studied problem in the context of centralized systems.
Most of the existing solutions on inference provisioning (e.g., Tensorflow Serving~\cite{olston2017tensorflow}, Azure ML~\cite{chappell2015introducing}, and Cloud ML~\cite{enginegoogle}) address the scenario where inference queries are served by a data center. Recent works \cite{crankshaw2017clipper, mao2019learning, romero2019infaas, crankshaw2020inferline} propose improvements on  performance and usability of such cloud inference systems. Clipper~\cite{crankshaw2017clipper} provides a generalization of TensorFlow Serving~\cite{olston2017tensorflow} to enable the usage of different ML frameworks, such as Apache Spark MLLib~\cite{meng2016mllib}, Scikit-Learn~\cite{jia2014caffe}, and 
Caffe~\cite{pedregosa2011scikit}. The auhtors of~\cite{mao2019learning} propose a reinforcement learning scheduler to improve the system throughput. INFaaS~\cite{romero2019infaas} provides a real-time scheduling of incoming queries on available model variants, and scales deployed models based on load thresholds. Last, InferLine~\cite{crankshaw2020inferline} extends Clipper to minimize the end-to-end latency of a processing pipeline, both periodically adjusting the models allocation and constantly monitoring and handling unexpected query spikes; the solution can be applied to any inference serving system that features a centralized queue of queries.
All these solutions address the problem of inference provisioning in the scenario where the requests are served within a data center and are not suitable for a geographically distributed infrastructure where resources are grouped in small clusters and network latency is crucial (e.g., Edge Computing). For instance, none of the previous works consider the network delay between different compute nodes, it being negligible in a data center.

{For what concerns inference provisioning in constrained environments, fewer works exist.
%A related set of works attempts 
Some solutions attempt} to adapt inference to the capabilities of mobile hardware platforms through the principle of model splitting, a technique that distributes a ML model by partitioning its execution across multiple discrete computing units. Model splitting was applied to accommodate the hardware constraints in multi-processor mobile devices~\cite{deepX}, to share a workload among mobile devices attached at the same network edge~\cite{Fernando2019},
%,Zhao18}, 
and to partially offload inferences to a remote cloud infrastructure~\cite{kang2017neurosurgeon}, possibly coupled with early exit strategies~\cite{Branchynet} and conditional hierarchical distributed deployment~\cite{teerapittayanon2017distributed}. Model splitting is orthogonal to our concerns and could be accounted for in an enhanced IDN scheme.

There has been some work on ML model placement at the edge in the framework of what is called ``AI on Edge''~\cite{Deng2020}, but it considers a single intermediate tier between the edge device and the cloud, while we study general networks with nodes in the entire cloud-to-the-edge continuum. Our dynamic placement \AlgoName{} algorithm could be applied also in this more particular setting, for example in the MODI platform~\cite{Ogden2018}. The work closest to ours is~\cite{LaiJiao2020}, which proposes an online learning policy, with the premise of load balancing over a pool of edge devices while maximizing the overall accuracy. \AlgoName{} has more general applicability, as we do not make any assumption on the network topology and also employ a more flexible cost model that takes into account network delay.
{Another related work in this framework is} VideoEdge~\cite{hung2018videoedge}, which studies how to split the analytics pipeline across different computational clusters to maximize the average inference accuracy. Beside the focus on the  specific video application, the paper does not propose any dynamic allocation placement algorithm.

Even if the problem of inference provisioning is currently overlooked in the context of distributed systems, there exists a vast literature on {the problem of} content placement~\cite{Qiu2015} where objects can be stored (cached) into different nodes in order to reduce the operational cost of content delivery. \newton{Content placement has been extended to the case of \emph{service caching} (or placement), where an entire service can be offloaded onto nodes co-located with base-stations or mobile-micro clouds, engaging not only storage but also computational resources and energy~\cite{xu18,Leung2017}.  The similarities between this problem and inference provisioning inspired us in the design of Inference Delivery Networks. However, the two problems feature  crucial differences. First, in a content delivery network a request for a given item may only be served by a server storing that specific item. Whereas, in IDNs, several models can provide an answer but the accuracy of the answer can be different~\cite{Ben-Ameur2021}. 
Second, 
%First, 
a key property of content placement is that the service cost always increases together with the path length, i.e., the distance between the request source and the node serving the request. 
This is not the case for inference delivery networks, as 1)~upstream models may be more accurate and 2)~the same model at may feature different processing delays based on the serving node properties. 
%This is not the case for inference delivery networks, as models may feature different processing delays based on the serving node properties.
This leads to a more complex cost function, as the first node receiving the request may not be the optimal one to serve it. 
Note that this difference was crucial in the design of our algorithm (see Fig.~\ref{fig:synch}). %Second, services in content placement  provide responses on a request-by-request basis. Whereas, in IDNs, several models can provide an answer but the accuracy of the answer can be different~\cite{Ben-Ameur2021}. 
Finally, multiple requests can simultaneously be processed by a given model, leading to additional considerations about requests load and serving capacities.}

A similar trade-off between resource usage and perceived quality typically emerges in the context of video caching~\cite{JoongheonKim2018,ye2017quality,zhan2017content,araldo2016representation,SongGuo2020,poularakis2014video,poularakis2016caching}, where the same video can be cached into multiple network nodes at different qualities (or resolutions): the operator optimizes the user experience by jointly deciding the placement of videos and their quality.
These works either maximize the video quality perceived by the user~\cite{JoongheonKim2018,araldo2016representation,SongGuo2020}, minimize the download time~\cite{zhan2017content,poularakis2016caching} and the backhaul traffic~\cite{ye2017quality}, or minimize a combined cost~\cite{poularakis2014video}. Although some of the models in these papers may be adapted to inference provisioning in IDNs, these works in general study static optimization problems under a known request process and consider simple network topologies: a single cache~\cite{ye2017quality,zhan2017content}, a pool of parallel caches~\cite{JoongheonKim2018,poularakis2016caching}, bipartite networks~\cite{poularakis2014video,SongGuo2020}.
The only exception is~\cite{araldo2016representation}, which considers an arbitrary topology and provides some online heuristics, but ignores the service latency, which is of paramount importance when placing interactive ML models (e.g., for applications like augmented reality or autonomous driving).
Instead, we propose a dynamic policy that jointly optimizes inference quality and latency and provides
%is capable to adapt to varying requests in every time slot. We also prove 
strong performance guarantees without requiring any knowledge about the request process
%, which are meant to hold 
 thanks to our adversarial setting (Sec.~\ref{sec:guarantees}).

\newton{ Adversarial analysis is typically studied through the lens of online convex optimization (OCO)~\cite{hazan2016introduction}. OCO models can be tackled with well-understood learning algorithms~\cite{hazan2016introduction,ShalevOnlineLearning, mcmahan2017survey}. However, the problem of optimizing the allocation of ML models in IDNs diverges from the template of OCO. In particular, the decision set and the cost functions  are non-convex because the allocation decisions are not continuous; moreover, as we show in Appendix \ref{appendix:np_hard}, computing the optimal allocation is NP-hard contrarily to the OCO setting. In our work, we generalize findings from~\cite{stratis, NEURIPS2021_2387337b,shanmugam2013femtocaching}, and provide novel results to approximate budget-additive (submodular) set functions~\cite{budget_additive}, which are of independent interest beyond this work (e.g., for online advertising~\cite{TCS-057,sub_o2}, market equilibrium~\cite{sub-m1,sub_m2}). }

Finally, ML model allocation in an IDN can also be considered as a particular instance of \emph{similarity caching}~\cite{garetto20infocom}, a general model where items and requests can be thought as embedded in a metric space: edge nodes can store a set of items, and the distance between a request and an item determines the quality of the matching between the two. 
%Similarity caching was applied to a number of applications including content-based image retrieval~\cite{falchi2008metric}, contextual advertising~\cite{pandey2009nearest}, object recognition~\cite{drolia2017cachier,drolia2017precog,guo2018foggycache,venugopal2018shadow}, and recommender systems~\cite{Sermpezis18,costantini20}.
Similarity caching was applied to a number of applications including content-based image retrieval~\cite{falchi2008metric}, 
contextual advertising~\cite{pandey2009nearest}, 
 object recognition~\cite{drolia2017cachier},
 and recommender systems~\cite{Sermpezis18}.
To the best of our knowledge, the literature on similarity caching has restricted itself to \textit{(i)}~a single cache (with the exception of~\cite{Sermpezis18, zhou20,garetto21arxiv}), and \textit{(ii)}~homogeneous items with identical resource requirements. %, and \textit{(iii)} a serving model without overloading. %Moreover, in similarity caching a single item can serve any number of requests per second, while in IDNs models have a bounded serving capacity.
A consequence is that in our setting similarity caching policies would only allocate models based on their accuracy, ignoring the trade-offs imposed by their resource requirements.
{Moreover, the literature on similarity caching ignores system throughput constraints,
%and assumes that the allocation of a single item is enough for serving an arbitrarily large number of requests, 
while we explicitly take into account that 
%in IDNs 
each model can only serve a bounded number of requests per second, according to its capacity.}

%!TEX root = ../paper.tex

\section{Inference System Design}
\label{sec:design}
% (illustrated in Fig. \ref{fig:system})
We consider a network of compute nodes, each capable of hosting some pre-trained ML models depending on its capabilities. Such ML models are used to serve inference requests for different classification or regression \emph{tasks}.\footnote{
    We are using the term task according to its meaning in the ML community, e.g., a task could be to detect objects in an image, to predict its future position, to recognize vocal commands.
} 
{As shown in Fig.~\ref{fig:system},} requests are generated by end devices and routed over given serving paths (e.g., from edge to cloud nodes). The goal of the system is to optimize the allocation of ML models across the network so that the aggregate serving cost is minimized. Our system model is detailed below, \rev{ts}{}{and the notation
used across the paper is summarized in Table \ref{tab:notations_definitions}.}

\begin{figure}[t!]
	\centering
	\includegraphics[clip= true, width= 0.8\columnwidth, trim= 0.0in 0.0in 0in 0.0in]{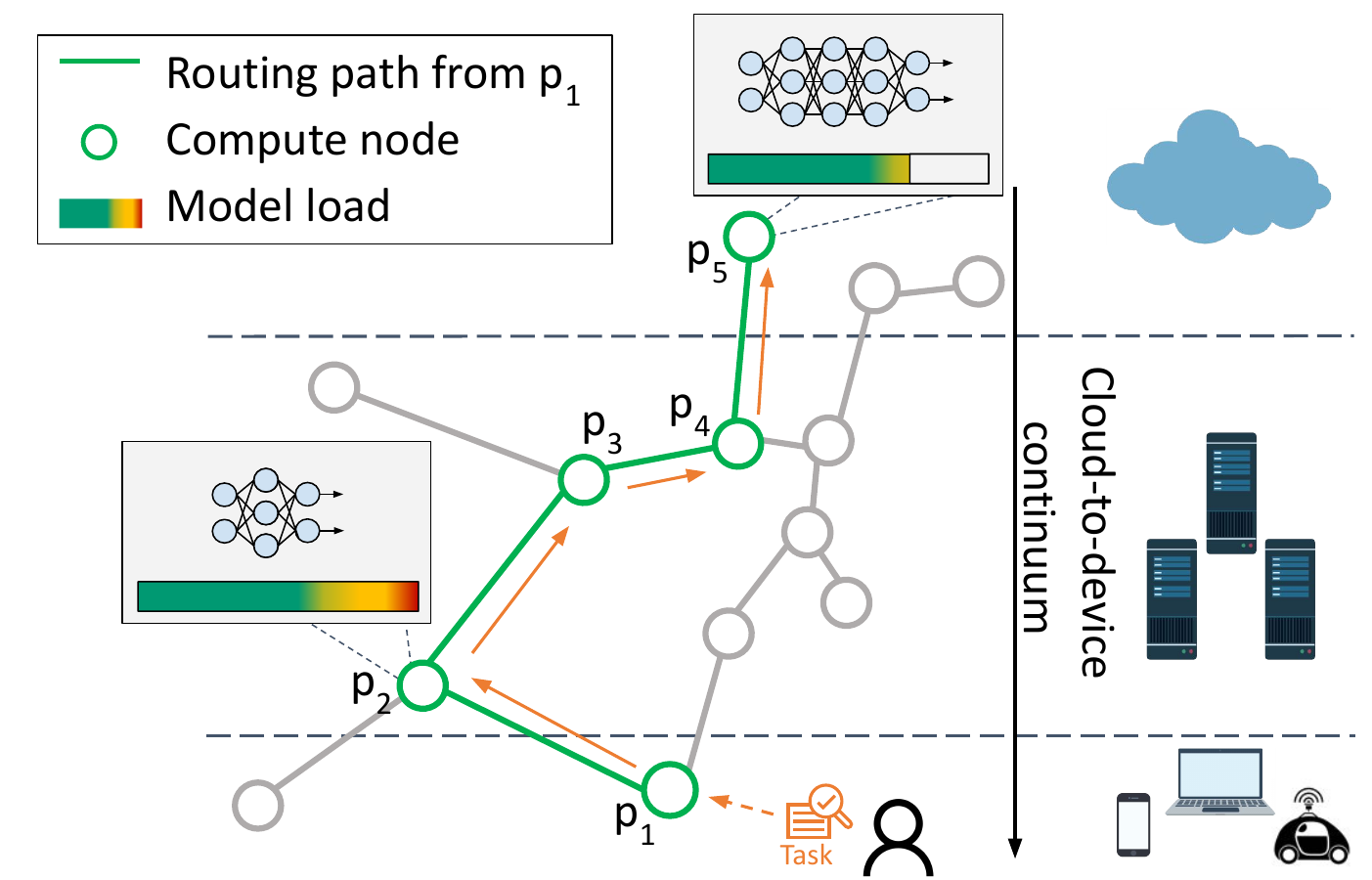}
	\caption{{System overview: a network of compute nodes serves inference requests along predefined routing paths. A repository node at the end of each path ensures that requests are satisfied even when there are no suitable models on intermediate nodes. }}
	\label{fig:system}
	% Fig source at: https://docs.google.com/drawings/d/1J4W7CDZdyTmcWLo1YsMgEdp4eZBqqZxDlGUA0LKB-r8/edit
\end{figure}

%%%%%%%%%%%%%%%%%%%%%%%%%%%%%%%%%%%%%%%%%%%%
\subsection{Compute Nodes and Models}
\label{subsec:nodes-models}
%%%%%%%%%%%%%%%%%%%%%%%%%%%%%%%%%%%%%%%%%%%%
\begin{figure}[t!]
    \centering
    \subcaptionbox{Keras pre-trained models}{\includegraphics[width = .5\linewidth]{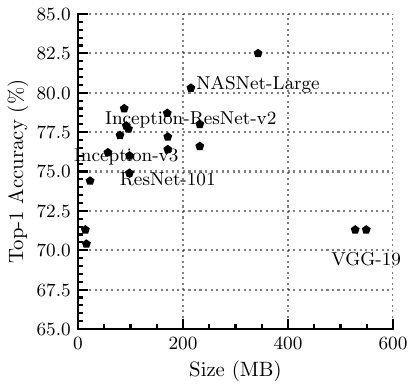}}
    \subcaptionbox{Pytorch pre-trained models}{\includegraphics[width = .485\linewidth]{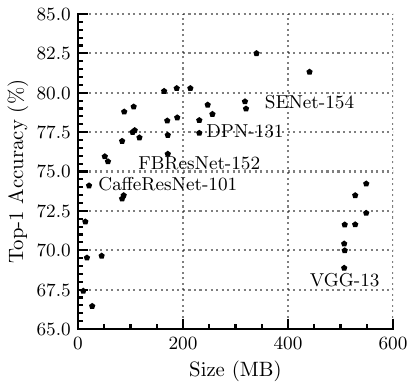}}
    \caption{Example of pre-trained model catalog for the image classification task. Data from~\cite{blalock2020state}.}
    \label{fig:model_catalog}
\end{figure}
\begin{table}[htbp]

  \caption{{Notation Summary. \label{tab:notations_definitions}  } }
  %\vspace{-0.3em}
  \begin{center}
        \begin{footnotesize}
  \begin{tabular}{|C{.09\textwidth}C{.34\textwidth}|}
\hline
&\textbf{Inference Delivery Networks} \\
    \hline
    $G(\vertices, \edges)$ & Undirected weighted graph, with nodes $\vertices$ and edges $\edges$\\
    $w_{u,v}$ & Weight of edge $(u,v) \in \edges$ \\ 
    $\taskcatalog$ / $\modelSet$ / $\modelSet_i$ &  Tasks / models catalog / models catalog for task $i$ \\
    % $\modelSet$ & Models catalog \\
    $s^v_m$ & Size of model $m \in \modelSet$ on node $v\in\vertices$\\
    $a_m$ & Prediction accuracy of model $m$ \\
    $d^v_m$ & Average inference delay of model $m$ at node $v$\\
    $\capacity^v$ & Allocation budget constraint at node $v$\\
 
    $\alloc^v_m$ & 0-1 indicator variable set to 1 if model m at node $v$ is allocated \\
    $\repo^v_m$ & 0-1 indicator constant set to 1 if model m at node $v$ is a permanent repository model \\
       
 $\repoVec$ / {$\allocVec$}
    & Minimal allocation vector (Sec.~\ref{subsec:load}) / allocation vector  (Sec.~\ref{subsec:nodes-models})
    \\
    $\allocVec^v$ / $\allocVec$ & Allocation vector at node $v$ / global allocation vector\\
    $\vec s^v$ & Vector of model sizes  at node $v$\\
    $\requestPathVec$ & Routing path $\{\requestPath_1, \dots, \requestPath_J\}$ of connected nodes $\requestPath_j \in \vertices$\\
  $\nu(\requestPathVec)$ & Repository node associated to the path $\requestPathVec$\\
    $C^{\requestPath_j}_{\requestPathVec, m}$ & Cost of serving at node $\requestPath_j$ along path $\requestPathVec$ using model $m$\\
    
    $\request$ & Request type $\request = (i,\requestPathVec)$, i.e., the requested task $i$ and the request's routing path $\requestPathVec$\\
    $\requestSet / \requestSet_i$ & Set of all the possible request types / request types for task $i$\\
    $R$ & Total number of request types\\
    $r^t_{\request}$ & Number of times $\request$ is requested during time slot $t$\\
    $\mathrm{load}^{t, v}_m(\rho')$ & Number of type-$\rho$ requests served by model $m$ at node $v$ during the $t$-slot \\
    $\loadmodel^v_{m}$ & Maximum capacity of model $m$ on node $v$\\ 
    $\load^{t,v}_{\request,m}$ & Potential available capacity of $m$ on node $v$ for request type $\request$ at time $t$
    \\
    $\smallCost^k_{\request}$ & $k$-th smallest cost for request type $\request$ along its path{ (Sec.~\ref{subsec:serving}).} 
    \\
    $\vec r_t$ / $\vec l_t$ & Request batch vector / potential available capacities vector\\ 
    $\auxload^{k}_{\request}{(\loadVec_t)}$ / $\auxalloc^{k}_{\request}{(\loadVec_t, \allocVec)}$  & Potential / effective available capacity of the model serving $\request$ with cost $\smallCost^k_{\request}${ (Sec.~\ref{subsec:serving}).}
    \\
    $\Auxalloc^k_\request(\requestBatchVec_t,\loadVec_t, \allocVec)$ & Number of requests of type $\request$ that can be  served by the $k$-th smallest cost models found along path $\requestPathVec$ under allocation $\allocVec$ at time slot $t$~\eqref{eq:sum_of_auxvars}\\
       $\kappa_{\request} (v, m)$ & Rank of model $m$ allocated to node $v$, among all the models that can potentially serve requests of type $\request$
    \\
  {$\modelsNo_{\request}$} &
    {Maximum number of models that request $\rho$ may encounter along its serving path}
    \\
    $\allocSet^v$ / $\allocSet$  & Set of valid integral allocations at node $v$ / at all nodes \\
     $\systemCost$ / $\systemGain $ / $\systemGain_T$ & The overall system cost \eqref{eq:cost-expression} / allocation gain~\eqref{eq:gain} / static allocation gain~\eqref{eq:static_gain}
    \\
    \hline
&\textbf{\AlgoName{}} \\
\hline
    $\Phi^v$ & Weighted negative entropy map $\Phi^v: \mathcal{D}^v \to \reals$ given as $\Phi^v(\allocVecFrac^v) = \sum_{m\in\modelSet}s^v_{m}\allocFrac^v_{m} \log(\allocFrac^v_{m})$\\
    $\Phi$ & Global mirror map $\Phi(\allocVecFrac) = \sum_{v\in\vertices}\Phi^v(\allocVecFrac^v) $\\
    $\allocFrac^v_m$ & Fractional allocation (allocation probability) of model $m$ is at node $v$ \\
    $\allocVecFrac^v$ / $\allocVecFrac$ & Fractional allocation vector at node $v$ / global fractional allocation vector\\
 $\allocSetFrac^v$ / $\allocSetFrac$ & Set of valid fractional allocations at node $v$ / at all nodes\\
    $\Vec{g}_t$ & Subgradient vector of $\systemGain$ over $\allocSetFrac$ at point $\vec y_t$\\
    $g^v_{t,m}$ & Component $(v,m)$ of the subgradient vector $\vec g_t$\\
   $\mathcal{P}_{\allocSetFrac^v \cap \mathcal{D}^v}^{\Phi^v} (\,\cdot\,)$ & Projection operator onto $\allocSetFrac^v \cap \mathcal{D}^v$\\ 
   $\eta$ & learning rate\\
   $B$ & Refresh period \\
   $T$ & Time horizon \\
   $t$ & Time slot $t \in [T]$ \\ 
   $A$ & Regret constant \\
   $L_{\max}$ & Upper bound on model capacities \\
   $\Delta_C$ & Upper bound on maximum serving cost difference \\
   $\psi$ & Regret discount factor equal to $1-\frac{1}{e}$ \\
    \hline
\end{tabular}
\end{footnotesize}
\end{center}
\end{table}
We represent the inference delivery network (IDN) as a weighted graph $G(\vertices, \edges)$, where $\vertices$ is the set of compute nodes, and $\edges$ represents their interconnections. Each node $v\in \vertices$ is capable of serving inference tasks that are requested from anywhere in the network (e.g., from end-users, vehicles, IoT devices). We denote by $\taskcatalog = \{1,2, \dots, |\taskcatalog|\}$ the set of tasks the system can serve (e.g., object detection, speech recognition, classification), and assume that each task $i\in \taskcatalog$ can be served with different quality levels 
%$\qualityset_i \in \{1,2, \dots, |\qualityset_i| \}$ 
(e.g., different accuracy as illustrated in Fig.~\ref{fig:model_catalog}) and different resources' requirements by a set of suitable models $\modelSet_i$.
% {=} 
% \{m^i_1,m^i_2, \dots, m^i_{|\modelSet_i|}\} 
%\subset \mathbb{N}$. 
Each task is served by a separate set of models, i.e., $\mathcal M_i {\cap} \mathcal{M}_{i'}{=}\emptyset, \forall i,i'\in\mathcal{N},i \neq i'$. Catalog $\modelSet_i$ may encompass, for instance, independently trained models or shrunk versions of a high quality model generated through distillation~\cite{Hinton2015,Ravi2018}. We denote by $\mathcal{M} {=} \cup_{i \in \taskcatalog} \mathcal M_i =\{1,2,\dots, |\modelSet|\}$ the catalog of all the available models.
% {=} \cup_{i \in \taskcatalog} \mathcal M_i=\{1,2,\dots, |\modelSet|\} \in \naturals$ the catalog of all the available models. %and by $m = (i,q) \in \modelSet$ the model serving task $i \in \taskcatalog$ with quality $q \in \qualityset_i$. 
% Note that  %\cite{Hinton2015,Han2015a,Ravi2017,Ravi2018}.
 
%\gndelete{, or even multiple setups (e.g., processing unit, machine learning framework, batch size) of the same model architecture.}
%The different models for task $i$ may have been trained independently or generated through distillation from a single high quality model \cite{DISTILLATION}.
%I THINK THE CAPACITY (MAX THROUGHPUT OF EACH MODEL) SHOULD ALREADY BE INTRODUCED HERE. %,each providing different performance. 
Finally, every model of the catalog may provide a different throughput (i.e., number of requests it can serve in a given time period), and therefore, support a different load (we formalize this in Sec.~\ref{subsec:load}).

For each compute node $v \in \vertices$, we denote by
\begin{equation}
\label{eq:decision-variable}
    x^v_{m} \in \{0,1\}, \text{ for } m \in \modelSet,
\end{equation}
the decision variable that indicates if model $m \in \modelSet$ is deployed on node $v$.\footnote{
    %Note that we can allow each node to host multiple copies of the same model to be able to satisfy a larger number of requests. This can be captured by considering two copies of the same model as two distinct models with identical performance and requirements.
    Our formulation allows each node to host multiple copies of the same model to satisfy a larger number of request. For example, two copies of the same model can be represented as two distinct models with identical performance and requirements.
%In practice, we can use multiple decision variables for the same model to support the allocation of additional replicas (horizontal scaling).
} % end-footnote
%
% allocation vector X
%Therefore, $\allocVec = [\alloc^v_{i,q}]_{v \in \vertices, i \in \taskcatalog, q \in \qualityset_i}$ denotes the global allocation decision.
Therefore, $\allocVec^v = [\alloc^v_{m}]_{m \in \modelSet}$ is the allocation vector on node $v$, and $\allocVec = [\allocVec^v]_{v \in \vertices}$ denotes the global allocation decision.
\iffalse
The global allocation decision variable is then
\begin{equation}
    \\allocVec = [\alloc^v_{m}]_{(i,v, q) \in \bigcup_{i \in \taskcatalog} \{i\} \times  \vertices \times   \qualityset^v_{i} }. 
\end{equation}
\fi

% node capacity
We assume that the allocation of ML models at each node is constrained by a single resource dimension, potentially different at each node. A node could be, for instance, 
%the number of GPUs or GPU quotas\footnote{\url{https://cloud.google.com/compute/quotas}} available. 
severely limited by the amount of available GPU memory, another by the maximum throughput in terms of instructions per second.
The limiting resource determines  the \emph{allocation budget} $\capacity^v \in \mathbb{R}_+$ at node $v \in \vertices$. We also denote with $s_{m}^v \in \mathbb{R}_+$ the size of model $m \in \modelSet$, i.e., the consumed amount of the limiting resource at node $v$.\footnote{Note that, even when the limiting resource is the same, say computing, the budget consumed by a model may be different across nodes, as they may have different hardware (e.g., GPUs, CPUs, or TPUs).} 
%{Resources can be defined separately at different nodes, i.e., some computing nodes may have different kind of resources (e.g., GPUs, CPUs, or TPUs)}, therefore the budget consumed by a model can potentially be different across nodes.} 
Therefore, budget constraints are expressed as
\begin{equation}
\label{eq:capacity-constraint}
    \sum_{m \in \modelSet} x^v_{m} s_{m}^v \leq \capacity^v, \forall v \in V. 
\end{equation}

% repository nodes
To every task $i \in \taskcatalog$, we associate a fixed set of \textit{repository nodes} that always run one model capable of serving all the requests for task $i$ (e.g., high-performance models deployed in large data centers). We call these models \textit{repository models} and they are statically allocated.
Repository models ensure requests are satisfied even when the rest of the network is not hosting any additional model. % (there is no intermediate allocation). 

We discern repository models through constants $\repo_{m}^{v} \in \{0,1\}$, each indicating if model $m$ is permanently deployed on node $v$. {We assume that values $\repo_{m}^{v}$ are given as input. } We call the vector $\repoVec = [\repo^v_m]_{(v,m) \in \vertices \times \modelSet}$  the \emph{minimal allocation}.
% \iffalse
% The global repository allocation is given as
% \begin{align}
% \label{eq:repository-capacity}
% 	\repoVec = [\repo^v_{m}]_{(i,v,q) \in \bigcup_{i \in \taskcatalog} \{i\}\times \vertices  \times \qualityset^v_i}.
% \end{align}
% \fi
% We assume that the repository allocation is feasible, i.e., 
% \begin{align}
%     \sum_{i \in \taskcatalog} \sum_{q \in \qualityset^v_{i}}  \repo^v_{i,q} s_{i,q} \leq \capacity^v.
% \end{align}
%Note that if $\repo_{i,q}^{v}=1$, then $x_{i,q}^v=1$.
% $\repoVec$ acts as a binary mask vector for the permanently stored models, and these allocations are set to 1 and they are never changed, i.e.,  for all $(i,q,v) \in \bigcup_{i \in \taskcatalog} \{i\} \times \qualityset_i \times \vertices$ if  $\alpha^v_{i,q} = 1$ we add the constraint $\alloc^v_{i,q} = 1$.
%\begin{align}
%    \alloc^v_{i,q} = 1.
%\end{align}
Note that the presence of repositories introduce the following constraints to the allocation vector:
\begin{align}
    \alloc^v_m \geq \repo^v_m, \forall v \in \vertices, \forall m \in \modelSet.
    \label{eq:repo_ineq}
\end{align}
%Note that the above constraints deactivates the overlapping allocation variables (they are always set to 1), and 
%the allocation vector $\repoVec = [\repo^v_m]_{(v,m) \in \vertices \times \modelSet}$ represents the minimal allocation. 

%Without loss of generality, we assume that a repository node for a given task stores a single model with large capacity (it represents a pool of highest quality models), i.e., for every node $v \in \vertices$ and task $i \in \taskcatalog$ we have
%\begin{align}
%    \sum_{m \in \modelSet_i} \omega^v_m \leq 1.
%\end{align}

The set of possible allocations at node $v\in\vertices$ is determined by the integrality constraints~\eqref{eq:decision-variable}, budget constraints~\eqref{eq:capacity-constraint}, and repository constraints~\eqref{eq:repo_ineq}, i.e., 
\begin{align}
    \allocSet^v \triangleq \left\{ \allocVec^v \in \{0,1\}^\modelSet: \allocVec^v \text{ satisfies Eqs.~\eqref{eq:decision-variable}--\eqref{eq:repo_ineq}}\right \}.
\end{align}
The set of possible global allocations is given as $\allocSet
\triangleq \bigtimes_{v\in\vertices} \allocSet^v  $.

% i.e., the one where models are only deployed in their respective repositories.
% {Without loss of generality, we assume that a repository node for a given task stores a single model with large capacity (it represents a pool of high quality models).} 

%CAN A NODE BE BOTH A REPOSITORY NODE AND A CACHE NODE? IF SO THEN WE CANNOT WRITE SEPARATELY EQUATION (3) AND (2). PERHAPS,  WE CAN SAY THAT FOR THE REPOSITORY $x_{i,q}^v =1$ if $v\in \hat V^i$ AND THESE VARIABLES ARE NOT MODIFIED BY THE ALGORITHM.  

%%%%%%%%%%%%%%%%%%%%%%%%%%%%%%%
\subsection{Inference Requests}
\label{subsec:requests}
%%%%%%%%%%%%%%%%%%%%%%%%%%%%%%%

% The system serves inference requests over the network $G(\vertices, \edges)$. 
% A request is determined by the requested task $i \in \taskcatalog$ and node $v \in \vertices$ where the request is generated (e.g., base station). 
We assume that every node has a predefined routing path towards a suitable repository node for each task $i \in \taskcatalog$. 
% {The routing is therefore predetermined, and our decisions only concern placement of models (i.e., variables $\alloc^v_m$)}
\newton{ Therefore, for a given request for task $i$, the routing path is a set of network nodes towards a repository node able to serve task $i$. Since we assume repository nodes are predefined, the routing path does not depend on the placement decisions (i.e., on the variables $\alloc^v_m$). Hence, a request always follows its predetermined path, but intermediate nodes that host suitable models can serve it directly instead of forwarding it all the way to the repository node. In such cases, the request would traverse just a portion of the path.}
% routing path
A routing path $\requestPathVec$ of length $|\requestPathVec| = J$ is a sequence $\{\requestPath_1, \requestPath_2, \dots, \requestPath_J\}$ of nodes $\requestPath_j \in \vertices$ such that edge $(\requestPath_j, \requestPath_{j+1}) \in \edges$ for every $j \in \{1, 2, \dots, J{-}1\}$. As in \cite{stratis}, we assume that paths are simple, i.e., they do not contain repeated nodes. A request is therefore characterized by the pair $(i, \requestPathVec)$, where $i$ is the task requested and $\requestPathVec$ is the routing path to be traversed. We call the pair $(i, \requestPathVec)$ the \textit{request type}.
% put back if there is space
%\footnote{The request could be equivalently denoted as $(i,v)$, where $v$ is the source node at which the request is generated, but the notation $(i,\requestPathVec)$ is more convenient, as it will be clear in what follows.} 
We denote by $\requestSet$ the set of all possible request types, and by $\requestSet_i$ all possible request types for tasks $i$. When a request {for task $i$} is propagated from node $\requestPath_1$ toward the associated repository node $\nu(\requestPathVec) \triangleq \requestPath_J$, any intermediate node along the path that hosts a suitable model {$m \in \modelSet_i$} can serve it. The actual serving strategy is described in Sec.~\ref{subsec:serving}.

% time slots and batch moved down

%%%%%%%%%%%%%%%%%%%%%%%
\subsection{Cost Model}
\label{subsec:cost}
%%%%%%%%%%%%%%%%%%%%%%%

When serving a request {of type} $\request{=}(i,\requestPathVec) \in \requestSet$ on node $p_j$ using model $m$, the system experiences an \textit{inference cost} that depends on the quality of the model (i.e., on inference inaccuracy) and the inference time.\footnote{\newton{Note that deployed models may need to be re-trained from time to time, and we do not consider the corresponding costs. Moreover, to streamline the presentation, we assume the inference costs to be static over time; nonetheless, one could easily extend our model to the case in which these costs are time-varying.
%Note that models may be periodically re-trained, but we do not  consider the corresponding costs.
%but we do not consider the cost of periodically re-training the different models in the catalog is not captured in the design of our system.
}}
% this wrong, the serving cost does not depend from the network. I renamed it in inference cost
%and on the latency experienced by the user (delay between when the request is issued and when the inference is retrieved) 
%as well as a
Additionally, the system experiences a \textit{network cost}, %which corresponds to the resources consumed along the path. %The service cost depends on the quality of the model, its inference time, the characteristics of node $p_k$. 
%The network cost depends 
due to using the path between $p_1$ and $p_j$. Similarly to previous work~\cite{Mengshoel2017}, we can write the total cost of serving a request as 
\begin{equation}
    C^{\requestPath_j}_{\requestPathVec, m} =    f((p_1, \dots, p_j),m).
    %C^{\requestPath_k}_{\request, m} = f(\request,p_k,m).
\end{equation}
%Our theoretical results hold under this very general cost model, as far as $f(\requestPathVec,m)$ is non-decreasing in $\requestPath$, that is $f(\requestPathVec,m) \le f(\requestPathVec',m)$ if $\requestPathVec$ is a subpath of $\requestPathVec'$. In what follows---for the sake of concreteness---we refer to the following simpler model:
While our theoretical results hold under this very general cost model, in what follows---for the sake of concreteness---we refer to the following simpler model:
%\todoi{It could be changed to an explanation how $f(\cdot)$ looks like in real applications, otherwise  we don't need to have the increasing assumption on $f(\cdot)$.}
%given by the combination of the processing delay introduced by $\model_{i,q}$ and the prediction inaccuracy. Additionally, the system also experiences a \textit{network cost} due to the round-trip latency between $\requestPath_1$ and $v$.
%To capture the network cost, we associate a weight $\weight_{i,j} \in \reals_+$ with each edge $(i,j) \in \edges$, representing the cost of forwarding and serving a request along this edge. The cost of serving request $\request$ at the $k$-th node of the path using model $\model_{i,q}$ is therefore given by:
\begin{equation}
    \label{eq:serving_cost2}    C^{\requestPath_j}_{\requestPathVec, m} =\sum_{j' = 1}^{j-1} \weight_{\requestPath_{j'},\requestPath_{j'+1}} + d^{\requestPath_j}_m + \alpha(1{-}a_m),
\end{equation}
%\todoi{We may need for $m \in \modelSet \setminus \modelSet_i$ to have $b^{p_k}_m = \infty$, otherwise it is confusing.}
where $a_m$ and $d^{\requestPath_j}_m$  are respectively the prediction accuracy (in a scale from 0 to 1) and the {average} inference delay of model $m$ on node $\requestPath_j$. Indeed, the same model may provide different inference delays, depending on the hardware capabilities of the node on which it is deployed, e.g., the type of GPU or TPU~\cite{Brooks2019}. Parameter $\weight_{v,v'} \in \reals_+$ is the (round-trip) latency of edge $(v,v') \in \edges$. Parameter $\alpha$ weights the importance of accuracy w.r.t. the overall latency and can be set depending on the application.
%where the sum of the edge weights represents the round-trip cost, while $a^{\requestPath_k}_{m} \in \reals_+$ models the service cost. We compute $a^{\requestPath_k}_{m}$ as a linear combination between inference time and prediction inaccuracy, using a coefficient $\alpha$ to weight the contribution of the prediction inaccuracy. In general, the service cost can be computed using any arbitrary function. 
%This service cost may also depend on the compute node $p_k$, because of the heterogeneity of the computing capabilities.
Note that seeking cost minimization along a serving path usually leads to a trade-off: while the network cost always increases with $j$, in a typical network the service cost $d^{\requestPath_j}_m + \alpha(1{-}a_m)$ tends to decrease, as farther nodes (e.g., data centers) are better equipped and can run more accurate models { (Fig.~\ref{fig:model_catalog})}. 
\newton{We remark that models' sizes determine which allocations are feasible, but do not affect directly the service costs. As a consequence, even  if we assume different limiting resources on different nodes (e.g., GPU, memory), we do not need to convert amounts of different resources to a common unit (e.g., a monetary cost).}
%\newton{We remark that even if we assume different limiting resources on different nodes (e.g., GPU, memory), this does not play any role in the cost function, since it does not depend on the model size. Hence the system does not need to translate different resource constraints to a common reference frame to minimize the overall cost.}
%%%%%%%%%%%%%%%%%%%%%%%
\subsection{Request Load and Serving Capacity}
\label{subsec:load}
%%%%%%%%%%%%%%%%%%%%%%%

Let us assume that time is split in slots of equal duration. We consider a time horizon equal to $T$ slots. 
%We denote by $\requestBatch^t_\request \in \naturals$ the number of times request $\request \in \requestSet$ is requested during time slot $t$. The whole batch of requests at time slot $t$ is characterized by the request batch vector
At the beginning of a slot $t$, the system receives a batch of requests $\requestBatchVec_t = [r^t_{\request}]_{\request \in \requestSet}$, where $\requestBatch^t_\request \in \naturals \cup \{0\}$ denotes the number of requests of type $\request \in \requestSet$.
\iffalse
\begin{align}
    \requestBatchVec_t = [r^t_{\request}]_{\request \in \requestSet}.
\end{align}
\fi

%\gn{We do not assumptions on the distribution }
Model $m \in \modelSet$ has maximum capacity $\loadmodel^v_{m} \in \naturals$ when deployed at node $v \in \vertices$, i.e., it can serve at most $\loadmodel^v_{m}$ requests during one time slot $t\in [\timehorizon]$, in absence of other requests for other models. We do not make specific assumptions on the time required to serve a request.

We denote by $\load^{t,v}_{\request,m} \in \naturals \cup \{0\}$ the  \emph{potential available capacity}, defined as the maximum number of type-$\request$ requests node $v$ can serve at time $t$ through model $m$, under the current request load~$\requestBatchVec_t$ and allocation vector~$\allocVec_t^v$. Formally, let $\mathrm{load}^{t, v}_m(\rho)$  denote the number of type-$\rho$ requests served by model $m$ at node $v$ during the $t$-slot, then
\begin{align}
l^{t,v}_{\rho, m}\triangleq \min\left\{L^v_m - \sum_{\rho' \in \requestSet \setminus \{\rho\} }\mathrm{load}^{t, v}_m(\rho'),\, r^{t}_{\request}\right\}. 
\label{eq:capacity_min_def}
\end{align}
% $\requestBatch^{t,v}_\rho$ denote the number of type-$\rho$ requests routed through node $v$ and  $\mathrm{load}^{t, v}_m(\rho)$ 

% The potential available capacity is obviously limited by the maximum capacity of the model and by the total number of type-$\rho$ requests routed through node $v$ (denoted by $\requestBatch^{t,v}_\rho$), i.e., 
% % \begin{align}
%     \load^{t,v}_{\request,m} & \le \min\{\loadmodel^v_{m},\requestBatch^{t,v}_\rho\},
%     \label{eq:load_lim}
% \end{align}
The potential available capacity depends on the request arrival order and the scheduling discipline at node $v$.
%, and the distribution of requests for other tasks across the different nodes. 
For instance, suppose that in time slot $t$, requests of two types $\request=(i,\requestPathVec)$ and $\request' = (i, \requestPathVec')$ arrive at node $v$. The arrival order and the node scheduling discipline may determine that many requests of type $\request'$ be served, which would leave a small $\load^{t,v}_{\request,m}$ available for requests of type $\request$. Or the opposite may happen.
% eq:capacity_min_def
It is useful to define the potential available capacity also for models that are not currently deployed at the node, as $\load^{t,v}_{\request,m} \triangleq \min\{\loadmodel^v_{m},\requestBatch^{t}_\rho\}$. The \emph{effective available capacity} is then equal to $\load^{t,v}_{\request,m} x_m^v$.

%During the slot $t$, if model $m$ is deployed at node $v$ at most  requests of type $\rho$ can be served. We call $\load^{v,t}_{\rho,m}$ the \emph{available capacity}.
%{Note that $\load^{t,v}_{\request,m}$ depends on the request arrival order, the scheduling discipline at node $v$, and the distribution of requests for other tasks across the different nodes. For instance, suppose that in time slot $t$, requests of two types $\request$ and $\request'$ arrive at node $v$. The arrival order and the node scheduling discipline may determine that many requests of type $\request'$ be served, which would leave a small $\load^{t,v}_{\request,m}$ available for requests of type $\request$. Or the opposite may happen. }
%{In any case,} $\load^{t,v}_{\request,m} \le \min\{\loadmodel^v_{m}, r^t_\rho\}$ {because the model cannot serve beyond what is requested at time $t$.} 

Our analysis in Sec.~\ref{sec:guarantees} considers a ``pessimistic'' scenario where 
an adversary selects both requests and available capacities for all models but the repository ones.
%both requests and available capacities are selected by an adversary. 
This approach relieves us from the need to model system detailed operations, while our proposed algorithm (Sec.~\ref{sec:algorithm}) benefits from strong guarantees in the adversarial setting. In what follows, we can then consider that values $\load^{t,v}_{\request,m}$ are exogeneously determined.
The vector of potential available capacities at time $t \in [\timehorizon]$ is denoted by
\begin{align}
   \loadVec_t = [\load^{t,v}_{\request,m}]_{(\request, m, v ) \in \bigcup_{i \in \taskcatalog} \requestSet_i \times \modelSet_i \times \vertices}.
\end{align}

As we mentioned in Sec.~\ref{subsec:nodes-models},  any request of type $\rho = (i, \requestPathVec) \in \requestSet$ can always be served by the associated repository model at node $\nu(\requestPathVec)$. 
% We assume that all repository models have capacity $L_{\max}$. 
This requirement can be expressed as follows:
\begin{align}
    % \load^{t,v}_{\request,m} = \min\{r^{t,v}_\rho, L^v_m\} = r^{t,v}_\rho 
    % r^{t, v}_{\rho}\leq \load^{t,v}_{\request,m} \leq r^t_\rho
    % \qquad\mathrm{if}\qquad \repo^{v}_{m}=1
    \sum_{\rho \in \requestSet_i} r^t_\rho \leq \sum_{m \in \modelSet_i} \omega^{\nu(\requestPathVec)}_m L^{\nu(\requestPathVec)}_{m} \gc{, \forall i \in \taskcatalog}.
    \label{eq:repo_feasibility_constraint}
\end{align}
%we assume the repository model can always process a number of requests equal to their maximum  capacity (then $L^v_m \gg r^t_\rho$  and  $\load^{t,v}_{\request,m} = \min\{r^t_\rho, L^v_m\} = r^t_\rho$ if $\repo^{v}_{m}=1$) and require that for every $t \in [T]$ to have
% \begin{align}
%     %\sum_{\requestPathVec \colon \request\in \requestSet}\requestBatch_{t,\request} \leq \sum_{m \in \modelSet_i} \repo^{v}_{m} {\tilde{\loadmodel}^{v}_{m}(\loadVec_t)}, \forall v \in \bar{\vertices}_i.
%   \sum_{m \in \modelSet_i} \repo^{v}_{m} \load^{t,v}_{\request, m}  = \requestBatch^{t}_{\request} , \text{ for all } v \in \requestPathVec \colon \sum_{m \in \modelSet_i}\repo^v_m = 1.%\bar{\vertices}_i.
%   \label{eq:repo_feasibility_constraint}
% \end{align}
 
 Thus, at any time $t \in [T]$ the adversary can select a request batch $\vec r_t$ and potential available capacity $\vec l_t$ from the set
 
 \begin{align}\nonumber
 \mathcal{A} \triangleq \bigg\{&
 (\vec r, \vec l) \in (\naturals\cup\{0\})^\requestSet \times 
 (\naturals \cup \{0\})^{\bigcup_{i \in \taskcatalog} \requestSet_i \times \modelSet_i \times \vertices} :\\
 &\sum_{\rho \in \requestSet_i} r^t_\rho \leq \sum_{m \in \modelSet_i} \omega^{\nu(\requestPathVec)}_m L^{\nu(\requestPathVec)}_{m}, l^v_{\rho, m} \leq \min\{L^v_m, r_\rho\},\nonumber \\
 &\forall i \in \mathcal{N}, v \in \vertices,m \in \modelSet, \rho \in \requestSet
 \bigg\}.    
 \end{align}
%  {eq:load_lim}
 Note the constraint on potential available capacities is looser than the definition in~\eqref{eq:capacity_min_def} corresponding to a more powerful adversary. 
 
%  \begin{align}
%      \requestBatchSet \triangleq \left\{ \vec r \in (\naturals\cup\{0\})^\requestSet: \vec r \text{ satisfies Eqs.~\eqref{eq:load_lim},~\eqref{eq:repo_feasibility_constraint}}\right\}
% \end{align} 
%  and 
%  \begin{align}
%      \loadSet \triangleq  \left\{ \vec l \in (\naturals \cup \{0\})^{\bigcup_{i \in \taskcatalog} \requestSet_i \times \modelSet_i \times \vertices}: \vec l \text{ satisfies Eqs.~\eqref{eq:load_lim},~\eqref{eq:repo_feasibility_constraint}} \right\}.
%  \end{align}
 
%  hile $\requestBatchSet$ and $\loadSet$ are respectively the set of all possible request batches and potential available capacities, i.e., $\requestBatchVec_t \in \requestBatchSet$ and $\loadVec_t \in \loadSet, \forall t \in [\timehorizon]$.

%\todoi{aa: An alternative way to say the same thing could be that we assume that $L_{i,q}^v=\infty$ for any node $v$ being repository for the model $M_{i,q}$. Did I understand correctly?}
%with $v$ being  node $v$ (i.e., $v \in \vertices \colon \sum_{q \in \qualityset_i} {\repo^v_{i,q}} \geq 1$.

%\rev{ts}n {Observe that in this paper we only take into account Quality of Service for machine learning application users, which we represent as a combination of delay and accuracy. We do not consider instead the cost for the network operator, which would tend to minimize bandwidth and computational resource usage. We think such a problem is complementary to ours and deserves investigation in our future work.}{Perhaps we don't need this paragraph anymore, since we perform experiments with the update cost.}

%%%%%%%%%%%%%%%%%%%%%%%%%%
\subsection{Serving Model}
\label{subsec:serving}
%%%%%%%%%%%%%%%%%%%%%%%%%%
\begin{figure}[t!]
    \centering
    \includegraphics[width = .7\linewidth]{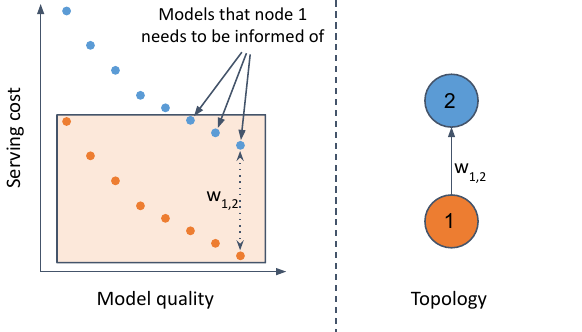}
    \caption{Necessity of partial synchronization in IDN among close-by computing nodes under the cost model in Eq.~\eqref{eq:serving_cost2}.}
    \label{fig:synch}
\end{figure}
Given request $\request{=}(i,\requestPathVec){\in} \requestSet$, let  $\modelsNo_{\request} = |\requestPathVec| |\modelSet_i|$ denote the maximum number of models that may encounter along its serving path~$\requestPathVec$. We order the corresponding costs $\{C^{\requestPath_{j}}_{\requestPathVec, m} , \forall m \in \modelSet_i, \forall \requestPath_{j} \in \requestPathVec \}$ in increasing order and we denote by $\kappa_{\request} (v, m)$ the rank of model $m\in\modelSet_i$ allocated at node $v$ within the order defined above.\footnote{
    Note that we do not consider only  the models deployed in the network, but all the possible node-model pairs.
    %considering all nodes in $\requestPathVec$ and all possible models from $\modelSet_i$.
}
%, i.e., model $m$ at node $v\in\requestPathVec$ has the $\kappa_{\request} (v, m)$-th smallest cost. 
If $v\notin \requestPathVec$ we have $\kappa_{\request} (v, m) = \infty$.

If $\kappa_{\request} (v, m)=k$, then model $m$ at node $v$ has the \mbox{$k$-th} smallest cost to serve request $\rho$. We denote the model service cost, its potential available capacity, and its effective capacity as  $\smallCost^k_{\request}$, $\auxload^{k}_{  \request}(\loadVec_t)$, and $\auxalloc^{k}_{  \request}(\loadVec_t, \allocVec)$, respectively:
% , where %$\modelsNo_{\request} = \sum^{|\requestPathVec|}_{k=1} |\qualityset^{\requestPath_k}_i|$.
% $\modelsNo_{\request} = |\requestPathVec| |\modelSet_i|$.
% For every request $\request \in \requestSet$ and $k \in [\modelsNo_{\request}]$, we denote by $\smallCost^k_{\request}$ the $k$-th smallest cost along the request path. 
%%% Modified on 12/03
%Similarly, we denote by $\auxalloc^k_{ \request}(\loadVec_t, \allocVec)$$\auxalloc^k_{ \request}(\loadVec_t, \allocVec)$ the effective available capacity at time $t$ of the model incurring the \mbox{$k$-th} cost. More precisely, if the $k$-th smallest cost is obtained by model $m$ at node $v \in \requestPathVec$ (i.e., $\smallCost^k_{\request} =    C^{v}_{ \requestPathVec, m}$), then 
%%% New version on 12/03
%Similarly, we define two map functions for the potential and effective available capacity as follows. Assume the $k$-th smallest cost is obtained by model $m$ at node $v \in \requestPathVec$ (i.e., $\smallCost^k_{\request} =    C^{v}_{ \requestPathVec, m}$), then 
\begin{align}
% \label{eq:gamma}
    \smallCost^k_{\request} =  C^{v}_{ \requestPathVec, m},\quad%\\
% \label{eq:lambda}
    \auxload^{k}_{  \request}(\loadVec_t) =  \load^{t,v}_{\request,m},\quad%\\
% \label{eq:z}
    \auxalloc^{k}_{  \request}(\loadVec_t, \allocVec) = \alloc^{v}_{m} \load^{t,v}_{\request,m}.
    \label{eq:several-definitions}
\end{align}

% \begin{align}
% \label{eq:z}
% \auxalloc^{k}_{  \request}(\loadVec_t, \allocVec) = \alloc^{\requestPath_{k'}}_{m} \tilde{\loadmodel}^{\requestPath_{k'}}_{\request,m}(\loadVec_t).
% \end{align}

We assume the IDN serves requests as follows.
%At every time slot $t$, 
%for any $(i, \requestPathVec) \in \supp {\requestBatchVec_t}$ each request is forwarded along $\requestPathVec$ 
%any request $\request \in \supp {\requestBatchVec_t}$, i.e., that has been requested at least once, 
Each request is forwarded along its serving path and served when it encounters a model with the smallest serving cost among those that are not yet saturated, i.e., that may still serve requests. 
%(i.e., $\min\{k \colon \auxalloc^{k}_{  \request} > 0\}$).

Since models do not necessarily provide increasing costs along the path, this serving strategy requires that a node %along the path of a request for task $i$ and storing a model $m \;{\in}\; \modelSet_i$
{that runs a model $m \;{\in}\; \modelSet_i$ and receives a request for task $i$,}
knows whether there are better alternatives for serving task $i$ upstream or not. In the first case, it will forward the request along the path, otherwise it will serve it locally.
We argue that, in a real system, this partial knowledge can be achieved with a limited number of control messages.
%does not correspond to each node having global knowledge about the system is much more limited can be achieved without the necessity of  synchronization by exchanging a limited number of control messages. 
In fact, if node $v=\requestPath_h$ hosts the model with the $k$-th cost for request  $(i,\requestPathVec)$, it only needs information about those models that \textit{(i)} are located upstream on the serving path (i.e., on nodes $\requestPath_l \in \requestPathVec$ with $l > h$), and \textit{(ii)} provide a cost smaller than $\smallCost^k_{\request}$. Since the cost increases with the network latency (as illustrated in Fig.~\ref{fig:synch}), the number of models satisfying these criteria is
%limited and practically smaller than $|\requestPathVec|-h$
small in practice.\footnote{
In realistic settings (Sec.~\ref{sec:validation}), we experienced that each deployed model has at most 6 
%\textcolor{red}{ts: depends on alpha}\gabriele{we can report the value for alpha=1, as it is the one that deploys models at all the tiers.}\gabriele{Actually, for bigger alpha this number even decreases.}
better alternatives on upstream nodes (worst case with $\alpha{=}1$).
} % - end footnote
A node needs to propagate downstream a control message with the information about the 
requests it can serve and the corresponding costs. Nodes forwarding the control message progressively remove the information about the tasks they can serve with a smaller cost, until the control message payload is empty and the message can be dropped. %These messages should be sent whenever the availability to serve additional requests changes.
Every node $v\in\vertices$ generates this control message whenever the available capacity of any of the local models in $v$ changes.
% \todoi{aa: Based on the last sentence, a node $v$ generates control messages not at every request, but only when some models in $v$ change. What happens in the following situation? (i) there is a ``mediocre model'' in node $v_\text{up}$ and a ``good'' model in node $v_\text{down}$. Therefore, $v_\text{down}$ ``censors'' the message from node $v_\text{up}$. (ii)~Then, in node $v_\text{down}$, the good model is replaced by a ``very bad model''. (iii)~In this case, node $v_\text{up}$ does not generate any message, since its local models have not changed. But this is unfortunate: indeed, the mediocre model in node $v_\text{up}$ would be worth using, but node $v_\text{up}$ has no way to manifest itself. How do we solve this problem? \gabriele{$v_\text{down}$ received the information about the existence of a mediocre model on $v_\text{up}$, therefore once it receives requests for that task it will know that there is a better model upstream compared to the local one and it will forward the requests.}}
%\gcdelete{ beyond given thresholds}. 
% the path withabout which 
% it is enough that every node propagates a threshold-based message downstream to notify any sub-optimal models whether it can currently process additional requests or not.

According to the presented serving strategy, the requests load is split among the currently available models giving priority to those that provide the smallest serving costs up to their saturation. In particular, model $m$ with the $k$-th smallest cost will serve some requests of type $\request$ only if the less costly models have not been able to satisfy all of such requests (i.e., if $\sum^{k-1}_{k'=1} \auxalloc^{k'}_{\request}(\loadVec_t, \allocVec) < \requestBatch^{t}_{\request}$). If this is the case, model $m$ will serve with cost $\smallCost_\request^k$ at most  $\auxalloc^k_{\request}(\loadVec_t,\allocVec)$ requests (its effective available capacity) out of the $ \requestBatch^{t}_{\request} - \sum^{k-1}_{k'=1} \auxalloc^{k'}_{\request}(\loadVec_t, \allocVec)$ requests still to be satisfied. The aggregate cost incurred by the system at time slot $t$ is then given by
\begin{align}
\nonumber
	   \systemCost(\requestBatchVec_t, \loadVec_t, \allocVec)\;
	   {=}\hspace{-0.1em}
	   \sum_{\request \in \requestSet} \sum_{k=1}^{\modelsNo_{\request}}
	   \smallCost^k_{\request}
	   &\cdot
	   %\vGroup
	   %{
	   \min\biggl\{\requestBatch^t_{\request}\;\hspace{-0.2em}{-}\hspace{-0.2em}\sum^{k-1}_{k'=1}\hspace{-0.15em}\auxalloc^{k'}_{\request}(\loadVec_t, \allocVec),\hspace{0.1em} 
	   \auxalloc^{k}_{\request}(\loadVec_t,\allocVec)
	   \biggr\} \\
	   &\cdot
	   \mathds{1}_{\left\{\sum^{k-1}_{k'=1} \auxalloc^{k'}_{\request}(\loadVec_t, \allocVec) < \requestBatch^{t}_{\request}\right\}}
	   %}
	   %{\text{Number of requests served by the $k$-th best model}}
	   .
	    \label{eq:cost-expression}
\end{align}
% \todoi{I added the text ``Number of requests served \dots'' in the formula. I also added the $\cdot$, which makes the formula more readable.}
{Note that we introduce the $\min\{\,\cdot\,,\,\cdot\, \}$ operator, since the number of requests served by the $k$-th best model cannot exceed its effective capacity $\auxalloc^{k}_{\request}(\loadVec_t, \allocVec)$. We add the indicator function $\mathds{1}_{\{\,\cdot\,\}}$ to indicate that the $k$-th best model does not serve any requests, in case better models (ranked from $1$ to $k-1$) are able to satisfy all of them.}

% The intuition behind this formula is that, along each serving path, the load of requests is split among the currently available models giving priority to those that provide the smallest costs up to their saturation.
% If the request load exceeds the current allocation on intermediate nodes, the remaining portion is eventually served on the last node of the correspondent path (i.e., the repository node).

%%%%%%%%%%%%%%%%%%%%%%%%%%%%%%%%%%%%%%%%%
\subsection{Allocation Gain and Static Optimal Allocations}
\label{subsec:gain}
%%%%%%%%%%%%%%%%%%%%%%%%%%%%%%%%%%%%%%%%%

We are interested in model allocations {$\allocVec$} that minimize the aggregate cost~\eqref{eq:cost-expression}, or, equivalently, that maximize the \emph{allocation gain} defined as 
\begin{equation}
     \systemGain(\requestBatchVec_t,\loadVec_t, \allocVec) = \systemCost(\requestBatchVec_t, \loadVec_t, \repoVec) -   \systemCost(\requestBatchVec_t ,\loadVec_t, \allocVec).
     \label{eq:gain}
\end{equation}

The first term $\systemCost(\requestBatchVec_t, \loadVec_t, \repoVec)$ on the right hand side is the service cost when only repository models are present in the network. 
%is a constant that does not 
%When there is no allocation on intermediate nodes, the cost is given by $\systemCost(\requestBatchVec_t, \loadVec_t, \repoVec)$.
%
Since intermediate nodes can help serving the requests at a reduced cost, 
%$\systemCost_0(\requestBatchVec_t)$ 
$\systemCost(\requestBatchVec_t, \loadVec_t, \repoVec)$ is an upper bound on the aggregate serving cost, and the allocation gain captures the cost reduction achieved by model allocation $\allocVec$.

The static model allocation problem can then be formulated as finding the model allocation $\allocVec^*$ that maximizes the time-averaged allocation gain  over the time horizon $T$, i.e., 
\begin{align}
   \allocVec_*  = \underset{\vec x \in \allocSet}{\argmax}\left( \systemGain_T (\allocVec) \triangleq\frac{1}{T} \sum^T_{t=1} \systemGain(\requestBatchVec_t, \loadVec_t, \allocVec)\right).
   \label{eq:static_gain}
\end{align}

% can then be formulated as finding the model allocation $\allocVec^*$ that maximizes the sum of the allocation gains over the time horizon $[1,T]$, i.e.,~$\allocVec^* = \argmax \sum_t \systemGain(\requestBatchVec_t,\loadVec_t, \allocVec)$ subject to the integrality constraints~\eqref{eq:decision-variable}, budget constraints~\eqref{eq:capacity-constraint}, and repository constraints~\eqref{eq:repo_ineq}. %\footnote{Note that we are still considering the load $\loadVec_t$ to be hexogeneously determined.}
This is a submodular maximization problem under multiple knapsack constraints.
In our context, this intuitively means that the problem is characterized by a diminishing return property: adding a model $m$ to any node $v$ gives us a marginal gain that depends on the current allocation: the more the models already deployed in the current allocation, the less the marginal gain we get by the new $m$. We 
% will
prove submodularity in Lemma~\ref{lemma:submodularT} in Appendix~\ref{appendix:submodularity}.  \newton{In Appendix~\ref{appendix:np_hard} Theorem~\ref{theorem:np_hard}, we prove that this problem is NP-hard even under cardinality constraints (i.e., the models have equal size) and a two nodes scenario. We demonstrate the
hardness of the problem by a reduction of the similarity caching problem~\cite{ascent, garetto20infocom}, which is NP-hard; a result that follows from a reduction of the dominating set problem.}
%, which is NP-hard.\footnote{{In Appendix~\ref{appendix:submodularity} Lemma~\ref{lemma:submodularT}, we prove that $G_T (\allocVec)$ is submodular.}}
It is known that submodular maximization problems
% are NP-hard and
cannot be approximated with a ratio better than $(1 - 1/e)$ even under simpler cardinality constraints~\cite{nemhauser1978best}. Under the multi-knapsack constraint, it is possible to solve the offline problem achieving a $(1 - 1/e - \epsilon)$-approximation through a recent algorithm proposed in \cite{fairstein20201}.

\gc{Let us consider a model allocation $\allocVec$. Within time slot $t$, the $k$ smallest cost models along a path $\requestPathVec$ that are suitable for request type $\request = (i, \requestPathVec)$ can serve up to $\Auxalloc^k_\request(\requestBatchVec_t,\loadVec_t, \allocVec)$ requests, where $\Auxalloc^k_\request(\requestBatchVec_t,\loadVec_t, \allocVec)$ is defined as}
%\rev{ts}{}{The number of requests of type $\request = (i, \requestPathVec)$ that can be potentially served by the $k$ smallest cost models found along path $\requestPathVec$ under allocation $\allocVec$ at time slot $t$ is defined as}
\begin{align}
\Auxalloc^k_\request(\requestBatchVec_t,\loadVec_t, \allocVec)\triangleq
\min\left\{
    \requestBatch^t_\request,  \sum^{k}_{k'=1} {\auxalloc}^{k'}_{\request}(\loadVec_t, \allocVec)
\right\}.
\label{eq:sum_of_auxvars}\end{align}
The $\min\{\, \cdot\,, \,\cdot\,\}$ operator denotes that we can never serve more than the number of requests $\requestBatch^t_\request$ issued by users. %\rev{ts}{Observe that if we increase the number $k$ of considered models, also $\Auxalloc^k_\request(\requestBatchVec_t,\loadVec_t, \allocVec)$ increases. Moreover, since the models activated in $\allocVec$ always include the models activated in $\repoVec$, we have that $\Auxalloc^k_\request(\requestBatchVec_t,\loadVec_t, \allocVec)\ge \Auxalloc^k_\request(\requestBatchVec_t,\loadVec_t, \repoVec)$. Since $\repoVec$ is an input parameter and does not depend on our decisions, we emphasize that $\Auxalloc^k_\request(\requestBatchVec_t,\loadVec_t, \repoVec) \in \{0, \requestBatch^t_\request\}$ is a constant. }{}
\gc{Observe that, being the minimal allocation $\repoVec$ an input parameter not dependent on our decisions, $\Auxalloc^k_\request(\requestBatchVec_t,\loadVec_t, \repoVec)$ is a constant. Additionally, since the models allocated in $\allocVec$ always include those allocated in $\repoVec$, we have $\Auxalloc^k_\request(\requestBatchVec_t,\loadVec_t, \allocVec)\ge \Auxalloc^k_\request(\requestBatchVec_t,\loadVec_t, \repoVec)$.}

\gc{Using \eqref{eq:sum_of_auxvars}, we provide  the following alternative formulation of the allocation gain.}
%We conclude by providing a useful alternative formulation of the allocation gain
%{The allocation gain has therefore the following alternative formulation}
%\ifnum\isextended=1
%(the proof is in Appendix \ref{appendix:gain_expression}):
%\else
%(the proof is in \cite{sisalem21techrep}):
%\fi\newline
\begin{lemma}
\label{lemma:gain}
The allocation gain \eqref{eq:gain} has the following equivalent expression:

% \begin{align}
% \rev{aa}{
%     \systemGain(\requestBatchVec_t,\loadVec_t, \allocVec) =
%     \sum_{\request \in \requestSet}
%     \sum_{k=1}^{\modelsNo_{\request}-1} 
%     \left(\smallCost^{k+1}_{\request} - \smallCost^{k}_{\request}\right)
%     \cdot 
%     \min\left\{
%         \requestBatch^t_{\request} {-} 
%         \beta^{k}_{ \request} (\requestBatchVec_t,\loadVec_t, \repoVec),
%         \sum^{k}_{k'=1}\auxalloc^{k'}_{\request}(\loadVec_t, \allocVec) {-}
%         \beta^{k}_{ \request} (\requestBatchVec_t,\loadVec_t, \repoVec)
%     \right\},
%     \label{eq:gain-compact}
% }{}
% \end{align}

\begin{align}
    \systemGain(\requestBatchVec_t,\loadVec_t, \allocVec) =
    &\sum_{\request \in \requestSet}
    \sum_{k=1}^{\modelsNo_{\request}-1} 
    \vGroup{
    \left(\smallCost^{k+1}_{\request} - \smallCost^{k}_{\request}\right)
    }{\text{cost saving}}\nonumber\\
    &
    \vGroup{
    \left(
    \Auxalloc^k_\request(\requestBatchVec_t,\loadVec_t, \allocVec)
    - \Auxalloc^k_\request(\requestBatchVec_t,\loadVec_t, \repoVec)
    \right).
    }{\text{additional requests}}
    \label{eq:gain-compact}
\end{align}

% \rev{aa}{where $\beta^{k}_{ \request} (\requestBatchVec_t,\loadVec_t, \repoVec) := 
% \min\left\{
%     \requestBatch_{t,\request},  \sum^{k}_{k'=1} {\auxalloc}^{k'}_{\request}(\loadVec_t, \repoVec)
% \right\}$.}{}
%A proof can be found in \cite{??}.
\end{lemma}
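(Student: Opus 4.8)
The plan is to rewrite the aggregate cost $\systemCost(\requestBatchVec_t,\loadVec_t,\cdot)$ of~\eqref{eq:cost-expression} as a ``summation by parts'' involving only the quantities $\Auxalloc^k_{\request}$, to observe that the resulting boundary term does not depend on the allocation, and then to form the difference $\systemCost(\requestBatchVec_t,\loadVec_t,\repoVec)-\systemCost(\requestBatchVec_t,\loadVec_t,\allocVec)$ that defines the gain~\eqref{eq:gain}.

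First I would put the cost into telescoping form. Set $\Auxalloc^0_{\request}(\requestBatchVec_t,\loadVec_t,\allocVec)\triangleq 0$. The claim is that, for every $\request$ and every $k$, the $k$-th summand of~\eqref{eq:cost-expression} equals $\smallCost^k_{\request}\big(\Auxalloc^k_{\request}(\requestBatchVec_t,\loadVec_t,\allocVec)-\Auxalloc^{k-1}_{\request}(\requestBatchVec_t,\loadVec_t,\allocVec)\big)$. This is a two-case check against the definition~\eqref{eq:sum_of_auxvars}: if $\sum_{k'=1}^{k-1}\auxalloc^{k'}_{\request}(\loadVec_t,\allocVec)\ge\requestBatch^{t}_{\request}$, both $\Auxalloc^{k}_{\request}$ and $\Auxalloc^{k-1}_{\request}$ equal $\requestBatch^{t}_{\request}$ and the difference is $0$, matching the indicator $\mathds{1}_{\{\cdot\}}$ in~\eqref{eq:cost-expression} being $0$; otherwise $\Auxalloc^{k-1}_{\request}=\sum_{k'=1}^{k-1}\auxalloc^{k'}_{\request}(\loadVec_t,\allocVec)$ and $\Auxalloc^k_{\request}-\Auxalloc^{k-1}_{\request}=\min\{\requestBatch^{t}_{\request}-\sum_{k'=1}^{k-1}\auxalloc^{k'}_{\request}(\loadVec_t,\allocVec),\,\auxalloc^k_{\request}(\loadVec_t,\allocVec)\}$, which is exactly the $\min$ term in~\eqref{eq:cost-expression}. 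Hence $\systemCost(\requestBatchVec_t,\loadVec_t,\allocVec)=\sum_{\request\in\requestSet}\sum_{k=1}^{\modelsNo_{\request}}\smallCost^k_{\request}\big(\Auxalloc^k_{\request}-\Auxalloc^{k-1}_{\request}\big)$, and Abel summation on the inner sum (using $\Auxalloc^0_{\request}=0$) rewrites this as $\sum_{\request\in\requestSet}\big(\smallCost^{\modelsNo_{\request}}_{\request}\Auxalloc^{\modelsNo_{\request}}_{\request}-\sum_{k=1}^{\modelsNo_{\request}-1}(\smallCost^{k+1}_{\request}-\smallCost^k_{\request})\Auxalloc^k_{\request}\big)$.

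Next I would argue that the boundary term is allocation-independent. Since $\modelsNo_{\request}=|\requestPathVec||\modelSet_i|$ enumerates \emph{all} node-model pairs that request $\request=(i,\requestPathVec)$ can meet along $\requestPathVec$ --- in particular every repository model for task $i$ at $\nu(\requestPathVec)=\requestPath_J$, which has $\alloc^{\nu(\requestPathVec)}_{m}=1$ under any feasible allocation by~\eqref{eq:repo_ineq} --- the feasibility requirement~\eqref{eq:repo_feasibility_constraint} gives $\sum_{k'=1}^{\modelsNo_{\request}}\auxalloc^{k'}_{\request}(\loadVec_t,\allocVec)\ge\requestBatch^{t}_{\request}$, so the minimum in~\eqref{eq:sum_of_auxvars} is attained by $\requestBatch^{t}_{\request}$ and $\Auxalloc^{\modelsNo_{\request}}_{\request}(\requestBatchVec_t,\loadVec_t,\allocVec)=\requestBatch^{t}_{\request}$ regardless of $\allocVec$ (and likewise for $\repoVec$). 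Substituting the Abel form into $\systemGain(\requestBatchVec_t,\loadVec_t,\allocVec)=\systemCost(\requestBatchVec_t,\loadVec_t,\repoVec)-\systemCost(\requestBatchVec_t,\loadVec_t,\allocVec)$, the identical terms $\smallCost^{\modelsNo_{\request}}_{\request}\requestBatch^{t}_{\request}$ cancel and what remains is $\sum_{\request\in\requestSet}\sum_{k=1}^{\modelsNo_{\request}-1}(\smallCost^{k+1}_{\request}-\smallCost^k_{\request})\big(\Auxalloc^k_{\request}(\requestBatchVec_t,\loadVec_t,\allocVec)-\Auxalloc^k_{\request}(\requestBatchVec_t,\loadVec_t,\repoVec)\big)$, which is~\eqref{eq:gain-compact}.

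The main obstacle is the Step-1 telescoping identity: one must verify carefully that the $\min$ and the indicator in~\eqref{eq:cost-expression} jointly collapse into the single difference $\Auxalloc^k_{\request}-\Auxalloc^{k-1}_{\request}$ in every case, including the degenerate ones where the cheaper models already absorb all $\requestBatch^{t}_{\request}$ requests. The remaining steps --- the Abel summation, using the repository constraint~\eqref{eq:repo_feasibility_constraint} to pin down $\Auxalloc^{\modelsNo_{\request}}_{\request}=\requestBatch^{t}_{\request}$, and cancelling the boundary terms --- are routine.
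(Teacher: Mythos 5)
Your proposal is correct and follows essentially the same route as the paper: your Step-1 two-case telescoping identity is exactly the paper's auxiliary lemma on differences of mins, and your summation by parts plus the use of the repository constraint to fix $\Auxalloc^{\modelsNo_{\request}}_{\request}=\requestBatch^t_{\request}$ (so the boundary terms cancel in the difference) reproduces the paper's equivalent cost expression and final subtraction.
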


% The proof of Lemma~\ref{lemma:gain} is in
% \ifnum\isextended=1
% Appendix \ref{appendix:gain_expression}.
% \else
% \cite{sisalem21techrep}.
% \fi
We prove this lemma in Appendix~\ref{appendix:gain_expression}.
% A proof of Lemma~\ref{lemma:gain} can be found in~\cite{sisalem21techrep}.
% Our objective is to maximize the gain by means of a fully distributed algorithm, which can operate without requiring synchronization between compute nodes nor prior knowledge on the requests pattern.
{This result tells us that the gain of a certain allocation~$\allocVec$ can be expressed as a sum of several components. In particular, for each request type $\request$, the $k$-th smallest cost model along the path contributes to the gain with a component \textit{(i)}~proportional to its cost saving $\smallCost^{k+1}_{\request} - \smallCost^{k}_{\request}$ with respect to the ($k+1)$-th smallest cost model and \textit{(ii)}~proportional to the amount of additional requests that the $k$-th smallest cost models in allocation $\allocVec$ can serve with respect to the minimal allocation $\repoVec$.} 

%!TEX root = ../paper.tex
\section{\AlgoName{} Algorithm}
\label{sec:algorithm}
%(see Section~\ref{sec:guarantees} for a reduction from the budgeted maximum coverage problem \cite{budgetmaxcoverage}). 
%\todoi{add full name of the algo?}
In this section, we propose \AlgoName{}, an online algorithm that can operate in a distributed fashion without requiring global knowledge of the allocation state and requests arrival.
% In Sec.~\ref{sec:guarantees}, we show that \AlgoName{} converges to an allocation within a $(1 - 1/e)$-approximation from the optimum.
In Sec.~\ref{sec:guarantees}, we show that \AlgoName{} generates dynamically allocations experiencing average costs that converge to a $(1 - 1/e - \epsilon)$-approximation of the optimum, which matches the best approximation ratio achievable in polynomial time even in this online setting.

%%%%%%%%%%%%%%%%%%%%%%%%%%%%%%%%%
\subsection{Algorithm Overview}
\label{subsec:overview}
%%%%%%%%%%%%%%%%%%%%%%%%%%%%%%%%%

\begin{algorithm}[t!]
	\begin{algorithmic}[1]
	\begin{footnotesize}
	\Procedure {INFIDA}{$\allocVecFrac^v_1 {=} \underset{\allocVecFrac^v  \in \allocSetFrac^v \cap \mathcal{D}^v}{\arg\min} \,\Phi^v(\allocVecFrac^v)$, $\allocVec^v_1 {=} \DepRound(\allocVecFrac^v_1)$, $\eta {\in} \mathbb{R}_+$} 	\For{$t = 1,2,\dots,T$}
    	\State 
    	Compute  $\subgradVec^v_t \in \partial_{\allocVecFrac^v} \systemGain(\requestBatchVec_t, \loadVec_t, \allocVecFrac_t)$ through \eqref{eq:subgradient_expression}.
    	\State
    	$\hat{\allocVecFrac}^v_t \gets \nabla \Phi^v (\allocVecFrac^v_t)$\Comment{{Map state to the dual space}}
    	\State
    	$\hat{\vec{h}}^v_{t+1}   \gets \hat{\allocVecFrac}^v_t + \eta \subgradVec^v_t$\Comment{{Take gradient step in the dual space}}
    	\State
    	$\vec{h}^v_{t+1}   \gets \left(\nabla \Phi^v\right)^{-1}(\hat{\vec{h}}^v_{t+1})$\Comment{{Map dual state back to the primal space}}
    	\label{ln:update}
    	%\textcolor{gray}{\Statex (3-5):~$\vec{h}_{t+1}   \gets \left(\nabla \Phi^v\right)^{-1}(\nabla \Phi^v (\allocVecFrac_t)+ \eta \vec{g}_t)$} \Comment{ Given directly by Eq.~\eqref{eq:update_rule}}
    	\State
    	$\allocVecFrac^v_{t+1}\gets \mathcal{P}_{\allocSetFrac^v \cap \mathcal{D}^v}^{\Phi^v}(\vec{h}^v_{t+1})$\Comment{{Project new state onto the feasible region using Algorithm~\ref{alg:bregman_divergence_projection}}}
    	
    	 %\If{\emph{hard-constraint} \textbf{and} $\roundInterval|t$}
    	 
    	 \State $\allocVec^v_{t+1} \gets \DepRound(\allocVecFrac^v_{t+1})$ \Comment{Sample a discrete allocation} %\Comment{$\norm{\allocVec_{t+1}}_1 = h$}
    	 %\ElsIf{$\lnot$\emph{hard-constraint}}
    	 
    	 %\State $\allocVec_{t+1} \gets \Round(\allocVec_t, \allocVecFrac_t,\allocVecFrac_{t+1})$
    	 
	\EndFor
	\EndProcedure
	\end{footnotesize}
	\end{algorithmic}
	\caption{\AlgoName{} distributed allocation on node $v$}
	\label{algo:idn}
\end{algorithm}

On every node $v\in\vertices$, \AlgoName{} updates the allocation $\allocVec^v\in\allocSet^v{\subset} \{0,1\}^{|\modelSet|}${, by} operating on a correspondent fractional state $\allocVecFrac^v\in\allocSetFrac^v{\subset} [0,1]^{|\modelSet|}$, and the fractional allocations satisfy the budget constraint in Eq.~\eqref{eq:capacity-constraint}. Note that, if $\norm{\vec s^v}_1 < b^v$ for  a node $v \in \vertices$, we can always consider fractional allocations that consume entirely the allowed budget; otherwise, all the allocations are set to 1 (node $v$ can store the whole catalog of models). Formally, if $\norm{\vec s^v}_1 \geq b^v$ then
\begin{align}
    \allocSetFrac^v \triangleq
    \left\{ \allocVecFrac^v \in [0,1]^{\modelSet}:
   \sum_{m \in \modelSet} y^v_{m} s_{m}^v =\capacity^v, \forall v \in V\right\};
   \label{eq:allocSetFrac}
\end{align}
otherwise, for the corner case $\norm{\vec s^v}_1 < b^v$, we have $\allocSetFrac^v \triangleq \left\{[1]^\modelSet\right\}$.

 Each variable $\allocFrac^v_{m}$ can be interpreted as the probability of hosting model $m$ on node $v$, i.e., $\allocFrac^v_{m} = \mathbb{P}[\alloc^v_{m}=1] = \mathds{E}[\alloc^v_{m}].$
% \begin{equation}
% \nonumber
%     \allocFrac^v_{m} = \mathbb{P}[\alloc^v_{m}=1] = \mathds{E}[\alloc^v_{m}].
% \end{equation}
%where $\nu$ is the joint probability distribution of variables $\alloc^v_{i,q}$.

We define $\systemGain(\requestBatchVec_t,\loadVec_t, \allocVecFrac)$ as in %\eqref{eq:gain-compact} or, equivalently, as~
\eqref{eq:gain}, replacing $\allocVec$ with $\allocVecFrac$. %In particular, based on~\eqref{eq:several-definitions}, $\auxalloc^{k}_{  \request}(\loadVec_t, \allocVec) = \allocFrac^{v}_{m} \load^{t,v}_{\request,m}$.
Note that $\systemGain(\requestBatchVec_t,\loadVec_t, \allocVecFrac)$ is a concave function of variable $\allocVecFrac \in \allocSetFrac = \bigtimes_{v\in\vertices}\allocSetFrac^v$ (see Lemma~\ref{lemma:concavity} in Appendix~\ref{apx:regret_constants} ).%\rev{aa}{ Indeed, \eqref{eq:gain-compact} shows that $\systemGain(\requestBatchVec_t,\loadVec_t, \allocVecFrac)$ is a linear combination, with positive coefficients, of concave functions (the minimum of affine functions in $\allocVecFrac$)}{}.
 
% since the min operator \rev{aa}{that takes as arguments a constant and a linear function) with positive coefficients.}{is concave, the first argument of the min is constant with respect to $\mathbf{y}$ and the second is linear (see~\eqref{eq:z}).}

Within a time slot $t$, node $v$ collects measurements from messages that have been routed through it (Sec.~\ref{subsec:subgradient}). At the end of every time slot, the node \textit{(i)} computes its new fractional state $\allocVecFrac^v$, and \textit{(ii)} updates its local allocation $\allocVec^v$ via randomized rounding (Sec.~\ref{subsec:rounding}). \AlgoName{} is summarized in Algorithm~\ref{algo:idn} and detailed below.

% , a subgradient  of $\systemGain(\requestBatchVec_t, \loadVec_t, \allocVecFrac_t)$ with respect to $\allocVecFrac^v$, i.e., $\subgradVec^v_{t} \in \partial_{\allocVecFrac^v}\systemGain(\requestBatchVec_t, \loadVec_t, \allocVecFrac_t)$. Observe that $\subgradVec^v_{t}$ is a vector. To get the expression of each of its elements, let us observe that, for any 

% \rev{}{}{Should be moved to state computation}
% {The update is based on the value of the subgradient. 
% To compute it, for any $m\in\mathcal{M}$ and $v\in\vertices$, let us first decompose $\allocVecFrac=(\allocFrac^v_m, \allocVecFrac^{-v}_{-m})$, where $\allocVecFrac^{-v}_{-m}$ is the entire vector $\allocVecFrac$ except the element $\allocFrac^v_m$. Let us denote with $\subgrad^v_{t,m}$ the subderivative of the function $\allocFrac^v_m\rightarrow\systemGain \left(\requestBatchVec_t, \loadVec_t, (\allocFrac^v_m,\allocVecFrac^{-v}_{-m,t}) \right)$ caculated at $y^v_{m,t}$. We denote with $\subgradVec^v_t$ the vector $[\subgrad^v_{m,t}]_{m\in\mathcal{M}}$. It can be shown that the vector $\subgradVec_t=[\subgradVec^v_t]_{v\in\vertices}$ is a subgradient of function $\allocVecFrac\rightarrow\systemGain \left(\requestBatchVec_t, \loadVec_t, \allocVecFrac \right)$ calculated at $\allocVecFrac_t$. Note that there usually are several subgradients. We just need to take one of them.}

\noindent
\textbf{State computation.}
The fractional state $\allocVecFrac^v$ is updated through an iterative procedure aiming to maximize $\systemGain(\requestBatchVec_t,\loadVec_t, \allocVecFrac)$.
This could be the standard gradient ascent method, which updates the fractional state at each node as $\allocVecFrac^v_{t+1} = \allocVecFrac^v_{t} + \eta \subgradVec^v_{t}$,
\iffalse
follows:
% The fractional state $\allocVecFrac^v$ is updated by seeking an optimization of the relaxed function $\systemGain(\requestBatchVec_t,\loadVec_t, \allocVecFrac)$. This can be achieved by a gradient ascent method, i.e., 
\begin{equation}
\nonumber
    \allocVecFrac^v_{t+1} = \allocVecFrac^v_{t} + \eta_t \subgradVec^v_{t},
\end{equation}
\fi
where $\eta_t \in \reals_+$ is the step size  and $\subgradVec^v_{t}$ is 
%the contribution of node $v$ to 
a subgradient of $\systemGain(\requestBatchVec_t,\loadVec_t, \allocVecFrac)$ with respect to $\allocVecFrac^v$. 

In our work, we use a generalized version of the gradient method called Online Mirror Ascent (OMA) \cite[Ch.~4]{bubeck2015convexbook}. OMA uses a function $\Phi^v: \mathcal{D}^v \to \mathbb{R}_+$ (mirror map) to map $\allocVecFrac$ to a dual space before applying the gradient ascent method; %then it maps back the state to the primal space (lines 3-5 of Algorithm~\ref{algo:idn}).
then the obtained state is mapped back to the primal space (lines 3--5 of Algorithm~\ref{algo:idn}).
OMA reduces to the classic gradient ascent method if $\Phi^v$ is the squared Euclidean norm (in this case  the primal space coincides with the dual one).
%, OMA coincides with the classic gradient ascent method. 
Instead, we use the weighted negative entropy map $\Phi^v(\allocVecFrac^v) = \sum_{m\in\modelSet}s^v_{m}\allocFrac^v_{m} \log(\allocFrac^v_{m})$, %which is known to achieve stronger guarantees when the optimization problem has constraints similar to ours~\cite[Sect.~4.3]{bubeck2015convexbook}.
which is known to achieve better convergence rate in high dimensional spaces when each subgradient component is bounded.\footnote{Technically, the advantage in this setting derives from the infinite norm of the subgradient being independent from the space dimension, while the Euclidean norm grows proportionally to the squared root of the space dimension \cite[Sec.~4.3]{bubeck2015convexbook}.}
%when optimizing functions with similar properties to $\systemGain$~\cite[Sect.~4.3]{bubeck2015convexbook} (i.e., the strength of the adversary is reflected by the norms of the revealed subgradients; when using a negative entropy map, the associated norm is the uniform norm that can be much smaller than the Euclidean norm for objectives that induce non-concentrated subgradients).
% \todo{aa: In the extended version we have the space to mention what these properties are.}
%of fractional allocations that satisfy the capacity constraint \eqref{eq:capacity-constraint}
To compute a feasible fractional state $\allocVecFrac^v$, we then perform a projection to the set $\allocSetFrac^v$  on node $v$ (line 6 of Algorithm~\ref{algo:idn}).
We adapt the projection algorithm from \cite{sisalem21icc} to obtain a negative entropy projection $\mathcal{P}^{\Phi^v}_{\allocSetFrac^v\cap \mathcal{D}^v} (\,\cdot \,)$.
\ifnum\isextended=1
Our adaptation is described in Appendix~\ref{appendix:projection}.
\else
Our adaptation is described in~\cite{sisalem21techrep}.
\fi

% At time slot $t$, the potential available capacities $l^{t,v}_{\rho, m}$ (see Sec.~\ref{subsec:load}) are evaluated as follows: at the end of the time slot, at every node $v \in \vertices$ and for each model $m \in \modelSet$  we check how many requests of type $\rho' \neq \rho$ are being served and  we evaluate $\left(L^v_m - \sum_{\rho' \in \requestSet \setminus \{\rho\} } \mathrm{load}(\rho')\right)$, where $ \sum_{\rho' \in \requestSet \setminus \{\rho\} }\mathrm{load}(\rho') $ is the aggregate load of requests of type $\rho' \neq \rho $, i.e.,  the reserved space of the other requests driving $L^v_m$ (maximum model capacity) down.  Finally, this value is capped to $r^t_\rho$ if it is exceeded (see Eq.~\eqref{eq:load_lim}). This defines the subjective $l^{t,v}_{\rho,m}$ (potential available capacity) seen by request type $\rho$.

%\todoi{Update rule of OMA is missing, and it's fair to say we adapted the projection from our paper \cite{salem2021no} rather than \cite{wang2015projection} since it's very different. The adaptation made is unclear, it is described in the document draft.}

\noindent
\textbf{Allocation update.}
Once the fractional state $\allocVecFrac^v$ has been updated, the final step of \AlgoName{} is to determine a new random discrete allocation $\allocVec^v$ and update the local models accordingly.
%reshuffle the local allocated models accordingly. 
The sampled allocation $\allocVec^v$ should \textit{(i)}~comply with the budget constraint
\eqref{eq:capacity-constraint} on node $v$ and \textit{(ii)} be consistent with the fractional state, i.e., $\mathds{E}[\alloc^v_{m}] = \allocFrac^v_{m} \; \forall  m \in \modelSet$. To this purpose, we use the DepRound \cite{byrka2014improved} subroutine (line 7 of Algorithm~\ref{algo:idn}).

In the remainder of this section we detail how each node computes its contribution to the global subgradient, and the rounding strategy used to determine the discrete allocation.

%%%%%%%%%%%%%%%%%%%%%%%%%%%%%%%%%%%%%
\subsection{Subgradient Computation}
\label{subsec:subgradient}
%%%%%%%%%%%%%%%%%%%%%%%%%%%%%%%%%%%%%

%At the end of every time slot $t$, the subgradient $\vec{g}_t$ of the gain function in Eq.~\eqref{eq:gain-compact} at point $\allocVecFrac_t \in \allocSetFrac$ (see
%\ifnum\isextended=1
%Lemma~\red{lemma:subgradient})
%\else
%\cite{sisalem21techrep}) 
%\fi is computed distributionally, where each node $v$ evaluates the $(v,m)$-th components of the subgradient for any $m \in \modelSet$ is given by

At the end of every time slot $t$, a subgradient $\vec{g}_t$ of the gain function in Eq.~\eqref{eq:gain-compact} at point $\allocVecFrac_t \in \allocSetFrac$  is computed in a distributed fashion: each node $v$ evaluates the $(v,m)$-th component of the subgradient for any $m \in \modelSet$ as follows (see Appendix~\ref{appendix:subgradient_expression}):
% Lemma~\ref{lemma:subgradient})
% \ifnum\isextended=1
% Appendix~\ref{appendix:subgradient_expression}
% Lemma~\ref{lemma:subgradient})
% \else
% \cite{sisalem21techrep}) 
% \fi

%(Lemma~\ref{lemma:subgradient})
%local component $\subgradVec^v_{t}$ of a subgradient $\subgradVec_{t} \in \partial_{\allocVecFrac_{t}} \systemGain(\requestBatchVec_t, \loadVec_t, \allocVecFrac_t)$.
%This is computed using only information from the control messages collected within time slot $t$.
%We henceforth omit the subscript $\cdot_{t}$ for brevity.
%Below, we first provide the formulation of a subgradient and then demonstrate how each node can compute it in a distributed fashion.
%Let us denote by $\kappa_{\request} (v, m)$ the order of the cost of model $m$ allocated at node $v$ along the serving path of $\request$, i.e., model $m$ at node $v\in\requestPathVec$ is the $\kappa_{\request} (v, m)$-th best option in terms of cost~\eqref{eq:cost-expression} to serve requests $\request = (i, \requestPathVec)$. If $v\notin \requestPathVec$ we have $\kappa_{\request} (v, m) \triangleq \infty$. At the end of time slot $t\in[\timehorizon]$, we can pick $\subgradVec^v_{t} \in \partial_{\allocVecFrac^v}\systemGain(\requestBatchVec_t, \loadVec_t, \allocVecFrac_t)$ as (Lemma~\ref{lemma:subgradient})
%A subgradient $\subgradVec^v_{t}$ can be computed as follows (Lemma~\ref{lemma:subgradient}):
\begin{align}
    % \nonumber
      \subgrad^v_{t,m} =  \sum_{\request \in \requestSet}  l^{t,v}_{\rho,m} &\cdot
    \left(\smallCost^{\modelsNo^*_{\request}(\allocVecFrac_t)}_{\request} - C^v_{\requestPathVec, m}\right)\cdot\mathds{1}_{\{\kappa_{\request} (v, m) < \modelsNo^*_{\request}(\allocVecFrac_t)\}},
    \label{eq:subgradient_expression}
\end{align}
%where $\modelsNo^*_{\request}(\allocVecFrac_t)$ is the order of the \emph{worst needed} model (i.e., the model with the highest cost and used to serve requests of type $\rho$) that would be needed to serve all the request batch $\requestBatch^t_{\request}$ in the fractional state~ $\allocVecFrac_t$, i.e., $\modelsNo^*_{\request}(\allocVecFrac_t) \triangleq \min\big\{ k \in [\modelsNo_{\request}-1]: \sum^{k}_{k'=1} \auxalloc^{k'}_{\request}(\loadVec_t, \allocVecFrac_t) \geq \requestBatch^t_{\request}\big\}$.
where $\modelsNo^*_{\request}(\allocVecFrac_t)$ is the order of the \emph{worst needed} model, i.e., the model with the highest cost that is  needed to serve all the $\requestBatch^t_{\request}$ requests in the batch given the fractional state~$\allocVecFrac_t$. Formally,   $\modelsNo^*_{\request}(\allocVecFrac_t) \triangleq \min\big\{ k \in [\modelsNo_{\request}-1]: \sum^{k}_{k'=1} \auxalloc^{k'}_{\request}(\loadVec_t, \allocVecFrac_t) \geq \requestBatch^t_{\request}\big\}$.

% The potential available capacities $l^{t,v}_{\rho, m}$ (see Sec.~\ref{subsec:load}) are evaluated as follows. Each node $v$ keeps track of the number of requests of type $\rho' \neq \rho$ served locally by model $m$ during the slot, say it $\mathrm{load}_m(\rho')$, and computes, at the end of the slot, $l^{t,v}_{\rho, m}=\left(L^v_m - \sum_{\rho' \in \requestSet \setminus \{\rho\} } \mathrm{load}(\rho')\right)$.
 %at every node $v \in \vertices$ and for each model $m \in \modelSet$  we check how 
 %track many requests of type $\rho' \neq \rho$ are being served and  we evaluate $\left(L^v_m - \sum_{\rho' \in \requestSet \setminus \{\rho\} } \mathrm{load}(\rho')\right)$, where $ \sum_{\rho' \in \requestSet \setminus \{\rho\} }\mathrm{load}(\rho') $ is the total number of requests of type $\rho' \neq \rho $ served by model $m$, i.e.,  the reserved space of the other requests driving $L^v_m$ (maximum model capacity) down.  Finally, this value is capped to $r^t_\rho$ if it is exceeded (see Eq.~\eqref{eq:load_lim}). This defines the subjective $l^{t,v}_{\rho,m}$ (potential available capacity) seen by request type $\rho$.

For the sake of clarity assume that the mirror map is Euclidean, and then the  dual and primal spaces collapse and $\hat \allocVecFrac_t= \allocVecFrac_t$. At each iteration, each component $\allocFrac^{t,v}_m$ of the fractional allocation vector is updated by adding a mass equal to the product of $\eta >0$ and the corresponding component of the subgradient  (Algorithm~\ref{algo:idn}, line 5). Observe that $g^v_{m,t}$ is the sum of different contributions, one per each request type $\rho$. Thanks to the indicator function, only the terms of the request types that are served by model $m$ on $v$ contribute to $g^v_{m,t}$. This contribution is proportional to the potential available capacity of model $m$ on node $v$ and to 
%(the more requests a model can serve the more is deemed important); 
%moreover, the contribution is also scaled by 
the relative gain $\left(\smallCost^{\modelsNo^*_{\request}(\allocVecFrac_t)}_{\request} - C^v_{\requestPathVec, m}\right)$,
 i.e., the cost reduction achieved when serving request type $\rho$ with model $m$ on $v$, rather than with the worst needed model.  
 Then,  gradient updates add more mass to the models that can contribute more to increase the gain.
 %Because of such update, $\allocVecFrac_t^v + \eta \vec g^v_t$ has extra mass on 
 %``important models.'' 
 On the contrary, the projection step tends to remove the added mass from \emph{all} components to satisfy the constraints. The overall effect is that fractional allocations of more (resp. less) useful models tend to increase (resp. decrease).
 %to  leading to the 
 %The final result is that the fractional  the lesser models, and repeating this operation will drive their allocation to 0.

% \iffalse
% In other words the $\modelsNo^*_{\request}(\allocVecFrac_t)$ models with smallest cost along the path $\requestPathVec$ are required and sufficient to satisfy all requests of type $\request=(i. \requestPathVec)$ in the fractional state.\footnote{
%     Remember that the fractional state is only an abstraction useful to conceive the algorithm, but finally a model is either allocated to a node or not.
% }
% \fi

%\rev{aa}{}{ In other words, using the models along $\requestPathVec$ of quality up to the $\modelsNo^*_{\request}(\allocVecFrac_t)$-th is necessary and sufficient condition to satisfying all requests $\requestBatch^t_{\request}$.}

%\iffalse
{
The subgradient in Eq.~\eqref{eq:subgradient_expression} can be computed at each node using only information from the control messages collected at the end of the time slot $t$. The steps needed to compute the subgradient are as follows.}
%   \item  At the end of the time slot, { when a node generates a request of type $\request \in \requestSet$, it also generates a control message that is propagated along $\requestPathVec$ in parallel to $\request$. The control message contains the quantity $\requestBatch^t_{\request}$ (the multiplicity of the request), and a cumulative counter $Z$ initialized to zero.}
\begin{enumerate}[{1.}]
    \item  At the end of the time slot,  each node generates a control message for every received request type $\rho = (i, \requestPathVec)$ that is propagated along $\requestPathVec$. The control message contains the quantity $\requestBatch^t_{\request} \geq 1$ (the multiplicity of the request), and a cumulative counter $Z$ initialized to zero.
    \item 
    {As the control message travels upstream, intermediate nodes add to $Z$ the local values $\auxalloc^{k}_{\request}(\loadVec_t, \allocVecFrac_t)$ (fractional effective capacity in Eq.~\eqref{eq:several-definitions}). 
    % Note that each node adds $|\mathcal{M}_i|$ fractional effective capacity,  one per each model $m\in\mathcal{M}_i$, where $i$ is the task requested by request type $\rho$. 
    These values are added following increasing values of cost. This message is propagated until $Z \geq \requestBatch^t_{\request}$, that is until the message reaches the $\modelsNo^*_{\request}(\allocVecFrac_t)$-th model.}
    \item 
    % $\smallCost^{\modelsNo^*_{\request}(\allocVecFrac_t)}_{\request}$%
    {Once the $\modelsNo^*_{\request}(\allocVecFrac_t)$-th model is detected, a control message is sent down in the opposite direction, containing %two fields: a cumulative cost $H$ initialized to zero, and the previously seen cost $\Gamma$, initially set to $\gamma_\request^{\modelsNo^*_{\request}(\allocVecFrac_t)}$.}
    the cost $\smallCost^{\modelsNo^*_{\request}(\allocVecFrac_t)}_\request$ of the last checked model. %Every virtual node $(v,m)$ in the reverse direction sniffs these values and before replacing it with its cost value and cumulative cost, learns the quantity
    Every node $v$ in the reverse direction reads the cost value from the control message and, for each model $m\in\modelSet_i$, computes the quantity
    %\rev{ts}{$h^v_{m} = \sum^{\modelsNo^*_{\request}(\allocVecFrac_t) - 1}_{k= \kappa_{\request}(v, m)} \auxload^{k}_{  \request}(\loadVec_t) \left(\smallCost^{k+1}_{\request} - \smallCost^{k}_{\request}\right). $}
    \begin{align}
        h^v_{m} = l^{t,v}_{\rho,m} \cdot \left ( \smallCost^{\modelsNo^*_{\request}(\allocVecFrac_t)}_\request - C^v_{\requestPathVec, m}\right).
    \end{align}}
    %After this quantity is learned the current node replaces the values in the field by its cost $\smallCost^{k}_{\request}$ and sets $Z' =\sum^{\modelsNo^*_{\request}(\allocVecFrac_t) - 1}_{k= \kappa_{\request}(v, m)} \load^{v,t}_{\rho, m}\left(\smallCost^{k+1}_{\request} - \smallCost^{k}_{\request}\right)$ to be used by the virtual node downstream.
   %\rev{ts}{ This value is computed as $H + \load^{t, v}_{\request, m} (\gamma^{\kappa_{\request}(v, m)}_\request - \Gamma)$. After node $v$ learns this quantity for each model $m$, it updates the values in the control message with $H = h^v_m$ and $\Gamma = \gamma_\request^{\kappa_{\request}(v, m)}$.}{}
    
    % \gc{Intuitively, quantities $h^v_{m}$ represent the cumulative cost reduction achieved by serving requests on a local model, compared to the worst model participating in serving requests of type $\rho$ at time slot $t$.}
    \item
    % Let $\mathcal{H}^v_{m}$ be the set of quantities collected in this manner during time slot $t$ on node $v$. Node $v$ can then compute
    Node $v$ can then compute $g_{t,m}^v$ in Eq.~\eqref{eq:subgradient_expression} as follows
    \begin{align}
    \nonumber
        \subgrad^v_{t,m}=  \sum_{m \in \modelSet_i} h^v_{m}.
        %= \sum_{\request \in \requestSet} l^{t,v}_{\rho,m} \cdot \left ( \smallCost^{\modelsNo^*_{\request}(\allocVecFrac_t)}_\request - C^v_{\requestPathVec, m}\right)  \\
        %\quad\cdot \mathds{1}_{\{\kappa_{\request} (v, m) < \modelsNo^*_{\request}(\allocVecFrac_t)\}}
    \end{align}
\end{enumerate}
Note that the cost in Eq.~\eqref{eq:serving_cost2} does not necessarily increase along the path. Therefore, a traversed node is not able to update directly the variable $Z$ when there exist upstream nodes with lower cost. In this case, the node simply appends the information $(\auxalloc^{k}_{\request}(\loadVec_t, \allocVecFrac_t), \smallCost^k_{\request})$ to the message, and lets upstream nodes to apply any pending update in the correct order. %A similar adaptation applies for the downstream message.
%\gc{Note that, as cost \eqref{eq:cost-expression} is not monotonic w.r.t. the position in the path, models may not be checked in order. Therefore, the process above is implemented in practice with a slight modification: any time a not-in-order model with cost index $k$ is checked, the node does not update value $Z$ but appends information $(\auxalloc^{k}_{\request}(\loadVec_t, \allocVecFrac_t), \smallCost^k_{\request})$ to the message instead, and upstream nodes apply any pending update in the correct order. %A similar adaptation applies for the downstream message.
%}
%\fi
\newton{In our work, we assume to operate on a \emph{reliable} communication channel. Nonetheless, we note that \AlgoName{} is robust to noise $\pmb \xi_t$ affecting the sub-gradient, as long as such noise is not biased, i.e., $\mathbb E [\pmb\xi_t] = \pmb 0$ for every timeslot $t$~\cite[Theorem~3.4]{hazan2016introduction}.}

\iffalse % Conference short version
This subgradient can be computed at each node using only information from the control messages collected at the end of the time slot $t$. Consider a request of type $\request = (i, \requestPathVec)$. The first node of the path, $p_1$,  generates a control message with the total number $\requestBatch^t_{\request}$ of type-$\request$ requests received during the slot, and sends the message along the path $\requestPathVec$. Each node $p_h$ traversed, $p_1$ included, adds its cost $C^{p_h}_{ \requestPathVec, m}$ and effective capacity $\alloc^{p_h}_{m} \load^{t,p_h}_{\request,m}$, in the fractional state,
% \todo{aa: I suggest to translate the sentence ``adds its cost and effective capacity, in the fractional state'' to formula, and also tell the formula in addition to the sentence}
for each model $m \in \modelSet_i$.
%The message needs to travel upstream until it reaches the node, say it $p_h$, that runs the model with the $\modelsNo^*_{\request}(\allocVecFrac_t)$-th smallest cost. 
The message needs to travel upstream until the model with the $\modelsNo^*_{\request}(\allocVecFrac_t)$-th smallest cost is detected, let us say at node $p_{h^*}$.
Node $p_{h*}$ can then generate a new control message with the same payload and forward it in the reverse direction back to node $p_1$. Each node along the path has then all the information needed to compute its own subgradient following~\eqref{eq:subgradient_expression}. It is possible to significantly reduce the size of the control messages using opportune partial aggregation, which we detail in
\ifnum\isextended=1
Appendix~\red{appendix:control_messages}.
\else
\cite{sisalem21techrep}.
\fi
\fi

%%%%%%%%%%%%%%%%%%%%%%%%%%%%%%%%
\subsection{State Rounding}
\label{subsec:rounding}
%%%%%%%%%%%%%%%%%%%%%%%%%%%%%%%%

Once the new fractional state $\allocVecFrac_{t+1}$ is computed, each node~$v$ independently draws a random set of models to store locally in such a way that $\mathds{E}[\allocVec^v_{t+1}] = \allocVecFrac^v_{t+1}$. This sampling guarantees that the final allocation $\allocVec^v_{t+1}$  satisfies constraint~\eqref{eq:capacity-constraint} in expectation.
A naive approach is to draw each variable $\alloc^{v,t+1}_{m}$ independently, but it leads to a large variance of the total size of the models selected, potentially exceeding by far the allocation budget at node $v$.
% Berandom with  variable that is assigned to 1 with probability $\allocFrac^v_{m}$
% We execute a rounding algorithm in a distributed fashion at each node $v \in \vertices$. Note that constructing a trivial solution where each $\alloc^v_{m}$ is an independent random variable that is assigned to 1 with probability $\allocFrac^v_{m}$ yields an allocation $\allocVec^v$ that satisfies constraint \eqref{eq:capacity-constraint} only in expectation. Instead, the actual random allocation can largely deviate from the capacity $\capacity^v$.

To construct a suitable allocation we adopt the DepRound procedure from \cite{byrka2014improved}. The procedure modifies the fractional state $\allocVecFrac^v_{t+1}$ iteratively: at each iteration, DepRound operates on two fractional variables $\allocFrac^{v,t+1}_{m},\allocFrac^{v,t+1}_{m'}$ so that at least one of them becomes integral and the aggregate size of the corresponding models $s^v_{m}\allocFrac^{v,t+1}_{m}+s^v_{m'}\allocFrac^{v,t+1}_{m'}$ does not change. This operation is iterated until all variables related to node $v$ are rounded except (at most) one{, which we call residual fractional variable}. This is done in $\mathcal{O}(|\modelSet|)$ steps.

Note that, to satisfy $\mathds{E}[\allocVec^v_{t+1}] = \allocVecFrac^v_{t+1}$, the residual fractional variable, say it  $\allocFrac^{v,t+1}_{\bar m}$, needs to be rounded. At this point $\alloc^{v,t+1}_{\bar m}$ can be randomly drawn. Now the final allocation can exceed the budget bound $\capacity^v$ by at most $s_{\bar m}$.
%according to a Bernoulli 
%this variable can be selected 
%This could be done again selecting with by randomly sampling it. However, in this case the deviation of the random allocation from the capacity bound $\capacity^v$ is limited by $s_{\bar m}$. 
These (slight) occasional violations of the constraint may not be a problem, e.g., at an edge server running multiple applications, where resources may be partially redistributed across different applications; they may be explicitly accounted for in the service level agreements.
If the budget bound cannot be exceeded even temporarily, the node is not  able to store the model $\bar m$, but it may still exploits the residual free resources to deploy the model that provides the best marginal gain among those that fit the available budget. In practice, we expect the corresponding gain decrease to be negligible.
%In practice, we can cope with the dangling variable adopting several heuristics{, e.g., deploy the model that provides the best marginal gain among those that fit the constraint, or admit elastic budgets}.

% \todoi{aa: The fact that at every time-slot (how large is a time-slot?) the models saved (after rounding) are completely independent from the previous round can degenerate in a crazy distributed systems where models flip on and off at every time-slot, imposing massive migrations that would dominate bandwidth utilization or at least imposing continuous transfer from storage to main memory, which would dominate computation time. Shall we discuss this point?}
%!TEX root = ../paper.tex
\section{Theoretical Guarantees}
\label{sec:guarantees}

We provide the optimality guarantees of our \AlgoName{} algorithm in terms of the $\psi$-regret~\cite{krause2014submodular}.
In our scenario, the $\psi$-regret is defined as the gain loss in comparison to the best static allocation in hindsight, i.e., $\allocVec_* \in \argmax_{\allocVec \in \consSet} \sum^T_{t=1}    \systemGain(\requestBatchVec_t, \loadVec_t, \allocVec)$,
%We denote by $\requestBatchSet$ the set of all request batches available to the adversary, i.e., $\requestBatchVec_t \in \requestBatchSet, \forall t \in [\timehorizon]$, and by $\loadSet$ the set of all the potential available capacities available to the adversary, i.e., $\loadVec_t \in \loadSet, \forall t \in [\timehorizon]$.
%The $\psi$-regret is computed discounting the maximum gain 
discounted by a factor $\psi \in (0,1]$. Formally,
\begin{comment}
\begin{align}
\nonumber
&\textstyle \psi \text{-} \mathrm{Regret}_{T, \allocSet} 
=
\\
&\textstyle \underset{
%\{ \requestVec_1, \dots, \requestBatchVec_t\} \in \naturals^{|\requestSet|\times|\timehorizon|}
%\{\requestBatchVec_t\}_{t\in[T], \requestBatchVec_t \in \naturals^\requestSet}
\{\requestBatchVec_t, \loadVec_t \}_{t=1}^{T} \in (\requestBatchSet \times \loadSet)^T 
}{\sup} 
    \left\{ \psi\sum^T_{t=1}    \systemGain(\requestBatchVec_t, \loadVec_t, \allocVec_*) - \mathbb{E} \left[\sum^T_{t=1}    \systemGain(\requestBatchVec_t, \loadVec_t, \allocVec_t)\right] \right\},
    \nonumber
%\label{eq:regret}
\end{align}
\end{comment}
\begin{align}
\label{eq:regret-definition}
&\psi \text{-} \mathrm{Regret}_{T, \allocSet} 
\triangleq
%=
\\ &\underset{
%\{ \requestVec_1, \dots, \requestBatchVec_t\} \in \naturals^{|\requestSet|\times|\timehorizon|}
%\{\requestBatchVec_t\}_{t\in[T], \requestBatchVec_t \in \naturals^\requestSet}
{\scriptscriptstyle{\{\requestBatchVec_t, \loadVec_t \}_{t=1}^{T} \in \advSet^T}}
}{\sup}\hspace{-0.1em}
    \left\{ \psi\sum^T_{t=1}    \systemGain(\requestBatchVec_t, \loadVec_t, \allocVec_*){-}\mathbb{E}\left[\sum^T_{t=1}    \systemGain(\requestBatchVec_t, \loadVec_t, \allocVec_t)\right] \right\}\hspace{-0.1em},
    \nonumber
%\label{eq:regret}
\end{align}
where 
%the supremum is over all the possible sequences $\{\requestBatchVec_t, \loadVec_t \}_{t=1}^{T} \in (\requestBatchSet \times \loadSet)^T$,
allocations $\allocVec_t$ are computed using \AlgoName{} and the expectation is over the randomized choices of \DepRound.
% while $\requestBatchSet$ and $\loadSet$ are respectively the set of all possible request batches and potential available capacities, i.e., $\requestBatchVec_t \in \requestBatchSet$ and $\loadVec_t \in \loadSet, \forall t \in [\timehorizon]$.
%\gc{We assume that the number of requests that the system can receive at any time slot $t \in [T]$ is bounded by $R\in\naturals$, with $R$ arbitrarily big.}
%We assume that $R$ is the maximum number of requests that can be received at any time slot $t \in [T]$, with $R$ arbitrarily big.
Note that, by taking the supremum over all request sequences and potential available capacities, we measure regret in an adversarial setting, i.e., against an adversary that selects, for every $t \in [T]$, vectors $\requestBatchVec_t$ and $\loadVec_t$ to jeopardize 
%\gcdelete{system performance} 
the performance of our algorithm. {Obviously, we do not expect such an adversary would exist in reality, but the adversarial analysis provides bounds on the behavior of \AlgoName{} in the worst case.}

The adversarial analysis is a modeling technique to characterize 
%The assumption of the existence of an adversary is merely a technique to describe the 
system performance under highly volatile external parameters (e.g., the sequence of requests $\requestBatchVec_t$) or difficult to model system interactions (e.g., the available capacities $\loadVec_t$). This technique has been recently successfully used to model caching problems~(e.g., in \cite{paschos19,sisalem21icc}). %bhattacharjee20
%An adversary can represent in many applications a volatile environment that keeps changing the systems parameters.   
Our main result is the following (the full proof is 
\ifnum\isextended=1
 in Appendix \ref{appendix:regret_proof}):
\else
 in \cite{sisalem21techrep}):
\fi

\begin{theorem}
\label{th:regret_bound}
\textnormal{\AlgoName{}} has a sublinear \mbox{$(1-1/e)$-regret} w.r.t.  the time horizon $T$,
% \todo{aa: I wouldn't say it is sublinear ``in the sequence of requests and potential capacity''. I would say it is sublinear with respect to the time horizon, and linear with respect to the sequence of requests (the batch size, more precisely) and the potential capacities}
i.e., there exists a constant $A$ such that:
\begin{align}
   \text{$\left(1-1/e\right)$-} \mathrm{Regret}_{T, \allocSet} \leq A \sqrt{T},
    \label{eq:alpha_regret}
\end{align}
where $A \; {\propto} \; R L_{\max} \Delta_C$.
{$R$, $L_{\max}$, and  $\Delta_C$ are upper bounds, respectively, on the total number of request types at any time slot, on the model capacities, and on the largest serving cost difference between serving at a repository node and at any other node.}
%We assume that the number of requests that the system can receive at any time slot $t \in [T]$ is bounded by $R\in\naturals$, with $R$ arbitrarily big.
\end{theorem}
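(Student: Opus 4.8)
The plan is to reduce the $(1-1/e)$-regret of \AlgoName{} to a standard Online Mirror Ascent regret bound on the \emph{fractional} problem, composed with the approximation loss incurred by randomized rounding. First I would recall that, since $\systemGain(\requestBatchVec_t,\loadVec_t,\cdot)$ is concave on $\allocSetFrac$ (Lemma~\ref{lemma:concavity}), the fractional iterates $\allocVecFrac_t$ produced by the mirror-ascent updates in Algorithm~\ref{algo:idn} enjoy the classical OMA guarantee: for any fixed comparator $\allocVecFrac_* \in \allocSetFrac$,
\begin{align}
\sum_{t=1}^{T}\systemGain(\requestBatchVec_t,\loadVec_t,\allocVecFrac_*) - \sum_{t=1}^{T}\systemGain(\requestBatchVec_t,\loadVec_t,\allocVecFrac_t) \leq \frac{D^2_\Phi}{\eta} + \frac{\eta}{2}\sum_{t=1}^{T}\norm{\subgradVec_t}_\infty^2,
\label{eq:oma-skeleton}
\end{align}
where $D_\Phi$ bounds the Bregman diameter of $\allocSetFrac\cap\mathcal{D}$ under the separable weighted negative-entropy map $\Phi$, and $\subgradVec_t$ is the subgradient in~\eqref{eq:subgradient_expression}. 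Choosing $\eta \propto 1/\sqrt{T}$ makes the right-hand side $O(\sqrt{T})$. Because $\Phi$ decomposes over nodes as $\Phi = \sum_v \Phi^v$ and the constraint set is a product $\allocSetFrac = \bigtimes_v \allocSetFrac^v$, this bound is exactly the aggregation of the per-node updates, which legitimizes the distributed implementation.

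Next I would make the constant $A$ explicit by bounding the two problem-dependent quantities appearing in~\eqref{eq:oma-skeleton}. For the subgradient norm: each component $\subgrad^v_{t,m}$ in~\eqref{eq:subgradient_expression} is a sum over $\request\in\requestSet$ of terms $l^{t,v}_{\rho,m}\cdot(\smallCost^{\modelsNo^*_\request}_\request - C^v_{\requestPathVec,m})$ restricted by an indicator; each such term is at most $L_{\max}\Delta_C$ (potential capacity bounded by $L_{\max}$, cost gap bounded by $\Delta_C$), and for a fixed $(v,m)$ only request types whose path passes through $v$ and for which $m$ serves can contribute, so $\norm{\subgradVec_t}_\infty \leq R L_{\max}\Delta_C$ after bounding the number of contributing request types by $R$. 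For the Bregman diameter, the weighted negative entropy over the capped simplex-like polytope $\allocSetFrac^v$ gives $D^2_{\Phi^v} = O(\norm{\vec s^v}_1 \log|\modelSet|)$ up to the budget $b^v$, which is an $O(1)$ factor in $T$; summing over $v$ keeps it $T$-independent. Plugging these in and optimizing $\eta$ yields~\eqref{eq:alpha_regret} with $A \propto R L_{\max}\Delta_C$ as claimed (the $\sqrt{\log|\modelSet|}$ and budget factors are absorbed into the proportionality).

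Finally I would handle the gap between the fractional gain and the expected gain of the rounded allocation $\allocVec_t = \DepRound(\allocVecFrac_t)$, and between the fractional optimum and the integral optimum $\allocVec_*$. The key fact is that the multilinear-type extension implied by our gain — submodular (Lemma~\ref{lemma:submodularT}) with the concave fractional relaxation — satisfies $\mathbb{E}[\systemGain(\requestBatchVec_t,\loadVec_t,\DepRound(\allocVecFrac))] \geq (1-1/e)\,\systemGain(\requestBatchVec_t,\loadVec_t,\allocVecFrac)$, the standard pipage/dependent-rounding correlation-gap bound, and also that the concave relaxation dominates the integral gain so that $\systemGain_T$ evaluated at the fractional optimum is at least $\systemGain_T(\allocVec_*)$. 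Combining: take the fractional comparator $\allocVecFrac_* = \allocVec_*$ in~\eqref{eq:oma-skeleton}, multiply the OMA bound through by $(1-1/e)$, apply the rounding guarantee term-by-term to the algorithm's iterates, and sum over $t$; rearranging gives $\psi\sum_t \systemGain(\cdot,\allocVec_*) - \mathbb{E}[\sum_t \systemGain(\cdot,\allocVec_t)] \leq (1-1/e) A'\sqrt{T} \leq A\sqrt{T}$ with $\psi = 1-1/e$, and taking the supremum over adversarial $\{\requestBatchVec_t,\loadVec_t\}\in\advSet^T$ preserves the bound since it holds for every sequence. The main obstacle I anticipate is the rounding step: one must verify that \DepRound applied independently at each node — rather than jointly across the whole network — still yields the $(1-1/e)$ correlation-gap bound for the \emph{global} submodular gain, which requires exploiting that the gain's dependence on different nodes interacts only through the path-ordering structure, so that per-node negative-association from \DepRound suffices; carefully stating and invoking the right concentration/correlation-gap lemma for this product structure is where the real work lies.
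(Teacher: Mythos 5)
Your overall architecture matches the paper's proof exactly: an Online Mirror Ascent regret bound on the fractional domain $\allocSetFrac$ with $\eta\propto 1/\sqrt{T}$, a subgradient bound of order $R\,L_{\max}\,\Delta_C$, the observation that the fractional optimum dominates the integral one, and a $(1-1/e)$ loss attributed to the randomized rounding, all combined before taking the supremum over adversarial sequences. The bookkeeping differences in the first part (you use an unweighted $\ell_\infty$ norm and omit the strong-convexity constant, whereas the paper works with the weighted norms $\norm{\cdot}_{l_1(\vec s)}$ / $\norm{\cdot}_{l_\infty(\frac{1}{\vec s})}$ and $\theta = 1/(s_{\max}|\vertices||\modelSet|)$) only affect the constant $A$ and are harmless.

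The genuine gap is the rounding step, which you yourself flag as ``where the real work lies'' but do not carry out, and the shortcut you propose for it does not go through as stated. The inequality $\mathbb{E}\left[\systemGain(\requestBatchVec_t,\loadVec_t,\DepRound(\allocVecFrac_t))\right]\geq (1-1/e)\,\systemGain(\requestBatchVec_t,\loadVec_t,\allocVecFrac_t)$ is not an off-the-shelf ``pipage/correlation-gap'' fact here, because the fractional gain used by \AlgoName{} is \emph{not} the multilinear extension of the set function $f_t$: it is the concave min-based relaxation obtained by substituting $\allocVecFrac$ into \eqref{eq:gain-compact}, so the standard correlation-gap statement (multilinear extension vs.\ concave closure) does not apply verbatim, and \DepRound{} only guarantees marginals plus negative correlation within each node, not a generic rounding guarantee for arbitrary submodular objectives. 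The paper closes this gap constructively: it introduces the product-form surrogate $\El$ in \eqref{eq:gain-surrogate} and proves the three inequalities $\systemGain\geq\El$ and $\El\geq(1-1/e)\systemGain$ pointwise (Lemma~\ref{lemma:gain_lower_upper_bound}, via the Goemans--Williamson-type estimates of Lemmas~\ref{lemma:util_bounds_lower}--\ref{lemma:util_bounds_upper}), and $\mathbb{E}[\El(\requestBatchVec_t,\loadVec_t,\allocVec_t)]\geq\El(\requestBatchVec_t,\loadVec_t,\allocVecFrac_t)$, where the last step uses the \DepRound{} property $\mathbb{E}\bigl[\prod_{m\in S}(1-\alloc^v_m c_m)\bigr]\leq\prod_{m\in S}(1-\allocFrac^v_m c_m)$ (Lemma~\ref{lemma:depround}) per node, together with independence of the rounding across nodes so that the expectation of the product along a path factorizes. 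Without this surrogate-sandwich argument (or an equivalent substitute), your chain $\psi\sum_t\systemGain(\cdot,\allocVec_*)-\mathbb{E}[\sum_t\systemGain(\cdot,\allocVec_t)]\leq A\sqrt{T}$ is not established, so the proposal as written is incomplete precisely at the step that gives the theorem its $(1-1/e)$ factor.
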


\begin{proof} (sketch)
We first prove that the expected gain of the randomly sampled allocations $\allocVec_t$ is a $(1{-}1/e)$-approximation of the fractional gain. Then, we use online learning results~\cite{bubeck2015convexbook} to bound the regret of Online Mirror Ascent schemes operating on a convex decision space and against concave gain functions picked by an adversary. The two results are combined to obtain an upper bound on the $(1{-}1/e)$-regret. We fully characterize the regret constant $A$ 
\ifnum\isextended=1
 in Appendix \ref{appendix:regret_proof}.
%  \todo{aa: I think calling $A$ ``constant'' may be misleading.}
\else
 in \cite{sisalem21techrep}.
\fi
\end{proof}
\newton{Note that this result holds over the 
%fractional 
integral domain (see Appendix \ref{apx:regret_constants}, Lemmas~\ref{lemma:util_bounds_lower}--\ref{lemma:el_last}), thus generalizing the approximation techniques in~\cite{stratis, NEURIPS2021_2387337b,shanmugam2013femtocaching} and providing a novel result in approximating budget-additive (submodular) set functions~\cite{budget_additive}.}
% of the form~$f(S) = \min\left\{B, \sum_{i \in S} w_i\right\}$.}

%\toreview{
We observe that the regret bound depends crucially on the maximum number of request types $R$, maximum model capacity $L_{\max}$ and maximum serving cost difference $\Delta_C$. When considering the cost model in Eq.~\eqref{eq:serving_cost2}, we can consider for $\Delta_C$  the sum of the total latency of the heaviest path, the parameter $\alpha$, and the largest inference delay.
%difference between the repository and the most constrained nodes. 
This result is intuitive: when these values are bigger, the adversary has a larger room to select values that can harm the performance of the system.
%}

% \ts{
% When the sum of the model sizes is of the same order as the budget constraint the regret approaches 0, since we can just store all the models and the static optimum is required to perform the same decisions. On the other hand, when the budget capacities are decreased the regret also approaches 0, since the optimization space is reduced (we can't store anything) that harms the performance of our policy as well as the static optimum.
% }

As a direct consequence of Theorem~\ref{th:regret_bound}, the expected time averaged $(1-1/e)$-regret of \AlgoName{} can get arbitrarily close to zero for large time horizon. Hence, \AlgoName{} achieves a time averaged expected gain that is a $(1-1/e-\epsilon)$-approximation of the optimal time averaged static gain, for arbitrarily small~$\epsilon$. 

Observe that \AlgoName{} computes a different $\allocVec_t$ at every time slot. Intuitively, this allows it to ``run after'' the exogenous variation of the adversarial input $\{\requestBatchVec_t, \loadVec_t \}_{t=1}^{T} \in \advSet^T$. An alternative goal that can be achieved by \AlgoName{} is to find a static allocation $\bar{\allocVecFrac}$. In order to do so, we need to \emph{(i)}~run INFIDA for $\tilde T$ time-slots, \emph{(ii)}~based on the $\{\allocVecFrac_t\}_{t=1}^{\tilde T}$ computed by \AlgoName{}, calculate $\bar{\allocVec}$ (the exact calculation is in Proposition~\ref{corollary:offline_solution}), \emph{(iii)}~deploy in the IDN the allocation $\bar{\allocVec}$ and keep it static, in order to avoid switches. Obviously, we would like the quality of $\bar{\allocVec}$ to be close to the best $\allocVec_*$, defined in~\eqref{eq:static_gain}. The following proposition shows that the gain achieved with our $\bar{\allocVec}$ is boundedly close to the optimum. Moreover, since~\eqref{eq:static_gain} is NP-hard, there cannot exist better bounds than the one we achieve, assuming $\mathrm{P} \neq \mathrm{NP}$~\cite{nemhauser1978best}.
%This observation also suggests that \AlgoName{} can be used as a numerical solver for the NP-hard static allocation problem~\eqref{eq:static_gain} obtaining the best approximation bound achievable for this kind of problems assuming $\mathrm{P} \neq \mathrm{NP}$~\cite{nemhauser1978best}. This result is formally stated by the following proposition: %configuration $\allocVec_*$ on average.
\begin{proposition} (offline solution)
\label{corollary:offline_solution}
\iffalse
Let $\bar{\allocVecFrac}$ be the average fractional allocation $\bar{\allocVecFrac} =\frac{1}{T} \sum^T_{t=1} \allocVecFrac_t$. $\forall \epsilon > 0$ and over a sufficiently large time horizon $T$, $\bar{\allocVecFrac}$ satisfies the following:

\begin{align}
        \frac{1}{T}\sum^T_{t=1}\systemGain(\requestBatchVec_t, \loadVec_t, \bar{\allocVecFrac}) \geq \left(1-\frac{1}{e} - \epsilon\right) \frac{1}{T}\sum^T_{t=1}    \systemGain(\requestBatchVec_t, \loadVec_t, \allocVec_*).
\end{align}
\fi
Replace in \AlgoName{} the allocation gain $G(\requestBatchVec_t, \loadVec_t, \allocVecFrac)$ by $G_T(\allocVecFrac)$ (defined in~\eqref{eq:static_gain}). %After $\tilde T$ iterations, and it is run against the fixed time averaged gain $G_T $ over a running time $\tilde T$. 
After $\tilde T$ iterations,
let  $\bar{\allocVecFrac}$ be the average fractional allocation $\bar{\allocVecFrac} =\frac{1}{\tilde{T}} \sum^{\tilde{T}}_{\tilde
 t=1} \allocVecFrac_t$, and $\bar{\allocVec}$ the random state sampled from $\bar{\allocVecFrac}$ using \textnormal{\DepRound{}}.
$\forall \epsilon > 0$, for $\bar T$ large enough,
%over a sufficiently large running time $\tilde{T}$, 
$\bar{\allocVec}$ satisfies %the following:
\begin{align}
\mathbb{E} \left[ G_T(\bar{\allocVec}) \right] \geq \left(1-\frac{1}{e} - \epsilon\right) G_T({\allocVec}_*), 
\end{align}
where $ {\allocVec}_* = \argmax_{\allocVec \in \allocSet} G_T(\allocVec)$.
\end{proposition}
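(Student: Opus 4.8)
The plan is to decouple the two approximation sources that already appear in the proof of Theorem~\ref{th:regret_bound} — the optimization error of Online Mirror Ascent on the objective, and the integrality gap of \DepRound{} — and to apply each once, but now with the stationary function $G_T(\cdot)$ of~\eqref{eq:static_gain} playing the role of the per-slot gains. Concretely: (i) argue that the modified \AlgoName{} is plain OMA on a single concave function over $\allocSetFrac$, so its iterates have vanishing optimization error against $\max_{\allocVecFrac\in\allocSetFrac}G_T(\allocVecFrac)$; (ii) use Jensen to transfer this to the average iterate $\bar{\allocVecFrac}$; (iii) relate the fractional optimum to $G_T(\allocVec_*)$ by monotonicity; (iv) round $\bar{\allocVecFrac}$ with \DepRound{} paying the $(1-1/e)$ factor.

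\emph{Step 1 (offline mirror ascent).} Since the sequence $\{\requestBatchVec_t,\loadVec_t\}_{t=1}^{T}$ is fixed, a subgradient of $G_T$ at any $\allocVecFrac$ is available — it is the average over $t$ of the per-slot expressions~\eqref{eq:subgradient_expression} — so Algorithm~\ref{algo:idn} fed with this subgradient is an Online Mirror Ascent run on the single concave function $G_T$ (concavity: Lemma~\ref{lemma:concavity}) over the convex set $\allocSetFrac$. The standard OMA guarantee for the weighted negative-entropy map \cite[Ch.~4]{bubeck2015convexbook}, which is already used as an intermediate step in the proof of Theorem~\ref{th:regret_bound} but here against a \emph{fixed} objective and \emph{without} rounding, yields after $\tilde T$ iterations
\[
\tfrac{1}{\tilde T}\textstyle\sum_{\tilde t=1}^{\tilde T} G_T(\allocVecFrac_{\tilde t}) \;\ge\; \max_{\allocVecFrac\in\allocSetFrac} G_T(\allocVecFrac) - \tfrac{A'}{\sqrt{\tilde T}},
\]
for a constant $A'$ of the same order as the $A$ of Theorem~\ref{th:regret_bound}. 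Concavity of $G_T$ and Jensen's inequality give $G_T(\bar{\allocVecFrac}) \ge \tfrac{1}{\tilde T}\sum_{\tilde t} G_T(\allocVecFrac_{\tilde t})$, hence $G_T(\bar{\allocVecFrac}) \ge \max_{\allocVecFrac\in\allocSetFrac} G_T(\allocVecFrac) - A'/\sqrt{\tilde T}$.

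\emph{Step 2 (fractional optimum dominates, then round).} Viewing any $\allocVec_*\in\allocSet$ as a point of $[0,1]^{\modelSet\times\vertices}$, we may fill its slack at each node fractionally up to the budget boundary; by monotonicity of the fractional extension of $G_T$ (each $\Auxalloc^k_\request$ in~\eqref{eq:sum_of_auxvars} is non-decreasing in $\allocVecFrac$ and the cost-saving coefficients of Lemma~\ref{lemma:gain} are nonnegative) this only increases $G_T$ and lands in $\allocSetFrac$, so $\max_{\allocVecFrac\in\allocSetFrac}G_T(\allocVecFrac)\ge G_T(\allocVec_*)$. Finally, the rounding lemma behind Theorem~\ref{th:regret_bound} — using only that $G_T$ is a nonnegative sum of concave monotone capped-sum terms and that \DepRound{} preserves marginals while inducing the needed negative correlation — gives $\mathbb{E}[G_T(\bar{\allocVec})]\ge(1-1/e)\,G_T(\bar{\allocVecFrac})$. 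Chaining,
\[
\mathbb{E}[G_T(\bar{\allocVec})] \ge \bigl(1-\tfrac1e\bigr)\bigl(G_T(\allocVec_*)-\tfrac{A'}{\sqrt{\tilde T}}\bigr) \ge \bigl(1-\tfrac1e\bigr)G_T(\allocVec_*)-\tfrac{A'}{\sqrt{\tilde T}},
\]
and taking $\tilde T \ge \bigl(A'/(\epsilon\,G_T(\allocVec_*))\bigr)^2$ (the case $G_T(\allocVec_*)=0$ being trivial) makes the additive error at most $\epsilon\,G_T(\allocVec_*)$, which is the claim.

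The only non-routine ingredient is the rounding inequality $\mathbb{E}[G_T(\bar{\allocVec})]\ge(1-1/e)\,G_T(\bar{\allocVecFrac})$: one must check that the specific structure of $G_T$ — a sum over request types and ranks of cost-saving weights times the concave partial sums $\Auxalloc^k_\request$ — combined with the negative-association of \DepRound{}'s output yields the $(1-1/e)$ factor \emph{uniformly} over all batches and capacity vectors, so that it survives the time-average hidden inside $G_T$. This is exactly the lemma proved for Theorem~\ref{th:regret_bound}, so it can be invoked verbatim; everything else is a short assembly of Jensen's inequality, monotonicity of the gain, and the OMA regret bound.
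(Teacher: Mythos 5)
Your proof is correct and follows essentially the same route as the paper's: the OMA regret bound applied to the fixed concave objective $G_T$, Jensen's inequality on the average iterate $\bar{\allocVecFrac}$, the fractional-optimum-dominates-integral-optimum step, and the per-slot $(1-1/e)$ \DepRound{} rounding inequality (which, as you note, is exactly the bounding-function argument from Theorem~\ref{th:regret_bound} and survives the time average because it holds slot by slot). The only differences are cosmetic — you chain the steps in the reverse order and spell out the monotonicity justification for $\max_{\allocVecFrac\in\allocSetFrac}G_T(\allocVecFrac)\ge G_T(\allocVec_*)$ slightly more explicitly than the paper does.
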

The proof is given in Appendix.~\ref{proof:offline_solution}.
\section{Experimental Results}
\label{sec:validation}

%\todoi{I put gtx 980 at the edge. Nano is too slow. Nothing is served at the edge.}
We evaluate \AlgoName{} by simulating a realistic scenario based on the typical structure of ISP networks. We compare our solution with a greedy heuristic and its online variant (described below), as the greedy heuristic is known to achieve good performance in practice for submodular optimization~\cite{krause2014submodular}. 
%that we designed inspired by related work in caching networks (described below).

\iffalse
\todoi{ Experiments Parameters:
- Tiers: 4 3 2 1 0
- Capacities per server: 4 12 18 30 infinite.
- Nodes:60 20 4 1 1.
- Query nodes: 60.
- Tasks: 20.
- Requests popularity distribution: Zipf with exponent 1.2.(Same for sliding except it is rolled as the figure shows. 

Ask here if something is missing:
..

Additional info:
Scaling: x1.6, x1.11, x5.44, x5.33 (compared to original proposal, which is instead used in figure 1)

Fig.2 and Fig.3 are with 5000 rps.

The alpha for Fig.4 is 0.5

\textbf{How much is the repository cost in Figure 4?} I computed it to be around 70, is it correct? Then a gain of 60 seems too high, doesn't it?

It can't be 70.  Network latency ~60 ms + inference latency ~ 24ms + inaccuracy cost(alpha=1) ~ 35. It's around 120 per request.

Right I skipped the 24ms of inference latency. The mean network latency is 54 ms Since in figure 4 the alpha is 0.5, then the mean cost per request is around 95. 
)
}
\fi

\noindent
\textbf{Topology.}
We simulate a hierarchical topology similar to \cite{ceselli2017mobile} that spans between edge and cloud, with different capacities at each tier. We consider 5 tiers: base stations (tier 4), central offices (tiers 3, 2), ISP data center (tier 1), a remote cloud (tier 0). 
We assume a hierarchical geographic distribution similar to LTE.
% with RTTs ranging from few milliseconds between tiers 2, 3, and 4, to tens of milliseconds between tiers 0, 1, and 2. 
We take the Round-Trip Time (RTT) across the different tiers as follows: tier 4 to tier 3 takes 6 ms, tier 3 to tier 2 takes 6 ms, tier 2 to tier 1 takes 15 ms, and tier 1 to tier 0 takes 40 ms. 
% The average RTT from base stations to cloud is about $50$~ms. 
\gc{We execute our experiments at two different scales:} %\rev{}{}{in \emph{ \emph{Network Topology~I}} we scale the topology to a scenario with 60 base stations (86 nodes in total) to obtain a {hierarchical network}, while \emph{ \emph{Network Topology~II}} is obtained by further scaling down to 2 base stations (5 nodes in total), giving a T-shaped network topology with two leaves.}
  \emph{Network Topology~I} counts 24 base stations and 36 nodes in total, while  \emph{Network Topology~II} is a simpler 5-node scenario with 2 base stations.
 %and 5 nodes in total, giving a T-shaped network topology with two leaves.

\noindent
\textbf{Processing Units.}
 %We assume servers on tiers 0-1 are equipped with high-end GPUs (Titan RTX), those on tiers 2-3 with mid-tier GPUs (GeForce GTX 980), and each base station (tier 1) with a single low capacity GPUs (Geforce GTX 960).
% We assume tier 0 is equipped with high-end GPUs (Titan RTX), while other tiers are equipped with mid-range GPUs (GeForce GTX 980). However, we limit the available GPU memory on 75\% of the base stations to 1 GB (instead of 4GB). We take GPU memory of the computing nodes as the limiting budget. 
We take GPU memory of the computing nodes as the limiting budget. The node at tier 0 can store the entire models catalog. We simulate the performance of two different processing units: the computing nodes at tiers 0 and 1 are equipped with high-end GPUs (Titan RTX), and the remaining tiers 2--4 have mid-tier GPUs (GeForce GTX 980). The budget of each computing tier is given as follows: a tier-1 node has 16GB GPUs, a tier-2 node has 12GB GPUs, a tier-3 node has 8GB GPUs, and a tier-4 node has 4GB GPUs.

\noindent
\textbf{Catalog and requests.}
We simulate performance based on state-of-the-art pre-trained models and their pruned versions~\cite{bochkovskiy2020yolov4, cai2020yolobile}, profiled for each simulated processing unit, for a total of 10 models (Table~\ref{tab:catalog}). We consider a task catalog with $|\mathcal{N}| = 20$ different object detection tasks. We allow $3$ duplicates per model; this gives $|\mathcal{M}_i| = 30$ alternative models per task $i \in \mathcal{N}$. Note how, as model complexity decreases, the number of frames a GPU can process per second increases, and consequently the average inference delay decreases. 

The time slot duration is set to 1 minute and requests arrive at a constant rate of 7,500 requests per second (rps), unless otherwise said. 
Each request type is assigned randomly to two base stations in tier 4. The corresponding task is selected according to two different popularity profiles: \emph{(i)} in the {\emph{Fixed Popularity Profile}} (Fig.~\ref{fig:popularitya}), a request is for task $i$  with constant probability  $p(i) = \frac{(i+1)^{-1.2}}{\sum_{i' \in \mathcal{N}}(i'+1)^{-1.2}}$ (a Zipf distribution with exponent $1.2$), while \emph{(ii)} in the {\emph{Sliding Popularity Profile}} (Fig.~\ref{fig:popularityb}), the $l$-th consecutive request is for task $i$ with probability $\tilde{p} (i,l) = p\left(\left(i + 5\floor{l / W}\right)\mod20\right)$, that is, the popularity of the tasks changes through a cyclic shift of 5 tasks every 1 hour for a request rate of 7,500 rps ($W = 2.7 \times 10^7$).
 
%  corresponds to a window duration of $1$ hour for request rate 7,500 rps.
%  or a in  according to two different popularity profiles. Under the \emph{\emph{Fixed Popularity Profile}} task $i$ is sampled 

% according to a Zipf distribution with exponent $1.2$, i.e., with probability proportional to $(i+1)^{-1.2}$. Under the \emph{\emph{Sliding Popularity Profile}} task  in which tasks are sampled from the same Zipf distribution but the  we change the popularity of the tasks through a cyclic shift at fixed time intervals (Fig.~\ref{fig:popularity}b). %We assume requests are first received from the base stations, and each base station may receive up to 500 requests per second.

% In our simulations, requests are first received from the base stations, and we submit to the system up to 15000 requests per second (rps). The duration of a time slot is  taken to be $0.5$ seconds.

%\subsection{Evaluation Settings}

\begin{table}[t]
\centering
\caption{Catalog for variants of YOLOv4~\cite{bochkovskiy2020yolov4} profiled on two different Processing Units. Accuracy is for the MS COCO dataset. Values for the pruned variants are adapted from~\cite{cai2020yolobile}.\vspace{1em}}
\label{tab:catalog}
\begin{footnotesize}
\begin{tabular}{lrrrr}
\toprule
  \multirow{2}{*}{\shortstack{variants \\ of yolov4}} %
  &\multirow{2}{*}{\shortstack{accuracy \\ (mAP@0.5)}} 
  &\multirow{2}{*}{\shortstack{memory \\ (MB)}} 
  &\multicolumn{2}{c}{frames per second} \\
      & & 
      & Titan RTX % 
      & GTX 980 % 
      %& GTX 960 
      \\    
    \midrule
        608p        & 65.7  & 1577  & 41.7  & 14.2 \\ %& 6.13  \\
        512p        & 64.9  & 1185  & 55.5  & 18.9 \\ %& 8.15  \\
        416p        & 62.8  & 1009  & 73.8  & 25.1 \\ % & 10.9  \\
        320p        & 57.3  & 805   & 100   & 34.1 \\ % & 14.7  \\
        3.99pruned  & 55.1  & 395   & 209   & 71.0 \\ % & 30.7  \\
        8.09pruned  & 51.4  & 195   & 329   & 112  \\ % & 48.4  \\
        10.10pruned & 50.9  & 156   & 371   & 126  \\ % & 54.6  \\
        14.02pruned & 49.0  & 112   & 488   & 166  \\ % & 71.7  \\
        tiny-416p   & 38.7  & 187   & 888   & 302  \\ % & 131   \\
        tiny-288p   & 34.4  & 160   & 1272  & 433  \\ % & 187   \\
    \bottomrule
\end{tabular}
\end{footnotesize}
\end{table}

\begin{figure}[t]
    \centering
    \subcaptionbox{\emph{Fixed Popularity Profile}\label{fig:popularitya}}{\includegraphics[clip= true, width = .45\linewidth, trim= 0.0in 0.0in 0.0in 0.05in]{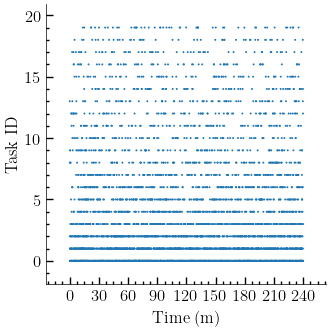}
    }
    \subcaptionbox{\emph{Sliding Popularity Profile}\label{fig:popularityb}}{\includegraphics[clip= true, width = .45\linewidth, trim= 0.0in 0.0in 0.0in 0.05in]{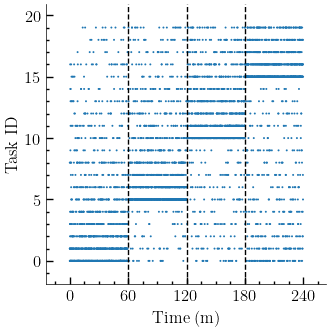}
    }
    % \subcaptionbox{PSN Popularity Profile}{\includegraphics[clip= true, width = .3\linewidth, trim= 0.0in 0.0in 0.0in 0.05in]{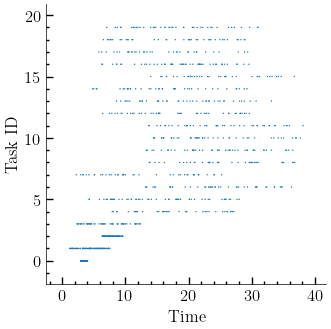}}
    \caption{Popularity profiles of inference tasks for request rate $7,500$ rps.  Each dot represents a request for a given task at given time.  Therefore, popular tasks correspond to denser lines. In (b), popularity changes at fixed time intervals through a cyclic shift.
    At any moment the requests are i.i.d. and sampled from a Zipf distribution with exponent $1.2$. The figure shows a random down-sample of 5,000 requests to emphasize the density difference across the different tasks.}
    \label{fig:popularity}
\end{figure}

\noindent
\textbf{Static greedy.} \rev{ts}{}{\gc{We adapt the static greedy (\SG) heuristic} from the cost-benefit greedy  in~\cite{krause2014submodular}. %The selected allocation is initially empty, then we allocate the model with the highest marginal gain scaled down by its size while meeting the budget constraint. When a model is selected, we compute the amount of consumed request load. Then, we obtain a different static objective with the reduced request load. We repeat the same process until all the requests are served by the models placement.
\gc{\SG{} operates in hindsight seeking maximization of the time averaged allocation gain over the whole time horizon $T$, as in Eq.~\eqref{eq:static_gain}.}
\gc{Starting from an empty allocation, this policy progressively allocates the model that provides the highest marginal gain normalized by size, among those that meet the budget constraints. This process is repeated until either the intermediate allocation is capable of serving all requests or none of the remaining valid allocations introduces a positive marginal gain.}
}

\noindent
\textbf{Online load-aware greedy heuristic.}
\iffalse
\rev{ts}{}{The online load-aware greedy (OLAG) heuristic is the online variant of \SG.} When a request is forwarded along $v \in \vertices$, we keep track of the number of times a task is requested and the cost reduction of models serving that task compared to the repository. At each node $v$, an importance weight is computed $w_{m} = \frac{g_m}{s_m} \min\{r_{i,m}, L_m\}$, where $r_{i,m}$ is the number of times a task $i \in \mathcal{N}$ is seen by  node $v$ and model $m$ could serve it, $g_m$ is the cost reduction, and $s_m$ is the size of the model $m$. While $\min\{r_{i,m}, L_m\}$ are the potential requests that can be served by the model $m$.

The node selects the model $m_*$ with the highest importance while respecting the capacity constraint, then we subtract the potential requests that it can serve ($\min\{r_{i,{m_*}}, L_{m_*}\}$) from $r_{i, m_*}$ and all $r_{i, m'}$ if $m'$ can can serve task $i$ but at a higher cost. The same procedure is performed until the capacity of the node is filled.
\fi
%\iffalse
% attempt alternative version
As \AlgoName{} is the first online policy for ML models' allocation in IDNs, there is no clear baseline to compare it with. We then propose an online heuristic based on \SG, which we call online load-aware greedy (\OnlineGreedy). A node $v$ uses counters $\phi^v_{m,\request}$ to keep track of the number of times a request $\request\in\requestSet$ is forwarded upstream but could have been served locally at a lower cost compared to the repository, i.e., using a model $m\in\modelSet$ with positive gain that we denote by $q^v_{m,\request}$.
For every model $m$, an importance weight is computed as $w^v_{m} {=} \frac{1}{s^v_m}\frac{1}{|\requestSet|}\sum_{\request\in\requestSet}q^v_{m, \request}\min\{\phi^v_{m,\request},L^v_m\}$, where $s^v_m$ is the size of model $m$ and $\min\{\phi^v_{m,\request},L^v_m\}$ is the number of requests that could have been improved by $m$.
At the end of a time slot, the node selects the model $m_*$ with the highest importance while respecting the resource budget constraint, then subtracts the quantity $\min\{\phi^v_{m,\request},L^v_m\}$ from $\phi^v_{m_*, \request}$ and from all the $\phi^v_{m',\request} \colon q^v_{m',\request}<q^v_{m_*,\request}$, i.e., models that provide a gain lower than $m_*$. This procedure is repeated until the resource budget of the node is consumed.
%\fi

\noindent\textbf{Offline \AlgoName{}}. Motivated by Proposition~\ref{corollary:offline_solution}, we implemented also an offline version of \AlgoName{} that we call \AlgoNameOffline, which maximizes the time-averaged gain \eqref{eq:static_gain} over the whole time horizon $T$. The potential available capacities are determined at runtime from the current allocations and request batches (rather than by an adversary).
%are not chosen by an adversary, they can be determined experimentally from the picked allocations and the request batches, so we only take the offline set of requests' batches as input and let the potential available capacities be determined from the allocations given by the policy. 

\noindent
\textbf{Performance Metrics.} The performance of a policy $\mathcal{P}$ with the associated sequence of allocation  decisions $\{\allocVec_t\}^T_{t=1}$ is evaluated in terms of the time-averaged gain normalized to the number of requests per time slot (NTAG):
%; The NTAG evaluated over $T$ requests is defined as: 
\begin{align}
       \mathrm{NTAG} (\mathcal{P}) = \sum^T_{t=1} \frac{1}{T  \norm{\vec{r}_t}_1 } G(\vec{r}_t,\vec{l}_t, \vec{x}_t).
\end{align}
\noindent Moreover, we evaluate the update cost of a policy  $\mathcal{P}$ with the associated sequence of allocation decisions $\{\allocVec_t\}^T_{t=1}$  by quantifying the total size of fetched models over $T$ time slots. The update cost is reflected by the Time-Averaged Model Updates (MU) metric defined as:
\begin{align}
    \mathrm{MU} (\mathcal{P}) \triangleq  \frac{1}{T} \sum^{T}_{t=2} \sum_{(v,m) \in \vertices \times \modelSet} s^v_m \max\{0, x^v_{t,m} - x^v_{t-1,m}\}.
    \label{eq:update_cost_def}
\end{align}
% Note that only the models that are being fetched are accounted for and it is captured by the $\max\{0, \,\cdot\,\}$ operator.

\subsection{Trade-off between Latency and Accuracy}

We first evaluate how \AlgoName{} adapts to different trade-offs between end-to-end latency and inference accuracy by varying the trade-off parameter $\alpha$.\footnote{The inaccuracy cost is taken in 0--100, then $\alpha$ picked here corresponds to $100\times$ scaling of the parameter defined in Eq.~\eqref{eq:serving_cost2}.} %For a better visualization of the trade-offs, in this set of experiments we focus on a subset of our catalog (Table~\ref{tab:catalog}) that only features object detection tasks with 10 versions of the YOLOv4 model.i

Figure~\ref{fig:allocation} shows the fractional allocation decision at each tier of the network topology for different values of $\alpha$ (remember that the smaller $\alpha$ the more importance is given to the latency rather than to inaccuracy, see Eq.~\eqref{eq:serving_cost2}). Models are ordered horizontally by increasing accuracy with 3 potential replicas for each model, and only the models able to serve the most popular request are shown.
%and adjacent ticks refer to multiple replicas of the same model. 
Note that the tier-0 node acts as a repository and its allocation is fixed; moreover, in Fig~\ref{fig:allocation} the repository node picks the second most accurate model because it provides the smallest combined cost in Eq.~\eqref{eq:serving_cost2}.
%Tier 0 acts as repository and its allocation is fixed with just replicas of the highest quality model.
% For this set of experiments, we submit a load of 5000 rps.

For $\alpha=3$ (Fig.~\ref{subfig:allocation_30}), \AlgoName{} allocates a considerable amount of small models (which provide low accuracy) near the edge (tiers 1--3 and model IDs 0--18), as they can serve a larger number of requests compared to higher quality models with low inference delay. %Note that, in general, fewer models are deployed at the very edge (tier 4) because resources are more scarce. 
By giving more importance to the accuracy ($\alpha=4$) the system tends to deploy more accurate models and rarely allocates small models (Fig.~\ref{subfig:allocation_40}). 
%For $\alpha=5$, the system allocates no small models in practice (model IDs 0--20) and is forced to allocate multiple replicas of the most accurate models even on the lower tier nodes (Fig.~\ref{subfig:allocation_50}).
For $\alpha=5$, the number of models deployed on lower tiers decreases, as the system allocates no small models in practice (model IDs 0--20) and selects instead multiple replicas of the most accurate models (Fig.~\ref{subfig:allocation_50}). Since higher quality models feature, in general, a lower serving capacity (Table~\ref{tab:catalog}), Fig.~\ref{subfig:allocation_50} suggests that a significant number of requests is served in the cloud (Tier 0) for this value of $\alpha$.
% The few models allocated on tiers 1--3 suggest that a significant amount of requests is served in the cloud, being latency less important.

\begin{figure}[t]
    \centering
    %\subcaptionbox{$\alpha = 0.5$}{\includegraphics[width = .32\linewidth]{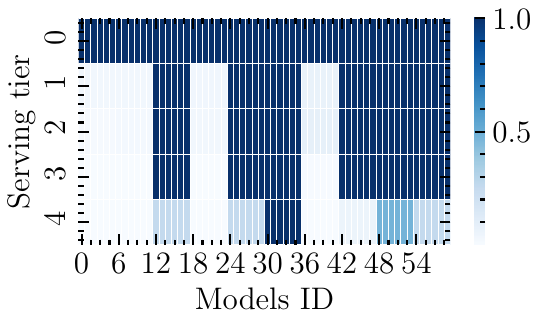}}
    %\subcaptionbox{\label{subfig:allocation_01}$\alpha = 0.1$}{\includegraphics[width = .325\linewidth]{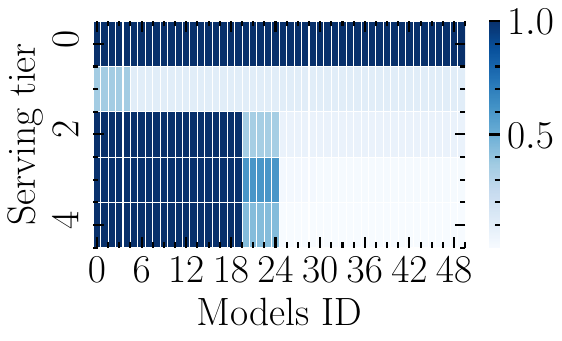}}
    %\subcaptionbox{\label{subfig:allocation_05}$\alpha = 0.5$}{\includegraphics[width = .325\linewidth]{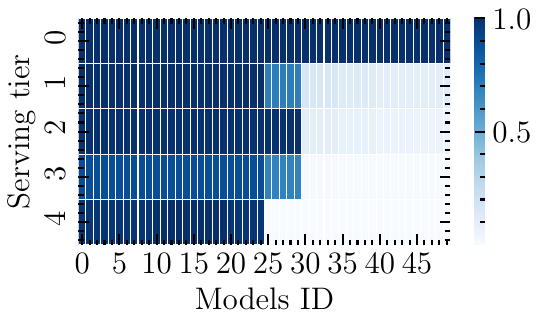}}
    \subcaptionbox{\label{subfig:allocation_30}$\alpha = 3$}{
    \includegraphics[width = \linewidth]{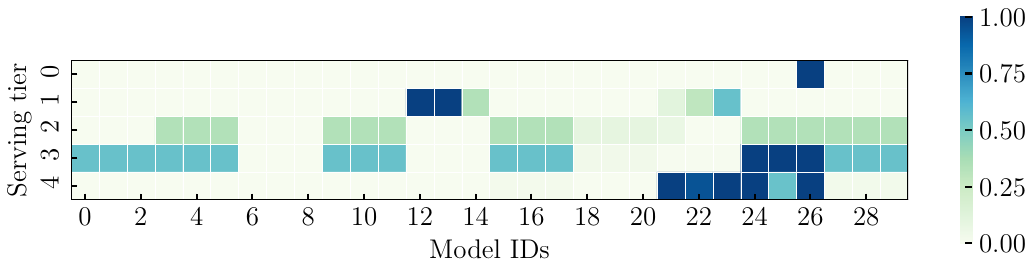}
    }
   \subcaptionbox{\label{subfig:allocation_40}$\alpha = 4$}{
    \includegraphics[width = \linewidth]{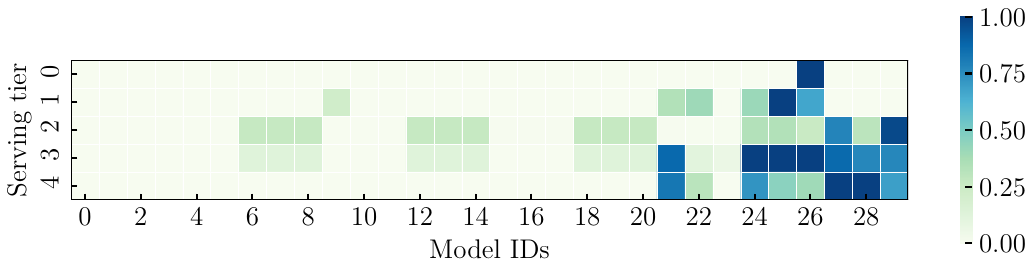}
    }
    \subcaptionbox{\label{subfig:allocation_50}$\alpha = 5$}{
    \includegraphics[width = \linewidth]{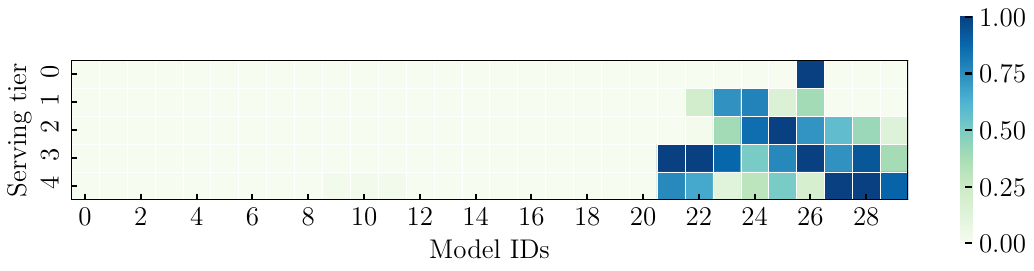}
    }
    %   \subcaptionbox{\label{subfig:trade_off}}{\includegraphics[width=.3\linewidth]{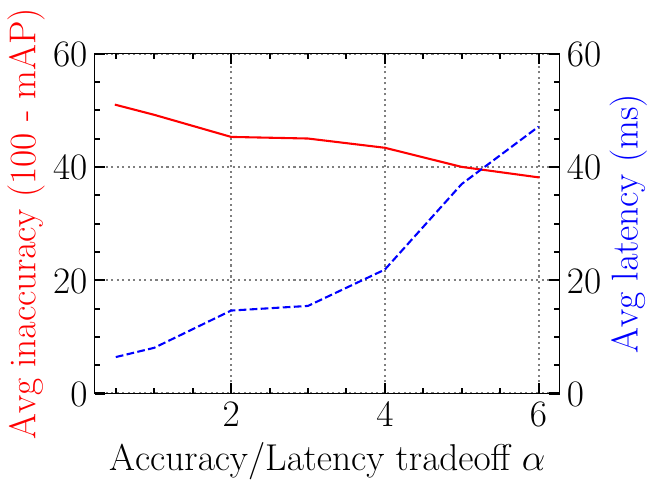}}
    \caption{Fractional allocation decisions $\allocFrac^v_m$ of \AlgoName{} on the various tiers of \gc{ \emph{Network Topology~I}} under \emph{Fixed Popularity Profile}. We only show the allocations corresponding to the models capable to serve the most popular request.
    %The decision variables are restricted to the subset of models serving the most popular task. 
    The model IDs are sorted by increasing accuracy. 
    }
    \label{fig:allocation}
\end{figure}

\begin{figure}[t]
    \centering
    % \subcaptionbox{\label{subfig:trade_off}}{
    \includegraphics[width=.6\linewidth]{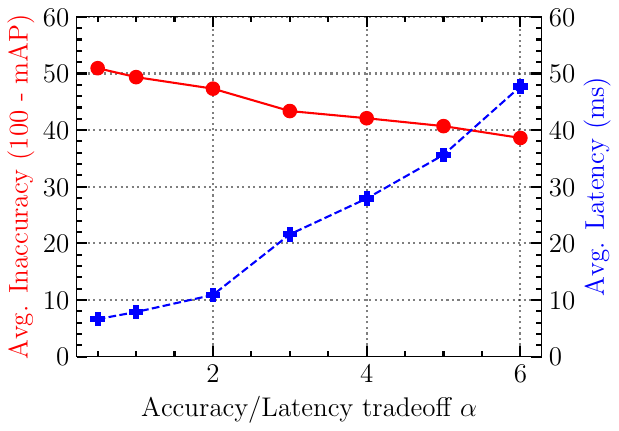}
    % }
    
    \caption{Average latency (dashed line) and inaccuracy (solid line) costs experienced with \AlgoName{} for different values of $\alpha$ under  \emph{Network Topology~I} and \emph{Fixed Popularity Profile}.}% learning rate 0.00001
    
    \label{fig:gain_alpha}
\end{figure}

\begin{figure}[t]
    \centering
\subcaptionbox*{\vspace{-1em}}{\includegraphics[trim=.5 175px .5 .5,clip,width = .49\linewidth]{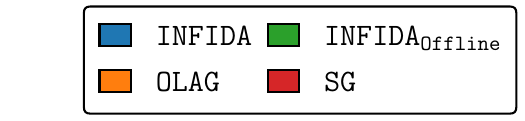}}\\
    \subcaptionbox{\label{subfig:gain_alpha21} }{\includegraphics[trim=.5 .5 .5 70px,clip,width = .49\linewidth]{sections/figs/Jul_big_changing_greedy_vs_oma.pdf}}
    \subcaptionbox{\label{subfig:gain_alpha22} }{\includegraphics[trim=.5 .5 .5 70px,clip,width = .49\linewidth]{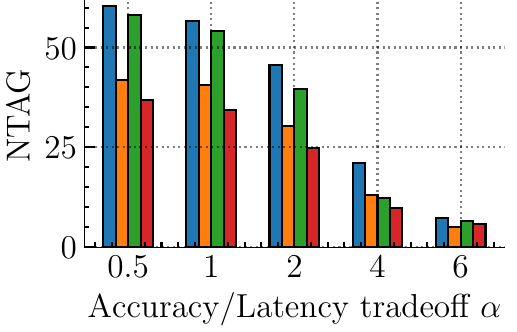}}
    
    \caption{NTAG of the different policies under \emph{Sliding Popularity Profile} and network topologies: (a) \emph{Network Topology~I}, and (b) \emph{Network Topology~II}. }
    
    \label{fig:gain_alpha2}
\end{figure}

\begin{figure}[t]
    \centering
 \subcaptionbox*{\vspace{-1em}}{\includegraphics[trim=.5 174px .5 .5,clip,width = .6\linewidth]{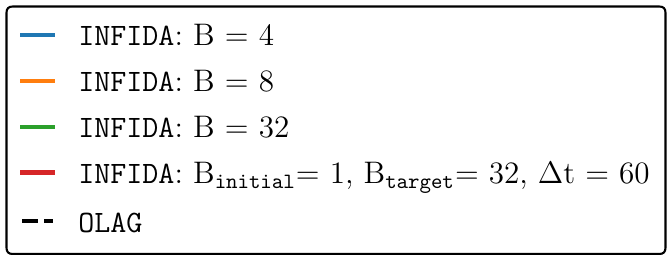}}\\
    \subcaptionbox{\label{subfig:models_update} Models Updates (MU)}{\includegraphics[trim=.5 .5 .5 .5, clip,width = .49\linewidth]{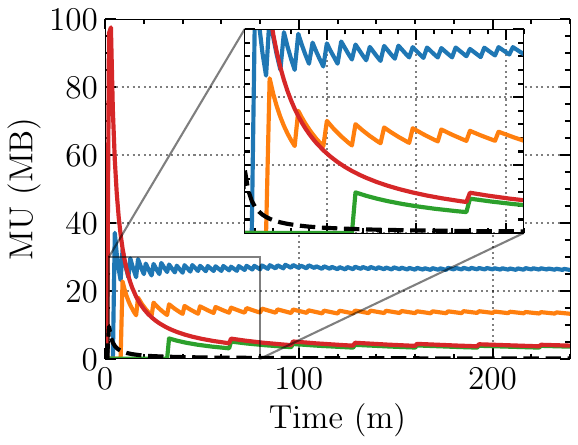}}
    \subcaptionbox{\label{subfig:NTAG} NTAG}{\includegraphics[trim=.5 .5 .5 .5,clip, width = .49\linewidth]{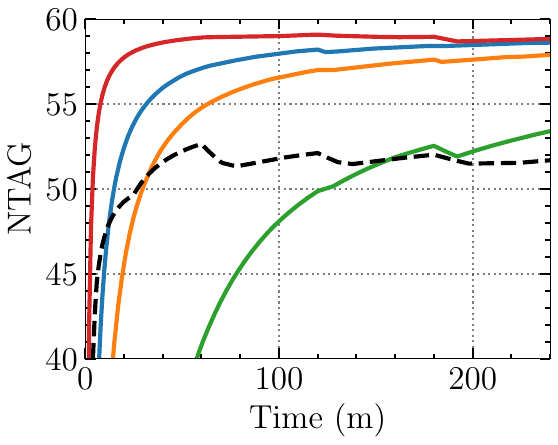}}
    
    \caption{(a) Models Updates (MU) and (b) NTAG of \OnlineGreedy{} and \AlgoName{} for different values of refresh period $B \in \{4,8,16\}$, and for a dynamic refresh period with initial value $B_\text{init} = 1$, target value $B_\text{target} = 32$ and stretching duration $\Delta t = 60$ (1H). The experiment is run under \emph{Network Topology~I} and \emph{Sliding Popularity Profile}.}
    
    \label{fig:update_cost}
\end{figure}

% \rev{ts}{\rev{ts}{For better visualizing the}{To better visualize the} trade-off, we describe below a set of results obtained in a topology with smaller nodes (up to 5 times), which allowed us to stress the system even with a moderate load of 5000 requests per second (Fig~\ref{fig:gain_alpha}).}{}

Figure~\ref{fig:gain_alpha} shows the average experienced inaccuracy (inaccuracy is given by 100 $-$ mAP and mAP is the mean average precision) and latency for different values of $\alpha$ under  \emph{Network Topology~I} and \emph{Fixed Popularity Profile}. When accuracy is not important (i.e., $\alpha \approx 0$), \AlgoName{} effectively achieves very low end-to-end latency (few milliseconds) by prioritizing the deployment of small and inaccurate models near to the edge nodes. Noticeably, the trend in both curves (decreasing inaccuracy and increasing latency) suggests that, when higher accuracy is required, the system starts to prefer models deployed close to the cloud, leading to a sudden change in the trade-off and to a significant increase in latency.

In Fig.~\ref{fig:gain_alpha2} we show the normalized time-averaged gain of~\AlgoName{} compared to \OnlineGreedy{}, \SG{}, and \AlgoNameOffline{} for different values of $\alpha$ under the \emph{Sliding Popularity Profile}. Results are shown both for  \emph{Network Topology~I} (Fig.~\ref{subfig:gain_alpha21}) and for  \emph{Network Topology~II} (Fig.~\ref{subfig:gain_alpha22}).
%(\rev{ts}{}{In Fig.~\ref{fig:gain_alpha2}a the experiments are run on  \emph{Network Topology~I}, and in Fig.~\ref{fig:gain_alpha2}b we run over  \emph{Network Topology~II} where we are also able to evaluate SG offline.}

% The gain is normalized to the number of requests per second. 
The plot shows that the gain decreases by increasing $\alpha$. This is expected since the gain~\eqref{eq:gain} is defined as the improvement w.r.t. the repository allocation (tier 0). Therefore, when the latency is not important, high accuracy models at tier 0 are preferred, and there is no much room for improvement (the optimal gain eventually tends to zero for $\alpha {\to} {+}\infty$).
% Noticeably, in this situation our algorithm performs much better than \OnlineGreedy{} and \SG{} (the gain is up to 2.5 times higher). This suggests that, when there are few remaining allocations that may introduce a cost reduction compared to the repository, \AlgoName{} does a better job in spotting them.
\rev{ts}{}{Note that, in general, \SG{} and \AlgoNameOffline{} policies perform worse than their offline counterparts, as they pick a single allocation that is the best w.r.t. the whole sequence of requests. However, in the \emph{Sliding Popularity Profile} (Fig.~\ref{fig:popularity}) the best decision changes periodically, and only the online policies have the ability to adapt to such change.} Moreover, we observe that consistently \AlgoNameOffline{} has better performance than \SG{}: although both policies are offline, \AlgoNameOffline{} manages to provide a better allocation.
%Greedy struggles when few intermediate allocations can introduce small improvements compared the cloud.

% \todoi{
% Figure 2: only put the proper repository allocation for each value of alpha.
% }
\subsection{Trade-off between model updates and service cost.}

% \todoi{aa: The fact that at every time-slot (how large is a time-slot?) the models saved (after rounding) are completely independent from the previous round can degenerate in a crazy distributed systems where models flip on and off at every time-slot, imposing massive migrations that would dominate bandwidth utilization or at least imposing continuous transfer from storage to main memory, which would dominate computation time. Shall we discuss this point?}

In this set of experiments, we evaluate how the frequency at which \AlgoName{} updates the model allocation affects the update cost incurred by the system. Indeed, frequent updates could lead to massive migrations with an overhead on network bandwidth. As an evaluation metric, we measure the total size of fetched models averaged over time (see the performance metric in Eq.~\eqref{eq:update_cost_def}). We introduce $B$ that we call the \emph{refresh period}, and we restrict \AlgoName{} to only sample a physical allocation every $B \in \{4,8, 32\}$ time slots (line 8 in Algorithm~\ref{algo:idn}). Additionally, we experiment linear stretching of the refresh period $B$ with initial period $B_\text{init} = 1$ and target period $B_\text{target} = 32$ in a stretching duration of $\Delta$t = 1H. We run this experiment under  \emph{Network Topology~I} and \emph{Sliding Popularity Profile}. We set the trade-off parameter $\alpha = 1$.

In particular, Figure~\ref{subfig:models_update} shows the update cost (MU) for different refresh periods, while Figure~\ref{subfig:NTAG} shows the NTAG. Both plots include the performance of the \OnlineGreedy{} heuristic. We observe that, by increasing the refresh period $B$, the system fetches a smaller number of models, and therefore the update cost decreases, at the expense of reactivity. \newton{This tradeoff was characterized formally in \cite{ascent}, wherein the regret is sublinear for $B = \Theta \left(T^\beta \right)$ for $\beta \in [0, 1)$, and the update costs are sublinear for $\beta \in (0,1)$.} Nevertheless, even for large values of $B$ \AlgoName{} eventually exceeds \OnlineGreedy{} in performance: this result is expected since the algorithm continues to learn on the fractional (virtual) states and only the physical allocations are delayed and eventually catch-up for a large time horizon. On the other hand, we observe that \OnlineGreedy{} is relatively conservative in updating its allocation, as it quickly picks a sub-optimal allocation and rarely updates it.

The previous observation motivates the use of a dynamic refresh period. By refreshing more frequently at the start we allow the physical allocation generated by \AlgoName{} to catch up quickly with the fractional states as shown in Fig.~\ref{subfig:NTAG}: a dynamic refresh period that stretches from $B_\text{init} = 1$ to $B_\text{target} = 32$ attains much faster and more precise convergence. This is achieved at the expense of a high update cost at the start, which is, however, quickly dampened until it matches the same update cost of fixing the refresh period to $B_{\text{target}}$.

\subsection{Scalability on Requests Load}

\begin{figure}[t]
    \centering
    \subcaptionbox*{\vspace{-1.4em}}{\includegraphics[trim=.5 175px .5 .5,clip, width = .6\linewidth]{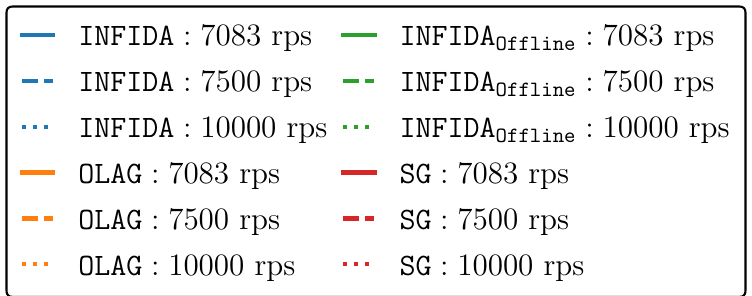}}\\
    \subcaptionbox{\label{subfig:requests_fixed}\emph{Fixed Popularity Profile}}{\includegraphics[width = .45\linewidth]{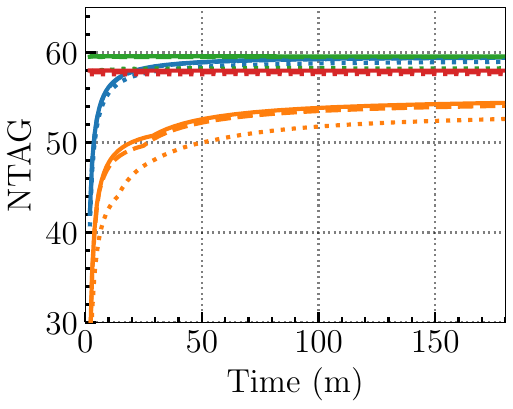}}
    \subcaptionbox{\label{subfig:requests_sliding}\emph{Sliding Popularity Profile}}{\includegraphics[width = .45\linewidth]{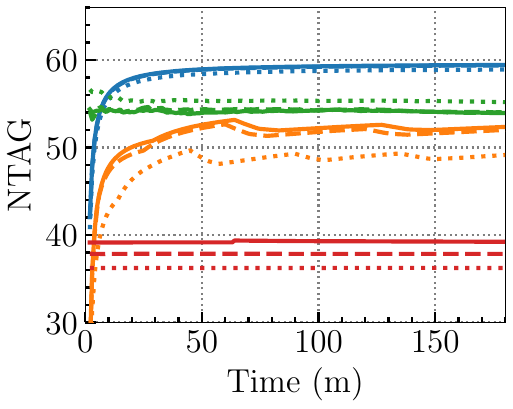} }
    
    \caption{NTAG of the different policies for different request rates \rev{ts}{}{under  \emph{Network Topology~I}.}}
    \label{fig:requests}
\end{figure}

We show how the system performs under different requests loads. For this set of experiments, we set $\alpha=1$.
\gc{Figure~\ref{fig:requests} compares the results for the different allocation policies.}

\gc{We notice that, being \AlgoNameOffline{} and \SG{} offline policies, they perform well when the popularity profile is static (Fig.~\ref{fig:requests}a), but deteriorate under \emph{Sliding Popularity Profile}. Notably, the performance degradation of \AlgoNameOffline{} ($\approx$8\%) is considerably limited compared to \SG{} ($\approx$30\%), which even gets worse when increasing the requests load.}

Figure~\ref{fig:requests} shows that, in general, \AlgoName{} provides a higher gain compared to the \OnlineGreedy{} heuristic. 
\gc{In particular, under \emph{Fixed Popularity Profile} \AlgoName{} manages to converge to the same NTAG provided by its offline counterpart (Fig.~\ref{fig:requests}a), which is $\approx$10\% better than the one provided by \OnlineGreedy{} when the load is 7,083 rps.}
Additionally, \OnlineGreedy{}'s performance deteriorates when the requests load increases from 7,083 rps to 10,000 rps. It is also noteworthy that \OnlineGreedy{} visibly suffers from perturbed performance when the popularity of the tasks changes over time (Fig.~\ref{subfig:requests_sliding}). On the other hand, results show the robustness of \AlgoName{} against changing request loads and popularity: the algorithm preserves its performance in terms of normalized time-averaged gain for the analyzed request loads and under both \emph{Fixed Popularity Profile} and \emph{Sliding Popularity Profile}\gc{, always converging to the highest NTAG}.

% to be changed
%  Last, in Fig.~\ref{fig:accuracy_latency} we evaluate separately the Time-Averaged Latency and Inaccuracy incurred by the different policies. We observe that 
% \AlgoName{} consistently provides the best inaccuracy and latency under both high request load (9,000 rps) and default request load (7,500 rps). Note that \OnlineGreedy{} is initialized with an empty allocation; therefore, at the start most of the requests are being forwarded to the cloud yielding high latency costs and low inaccuracy costs. Whereas \AlgoName{} starts with a random allocation that already gives a significant reduction in latency compared to an empty allocation.
 
Last, in Fig.~\ref{fig:accuracy_latency} we evaluate separately the average latency and inaccuracy attained by the different policies using different values of $\alpha \in \{0.5, 1, 2, 3, 4, 5, 6\}$ under \emph{Fixed Popularity Profile} and \emph{Network Topology~II}.
We observe that \AlgoName{} and its offline counterpart \AlgoNameOffline{} consistently provide the lowest average inaccuracy and latency under both high request load (10,000 rps) and default request load (7,500 rps). \AlgoNameOffline{} is run with hindsight and serves as a lower bound on the achievable latency and inaccuracy under a fixed popularity request process (as in Fig.~\ref{subfig:requests_fixed}).

% \begin{figure}[t!]
%     \centering
%     \subcaptionbox{\emph{Fixed Popularity Profile}}{\includegraphics[width = .45\linewidth]{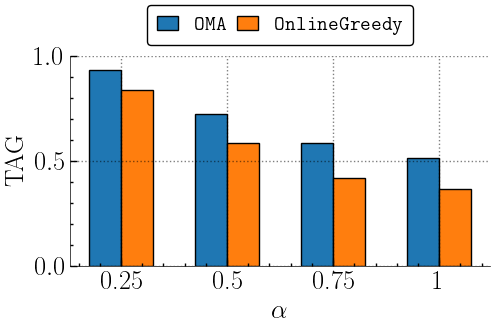}}
%     \subcaptionbox{\emph{Sliding Popularity Profile}}{\includegraphics[width = .45\linewidth]{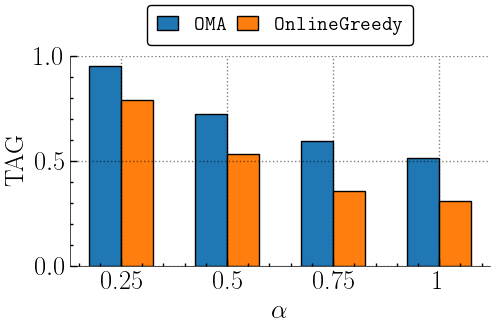}}
    
%     \caption{Robustness of \OMA{} against changing popularity requests \rev{aa}{}{in terms of time-averaged Gain (TAG)}. }
%     \label{fig:my_label}
% \end{figure}

% \begin{figure}[t!]
%     \centering
%     \subcaptionbox{\emph{Fixed Popularity Profile}}{\includegraphics[width = .45\linewidth]{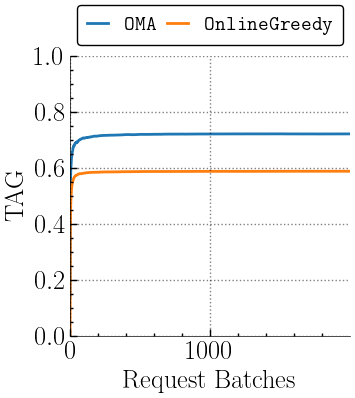}}
%     \subcaptionbox{\emph{Sliding Popularity Profile}}{\includegraphics[width = .45\linewidth]{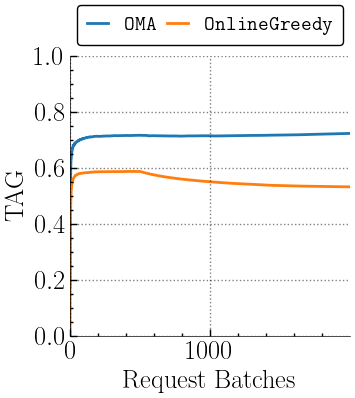}}
    
%     \caption{time-averaged Gain \rev{aa}{}{(TAG)} under different $\alpha$\rev{aa}{ for \OMA{} and \OnlineGreedy.}{} }
%     \label{fig:my_label}
% \end{figure}

% \iffalse
% \begin{figure}[t]
%     \centering
%     \subcaptionbox{Keras pre-trained models}{\includegraphics[width = .40\linewidth]{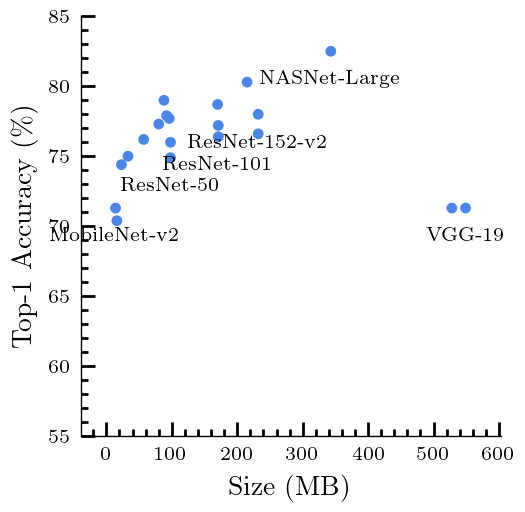}}
%     \subcaptionbox{Pytorch pre-trained models}{\includegraphics[width = .40\linewidth]{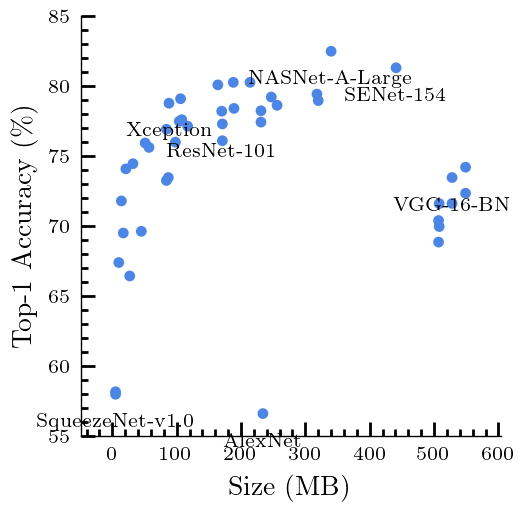}}
%     \caption{Models catalog for classification. }
%     \label{fig:my_label}
% \end{figure}
% \fi

\begin{figure}[t]
    \centering
\includegraphics[width=.7\linewidth]{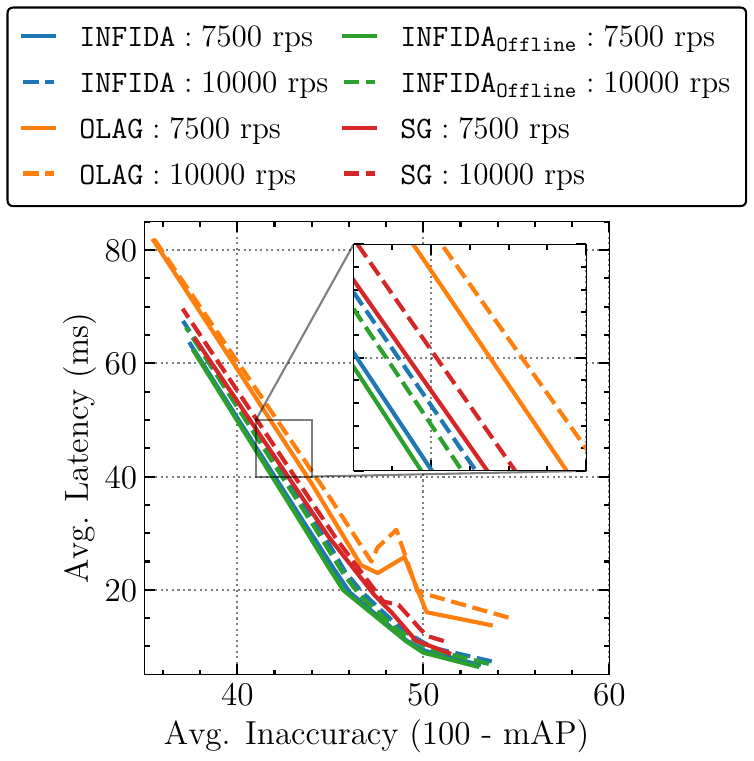}
    
    \caption{Average Latency vs. Average Inaccuracy obtained for different values of $\alpha \in \{0.5, 1, 2, 3, 4, 5, 6\}$ under \emph{Fixed Popularity Profile} and \emph{Network Topology~II}.}
    
    \label{fig:accuracy_latency}
\end{figure}

% \begin{figure}[t]
%     \centering
%  \subcaptionbox*{\vspace{-1em}}{\includegraphics[trim=.5 170px .5 .5,clip,width = .49\linewidth]{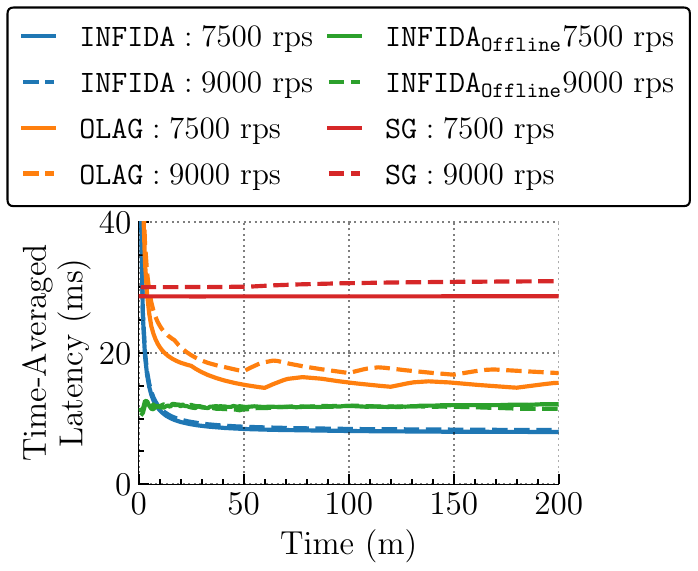}}\\
%     \subcaptionbox{\label{subfig:multiobj_latency}Time-Averaged Latency}{\includegraphics[trim=.5 .5 .5 .5,clip,width = .49\linewidth]{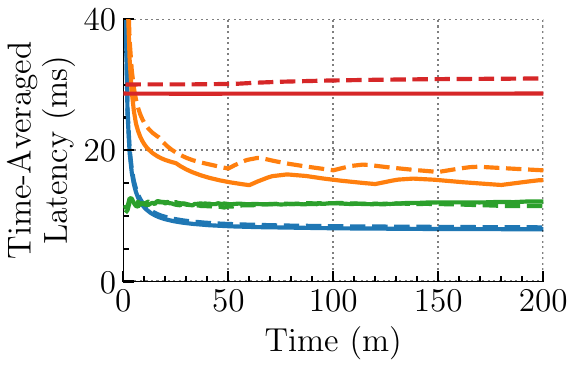}}
%     \subcaptionbox{\label{subfig:multiobj_innacuracy}Time-Averaged Inaccuracy}{\includegraphics[trim=.5 .5 .5 0.5,clip, width = .49\linewidth]{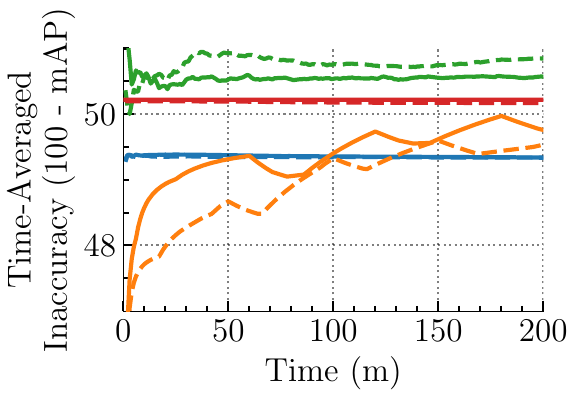}}
    
%     \caption{Time-Averaged Latency~(a) and Inaccuracy~(b) for different request rates 7,500 rps and 9,000 rps under \emph{Network Topology~I} and \emph{Sliding Popularity Profile}. \textcolor{red}{[Choose between this and 10}}
    
%     \label{fig:accuracy_latency}
% \end{figure}

% \subsection{New Experiments}

%!TEX root = ../paper.tex
\section{Conclusions}
\label{sec:conclusions}

In this paper, we introduced the idea of inference delivery networks (IDNs), networks of computing nodes that coordinate to satisfy inference requests in the continuum between Edge and Cloud. IDN nodes can serve inference requests with different levels of accuracy and end-to-end latency, based on their geographic location and processing capabilities. We formalized the NP-hard problem of allocating ML models on IDN nodes, capturing the trade-off between latency and accuracy.
%
%To optimize the ML model allocation in IDNs, 
We proposed \AlgoName{}, a dynamic ML model allocation algorithm that operates in a distributed fashion and provides strong guarantees in an adversarial setting.
We evaluated \AlgoName{} simulating the realistic scenario of an ISP network, and compared its performance \gc{under two different topologies with both an offline greedy heuristic and its online variant.}
%policy based on the greedy heuristic, known to perform well in practice for this family of problems. 
Our results show that \AlgoName{} adapts to different latency/accuracy trade-offs and scales well with the number of requests, outperforming the greedy policies in all the analyzed settings.
\section{Acknowledgement}
This work has been carried out in the framework of a common lab agreement between Inria and Nokia Bell Labs. This research was supported in part by the French Government through the ``Plan de Relance'' and ``Programme d'investissements d`avenir'' and by Inria under the exploratory action MAMMALS.

\bibliographystyle{IEEEtran}
\bibliography{paper}

\ifnum\isextended=1

\newpage
\onecolumn
\begin{center}
{\Large  Supplementary Material for Paper:}\\
\vspace{0.4 cm}
{ \huge Towards Inference Delivery Networks: \\Distributing Machine Learning with Optimality Guarantees}
\end{center}

\appendices

%%%%%%%%%%%%%%%%%%%%%%%%%%%%%%%%%%%%%%%%
\section{Submodularity of the Gain Function}
\label{appendix:submodularity}

%%%%%%%%%%%%%%%%%%%%%%%%%%%%%%%%%%%%%%%%
As submodularity is defined for set functions, let us associate to the gain function $\systemGain(\,\cdot\,)$ an opportune set function as follows.
Given a set $S\subseteq\vertices\times\modelSet$ of pairs $(v,m)\in\vertices\times\modelSet$ (nodes and models), we define the corresponding associated vector $\allocVec(S)$ with $\alloc_m^v(S)=1$ if  $((v,m) \in S \lor \repo^v_m = 1) $, $x^v_m(S) = 0$ otherwise.
Now we can define the set function $f_t\colon 2^{\vertices\times\modelSet}\to\reals$:
%Note that finding the allocation vector $\allocVec$ that maximizes the allocation gain in Eq.~\eqref{eq:gain-compact} is equivalent to finding the set $S\subseteq\vertices\times\modelSet$ of pairs $(v,m)\in\vertices\times\modelSet$ (nodes and models) so that the set function $f\colon 2^{\vertices\times\modelSet}\to\reals$, defined as
\begin{align}
    f_t(S) \triangleq \systemGain(\requestBatchVec_t,\loadVec_t, \allocVec(S)).
\label{eq:set-function-gain}
\end{align}

We can also define the set function $F_T: 2^{\vertices \times V} \to \reals$ associated to the time-averaged gain in Eq.~\eqref{eq:static_gain} as 

\begin{align}
    F_T (S) \triangleq \frac{1}{T} \sum^T_{t=1} f_t (S) =  \frac{1}{T} \sum^T_{t=1}  \systemGain(\requestBatchVec_t,\loadVec_t, \allocVec(S)).
    \label{eq:set-function-gainT}
\end{align}
%is maximized, for any $(v,m) \in \vertices \times \modelSet$ if $((v,m) \in S \lor \repo^v_m = 1) $ then $ \alloc^v_m = 1$, otherwise $x^v_m = 0$.

We first start proving that $f_t$ is submodular in the following lemma.

% By construction of $\systemGain$, we have that $g$ is a monotonically non-decreasing set function.

\begin{lemma}
The set function $f_t\colon 2^{\vertices\times\modelSet}\to\reals$ in Eq.~\eqref{eq:set-function-gain} is normalized (i.e., $f_t(\emptyset) = 0$), submodular, and monotone.
\label{lemma:submodular}
\end{lemma}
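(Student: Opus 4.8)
\emph{Plan.} I would obtain monotonicity and submodularity from the compact gain formula of Lemma~\ref{lemma:gain}, combined with the elementary fact that a concave nondecreasing scalar function composed with a nondecreasing modular set function is nondecreasing and submodular. Normalization is immediate: since $\allocVec(\emptyset) = \repoVec$ by construction, definitions \eqref{eq:set-function-gain} and \eqref{eq:gain} give $f_t(\emptyset) = \systemGain(\requestBatchVec_t, \loadVec_t, \repoVec) = \systemCost(\requestBatchVec_t, \loadVec_t, \repoVec) - \systemCost(\requestBatchVec_t, \loadVec_t, \repoVec) = 0$.

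\emph{Core argument.} By \eqref{eq:gain-compact},
\[
f_t(S) = \sum_{\request \in \requestSet} \sum_{k=1}^{\modelsNo_\request - 1} \left(\smallCost^{k+1}_\request - \smallCost^k_\request\right)\left(\Auxalloc^k_\request(\requestBatchVec_t, \loadVec_t, \allocVec(S)) - \Auxalloc^k_\request(\requestBatchVec_t, \loadVec_t, \repoVec)\right).
\]
Each coefficient $\smallCost^{k+1}_\request - \smallCost^k_\request$ is nonnegative because the costs $\smallCost^k_\request$ are indexed in increasing order, and $\Auxalloc^k_\request(\requestBatchVec_t, \loadVec_t, \repoVec)$ does not depend on $S$. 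Since monotonicity and submodularity are preserved under nonnegative linear combinations and under adding a constant, it suffices to show that $S \mapsto \Auxalloc^k_\request(\requestBatchVec_t, \loadVec_t, \allocVec(S))$ is nondecreasing and submodular for each fixed $\request$ and $k$. By \eqref{eq:sum_of_auxvars} and \eqref{eq:several-definitions},
\[
\Auxalloc^k_\request(\requestBatchVec_t, \loadVec_t, \allocVec(S)) = \min\Bigl\{\requestBatch^t_\request,\ \sum_{(v,m)\,:\,\kappa_\request(v,m)\le k} \alloc^v_m(S)\,\load^{t,v}_{\request,m}\Bigr\},
\]
where the ranking $\kappa_\request$ depends only on the costs, not on the allocation. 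Denote the inner sum by $g_{\request,k}(S)$. It is nondecreasing and modular in $S$: the pairs $(v,m)$ with $\repo^v_m = 1$ contribute the fixed constant $\load^{t,v}_{\request,m}$, while every other relevant pair contributes the nonnegative weight $\load^{t,v}_{\request,m}$ exactly when it enters $S$. Finally $\Auxalloc^k_\request(\requestBatchVec_t, \loadVec_t, \allocVec(S)) = \phi(g_{\request,k}(S))$ with $\phi(z) = \min\{\requestBatch^t_\request, z\}$ concave and nondecreasing; for $A \subseteq B$ and $e \notin B$ with weight $w_e = \load^{t,v}_{\request,m}\ge 0$, concavity of $\phi$ makes $z \mapsto \phi(z+w_e) - \phi(z)$ nonincreasing, so, using $g_{\request,k}(A)\le g_{\request,k}(B)$, the marginal increment of $\Auxalloc^k_\request$ at $A$ dominates that at $B$, which is exactly the diminishing-returns inequality; monotonicity of the composition is immediate. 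Combining this with normalization completes the proof.

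\emph{Main obstacle.} The only delicate point is the bookkeeping around the permanent repository models: one must verify that folding the constants $\repo^v_m$ into $\allocVec(S)$ keeps $g_{\request,k}$ modular and nondecreasing, and that the index set $\{(v,m): \kappa_\request(v,m)\le k\}$ is fixed (independent of $S$) precisely because $\kappa_\request$ is determined by the cost ordering alone. Beyond that, everything reduces to the standard ``concave of modular is submodular'' lemma.
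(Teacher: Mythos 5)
Your proof is correct, and it takes a somewhat different route from the paper's. The paper argues directly on $f_t$: it writes the marginal gain $f_t(S''\cup\{(\bar v,\bar m)\})-f_t(S'')$ explicitly from \eqref{eq:gain-compact} and compares it term by term for $S'\subset S''$ using the inequality $\min\{\requestBatch^t_\request - Z'',\lambda\}\le\min\{\requestBatch^t_\request - Z',\lambda\}$ whenever $Z'\le Z''$ (its Eqs.~\eqref{eq:subm-disequality1}--\eqref{eq:subm-disequality2} and \eqref{eq:marginal-gain}); monotonicity follows from the same comparison. That inequality is precisely the diminishing-returns property of $z\mapsto\min\{\requestBatch^t_\request,z\}$ that you invoke abstractly, so the two arguments rest on the same fact but are packaged differently: you factor each term $\Auxalloc^k_\request(\requestBatchVec_t,\loadVec_t,\allocVec(S))$ as a concave nondecreasing function composed with a nonnegative-weight modular set function (the ranking $\kappa_\request$ being fixed independently of $S$, as you note), and then use closure of monotone submodularity under such compositions and under nonnegative combinations with constants. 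Your version is shorter and makes the structural reason for submodularity transparent; the paper's version is self-contained (no appeal to the concave-of-modular lemma) and yields the explicit marginal-gain expression, but proves nothing more. The one bookkeeping point to state precisely, which you correctly flag, is that a pair $(v,m)$ with $\repo^v_m=1$ contributes weight zero when added to $S$ (its capacity is already in the constant term), so $g_{\request,k}(S)=c_{\request,k}+\sum_{e\in S}w_e$ is genuinely modular with $w_e\ge 0$; with that spelled out, your argument is complete.
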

\begin{proof}\ \\
\noindent\textbf{Normalization.}
The constructed set function is normalized (as in $f_t(\emptyset) = 0$), we have
\begin{align}
    f_t(\emptyset) = \systemGain(\requestBatchVec_t,\loadVec_t, \allocVec(\emptyset)) =\systemGain(\requestBatchVec_t,\loadVec_t, \repoVec)=
     C(\requestBatchVec_t,\loadVec_t, \repoVec) - C(\requestBatchVec_t,\loadVec_t, \repoVec) = 0.
\end{align}

\noindent\textbf{Submodularity.} A function $f_t\colon 2^{\vertices\times\modelSet}\to\reals$ is submodular~\cite{krause2014submodular} if for every $S' \subset S'' \subset \vertices \times \modelSet$ and $(\bar{v}, \bar{m}) \in (\vertices \times \modelSet) \setminus S''$ it holds that 
\begin{equation}
    f_t(S''\cup \{(\bar{v}, \bar{m})\}) - f_t(S'') \leq f_t(S'\cup \{(\bar{v}, \bar{m})\}) - f_t(S'). \nonumber
\end{equation}

% Let us consider $(\bar{v},\bar{m}) \in (\vertices \times \modelSet) \setminus S''$. Let us also consider the associated allocation vectors $\allocVec'$, where elements ${x'}^v_m=1 \iff (v,m)\in S'$, and $\allocVec''$, where elements ${x''}^v_m=1 \iff (v,m)\in S''$. 

Let us consider $(\bar{v},\bar{m}) \in (\vertices \times \modelSet) \setminus S''$. We take $\allocVec'$, $\allocVec''$, $\bar \allocVec'$, and $\bar \allocVec''$ as short hand notation for $\allocVec(S')$, $\allocVec(S'')$, $\allocVec(S' \cup \{(\bar v, \bar m)\} )$, and  $\allocVec(S'' \cup \{(\bar v, \bar m)\})$, respectively.
%We use short hands $\bar{\allocVec}'$ and $\bar{\allocVec}''$ to denote allocation vectors identical to ${\allocVec}'$ and ${\allocVec}'$ except for having $x^v_m=1$, i.e., they both additionally allocate model $m$ on node $v$.

Since $S'\subset S''$, we have that if ${x'}^v_m = 1 \implies {x''}^v_m = 1$, and thus $\auxalloc^{k}_{  \request}(\loadVec_t, \allocVec')\le \auxalloc^{k}_{  \request}(\loadVec_t, \allocVec'')$, for any $k$ (see Eq.~\eqref{eq:several-definitions}). Therefore
\begin{align}
\label{eq:subm-disequality1}
 \Auxalloc^k_\request(\requestBatchVec_t,\loadVec_t, \allocVec') \leq \Auxalloc^k_\request(\requestBatchVec_t,\loadVec_t, \allocVec''), \forall k \in  [\modelsNo_{\request}-1], \forall \request\in \requestSet.
\end{align} 
% Eq.~\eqref{eq:subm-disequality1} implies that
% \begin{align}
% \Auxalloc^k_\request(\requestBatchVec_t,\loadVec_t, \allocVec') \leq \Auxalloc^k_\request(\requestBatchVec_t,\loadVec_t, \allocVec'').
% \label{eq:subm-disequality22}
% \end{align}

Let us denote ${\bar{k}_\rho} \triangleq \kappa_{\request} (\bar{v}, \bar{m})$. Due to Eq.~\eqref{eq:subm-disequality1}, $\forall k \in  [\modelsNo_{\request}-1], \forall \request\in \requestSet$ the following inequality holds:

\begin{align}
\min\{r_\request^t - \Auxalloc^k_\request(\requestBatchVec_t,\loadVec_t, \allocVec''), \auxload^{{\bar{k}_\rho}}_{\request}{(\loadVec_t)}\} & \leq \min\{r_\request^t - \Auxalloc^k_\request(\requestBatchVec_t,\loadVec_t, \allocVec'), \auxload^{{\bar{k}_\rho}}_{\request}{(\loadVec_t)}\},
\end{align}
or equivalently:
\begin{align}
\min\{r_\request^t, \Auxalloc^k_\request(\requestBatchVec_t,\loadVec_t, \allocVec'') + \auxload^{{\bar{k}_\rho}}_{\request}{(\loadVec_t)}\} - \Auxalloc^k_\request(\requestBatchVec_t,\loadVec_t, \allocVec'') & \leq \min\{r_\request^t, \Auxalloc^k_\request(\requestBatchVec_t,\loadVec_t, \allocVec') + \auxload^{{\bar{k}_\rho}}_{\request}{(\loadVec_t)}\} - \Auxalloc^k_\request(\requestBatchVec_t,\loadVec_t, \allocVec').
\label{eq:subm-disequality2}
\end{align}
% We use the shorthand $\bar{\allocVec}$ to denote an allocation vector identical to ${\allocVec}$ except for having $x^{\bar{v}}_{\bar{m}}=1$, i.e., it additionally allocates model $\bar{m}$ on node $\bar{v}$.
Observe that 
\[
    {\auxalloc}^{k}_{\request}(\loadVec_t, \bar{\allocVec}'') 
    =
    \begin{cases}
    {\auxalloc}^{k}_{\request}(\loadVec_t, \allocVec'')  & \text{if }k\neq {\bar{k}_\rho},
    \\
    {\auxalloc}^{k}_{\request}(\loadVec_t, \allocVec'') + \vGroup{x^{\bar{v}}_{\bar{m} } }{=1}\cdot\load^{t,\bar{v}}_{\rho,\bar{m}}
    \stackrel{\text{\eqref{eq:several-definitions}} }{=}
     {\auxalloc}^{k}_{\request}(\loadVec_t, \allocVec'') +
    \auxload^{{\bar{k}_\rho}}_{\request}{(\loadVec_t)}
    & \text{if }k= {\bar{k}_\rho},\vspace{-.5em}
    \end{cases}
\]
and thus
%We have the following equality  
\begin{align}
   \Auxalloc^k_\request(\requestBatchVec_t,\loadVec_t, \bar{\allocVec}'') =  
   \begin{cases}
   \Auxalloc^k_\request(\requestBatchVec_t,\loadVec_t, \allocVec'')\quad \text{if}\,\, k < {\bar{k}_\rho}
   \\
\min\left\{
   \textstyle \requestBatch_\request^t,  \sum^{k}_{k'=1} {\auxalloc}^{k'}_{\request}(\loadVec_t, \allocVec'')  
   +
   \auxload^{{\bar{k}_\rho}}_{\request}{(\loadVec_t)}
   \right\} = 
   \min\{r_\request^t, \Auxalloc^k_\request(\requestBatchVec_t,\loadVec_t, \allocVec'') + \auxload^{{\bar{k}_\rho}}_{\request}{(\loadVec_t)}\} \quad \text{if}\,\, k \geq {\bar{k}_\rho}.
   \end{cases} 
\end{align}
Note that the same equality holds between $\bar \allocVec'$ and $ \allocVec'$ since $(\bar v, \bar m)\not\in S'$.

The marginal gain of adding pair $(\bar{v}, \bar{m})\in\vertices\times\modelSet$ to the allocation set $S''$ is

\begin{align}
    &f_t(S''\cup \{(\bar{v}, \bar{m})\}) - f_t(S'') 
    =
    \systemGain(\requestBatchVec_t,\loadVec_t, \bar{\allocVec}'') - \systemGain(\requestBatchVec_t,\loadVec_t, \allocVec'') \nonumber \\ 
    & \stackrel{\eqref{eq:gain-compact}}{=}
    \sum_{\request\in\requestSet}\Bigg[
        \sum_{k=1}^{{\bar{k}_\rho}-1}\left(\smallCost^{k+1}_{\request} - \smallCost^{k}_{\request}\right)\left(\Auxalloc^k_\request(\requestBatchVec_t,\loadVec_t, \allocVec'') - \Auxalloc^k_\request(\requestBatchVec_t,\loadVec_t, \repoVec)\right) \nonumber \\
        &\qquad\qquad+ \sum_{k={\bar{k}_\rho}}^{\modelsNo_\request-1}\left(\smallCost^{k+1}_{\request} - \smallCost^{k}_{\request}\right)\left(\min\left\{r_\request^t, \Auxalloc^k_\request(\requestBatchVec_t,\loadVec_t, \allocVec'') + \auxload^{{\bar{k}_\rho}}_{\request}{(\loadVec_t)}\right\} - \Auxalloc^k_\request(\requestBatchVec_t,\loadVec_t, \repoVec)\right) \nonumber \\
        &\qquad\qquad- \sum_{k=1}^{\modelsNo_\request-1}\left(\smallCost^{k+1}_{\request} - \smallCost^{k}_{\request}\right)\left(\Auxalloc^k_\request(\requestBatchVec_t,\loadVec_t, \allocVec'') - \Auxalloc^k_\request(\requestBatchVec_t,\loadVec_t, \repoVec)\right)
    \Bigg] \nonumber \\
    &= \sum_{\request\in\requestSet}\sum_{k={\bar{k}_\rho}}^{\modelsNo_\request-1}\left(\smallCost^{k+1}_{\request} - \smallCost^{k}_{\request}\right)\Bigg[\left(\min\left\{r_\request^t, \Auxalloc^k_\request(\requestBatchVec_t,\loadVec_t, \allocVec'') + \auxload^{{\bar{k}_\rho}}_{\request}{(\loadVec_t)}\right\} - \Auxalloc^k_\request(\requestBatchVec_t,\loadVec_t, \repoVec)\right) - \left(\Auxalloc^k_\request(\requestBatchVec_t,\loadVec_t, \allocVec'') - \Auxalloc^k_\request(\requestBatchVec_t,\loadVec_t, \repoVec) \right)\Bigg] \nonumber \\ 
    &= \sum_{\request\in\requestSet}\sum_{k={\bar{k}_\rho}}^{\modelsNo_\request-1}\left(\smallCost^{k+1}_{\request} - \smallCost^{k}_{\request}\right)\left(\min\left\{r_\request^t, \Auxalloc^k_\request(\requestBatchVec_t,\loadVec_t, \allocVec'') + \auxload^{{\bar{k}_\rho}}_{\request}{(\loadVec_t)}\right\} - \Auxalloc^k_\request(\requestBatchVec_t,\loadVec_t, \allocVec'')\right). \label{eq:marginal-gain}
\end{align}

By bounding up each term of the marginal gain \eqref{eq:marginal-gain} as in \eqref{eq:subm-disequality2} we get the following:
\begin{equation}
    f_t(S''\cup \{(\bar{v}, \bar{m})\}) - f_t(S'') \leq f_t(S'\cup \{(\bar{v}, \bar{m})\}) - f_t(S'). \nonumber
\end{equation}

We conclude that $f_t$ is a submodular set function.

\noindent \textbf{Monotonicity.} 
The function $f_t\colon 2^{\vertices\times\modelSet}\to\reals$ is monotone~\cite{krause2014submodular} if for every $S' \subset S'' \subset \vertices \times \modelSet$ it holds that 
\begin{equation}
    f_t(S'') \geq  f_t(S').
\end{equation}
We have 
\begin{align}
    f_t(S'') - f_t(S') &=  \systemGain(\requestBatchVec_t,\loadVec_t, {\allocVec}'') - \systemGain(\requestBatchVec_t,\loadVec_t, \allocVec') \\
    &=  \sum_{\request\in\requestSet}
        \sum_{k=1}^{K_\request-1}\left(\smallCost^{k+1}_{\request} - \smallCost^{k}_{\request}\right)\left(\Auxalloc^k_\request(\requestBatchVec_t,\loadVec_t, \allocVec'') - \Auxalloc^k_\request(\requestBatchVec_t,\loadVec_t, \allocVec') \right) \\
        &\geq 0.
\end{align}
The last inequality is obtained using Eq.~\eqref{eq:subm-disequality1}. We conclude that $f_t$ is  monotone.
\end{proof}
\begin{lemma}
The set function $ F_T : 2^{\vertices\times\modelSet}\to\reals$ in Eq.~\eqref{eq:set-function-gainT} is normalized (i.e., $f_T(\emptyset) = 0$), submodular, and monotone.
\label{lemma:submodularT}
\end{lemma}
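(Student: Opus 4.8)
The plan is to derive Lemma~\ref{lemma:submodularT} directly from Lemma~\ref{lemma:submodular} by exploiting the fact that the three properties in question---normalization, submodularity, and monotonicity---are all preserved under nonnegative linear combinations of set functions, and in particular under the uniform average $F_T(S) = \frac{1}{T}\sum_{t=1}^{T} f_t(S)$.

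First I would handle normalization: by Lemma~\ref{lemma:submodular} each $f_t$ satisfies $f_t(\emptyset) = 0$, hence $F_T(\emptyset) = \frac{1}{T}\sum_{t=1}^{T} f_t(\emptyset) = 0$. Next, for submodularity, I would fix arbitrary sets $S' \subset S'' \subset \vertices\times\modelSet$ and an element $(\bar v, \bar m) \in (\vertices\times\modelSet)\setminus S''$. Lemma~\ref{lemma:submodular} gives, for every $t \in [T]$,
\begin{equation}
f_t(S''\cup\{(\bar v,\bar m)\}) - f_t(S'') \leq f_t(S'\cup\{(\bar v,\bar m)\}) - f_t(S'). \nonumber
\end{equation}
Multiplying each of these $T$ inequalities by $\frac{1}{T} > 0$ and summing over $t$ yields
\begin{equation}
F_T(S''\cup\{(\bar v,\bar m)\}) - F_T(S'') \leq F_T(S'\cup\{(\bar v,\bar m)\}) - F_T(S'), \nonumber
\end{equation}
which is exactly the defining inequality of submodularity for $F_T$. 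Monotonicity follows by the same averaging argument: since $f_t(S'') \geq f_t(S')$ for all $t$ whenever $S' \subset S''$, summing with weights $\frac{1}{T}$ gives $F_T(S'') \geq F_T(S')$.

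I do not expect any genuine obstacle here: all the combinatorial work---unfolding the gain expression~\eqref{eq:gain-compact}, bounding the marginal gains via~\eqref{eq:subm-disequality1}--\eqref{eq:subm-disequality2}, and checking the three properties slot-by-slot---has already been done in Lemma~\ref{lemma:submodular}. The only thing worth stating explicitly is the elementary closure fact that a nonnegative (here uniform) average of normalized/submodular/monotone set functions is again normalized/submodular/monotone, which is immediate from linearity of the sum and the preservation of the relevant inequalities under addition and multiplication by positive scalars. Consequently $F_T$ inherits all three properties, completing the proof.
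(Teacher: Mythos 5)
Your proposal is correct and matches the paper's own argument: the paper likewise proves Lemma~\ref{lemma:submodularT} by observing that $F_T$ is a nonnegative linear combination (uniform average) of the functions $f_t$, which are normalized, submodular, and monotone by Lemma~\ref{lemma:submodular}, and that these properties are preserved under such combinations. You simply spell out the summation of the defining inequalities explicitly, which the paper delegates to a citation.
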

\begin{proof}

The set function $F_T$ is a nonnegative linear combination of submodular and monotone functions $f_t$ (see Lemma~\ref{lemma:submodular}), then $F_T$ is also submodular and monotone~\cite{krause2014submodular}. Moreover, each $f_t$ is normalized, then it follows that $F_T$ is normalized.

\end{proof}

\section{NP-hardness}
\label{appendix:np_hard}
% \change{}
\begin{theorem}
The problem of maximizing  the time-averaged allocation gain~\eqref{eq:static_gain} is NP-hard.
\label{theorem:np_hard}
\end{theorem}
\begin{proof}

  We demonstrate the hardness of the problem by a reduction of the similarity caching problem, which is NP-hard~\cite{similaritycachington} (a result that follows a reduction of the dominating set problem). We first define the similarity caching problem and characterize its inputs and variables.

   \noindent\textbf{Similarity Caching Problem.} 
    Consider the network topology comprising of a repository node and a single cache node. The repository node stores a catalog of files $\hat{\mathcal{N}} = \set {1,2,\dots, \hat N}$. 
    A cache node can store a subset $\hat{\mathcal S} \subset \hat{\mathcal{N}}$ of $\hat k$ files  (i.e., $\card{\hat{\mathcal S}} = \hat k$), where $\hat k \in \set{1, 2, \dots, \card{\hat{\mathcal{N}}}-1}$.  
    %A cache node can store a subset of $\hat k \in \set{1, 2, \dots, \card{\hat{\mathcal{N}}}-1}$ files $\hat{\mathcal S} \subset \hat{\mathcal{N}}$ and $\card{\hat{\mathcal S}} = \hat k$.  
    The system incurs an approximation cost $C_a(\hat r,\hat s) \in \mathbb R \cup \set{-\infty, +\infty}$ when a request for file $\hat r\in \hat{\mathcal{N}}$ is satisfied by the cache serving file $\hat s \in \hat{\mathcal{N}}$. The approximation cost is null if $\hat s = \hat r$  (i.e., $C_a(\hat r,\hat r) =0 $ for all $\hat r \in \hat{\mathcal{N}}$).
    %The system incurs an approximation cost $C_a(\hat r,\hat s) \in \mathbb R \cup \set{-\infty, +\infty}$ when a requested file $\hat r\in \hat{\mathcal{N}}$ is approximated with a file $\hat s \in \hat{\mathcal{N}}$  at the level of the cache node; moreover, approximating a file with an identical one incurs no cost $C_a(\hat r,\hat r) =0 $ for all $\hat r \in \hat{\mathcal{N}}$.
    The system also incurs an additional retrieval cost $C_r \in \mathbb R \cup \set{-\infty, +\infty}$ to serve the request with a file stored at the repository node. For a given request $\hat r \in \hat{\mathcal{N}}$ and cache state $\hat{\mathcal{S}}$, the system incurs the following overall cost. 
    \begin{align}
        C_{\mathrm{overall}}(\hat r, \hat{\mathcal{S}}) \triangleq \min \set{C_r, \min\set{C_a (\hat r, \hat s): \hat s \in \hat{\mathcal{S}}}},\label{e:c_overall}
    \end{align}
    where the term $\min\set{C_a (\hat r, \hat s): \hat s \in \hat{\mathcal{S}}}$ signifies that the best approximating file stored at the cache is selected as a candidate to serve the request $\hat r$, and the term $\min \set{C_r, \,\cdot\,}$ signifies that when the cost of approximating the request with the candidate file exceeds the retrieval cost, the system serves the request by fetching an identical file from the repository and incurs a retrieval cost $C_r$. At timeslot $t \in \set{1,2, \dots, T}$, the system receives a request $\hat r_t \in \hat{\mathcal{N}}$. 
    
    The static offline problem is formulated as follows.
    \begin{align}
        \hat{\mathcal{S}}_{\star} \in \argmin_{\hat{\mathcal{S}}\subset \hat{\mathcal{N}}, |\hat{\mathcal{S}}| = \hat k}  \sum^T_{t=1}C_{\mathrm{overall}}(\hat r_t, \hat{\mathcal{S}}).\label{eq:problem}
    \end{align}
    
\noindent \emph{Inputs.} The similarity caching problem~\eqref{eq:problem} takes the following inputs: $\hat{\mathcal{N}} \in \set{1,2,\dots, \hat N
%GN: \card{\hat{\mathcal{N}}}
}, \hat k \in \set{1,2, \dots, \card{\hat{\mathcal{N}}-1}}$, $C_r \in \reals\cup\set{-\infty+\infty}$, $C_a: \hat{\mathcal{N}}  \times \hat{\mathcal{N}} \to  \reals\cup\set{-\infty+\infty}$, $\set{\hat r_1,\hat r_2, \dots, \hat r_T} \in \hat{\mathcal{N}}^T$.

\noindent \emph{Decision variable.} The   similarity caching problem~\eqref{eq:problem} seeks a cache allocation $\hat{\mathcal{S}} \subset \hat{\mathcal{N}}$ such that $\card{\hat{\mathcal{S}}} = \hat k$.

\noindent\textbf{Reduction of similarity caching problem.} 
We show that the similarity caching problem~\eqref{eq:problem} can be reduced to the static model allocation problem in Eq.~(14). We start observing that problem~(14) is equivalent to  
\begin{align}
        \pmb{x}_{\star} \in \argmin_{\pmb{x} \in \mathcal X}  \sum^T_{t=1} C(\pmb r_t, \pmb l_t, \pmb x).~\label{eq:problem2}
    \end{align}
%GN: We consider a sub-problem of optimizing the allocation of ML models in IDNs. 
The set of nodes is $\mathcal{V} = \set{1,2}$, and the set of edges is $\mathcal{E} = \set{(1,2)}$ (a topology with a single repository and a single cache, which we assume to be node~2 and node~1, respectively ). The set of models is $\mathcal{M} = \hat{\mathcal{N}}$ and the set of tasks is $\mathcal{N} =  \hat{\mathcal{N}}$. The  set  of request types is $\mathcal R = \mathcal{N} \times \set{(1,2)}$. The maximum capacity is $L^v_m=1$, and the potential available
capacity is $l^{t,v}_{\rho,m} = 1$ for every node $v \in \mathcal{V}$, timeslot $t \in \set {1,2,\dots, T}$, $\rho \in \mathcal{R}$, and model $m \in \mathcal{M}$. Given a request type $\rho = (i, \pmb p) \in \mathcal{R}$, we define a cost $C^{p_{j}}_{\pmb p, m} \triangleq C_a(i, m)$ for node $p_j =1$ and $C^{p_{j}}_{\pmb p, m}  = C_r$ for node $p_j=2$ for $m \in \mathcal{M}$. 
% The minimal allocation is $\omega^{p}_m = 1$ if  $p =2$, and $\omega^{p}_m = 0$ if $p=1$ for every $m \in \mathcal{M}$. 
The allocation vector at node $1$ is $\pmb x^1 = \left( \mathds{1}_{\set{m \in \hat{\mathcal{S}}}}\right)_{m \in \mathcal{M}}$, and at node $2$ is $\pmb x^2 = \left(1\right)_{m \in \mathcal{M}}$. The model size is $s^v_m =1$ for every $v \in \mathcal{V}$ and $m \in \mathcal{N}$. The allocation budget is $b^1_m = \hat k$, and $b^2_m = \card{\mathcal M}$.
% At timeslot $t$, request batch $\pmb{r}_t$ is composed of just one request, of type $\rho_t=(\hat{r}_t, \mathbf{p}_t)$, i.e., $\pmb r_t = \parentheses{\mathds{1}_{\set{i = \hat r_t}}}_{(i, \pmb p) \in \mathcal{R}}$.
At timeslot $t$, a single request constitutes the request batch $\pmb r_t = \parentheses{\mathds{1}_{\set{i = \hat r_t}}}_{(i, \pmb p) \in \mathcal{R}}$.

The aggregate cost in Eq.~(12) incurred by the system at time slot $t$ is given by
\begin{align}
    C(\pmb r_t, \pmb l_t, \pmb x) &= \sum_{\rho \in \mathcal{R}} \sum^{K_\rho}_{k=1} \gamma^k_{\rho} \min\set{r^t_\rho - \sum^{k-1}_{k'=1} z^{k'}_{\rho} (\pmb l_t, \pmb x), z^{k}_{\rho} (\pmb l_t, \pmb x)} \cdot \mathds{1}_{\set{\sum^{k-1}_{k'=1} z^{k'}_{\rho}(\pmb l_t, \pmb x) < r^t_\rho}}\label{e:np_1} 
    \\
    &=  \sum^{2 \hat N}_{k=1} \gamma^k_{{(\hat r_t, \pmb p)}} \min\set{1- \sum^{k-1}_{k'=1} z^{k'}_{{(\hat r_t, \pmb p)}} (\pmb l_t, \pmb x), z^{k}_{{(\hat r_t, \pmb p)}} (\pmb l_t, \pmb x)} \cdot \mathds{1}_{\set{\sum^{k-1}_{k'=1} z^{k'}_{{(\hat r_t, \pmb p)}}(\pmb l_t, \pmb x) < 1}}\label{e:np_2}
    \\
    &=  \sum^{2 \hat N}_{k=1} \gamma^k_{{(\hat r_t, \pmb p)}} \min\set{1 - 0, z^{k}_{{(\hat r_t, \pmb p)}} (\pmb l_t, \pmb x)}\cdot  \mathds{1}_{\set{\sum^{k-1}_{k'=1} z^{k'}_{{(\hat r_t, \pmb p)}}(\pmb l_t, \pmb x) < 1}}\label{e:np_andrea}
    \\
    &= \sum^{2 \hat N}_{k=1} \gamma^k_{{(\hat r_t, \pmb p)}} 
     z^{k}_{{(\hat r_t, \pmb p)}}(\pmb l_t, \pmb x)\cdot 
    \mathds{1}_{\set{\sum^{k-1}_{k'=1} z^{k'}_{{(\hat r_t, \pmb p)}}(\pmb l_t, \pmb x) < 1}} \label{e:np_3}
    \\
    % &= \sum^{K_{(\hat r_t, \pmb p)}}_{k=1} \gamma^k_{{(\hat r_t, \pmb p)}} z^{k}_{{(\hat r_t, \pmb p)}}\\
    &= \min\set{\gamma^k_{{(\hat r_t, \pmb p)}}: z^k_{{(\hat r_t, \pmb p)}} (\pmb l_t, \pmb x) = 1, k \in \set{1,2,\dots, 2\hat N} }\label{e:np_4}\\
    &= \min \parentheses{\set{C_r} \cup \set{C_a(\hat r_t, m): m \in \hat{\mathcal{S}}}}\label{e:np_5}\\
    &=   C_{\mathrm{overall}}(\hat r_t, \hat{\mathcal{S}}).\label{e:np_6}
\end{align}
Equation~\eqref{e:np_1} is
%GN: follows from 
the definition of $C(\pmb r_t, \pmb l_t, \pmb x)$ that we provide in Eq.~~\eqref{eq:cost-expression}. The batch of requests consists only of a single request $\pmb r_t = \parentheses{\mathds{1}_{\set{i = \hat r_t}}}_{(i, \pmb p) \in \mathcal{R}}$; this gives Equation~\eqref{e:np_2}. Note that $K_\rho = 2 \hat N$ since a request can be served with $\hat N$ different models at node~2 and node 1.
Equation~\eqref{e:np_andrea} can be understood as follows. Consider a model rank $k$ that makes the indicator function $\mathds{1}_{\set{\sum^{k-1}_{k'=1} z^{k'}_{{(\hat r_t, \pmb p)}}(\pmb l_t, \pmb x) < 1}}$ true. This means that no model with lower rank $k'<k$ is able to satisfy the single request we are considering, which implies that the effective capacity is always $z^{k'}_{{(\hat r_t, \pmb p)}} (\pmb l_t, \pmb x)=0$ for any $k'<k$.
Equation~\eqref{e:np_3} follows from $z^{k'}_{{(\hat r_t, \pmb p)}} (\pmb l_t, \pmb x) \in \{0,1\}$ for $k'\in\set{1, \dots, 2 \hat N}$.
% We observe that $\min\set{1- \sum^{k-1}_{k'=1} z^{k'}_{{(\hat r_t, \pmb p)}} (\pmb l_t, \pmb x), z^{k}_{{(\hat r_t, \pmb p)}} (\pmb l_t, \pmb x)}$ is redundant w.r.t. $\mathds{1}_{\set{\sum^{k-1}_{k'=1} z^{k'}_{{(\hat r_t, \pmb p)}} < 1}}$; thus,  Equation~\eqref{e:np_3}  holds. 
Equation~\eqref{e:np_4} follows from the fact that 1) for each request $\rho$, model service costs are ordered in non-decreasing order (i.e., $\gamma^k_{{(\hat r_t, \pmb p)}} \le \gamma^{k+1}_{{(\hat r_t, \pmb p)}}$) and 2) the term $z^k_{{(\hat r_t, \pmb p)}} (\pmb l_t, \pmb x) \mathds{1}_{\set{\sum^{k-1}_{k'=1} z^{k'}_{{(\hat r_t, \pmb p)}}(\pmb l_t, \pmb x) < 1}}$ can only be non-zero for a single value of $k$; indeed, as we only need to satisfy one request, only one model will be involved in serving that request.
%definition of $\gamma^k_{{(\hat r_t, \pmb p)}}$ in Eq.~(11) and observing that $\mathds{1}_{\set{\sum^{k-1}_{k'=1} z^{k'}_{{(\hat r_t, \pmb p)}} < 1}}$ can only be non-zero for a single value of $k$. 
Equation~\eqref{e:np_5} follows from the definition of  the costs  $C^{p_{j}}_{\pmb p, m} = C_a(i, m)$ for $p_j =1$ and $C^{p_{j}}_{\pmb p, m}  = C_r$ for $p_j=2$ for $m \in \mathcal{M}$. Equation~\eqref{e:np_6} is a direct result from the definition of $C_{\mathrm{overall}}$ in Eq.~\eqref{e:c_overall}

Hence, finding the optimal allocation $\pmb x_*$ that minimizes 
%(resp. maximize) 
the total allocation cost in~\eqref{eq:problem2}
%\frac{1}{T}\sum^T_{t=1} C(\pmb r_t, \pmb l_t, \pmb x) = \frac{1}{T} \sum^T_{t=1}  C_{\mathrm{overall}}(\hat r_t, \hat{\mathcal{S}})$ (resp. gain) in this subproblem 
is equivalent to solve problem~\eqref{eq:problem}, which is NP-hard.
    
\end{proof}

\stopchange
\section{Equivalent Expression of the Gain Function}
\label{appendix:gain_expression}

\begin{lemma}
\label{lemma:mins_subtraction}
{Let us fix the threshold} $c \in \naturals \cup \{0\}$, {request type} $\request \in \requestSet$, {model rank} $k \in [{\modelsNo_{\request}}]${, time slot $t$, load vector $\loadVec_t$ and allocation vector $\allocVec$. For brevity, let us denote $\auxalloc^{k'}_{\request}=\auxalloc^{k'}_{\request}(\loadVec_t, \allocVec)$. The following formula holds:}
\begin{align}
    \min\left\{c,  \sum^{k}_{k'=1} \auxalloc^{k'}_{\request} \right\} -   \min\left\{c, \sum^{k-1}_{k'=1} \auxalloc^{k'}_{\request} \right\} = \min\left\{c-\sum^{k-1}_{k'=1} \auxalloc^{k'}_{\request} , \auxalloc^{k}_{\request}\right\}
    \cdot \mathds{1}\left\{\sum^{k-1}_{k'=1} \auxalloc^{k'}_{\request} < c\right\}.
    \label{eq:mins_subtract}
\end{align}
\end{lemma}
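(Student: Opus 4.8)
The plan is to prove the identity \eqref{eq:mins_subtract} by a straightforward case analysis on the position of the threshold $c$ relative to the two partial sums $\sum_{k'=1}^{k-1}\auxalloc^{k'}_{\request}$ and $\sum_{k'=1}^{k}\auxalloc^{k'}_{\request}$. Since $\auxalloc^{k}_{\request}\ge 0$, we always have $\sum_{k'=1}^{k-1}\auxalloc^{k'}_{\request}\le\sum_{k'=1}^{k}\auxalloc^{k'}_{\request}$, so there are exactly three possibilities: (i)~$c\le\sum_{k'=1}^{k-1}\auxalloc^{k'}_{\request}$, (ii)~$\sum_{k'=1}^{k-1}\auxalloc^{k'}_{\request}<c\le\sum_{k'=1}^{k}\auxalloc^{k'}_{\request}$, and (iii)~$c>\sum_{k'=1}^{k}\auxalloc^{k'}_{\request}$. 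For convenience write $s_{k-1}=\sum_{k'=1}^{k-1}\auxalloc^{k'}_{\request}$ and $s_k=s_{k-1}+\auxalloc^{k}_{\request}$.

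First I would dispose of case (i): here both minima on the left-hand side equal $c$, so their difference is $0$; on the right-hand side the indicator $\mathds{1}\{s_{k-1}<c\}$ is $0$, so the whole expression is $0$ as well. In case (iii), both minima equal the respective partial sums, so the left-hand side is $s_k-s_{k-1}=\auxalloc^{k}_{\request}$; on the right-hand side $s_{k-1}<c$ so the indicator is $1$, and $c-s_{k-1}>\auxalloc^{k}_{\request}$ forces $\min\{c-s_{k-1},\auxalloc^{k}_{\request}\}=\auxalloc^{k}_{\request}$, matching. Finally, in case (ii) the left-hand side is $c-s_{k-1}$ (since $\min\{c,s_k\}=c$ and $\min\{c,s_{k-1}\}=s_{k-1}$); on the right-hand side the indicator is $1$ and $c-s_{k-1}\le\auxalloc^{k}_{\request}$ gives $\min\{c-s_{k-1},\auxalloc^{k}_{\request}\}=c-s_{k-1}$, again matching. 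This exhausts all cases and proves the identity.

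There is essentially no obstacle here — the only mild subtlety is making sure the boundary behaviour of the indicator is handled consistently with the strict inequality $s_{k-1}<c$ used in the statement, which is exactly why case (i) uses $c\le s_{k-1}$ and cases (ii), (iii) use $c>s_{k-1}$; the partition of the real line is clean because the indicator is at a strict inequality. An equivalent, slightly slicker route would be to use the elementary identity $\min\{c,a+b\}-\min\{c,a\}=\min\{(c-a)_+,b\}$ valid for $b\ge 0$ (where $(\,\cdot\,)_+$ denotes the positive part), applied with $a=s_{k-1}$ and $b=\auxalloc^{k}_{\request}$, and then rewrite $\min\{(c-s_{k-1})_+,\auxalloc^{k}_{\request}\}=\min\{c-s_{k-1},\auxalloc^{k}_{\request}\}\cdot\mathds{1}\{s_{k-1}<c\}$, noting that when $s_{k-1}\ge c$ both sides vanish. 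I would present whichever of the two is shorter in context; the explicit three-case argument is the most transparent and is what I would write out in the appendix.
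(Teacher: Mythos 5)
Your proof is correct and follows essentially the same elementary case analysis as the paper, which splits on whether $\sum_{k'=1}^{k-1}\auxalloc^{k'}_{\request}\geq c$ or $<c$ (your cases (ii) and (iii) are merged there into a single step by pulling $s_{k-1}$ out of the minimum). No gap; nothing further needed.
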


\begin{proof}
We distinguish two cases:
\begin{enumerate}[{(I)}]
\item When {the $k-1$ less costly models have at least $c$ effective capacity, i.e.,} $\sum^{k-1}_{k'=1} \auxalloc^{k'}_{\requestWithPath} \geq c$, we obtain:
\begin{align}
    \sum^{k}_{k'=1} \auxalloc^{k'}_{\request}  = \sum^{k-1}_{k'=1} \auxalloc^{k'}_{\request}  + \auxalloc^{k}_{\request}\geq c + \auxalloc^{k}_{\request} \geq c
    \label{eq:mins_sub_case1}.
\end{align}
The last inequality is obtained using $\auxalloc^{k}_{\request} \geq 0$. {Therefore, the left term of Eq.~\eqref{eq:mins_subtract} becomes $c-c = 0$, and the indicator function of the right term becomes zero. {Hence}, Eq.~\eqref{eq:mins_subtract} is verified in this case.}
\item When $\sum^{k-1}_{k'=1} \auxalloc^{k'}_{\request} < c$, we obtain:
	\begin{align}
	   % \nonumber
	    \min\left\{ c,  \sum^{k}_{k'=1} \auxalloc^{k'}_{\request} \right\} -   \min\left\{c,  \sum^{k-1}_{k'=1} \auxalloc^{k'}_{\request} \right\} &=\min\left\{c,  \sum^{k}_{k'=1} \auxalloc^{k'}_{\request}\right\} - \sum^{k-1}_{k'=1} \auxalloc^{k'}_{\request} =\min\left\{c - \sum^{k-1}_{k'=1} \auxalloc^{k'}_{\request}, \auxalloc^{k}_{\request}  \right\}.
	    \label{eq:mins_sub_case2}
	\end{align}
Hence Eq.~\eqref{eq:mins_subtract} is verified, being the indicator function equal to 1 in this case.
\end{enumerate}

By combining Eq.~\eqref{eq:mins_sub_case1} and Eq.~\eqref{eq:mins_sub_case2} we obtain Eq.~\eqref{eq:mins_subtract}.
\end{proof}

\begin{lemma}
\label{lemma:cost_equivalent_expression}
The cost function given by Eq.~\eqref{eq:cost-expression} can be expressed as:
\begin{align}
\label{eq:cost_equivalent}
    \systemCost(\requestBatchVec_t, \loadVec_t, \allocVec)  =     \sum_{\request \in \requestSet}\sum_{k=1}^{\modelsNo_{\requestWithPath}-1}   \left(\smallCost^k_{\request} - \smallCost^{k+1}_{\request}\right)\min\left\{\requestBatch^t_{\request},  \sum^{k}_{k'=1} \auxalloc^{k'}_{\request}( \loadVec_t, \allocVec) \right\}  + \smallCost^{\modelsNo_{\request}}_{\request} \requestBatch^t_{\request}.
\end{align}
\end{lemma}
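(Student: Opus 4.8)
The plan is to turn the original cost formula \eqref{eq:cost-expression} into \eqref{eq:cost_equivalent} in three moves: (i) rewrite every summand as a successive difference using Lemma~\ref{lemma:mins_subtraction}; (ii) resum by parts; (iii) observe that the last partial sum is saturated thanks to the repository. Throughout I would fix a request type $\request = (i,\requestPathVec)$, a slot $t$, a load vector $\loadVec_t$ and an allocation $\allocVec$, abbreviate $\auxalloc^{k} \defeq \auxalloc^{k}_{\request}(\loadVec_t,\allocVec)$ and $c \defeq \requestBatch^t_\request$, and set $S_k \defeq \min\{c, \sum_{k'=1}^{k}\auxalloc^{k'}\}$ for $k = 0,1,\dots,\modelsNo_\request$, so that $S_0 = \min\{c,0\} = 0$.

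First I would apply Lemma~\ref{lemma:mins_subtraction} with threshold $c$: for each $k \in [\modelsNo_\request]$ it yields $\min\{c - \sum_{k'=1}^{k-1}\auxalloc^{k'},\, \auxalloc^{k}\}\cdot \mathds{1}\{\sum_{k'=1}^{k-1}\auxalloc^{k'} < c\} = S_k - S_{k-1}$, so the inner sum over the fixed $\request$ in \eqref{eq:cost-expression} becomes the clean telescoping-style expression $\sum_{k=1}^{\modelsNo_\request}\smallCost^k_\request\,(S_k - S_{k-1})$. Then I would perform summation by parts (Abel summation): $\sum_{k=1}^{\modelsNo_\request}\smallCost^k_\request(S_k - S_{k-1}) = \sum_{k=1}^{\modelsNo_\request}\smallCost^k_\request S_k - \sum_{k=0}^{\modelsNo_\request - 1}\smallCost^{k+1}_\request S_k$, and using $S_0 = 0$ this equals $\sum_{k=1}^{\modelsNo_\request - 1}(\smallCost^k_\request - \smallCost^{k+1}_\request) S_k + \smallCost^{\modelsNo_\request}_\request S_{\modelsNo_\request}$.

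It then remains to show $S_{\modelsNo_\request} = c = \requestBatch^t_\request$. Here I would invoke the repository: the repository node $\nu(\requestPathVec) = \requestPath_J$ lies on $\requestPathVec$, hence its repository models are among the $\modelsNo_\request$ node–model pairs ranked along the path; by \eqref{eq:repo_ineq} they are always part of $\allocVec$, and their capacities are not chosen by the adversary, so their aggregate effective capacity is $\sum_{m\in\modelSet_i}\repo^{\nu(\requestPathVec)}_m\min\{\loadmodel^{\nu(\requestPathVec)}_m,\requestBatch^t_\request\}$, which by \eqref{eq:repo_feasibility_constraint} is at least $\requestBatch^t_\request$ (a short case split on whether some single $\loadmodel^{\nu(\requestPathVec)}_m \ge \requestBatch^t_\request$ closes this). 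A fortiori $\sum_{k'=1}^{\modelsNo_\request}\auxalloc^{k'} \ge \requestBatch^t_\request$, whence $S_{\modelsNo_\request} = c$. Substituting $S_k = \min\{\requestBatch^t_\request, \sum_{k'=1}^{k}\auxalloc^{k'}_\request(\loadVec_t,\allocVec)\}$ back into the result of the previous paragraph and summing over $\request\in\requestSet$ produces \eqref{eq:cost_equivalent}.

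The index bookkeeping in the Abel step is routine; the one step requiring genuine care — and the main obstacle — is the identification $S_{\modelsNo_\request} = \requestBatch^t_\request$, i.e.\ the fact that the serving rule underlying \eqref{eq:cost-expression} actually places every request. This is precisely where the repository assumptions \eqref{eq:repo_ineq}–\eqref{eq:repo_feasibility_constraint} enter: without them the claimed identity fails, the two sides differing by $\smallCost^{\modelsNo_\request}_\request(\requestBatch^t_\request - S_{\modelsNo_\request})$ per request type.
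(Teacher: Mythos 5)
Your proposal is correct and follows essentially the same route as the paper's own proof: apply Lemma~\ref{lemma:mins_subtraction} to turn each summand of \eqref{eq:cost-expression} into a difference of partial minima, resum by an index shift (Abel summation), and use the repository constraints \eqref{eq:repo_ineq}--\eqref{eq:repo_feasibility_constraint} to conclude that the last partial sum saturates at $\requestBatch^t_\request$, giving the boundary term $\smallCost^{\modelsNo_\request}_\request \requestBatch^t_\request$. Your explicit case split justifying the saturation step is, if anything, slightly more detailed than the paper's one-line appeal to \eqref{eq:repo_feasibility_constraint}.
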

\begin{proof}
% The expression in Eq. \eqref{eq:cost_equivalent} is then obtained with several algebraic manipulations.
% \todoi{aa: I think we should add the calculations here. Otherwise this proof is just saying to the reader ``In order to prove this lemma, you just need to trust me. I did the right calculation on my piece of paper.'' If it is too long, we may only keep it in an extended technical report, outsied the TON paper.}
The sum  $\sum^{k}_{k'=1} \auxalloc^{k'}_{\request}(\loadVec_t, \allocVec)$ for $k = K_\request$ surely includes a repository model as it sums all the models along the path of request type $\rho$; thus, we have $\sum^{k}_{k'=1} \auxalloc^{k'}_{\request}(\loadVec_t, \allocVec) \geq r^t_\request$ (see Eq.~\eqref{eq:repo_feasibility_constraint}) and 
\begin{align}
    \smallCost^{K_\request}_{\request} \min\left\{\requestBatch^t_{\request}, \sum^{K_\request}_{k'=1} \auxalloc^{k'}_{\request}(\loadVec_t, \allocVec)\right\} =\smallCost^{K_\request}_{\request} r^t_\request.
    \label{eq:repo_inclusion}
\end{align}

Now we use Lemma~\ref{lemma:mins_subtraction} to express the cost function in Eq. \eqref{eq:cost-expression} as a sum of the difference of min functions. 

\begin{align}
      \systemCost(\requestBatchVec_t, \loadVec_t, \allocVec)\; &=
	   \sum_{\request \in \requestSet} \sum_{k=1}^{\modelsNo_{\request}}
	   \smallCost^k_{\request}
	   \cdot
	   \min\left\{\requestBatch^t_{\request}\;{-}\hspace{-0.1em}\sum^{k-1}_{k'=1} \auxalloc^{k'}_{\request}(\loadVec_t, \allocVec),\hspace{0.1em} 
	   \auxalloc^{k}_{\request}(\loadVec_t, \allocVec)
	   \right\} 
	   \cdot
	   \mathds{1}_{\left\{\sum^{k-1}_{k'=1} \auxalloc^{k'}_{\request}(\loadVec_t, \allocVec) < \requestBatch^{t}_{\request}\right\}}
	   \\
	   &= \sum_{\request \in \requestSet} \sum_{k=1}^{\modelsNo_{\request}}
	   \smallCost^k_{\request} \left( \min\left\{\requestBatch^t_{\request}, \sum^{k}_{k'=1} \auxalloc^{k'}_{\request}(\loadVec_t, \allocVec)\right\} - \min\left\{\requestBatch^t_{\request}, \sum^{k-1}_{k'=1} \auxalloc^{k'}_{\request}(\loadVec_t, \allocVec)\right\} \right) \\
	   &\stackrel{\eqref{eq:repo_inclusion}}{=}\sum_{\request \in \requestSet} \sum_{k=1}^{\modelsNo_{\request}-1}
	   \smallCost^k_{\request} \min\left\{\requestBatch^t_{\request}, \sum^{k}_{k'=1} \auxalloc^{k'}_{\request}(\loadVec_t, \allocVec)\right\} - \sum_{\request \in \requestSet} \sum_{k=1}^{\modelsNo_{\request}}
	   \smallCost^k_{\request} \min\left\{\requestBatch^t_{\request}, \sum^{k-1}_{k'=1} \auxalloc^{k'}_{\request}(\loadVec_t, \allocVec)\right\} + \smallCost^{\modelsNo_{\request}}_{\request} \requestBatch^t_{\request}\\
	   &=\sum_{\request \in \requestSet} \sum_{k=1}^{\modelsNo_{\request}-1}
	   \smallCost^k_{\request} \min\left\{\requestBatch^t_{\request}, \sum^{k}_{k'=1} \auxalloc^{k'}_{\request}(\loadVec_t, \allocVec)\right\} - \sum_{\request \in \requestSet} \sum_{k=2}^{\modelsNo_{\request}}
	   \smallCost^k_{\request} \min\left\{\requestBatch^t_{\request}, \sum^{k-1}_{k'=1} \auxalloc^{k'}_{\request}(\loadVec_t, \allocVec)\right\} + \smallCost^{\modelsNo_{\request}}_{\request} \requestBatch^t_{\request}\\
	   &= \sum_{\request \in \requestSet} \sum_{k=1}^{\modelsNo_{\request}-1}
	   \smallCost^k_{\request} \min\left\{\requestBatch^t_{\request}, \sum^{k}_{k'=1} \auxalloc^{k'}_{\request}(\loadVec_t, \allocVec)\right\} - \sum_{\request \in \requestSet} \sum_{k=1}^{\modelsNo_{\request}-1}
	   \smallCost^{k+1}_{\request} \min\left\{\requestBatch^t_{\request}, \sum^{k}_{k'=1} \auxalloc^{k'}_{\request}(\loadVec_t, \allocVec)\right\} + \smallCost^{\modelsNo_{\request}}_{\request} \requestBatch^t_{\request}\\
	   &=  \sum_{\request \in \requestSet} \sum_{k=1}^{\modelsNo_{\request}-1}
	   \left(\smallCost^k_{\request} - \smallCost^{k+1}_{\request}\right) \min\left\{\requestBatch^t_{\request}, \sum^{k}_{k'=1} \auxalloc^{k'}_{\request}(\loadVec_t, \allocVec)\right\} + \smallCost^{\modelsNo_{\request}}_{\request} \requestBatch^t_{\request}.
\end{align}

\end{proof}

\noindent\textbf{Proof of Lemma~\ref{lemma:gain}}.
\begin{proof}
% the following is wrong because it was assuming the repository will not serve with the worst model in general
%We subtract the cost in Eq.~\eqref{eq:cost_equivalent}, from the cost incurred without intermediate allocation, given by 
%\begin{equation}
%\systemCost(\requestBatchVec_t, \loadVec_t, \repoVec) = \sum_{\request \in \requestSet}  \smallCost^{\modelsNo_{\request}}_{\request} \requestBatch^t_{\request} .
% 	    \label{eq:repository-cost}
%\end{equation}
By using the expression Eq.~\eqref{eq:cost_equivalent} for a generic allocation vector $\allocVec$ and for $\repoVec$, we obtain:
\begin{align}
&\systemGain(\requestBatchVec,\loadVec_t, \allocVec)  = \systemCost(\requestBatchVec,\loadVec_t, \repoVec) -   \systemCost(\requestBatchVec,\loadVec_t, \allocVec)  
\\
    &= \sum_{\request \in \requestSet}\sum_{k=1}^{\modelsNo_{\requestWithPath}-1}   \left(\smallCost^{k}_{\request} - \smallCost^{k+1}_{\request}\right) \min\left\{\requestBatch^t_{\request},  \sum^{k}_{k'=1} {\auxalloc}^{k'}_{\request}(\loadVec_t, \repoVec)  \right\} - \sum_{\request \in \requestSet}\sum_{k=1}^{\modelsNo_{\request}-1}   \left(\smallCost^{k}_{\request} - \smallCost^{k+1}_{\request}\right) \min\left\{\requestBatch^t_{\request},  \sum^{k}_{k'=1} \auxalloc^{k'}_{\request}(\loadVec_t, \allocVec)  \right\} \\
    &= \sum_{\request \in \requestSet}\sum_{k=1}^{\modelsNo_{\requestWithPath}-1}   
    \left(
        \smallCost^{k}_{\request} - \smallCost^{k+1}_{\request}
    \right) \cdot
    \left\{
        \min\left\{\requestBatch^t_{\request},  \sum^{k}_{k'=1} \auxalloc^{k'}_{\request}(\loadVec_t, \repoVec)  \right\}
        -
        \min\left\{\requestBatch^t_{\request},  \sum^{k}_{k'=1} {\auxalloc}^{k'}_{\request}(\loadVec_t, \allocVec)  \right\}
    \right\} \\
    &=  \sum_{\request \in \requestSet}\sum_{k=1}^{\modelsNo_{\requestWithPath}-1}   
    \left(
        \smallCost^{k+1}_{\request} - \smallCost^{k}_{\request}
    \right) \cdot
    \left\{
                \min\left\{\requestBatch^t_{\request},  \sum^{k}_{k'=1} {\auxalloc}^{k'}_{\request}(\loadVec_t, \allocVec)   \right\}  - \min\left\{\requestBatch^t_{\request},  \sum^{k}_{k'=1} \auxalloc^{k'}_{\request}(\loadVec_t, \repoVec)  \right\}
    \right\}
    \\
    &=  
    \sum_{\request \in \requestSet}\sum_{k=1}^{\modelsNo_{\requestWithPath}-1}   
    \left(
        \smallCost^{k+1}_{\request} - \smallCost^{k}_{\request}
    \right) 
   {
    \left(
        \Auxalloc^k_\request(\requestBatchVec_t,\loadVec_t, \allocVec)
        - 
        \Auxalloc^k_\request(\requestBatchVec_t,\loadVec_t, \repoVec)
    \right)
    }
    .
\end{align}

% \rev{aa}{where $\beta^{k}_{ \request} (\requestBatchVec_t,\loadVec_t, \repoVec) \triangleq \min\left\{\requestBatch^t_\request,  \sum^{k}_{k'=1} {\auxalloc}^{k'}_{\request}(\loadVec_t, \repoVec)\right\}$.}{}
\end{proof}
\newpage
\section{Projection Algorithm}
\label{appendix:projection}

% At the end of a time slot, after performing a subgradient update (in the dual space then map back) at each node $v \in \vertices$
In order to project a fractional allocation $\vec y'$ lying outside the constraint set $\allocSetFrac$ to a feasible allocation $\allocVecFrac$, we perform a Bregman projection associated to the global mirror map $\Phi: \mathcal{D} \to \mathbb{R}$, where the Bregman divergence associated to the mirror map $\Phi: \mathcal{D}\to \mathbb{R}$ is given by
\begin{align}
    D_\Phi (\vec y, \vec y') = \Phi(\vec y) - \Phi(\vec y') - {\nabla \Phi(\vec y')}^T (\vec y - \vec y'), 
    \label{def:bregman_global_mirrormap}
\end{align}
and the Bregman divergences associated to the mirror maps $\Phi^v:  \mathcal{D}^v\to \mathbb{R}$ are also given by 
\begin{align}
    D_{\Phi^v} (\vec y^v, \vec y'^v) = {\Phi^v}(\vec y^v) - {\Phi^v}(\vec y'^v) - {\nabla {\Phi^v}(\vec y'^v)}^T (\vec y^v - \vec {y}'^v).
    \label{def:bregman_local_mirrormap}
\end{align}
The projection operation yields a constrained minimization problem, i.e.,  
\begin{align}
    	\vec{y} = \textstyle\prod_{\allocSetFrac \cap \mathcal{D}}^\Phi(\allocVecFrac') &= \underset{\allocVecFrac \in \allocSetFrac\cap \mathcal{D}}{\mathrm{argmin}} \, D_{\Phi}(\allocVecFrac,\allocVecFrac') 
    	\label{eq:global_projection}.
\end{align}

% Consider the weighted negative entropy map defined f $\Phi^v: \mathcal{D}^v \to \reals$ given by $\Phi^v(\allocVecFrac^v) = \sum_{m\in\modelSet}s^v_{m}\allocFrac^v_{m} \log(\allocFrac^v_{m})$

The global mirror map $\Phi: \mathcal{D} \to \reals$ is defined as the sum of the weighted negative entropy maps $\Phi^v: \mathcal{D}^v \to \reals$, where $\mathcal{D} = \reals^{\vertices \times \modelSet}_{+}$ is the domain of $\Phi$, and the set $\mathcal{D}^v = \reals^{\vertices}_{+} $ is the domain of $\Phi^v$  for all $v \in \vertices$ . Thus, it follows that the global Bregman divergence is the sum of the  Bregman divergences local to each node $v \in \vertices$, i.e., 
\begin{align}
    D_{\Phi}(\allocVecFrac,\allocVecFrac') = \sum_{v \in \vertices} D_{\Phi^v} (\allocFrac^v, \allocFrac'^v) 
    \label{eq:sum_divergences}
\end{align} 
where $\vec y \in \allocSetFrac = \bigtimes_{v\in\vertices}\allocSetFrac^v$, and $\vec y^v \in \allocSetFrac^v, \forall v \in \vertices$. In order to minimize the value $D_{\Phi}(\allocVecFrac,\allocVecFrac')$ for $\allocVecFrac \in \allocSetFrac\cap \mathcal{D}$, we can independently minimize the values $D_{\Phi^v} (\allocFrac^v, \allocFrac'^v)$ for $\allocVecFrac^v \in \allocSetFrac^v\cap \mathcal{D}^v$ giving $|\vertices|$ subproblems; for every $v \in \vertices$ we perform the following projection

\begin{align}
    	\vec{y}^v = \textstyle\prod_{\allocSetFrac^v \cap \mathcal{D}^v}^{\Phi^v}(\allocVecFrac'^v) &= \underset{\allocVecFrac^v \in \allocSetFrac^v\cap \mathcal{D}^v}{\mathrm{argmin}} \, D_{\Phi^v}(\allocVecFrac^v,\allocVecFrac'^v).
    	\label{eq:sub-projection}
\end{align}

\begin{algorithm}[!t]
	\caption{Weighted negative entropy Bregman projection onto the weighted capped simplex}
	\begin{algorithmic}[1]
		\Require $|\modelSet|$; $\capacity^v$; $\vec{s}^v$; Sorted $\vec{y}'^v$ where ${{y_{|\modelSet|}'^v}} \geq \dots \geq {{y_1'^v}}$
		\State ${y_{|\modelSet|+1}'^v}\gets +\infty$
		\For{$k \in \{|\modelSet|,|\modelSet|-1,  \dots , 1\}$ }
		\State $m_k \gets \frac{\capacity^v  - \sum^{|\modelSet|}_{m=k+1}s^v_m }{ \sum_{m = 1}^{k} s^v_m {y_m'^v}}$
		\If {${y_{k}'^v} m_{k} < 1 \leq {y_{k+1}'^v} m_{k}$}\Comment{Appropriate $k$ is found}
		\For{$k' \in \{1, 2, \dots, k\}$}
		\State $y^v_{k'} \gets m_k y'^v_{k'}$ \Comment{Scale the variable's components}
		\EndFor
	    \For{$k' \in \{k+1, k+2, \dots, |\modelSet|\}$}
		\State $y^v_{k'} \gets 1$\Comment{Cap the variable's components to 1}
		\EndFor
		\State \Return $\vec{y}^v$\Comment{$\vec{y}^v$ is the result of the projection}
		\EndIf
		\EndFor
	\end{algorithmic}
	\label{alg:bregman_divergence_projection}
\end{algorithm}

% For completeness, we provide the complete proof with the appropriate changes.
\begin{theorem}
\label{theorem:bregman_divergence}
Algorithm \ref{alg:bregman_divergence_projection} when executed at node $v \in \vertices$ returns  $\Pi^{\Phi^v}_{\allocSetFrac^v \cap \mathcal{D}^v}({\vec{y}'^v})$, i.e., the projection of the vector $\vec{y}'^v$ onto the weighted capped simplex $\allocSetFrac^v\cap \mathcal{D}^v$ under the weighted negative entropy $\Phi^v(\allocVecFrac^v) = \sum_{m\in\modelSet}s^v_{m}\allocFrac^v_{m} \log(\allocFrac^v_{m})$. The time complexity of the projection is $\BigO{|\modelSet| \log(|\modelSet|)}$.
\end{theorem}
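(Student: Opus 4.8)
Here is how I would prove Theorem~\ref{theorem:bregman_divergence}.

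\medskip
\noindent\textbf{Plan.} The projection~\eqref{eq:sub-projection} is a convex program, so the idea is to write down its KKT system, read off the closed form of its (unique) optimum, and then check that Algorithm~\ref{alg:bregman_divergence_projection} constructs exactly that point, while bounding its running time. First I would record that $D_{\Phi^v}(\vec y^v,\vec y'^v)=\sum_{m\in\modelSet}s^v_m\big(y^v_m\log(y^v_m/{y'}^v_m)-y^v_m+{y'}^v_m\big)$ is a weighted KL-type divergence; since every $s^v_m>0$ and the incoming vector $\vec y'^v$ has strictly positive entries (it is the image of $(\nabla\Phi^v)^{-1}$), this map is continuous and strictly convex in $\vec y^v$ on the nonempty, compact, convex set $\allocSetFrac^v\cap\mathcal{D}^v$ — nonempty precisely because we are in the case $\lVert\vec s^v\rVert_1\ge b^v$ where $\allocSetFrac^v$ is defined by~\eqref{eq:allocSetFrac}. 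Hence~\eqref{eq:sub-projection} has a unique solution $\vec y^v$, and (Slater's condition holding whenever $b^v<\lVert\vec s^v\rVert_1$, the remaining case being trivial) the KKT conditions are necessary and sufficient for it.

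\medskip
\noindent\textbf{Closed form via KKT.} I would attach a multiplier $\theta$ to the equality $\sum_m s^v_m y^v_m=b^v$ and multipliers $\mu_m\ge0$ to the caps $y^v_m\le1$; the constraints $y^v_m\ge0$ are inactive because $\partial_{y_m}D_{\Phi^v}=s^v_m\log(y^v_m/{y'}^v_m)\to-\infty$ as $y^v_m\to0^+$. Stationarity reads $\log(y^v_m/{y'}^v_m)=-\theta-\mu_m/s^v_m$, and complementary slackness $\mu_m(y^v_m-1)=0$ then forces the ``scale-and-cap'' form $y^v_m=\min\{1,\,m^\star\,{y'}^v_m\}$ for a single scalar $m^\star=e^{-\theta}>0$ (coordinates with $\mu_m=0$ are scaled, the rest are capped at $1$). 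Because $\vec y'^v$ is sorted increasingly, $m^\star{y'}^v_m$ is increasing too, so the capped coordinates form a suffix $\{k{+}1,\dots,\lvert\modelSet\rvert\}$; plugging this into the equality constraint yields $m^\star\sum_{m\le k}s^v_m{y'}^v_m+\sum_{m>k}s^v_m=b^v$, i.e.\ $m^\star=m_k=\big(b^v-\sum_{m>k}s^v_m\big)/\sum_{m\le k}s^v_m{y'}^v_m$, which is line~3 of the algorithm, and the suffix structure is exactly the test ${y'}^v_k m_k<1\le{y'}^v_{k+1}m_k$ on line~4 (the convention ${y'}^v_{\lvert\modelSet\rvert+1}=+\infty$ absorbing the no-capping case $k=\lvert\modelSet\rvert$).

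\medskip
\noindent\textbf{The loop finds the right $k$.} To show the scan never falls through, I would introduce $\phi(t)\triangleq\sum_{m\in\modelSet}s^v_m\min\{1,t\,{y'}^v_m\}$, which is continuous, nondecreasing, piecewise linear, with $\phi(0)=0$ and $\phi(+\infty)=\lVert\vec s^v\rVert_1\ge b^v$; by the intermediate value theorem some $m^\star>0$ satisfies $\phi(m^\star)=b^v$. Taking $k\triangleq\max\{m:m^\star{y'}^v_m<1\}$ makes $\{k{+}1,\dots,\lvert\modelSet\rvert\}$ the capped suffix, turns $\phi(m^\star)=b^v$ into $m^\star=m_k$, and gives ${y'}^v_k m_k<1\le{y'}^v_{k+1}m_k$, so the test fires at this $k$. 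Conversely, for any $k$ at which the test fires the vector produced on lines~5--9 is feasible (entries in $[0,1]$, $m_k>0$ since ${y'}^v_{k+1}m_k\ge1$, equality constraint by construction of $m_k$) and satisfies stationarity and complementary slackness with $\theta=-\log m_k$ and suitable $\mu_m\ge0$; hence it equals the unique minimizer, so the output is correct no matter which qualifying index is selected (ties $m^\star{y'}^v_m=1$ are harmless, a coordinate being identical whether scaled or capped). The boundary cases $k=\lvert\modelSet\rvert$ (nothing capped) and small $b^v$ are covered by the same argument.

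\medskip
\noindent\textbf{Complexity.} Sorting $\vec y'^v$ costs $\BigO{\lvert\modelSet\rvert\log\lvert\modelSet\rvert}$. Maintaining the running sums $\sum_{m>k}s^v_m$ and $\sum_{m\le k}s^v_m{y'}^v_m$ incrementally as $k$ decreases makes each of the at most $\lvert\modelSet\rvert$ iterations $\BigO{1}$, and the final scaling and capping passes are $\BigO{\lvert\modelSet\rvert}$ in total, so the cost is dominated by the sort, i.e.\ $\BigO{\lvert\modelSet\rvert\log\lvert\modelSet\rvert}$. The main obstacle is the middle step: extracting the scale-and-cap form from the KKT system and proving (via the auxiliary monotone function $\phi$) that the suffix test is met by an essentially unique $k$, together with the care needed at ties and at $k=\lvert\modelSet\rvert$.
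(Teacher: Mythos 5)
Your proposal is correct and follows essentially the same route as the paper's proof: write the KKT system for the weighted-negentropy Bregman projection (with the positivity constraints absorbed by the mirror map), deduce the scale-and-cap form $y^v_m=\min\{1,m_k\,{y'}^v_m\}$ with $m_k$ determined by the knapsack equality, and identify the valid $k$ via the test ${y'}^v_k m_k<1\le{y'}^v_{k+1}m_k$ on the sorted vector, with the sort dominating the $\BigO{|\modelSet|\log|\modelSet|}$ complexity. Your additions (uniqueness via strict convexity, and the intermediate-value argument with the monotone function $\phi$ showing the scan always fires) are details the paper leaves implicit, but they do not change the approach.
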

\begin{proof}

% We take $\mathcal{M} {=} \cup_{i \in \taskcatalog} \mathcal M_i=\{1,2,\dots, |\modelSet|\} \in \naturals$ as the catalog of all the available models.

\begin{align}
	\textstyle\prod_{\allocSetFrac^v \cap \mathcal{D}^v}^{\Phi^v}(\allocVecFrac'^v) &= \underset{\allocVecFrac^v \in \allocSetFrac^v\cap \mathcal{D}^v}{\mathrm{argmin}} \, D_{{\Phi^v}}(\allocVecFrac^v,\allocVecFrac'^v) \\
	&= \underset{\allocVecFrac^v \in \allocSetFrac^v \cap \mathcal{D}^v}{\mathrm{argmin}} \sum_{m \in \modelSet} s^v_m \left({y^v_m}\mathrm{log}\left(\frac{{y^v_m}}{{y'^v_m}}\right) - {y^v_m} + {y'^v_m} \right).
\end{align}
We adapt the negative entropy projection algorithm in~\cite{sisalem21icc}. The constraints ${y^v_m} > 0, \forall m \in \modelSet$ are implicitly enforced by the negentropy mirror map ${\Phi^v}$ and $D_{\Phi^v}(\allocVecFrac^v, \allocVecFrac'^v)$ is convex in $\allocVecFrac^v$. The Lagrangian function of the above problem:
\begin{align}
	\mathcal{J} (\allocVecFrac^v,  \mathbf{\beta}, \tau) =  \sum_{m \in \modelSet} {s^v_m}\left({y^v_m}\mathrm{log}\left(\frac{{y^v_m}}{{y'^v_m}}\right)- {y^v_m} + {y'^v_m}\right) - \sum_{m \in \modelSet} \beta_m \left(1-{y^v_m}\right) - \tau \left(\sum_{m \in \modelSet}{s^v_m} {y^v_m} - b^v\right).
\end{align}
At optimal point $\vec{\hat y}^v$ the following KKT conditions hold:
\begin{subequations}
	\begin{align}
		{s^v_m} \mathrm{log}({\hat y^v_m}) - {s^v_m} \mathrm{log}({y'^v_m}) + \beta_m - {s^v_m} \tau   = 0 \label{eqkkt1}, \\
		{\hat y^v_m} \leq 1 \label{eqkkt2},                                                    \\
		\beta_m \geq0 \label{eqkkt3}       ,                                          \\
		\sum_{m \in \modelSet} {s^v_m} {\hat y^v_m} = \capacity^v \label{eqkkt4},                                                \\
		\beta_m (1-{\hat y^v_m}) =0 \label{eqkkt5}.
	\end{align}
	\label{eqKKT}
\end{subequations}
Without loss of generality, assume the components of $\vec{\hat{y}}^v$ are in non-decreasing order. Let $k$ be the index of the largest component of $\vec{\hat{y}}^v$ strictly smaller than , i.e., 
\begin{align}
{\hat y}^v_1 \leq ... \leq {\hat y}^v_k < {\hat y}^v_{k+1} = ... = {\hat y}^v_{|\modelSet|}  = 1 \,\,\, \text{if} \,  k < {|\modelSet|}, \label{eq:kktcond1}\\
{\hat y}^v_1 \leq {\hat y}^v_2 \leq... \leq {\hat y}^v_{|\modelSet|} < 1 \,\,\, \text{if} \,  k = {|\modelSet|} \label{eq:kktcond2}.
\end{align}

The goal here is to identify a valid value for $k$ (number of components of $\vec {\hat{y}}^v$ different from 1) and $\tau$. For now assume that $\tau$ is known, so a valid $k \in \modelSet$ should satisfy the following:
\begin{itemize}
	\item For $m_L = 1, ..., k$,  we have from \eqref{eqkkt5} that $\beta_{m_L} = 0$, and then from \eqref{eqkkt1},  $s^v_{m_L}\mathrm{log}(y'^v_{m_L}) + s^v_{m_L}\tau  = s^v_{m_L}\mathrm{log}({\hat y}^v_{m_L})< s^v_{m_L}\mathrm{log}(1) =0$, and can be simplified to \begin{align}
	    y'^v_{m_L} e^\tau < 1 \label{eq:kkt:p1}, \forall {m_L} \in  \{1,\dots, k\} .
	\end{align}
	\item For ${m_U} = k+1, ..., {|\modelSet|}$: as $\beta_{m_U} \geq 0$ from \eqref{eqkkt3}, we get $0 =s^v_{m_U} \mathrm{log}({\hat y}^v_{m_U}) = s^v_{m_U} \mathrm{log}(y'^v_{m_U}) - \beta_{m_U} + s^v_{m_U} \tau \leq s^v_{m_U}\mathrm{log}(y'^v_{m_U}) + s^v_{m_U}\tau$, and can be simplified to \begin{align}
	    y'^v_{m_U} e^\tau \geq 1, \forall {m_U} \in  \{k+1, \cdots, {|\modelSet|}\} .
	    \label{eq:kkt:p2}
	\end{align}
\end{itemize}
Consider Eqs.~\eqref{eq:kktcond1} and \eqref{eq:kktcond2}, and since for $m_L \in \{1, \dots, k\}$ we have $y'^v_{m_L} e^\tau = \hat{y}^v_{m_L}$ (the order is preserved), then the conditions in Eq.~\eqref{eq:kkt:p1} are
\begin{align}
    y'^v_1 e^\tau \leq ...\leq  y'^v_{k} e^\tau  < 1.
    \label{eq:p1_proj}
\end{align}
If the components of $\vec{y}'^v$ are ordered in ascending order, then it is enough to check if $y'^v_{k} e^\tau  < 1$ holds for Eq.~\eqref{eq:p1_proj} to be true. Moreover, for $m_U \in \{k+1, \dots, |\modelSet|\}$, we have ${\hat y}^v_{m_U} = 1$ and $y'^v_{m_U} e^\tau \geq 1$. Then, by taking ${y_{|\modelSet|+1}'^v}\triangleq +\infty$ ($k$ can be equal to $|\modelSet|$ as in Eq.~\eqref{eq:kktcond2}) it is enough to check with the smallest $y'^v_{m_U}$ to summarize all the conditions in  Eq.~\eqref{eq:kkt:p2}. Thus, all the needed conditions can be further simplified to:
$$y'^v_{k} e^\tau  < 1 \leq y'^v_{k+1} e^\tau.$$  
Note that the r.h.s inequality is ignored when $k = |\modelSet|$ by construction ($y'^v_{|\modelSet|+1} = +\infty$). 

Now we established how to verify if a given $k \in \modelSet$ is valid, what remains is to give the expression of $\tau$ using the knapsack constraint in Eq.~\eqref{eqkkt4}:
\begin{equation*}
	\capacity^v= \sum^{|\modelSet|}_{m=1}{{s^v_m} {\hat {y}^v_m}} 	   =  \sum^{|\modelSet|}_{m=k+1} {s^v_m} + e^\tau \sum_{m = 1}^{k} {s^v_m} {y'^v_m}.
\end{equation*}
For a given $k \in \modelSet$, we define
\begin{align}
	m_k \triangleq e^{\tau} & = \frac{\capacity^v  - \sum^{|\modelSet|}_{m=k+1} {s^v_m} }{ \sum_{m = 1}^{k} s^v_m {y'^v_m}}.
\end{align}
Thus, a valid $k$ is the value satisfying the following inequalities (line 7 of Algorithm~\ref{alg:bregman_divergence_projection}): 
\begin{align}
    y'^v_{k} m_k  < 1 \leq y'^v_{k+1} m_k. 
\end{align}
The appropriate $k$ satisfying the KKT conditions is contained in $\modelSet$, and due to the sorting operation this gives total time complexity of $\BigO{|\modelSet| \log(|\modelSet|)}$ per iteration. In practice, the online mirror ascent method quickly sets irrelevant items in the fractional allocation vector $\allocVecFrac'^v$ very close to 0. Therefore, we can keep track only of items  with a fractional value above a threshold $\epsilon >0$ , and the size of this subset is practically $\ll |\modelSet|$. Therefore, the projection can be very efficient in practice. 
\end{proof}

%%%%%%%%%%%%%%%%%%%%%%%%%%%%%%%%%%%%%%%%%%%%%%%%%%%%%%%
%%%%%%%%%%%%%%%%%%%%%% SUBGRADIENT EXPRESSION %%%%%%%%%
%%%%%%%%%%%%%%%%%%%%%%%%%%%%%%%%%%%%%%%%%%%%%%%%%%%%%%%
\section{Subgradient Expression}
\label{appendix:subgradient_expression}

% The one-sided directional derivative \cite{rockafellar2015convex} of a function $\systemGain: \allocSetFrac \to \reals$ at a point $\vec{y} \in \allocSetFrac$ is defined to be
% \begin{align}
%     \partial_{\vec{v}} G(\vec{y}) \triangleq \underset{h \to 0}{\lim} \frac{\systemGain(\vec{y} + h \vec{v}) - \systemGain(\vec{y})}{h}.
% \end{align}
% We use the shorthand notations $\partial^+_{i} \systemGain(\vec{y})$ and $\partial^-_{i} \systemGain(\vec{y})$ to denote the directed one-sided derivatives w.r.t the basis vector $\vec{e}_i$ given by $\partial_{\vec{e}_i} G(\vec{y})$ and $-\partial_{-\vec{e}_i} G(\vec{y})$, respectively.    

% \begin{lemma}
% The vector $\vec{g}$ satisfying $g_i \in \left\{\partial^+_{i} \systemGain(\vec{y}), \partial^-_{i} \systemGain(\vec{y})\right\}$ is a subgradient of the concave function $\systemGain: \allocSetFrac \to \reals$ at point $\vec{y} \in \allocSetFrac$.
% \end{lemma}
% \begin{proof}
% % From \cite[Theorem 23.2]{rockafellar2015convex}, $\vec{g}$ is a subgradient of the concave function $G$ at point $\vec{y}$ $\emph{iff}$ 

% % Replacing $\vec{v}$ by $\vec{e}_i$ we obtain 
% % \begin{align}
    
% % \end{align}

% \end{proof}

\begin{lemma}
The gain function in  Eq.~\eqref{eq:gain-compact} has a subgradient $\vec g_t$ at point $\allocVecFrac_t \in \allocSetFrac$ given by
\begin{align}
    \vec g_t = \left[\sum_{\request \in \requestSet}
     l^{t,v}_{\rho, m}\left(\smallCost^{\modelsNo^*_{\request}(\allocVecFrac_t)}_{\request} - C^v_{\requestPathVec, m}\right) \mathds{1}_{\{\kappa_{\request} (v, m) < \modelsNo^*_{\request}(\allocVecFrac_t)\}} \right]_{(v,m) \in \vertices \times \modelSet}, 
\end{align}
where  $\modelsNo^*_{\request}(\allocVecFrac_t) =  \min\big\{ k \in [\modelsNo_{\request}-1]: \sum^{k}_{k'=1} \auxalloc^{k'}_{\request}(\loadVec_t, \allocVecFrac_t) \geq \requestBatch^t_{\request}\big\}$.

% The vector constructed through values \eqref{eq:subgradient_expression} is a subgradient at point $\allocVecFrac_t \in \allocSetFrac$ of the gain function given by 
% {
% \begin{align}
%     \systemGain(\requestBatchVec_t,\loadVec_t, \allocVecFrac) =
%     \sum_{\request \in \requestSet}
%     \sum_{k=1}^{\modelsNo_{\request}-1} 
%     \left(\smallCost^{k+1}_{\request} - \smallCost^{k}_{\request}\right)
%     \cdot
%     \left(
%     \Auxalloc^k_\request(\requestBatchVec_t,\loadVec_t,\allocVecFrac)
%     - \Auxalloc^k_\request(\requestBatchVec_t,\loadVec_t, \repoVec)
%     \right), \forall \allocVecFrac \in \allocSetFrac
%     \label{eq:gain-compact}\end{align}.
% }

\label{lemma:subgradient}
\end{lemma}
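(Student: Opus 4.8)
The plan is to use the equivalent form of the gain from Lemma~\ref{lemma:gain}, recognise it as a nonnegative combination of pieces that are each a pointwise minimum of a constant and an affine function of $\allocVecFrac$, and exhibit one element of the superdifferential. Introduce the shorthand $\ell^k_\request(\allocVecFrac)\triangleq\sum_{k'=1}^{k}\auxalloc^{k'}_{\request}(\loadVec_t,\allocVecFrac)$. By~\eqref{eq:several-definitions}, $\auxalloc^{k'}_{\request}(\loadVec_t,\allocVecFrac)=\allocFrac^{v'}_{m'}\load^{t,v'}_{\request,m'}$ for the unique pair $(v',m')$ with $\kappa_\request(v',m')=k'$, so $\ell^k_\request$ is linear in $\allocVecFrac$ and
\[
\frac{\partial \ell^k_\request}{\partial \allocFrac^v_m}(\allocVecFrac)=\load^{t,v}_{\request,m}\,\mathds{1}_{\{\kappa_\request(v,m)\le k\}} .
\]
Combining Lemma~\ref{lemma:gain} with the identity $\Auxalloc^k_\request(\requestBatchVec_t,\loadVec_t,\allocVecFrac)=\min\{\requestBatch^t_\request,\ell^k_\request(\allocVecFrac)\}$ (this is~\eqref{eq:sum_of_auxvars}) yields
\[
\systemGain(\requestBatchVec_t,\loadVec_t,\allocVecFrac)=\sum_{\request\in\requestSet}\sum_{k=1}^{\modelsNo_\request-1}\big(\smallCost^{k+1}_\request-\smallCost^{k}_\request\big)\Big(\min\{\requestBatch^t_\request,\ell^k_\request(\allocVecFrac)\}-\Auxalloc^k_\request(\requestBatchVec_t,\loadVec_t,\repoVec)\Big),
\]
where each coefficient $\smallCost^{k+1}_\request-\smallCost^{k}_\request\ge 0$ because the $\smallCost^k_\request$ are the sorted serving costs. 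Hence $\systemGain$ is concave (cf.\ Lemma~\ref{lemma:concavity}) and it suffices to produce one supergradient at $\allocVecFrac_t$.

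For fixed $(\request,k)$ the map $\allocVecFrac\mapsto\min\{\requestBatch^t_\request,\ell^k_\request(\allocVecFrac)\}$ is concave, and one verifies directly that $\nabla\ell^k_\request$ lies in its superdifferential at any point where $\ell^k_\request(\allocVecFrac_t)\le\requestBatch^t_\request$, while the zero vector lies in it at any point where $\ell^k_\request(\allocVecFrac_t)\ge\requestBatch^t_\request$ (both statements hold at equality). Since $\ell^k_\request(\allocVecFrac_t)$ is non-decreasing in $k$ (all $\auxalloc^{k'}_\request\ge 0$) and $\modelsNo^*_\request(\allocVecFrac_t)$ is by definition the smallest $k$ with $\ell^k_\request(\allocVecFrac_t)\ge\requestBatch^t_\request$, we get $\ell^k_\request(\allocVecFrac_t)<\requestBatch^t_\request$ for $k<\modelsNo^*_\request(\allocVecFrac_t)$ and $\ell^k_\request(\allocVecFrac_t)\ge\requestBatch^t_\request$ for $k\ge\modelsNo^*_\request(\allocVecFrac_t)$. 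I therefore select, as supergradient of the $(\request,k)$-piece, the vector $\nabla\ell^k_\request$ when $k<\modelsNo^*_\request(\allocVecFrac_t)$ and $\vec 0$ otherwise. Because the superdifferential of a finite sum of concave functions always contains the sum of the individual superdifferentials, the vector
\[
\vec g_t=\sum_{\request\in\requestSet}\sum_{k=1}^{\modelsNo^*_\request(\allocVecFrac_t)-1}\big(\smallCost^{k+1}_\request-\smallCost^{k}_\request\big)\,\nabla\ell^k_\request
\]
is a supergradient of $\systemGain$ at $\allocVecFrac_t$.

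It remains to simplify the $(v,m)$-component of $\vec g_t$. Fix $(v,m)$ and a request type $\request$, and set $k_0\triangleq\kappa_\request(v,m)$, so $\smallCost^{k_0}_\request=C^v_{\requestPathVec,m}$ by~\eqref{eq:several-definitions} and, by the gradient formula above, $\load^{t,v}_{\request,m}$ enters $\nabla\ell^k_\request$ exactly for $k\ge k_0$. Thus the contribution of $\request$ to $\subgrad^v_{t,m}$ equals $\load^{t,v}_{\request,m}\sum_{k}\big(\smallCost^{k+1}_\request-\smallCost^{k}_\request\big)$ with $k$ ranging over $k_0\le k\le\modelsNo^*_\request(\allocVecFrac_t)-1$ (the bound $k\le\modelsNo_\request-1$ is not binding since $\modelsNo^*_\request(\allocVecFrac_t)\le\modelsNo_\request-1$). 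This range is non-empty iff $k_0<\modelsNo^*_\request(\allocVecFrac_t)$, and then the sum telescopes to $\smallCost^{\modelsNo^*_\request(\allocVecFrac_t)}_\request-\smallCost^{k_0}_\request=\smallCost^{\modelsNo^*_\request(\allocVecFrac_t)}_\request-C^v_{\requestPathVec,m}$. Summing over $\request\in\requestSet$ and encoding non-emptiness with an indicator gives
\[
\subgrad^v_{t,m}=\sum_{\request\in\requestSet}\load^{t,v}_{\request,m}\big(\smallCost^{\modelsNo^*_\request(\allocVecFrac_t)}_\request-C^v_{\requestPathVec,m}\big)\,\mathds{1}_{\{\kappa_\request(v,m)<\modelsNo^*_\request(\allocVecFrac_t)\}},
\]
which is exactly the claimed expression.

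The main obstacle is the careful handling of the non-smooth points: one must check that at a point with $\ell^k_\request(\allocVecFrac_t)=\requestBatch^t_\request$ the zero vector is indeed a legitimate supergradient of the $(\request,k)$-piece (so the choice made for $k\ge\modelsNo^*_\request(\allocVecFrac_t)$ is valid) and that the additive subdifferential calculus applies. The remaining bookkeeping — the index ranges in the telescoping sum and the fact that $\modelsNo^*_\request(\allocVecFrac_t)$ is well defined with $\modelsNo^*_\request(\allocVecFrac_t)\le\modelsNo_\request-1$, which follows from the repository-feasibility constraint~\eqref{eq:repo_feasibility_constraint} — is routine.
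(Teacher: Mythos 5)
Your proof is correct and follows essentially the same route as the paper's: decompose the gain via Eq.~\eqref{eq:gain-compact} into nonnegative multiples of $\min\{\requestBatch^t_\request,\ell^k_\request(\allocVecFrac)\}$, choose the gradient of the linear part as the per-piece supergradient when $k<\modelsNo^*_\request(\allocVecFrac_t)$ and the zero vector otherwise, sum, and telescope to the claimed componentwise expression. The only difference is cosmetic: you verify the supergradient choices and the additivity inclusion directly, whereas the paper invokes the subdifferential characterization of a pointwise minimum and Rockafellar's sum rule, with the same tie-breaking at $\ell^k_\request(\allocVecFrac_t)=\requestBatch^t_\request$.
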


\begin{proof}

The function given by $ \Auxalloc^k_\request(\requestBatchVec_t,\loadVec_t,\allocVecFrac_t) = \min\left\{
   \textstyle \requestBatch^t_\request,  \sum^{k}_{k'=1} {\auxalloc}^{k'}_{\request}(\loadVec_t, \allocVecFrac_t)\right\}$ is a minimum of two concave differentiable functions (a constant, and a linear function). We can characterize its subdifferential (set of all possible subgradients), using \cite[Theorem 8.2]{mordukhovich2017geometric}, at point $\allocVecFrac_t \in \allocSetFrac$ as
   \begin{align}
      \textstyle \partial \Auxalloc^k_\request(\requestBatchVec_t,\loadVec_t,\allocVecFrac_t) = \begin{cases}
      \left\{ \nabla ( \sum^{k}_{k'=1} {\auxalloc}^{k'}_{\request}(\loadVec_t, \allocVecFrac_t))\right\} &\mathrm{if}\,\, \Auxalloc^k_\request(\requestBatchVec_t,\loadVec_t,\allocVecFrac_t) <  \requestBatch^t_\request\qquad(\text{r.h.s. argument of the min is active}), \\
      \convexhull{\left\{\vec{0}, \nabla ( \sum^{k}_{k'=1} {\auxalloc}^{k'}_{\request}(\loadVec_t, \allocVecFrac_t))\right\}} &\mathrm{if}\,\, \Auxalloc^k_\request(\requestBatchVec_t,\loadVec_t,\allocVecFrac_t) =  \requestBatch^t_\request \qquad(\text{both arguments of the min are active}), \\
      \{\vec{0}\}  &\mathrm{otherwise}\
      \qquad\qquad\qquad(\text{l.h.s. argument of the min is active}),
       \end{cases}
   \end{align}
where $\convexhull{\,\cdot\,}$ is the convex hull of a set, and the gradient $\nabla$ is given by $\nabla (\,\cdot\,) = [\frac{\partial\phantom{y^v_m}}{\partial y^v_m} (\,\cdot\,)]_{(v,m) \in \vertices \times \modelSet}$. The operator $\frac{\partial\phantom{y^v_m}}{\partial y^v_m} (\,\cdot\,)$ is the partial derivative w.r.t $y^v_m$ (not to be confused with the subdifferential notation). 

We restrict ourselves to the valid subgradient $\tilde{\vec{g}}^k_{\rho,t} \in \partial \Auxalloc^k_\request(\requestBatchVec_t,\loadVec_t,\allocVecFrac_t)$ given by
\begin{align}
    \tilde{\vec{g}}^k_{\rho,t} = \begin{cases}
       \nabla ( \sum^{k}_{k'=1} {\auxalloc}^{k'}_{\request}(\loadVec_t, \allocVecFrac_t)) & \mathrm{if}\,\, \Auxalloc^k_\request(\requestBatchVec_t,\loadVec_t,\allocVecFrac_t) <  \requestBatch^t_\request, \\
     \vec{0} , & \mathrm{otherwise}.
       \end{cases}
       \label{eq:subgradient_instance}
\end{align}

Note that for every $(v,m) \in \vertices \times \modelSet$ we have
\begin{align}
\nonumber
\frac{\partial \phantom{y^v_m}}{\partial y^v_m} 
\sum^{k}_{k'=1} \auxalloc^{k'}_{\request}(\loadVec_t, \allocVecFrac_t)
&=
\sum^{k}_{k'=1} \frac{\partial\phantom{y^v_m}}{\partial y^v_m} 
\auxalloc^{k'}_{\request}(\loadVec_t, \allocVecFrac_t)
\stackrel{\eqref{eq:several-definitions}}{=}
\load^{t,v}_{\request,m} \cdot \mathds{1}_{\{\kappa_{\request} (v, m) \leq k\}}.
\end{align}
{The indicator variable $\mathds{1}_{\{\kappa_{\request} (v, m) \leq k\}}$ is introduced since the partial derivative of $\sum^{k}_{k'=1} \auxalloc^{k'}_{\request}(\loadVec_t, \allocVecFrac_t)$ w.r.t. $y^v_m$ is non-zero only if model $m$ at node $v$ is among the $k$ best models to serve requests of type $\rho$ (in this case, the variable $y^v_{m}$ appears once in the summation).}

We obtain from Eq.~\eqref{eq:subgradient_instance}
\begin{align}
   \tilde{g}^{k,v}_{\rho,t,m} &= \begin{cases}
    \load^{t,v}_{\request,m} \cdot  \mathds{1}_{\{\kappa_{\request} (v, m) \leq k\}} &\mathrm{if}\,\,  \Auxalloc^k_\request(\requestBatchVec_t,\loadVec_t,\allocVecFrac_t) < r^t_\rho, \\
     0  &\mathrm{otherwise.}
    \end{cases}\\
    &= \load^{t,v}_{\request,m}  \cdot\mathds{1}_{\{\kappa_{\request} (v, m) \leq k \,\,\land\,\, \Auxalloc^k_\request(\requestBatchVec_t,\loadVec_t,\allocVecFrac_t) < r^t_\rho \}}, \forall (v,m) \in \vertices \times \modelSet.
\end{align}
By considering the subdifferential 
\begin{align}
     \partial  \systemGain(\requestBatchVec_t,\loadVec_t, \allocVecFrac_t) = \partial \left(\sum_{\request \in \requestSet}\sum_{k=1}^{\modelsNo_{\request}-1}  \left(\smallCost^{k+1}_{\request} - \smallCost^{k}_{\request}\right) \left(
    \Auxalloc^k_\request(\requestBatchVec_t,\loadVec_t,\allocVecFrac)
    - \Auxalloc^k_\request(\requestBatchVec_t,\loadVec_t, \repoVec)
    \right) \right),
\end{align}
and using \cite[Theorem 23.6]{rockafellar2015convex}, we get 
\begin{align}
     \partial  \systemGain(\requestBatchVec_t,\loadVec_t, \allocVecFrac_t) = \sum_{\request \in \requestSet}\sum_{k=1}^{\modelsNo_{\request}-1} \partial\left( \left(\smallCost^{k+1}_{\request} - \smallCost^{k}_{\request}\right) \left(
    \Auxalloc^k_\request(\requestBatchVec_t,\loadVec_t,\allocVecFrac)
    - \Auxalloc^k_\request(\requestBatchVec_t,\loadVec_t, \repoVec)
    \right)\right).
\end{align}
The constant factors $\left(\smallCost^{k+1}_{\request} - \smallCost^{k}_{\request}\right)$ are non-negative, so we can multiply both sides of the subgradient inequality by a non-negative constant \cite[Sec.~23]{rockafellar2015convex}; furthermore, the subgradient of the constants $\Auxalloc^k_\request(\requestBatchVec_t,\loadVec_t, \repoVec)$ is $\vec{0}$. We get
\begin{align}
     \partial  \systemGain(\requestBatchVec_t,\loadVec_t, \allocVecFrac_t) = \sum_{\request \in \requestSet}\sum_{k=1}^{\modelsNo_{\request}-1}  \left(\smallCost^{k+1}_{\request} - \smallCost^{k}_{\request}\right) \partial \left(
    \Auxalloc^k_\request(\requestBatchVec_t,\loadVec_t,\allocVecFrac)
    - \Auxalloc^k_\request(\requestBatchVec_t,\loadVec_t, \repoVec)
    \right) = \sum_{\request \in \requestSet}\sum_{k=1}^{\modelsNo_{\request}-1}  \left(\smallCost^{k+1}_{\request} - \smallCost^{k}_{\request}\right) \partial
    \Auxalloc^k_\request(\requestBatchVec_t,\loadVec_t,\allocVecFrac).
\end{align}
Then, a subgradient $\vec{g}_t \in   \partial  \systemGain(\requestBatchVec_t,\loadVec_t, \allocVecFrac_t)$ at point $\allocVecFrac_t \in \allocSetFrac$ is given by
\begin{align}
    \vec{g}_t = \sum_{\request \in \requestSet}
    \sum_{k=1}^{\modelsNo_{\request}-1} 
    \left(\smallCost^{k+1}_{\request} - \smallCost^{k}_{\request}\right)
     \tilde{\vec{g}}^k_{\rho,t}.
\end{align}
The $(v,m)$-th component of the subgradient $\vec{g}_t$ is
\begin{align}
    g^v_{t,m} &= \sum_{\request \in \requestSet}
    \sum_{k=1}^{\modelsNo_{\request}-1} 
    \left(\smallCost^{k+1}_{\request} - \smallCost^{k}_{\request}\right)
      \cdot \tilde{g}^{k,v}_{\rho,t,m} \\
       &= 
    \sum_{\request \in \requestSet}
    \sum_{k=1}^{\modelsNo_{\request}-1} \load^{t,v}_{\request,m} 
    \left(\smallCost^{k+1}_{\request} - \smallCost^{k}_{\request}\right) \cdot 
  \mathds{1}_{\{\kappa_{\request} (v, m) \leq k \,\,\land\,\, \Auxalloc^k_\request(\requestBatchVec_t,\loadVec_t,\allocVecFrac_t) < r^t_\rho \}}
   \\
      &= 
    \sum_{\request \in \requestSet}
    \sum_{k=\kappa_{\request} (v, m)}^{\modelsNo_{\request}-1} \load^{t,v}_{\request,m} 
    \left(\smallCost^{k+1}_{\request} - \smallCost^{k}_{\request}\right) \cdot 
    \mathds{1}_{\{\sum^{k}_{k'=1} \auxalloc^{k'}_{\request}(\loadVec_t, \allocVecFrac_t) < r^t_{\request}\}}
   \\
   &= 
    \sum_{\request \in \requestSet}
    \sum_{k=\kappa_{\request} (v, m)}^{\modelsNo^*_{\request}(\allocVecFrac_t)-1} \load^{t,v}_{\request,m} 
    \left(\smallCost^{k+1}_{\request} - \smallCost^{k}_{\request}\right) 
    \\
    &=
    \sum_{\request \in \requestSet}
    \load^{t,v}_{\request,m} 
    \cdot
    \sum_{k=\kappa_{\request} (v, m)}^{\modelsNo^*_{\request}(\allocVecFrac_t)-1} 
    \left(\smallCost^{k+1}_{\request} - \smallCost^{k}_{\request}\right) 
   \\
   &= 
    \sum_{\request \in \requestSet}
     \load^{t,v}_{\request,m} 
    \left(\smallCost^{\modelsNo^*_{\request}(\allocVecFrac_t)}_{\request} - \smallCost^{\kappa_{\request} (v, m)}_{\request}\right) \cdot\mathds{1}_{\{\kappa_{\request} (v, m) < \modelsNo^*_{\request}(\allocVecFrac_t)\}}\\
    &\stackrel{\eqref{eq:several-definitions}}{=}\sum_{\request \in \requestSet}
     \load^{t,v}_{\request,m} 
    \left(\smallCost^{\modelsNo^*_{\request}(\allocVecFrac_t)}_{\request} - C^v_{\requestPathVec, m}\right) \cdot\mathds{1}_{\{\kappa_{\request} (v, m) < \modelsNo^*_{\request}(\allocVecFrac_t)\}},    \forall (v,m) \in \vertices \times \modelSet,
\end{align}
where  $\modelsNo^*_{\request}(\allocVecFrac_t) =  \min\big\{ k \in [\modelsNo_{\request}-1]: \sum^{k}_{k'=1} \auxalloc^{k'}_{\request}(\loadVec_t, \allocVecFrac_t) \geq \requestBatch^t_{\request}\big\}$.

\end{proof}

\section{Supporting Lemmas for the Proof of Theorem~\ref{th:regret_bound}}
\label{apx:regret_constants}
\subsection{Concavity of the Gain Function}
\label{sec:concavity}
\begin{lemma}
The gain function given by Eq.~\eqref{eq:gain-compact} is concave over its domain $\allocSetFrac$ of possible fractional allocations.
\label{lemma:concavity}
\end{lemma}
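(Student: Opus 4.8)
The plan is to show that each summand in the compact expression~\eqref{eq:gain-compact} is a concave function of $\allocVecFrac \in \allocSetFrac$, and then invoke the fact that a nonnegative-weighted sum of concave functions is concave. The key structural observations are that (i) $\auxalloc^{k}_{\request}(\loadVec_t, \allocVecFrac) = \allocFrac^{v}_{m}\,\load^{t,v}_{\request,m}$ (where $(v,m)$ is the pair with $\kappa_{\request}(v,m)=k$) is \emph{linear} in $\allocVecFrac$, and (ii) the coefficients $\smallCost^{k+1}_{\request} - \smallCost^{k}_{\request}$ are \emph{nonnegative}, since by definition (Sec.~\ref{subsec:serving}) the costs $\smallCost^{k}_{\request}$ are sorted in increasing order of $k$ along the path of request type $\request$.

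First I would observe that, for each fixed $\request$ and $k$, the partial sum $\sum_{k'=1}^{k}\auxalloc^{k'}_{\request}(\loadVec_t, \allocVecFrac)$ is an affine function of $\allocVecFrac$, being a sum of the linear terms $\allocFrac^{v}_{m}\,\load^{t,v}_{\request,m}$. Consequently, $\Auxalloc^k_\request(\requestBatchVec_t,\loadVec_t, \allocVecFrac) = \min\{\requestBatch^t_\request,\ \sum_{k'=1}^{k}\auxalloc^{k'}_{\request}(\loadVec_t, \allocVecFrac)\}$ is the pointwise minimum of a constant and an affine function; since both are concave and the pointwise minimum of concave functions is concave, $\Auxalloc^k_\request(\requestBatchVec_t,\loadVec_t, \allocVecFrac)$ is concave in $\allocVecFrac$ on $\allocSetFrac$.

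Next I would note that $\Auxalloc^k_\request(\requestBatchVec_t,\loadVec_t, \repoVec)$ does not depend on $\allocVecFrac$ (the minimal allocation $\repoVec$ is a fixed input), so it is simply a constant; subtracting a constant from a concave function leaves it concave. Multiplying by the nonnegative scalar $\smallCost^{k+1}_{\request} - \smallCost^{k}_{\request} \ge 0$ again preserves concavity. Hence each term
\[
\left(\smallCost^{k+1}_{\request} - \smallCost^{k}_{\request}\right)\left(\Auxalloc^k_\request(\requestBatchVec_t,\loadVec_t, \allocVecFrac) - \Auxalloc^k_\request(\requestBatchVec_t,\loadVec_t, \repoVec)\right)
\]
is concave in $\allocVecFrac$. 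Finally, summing over all $\request \in \requestSet$ and all $k \in \{1,\dots,\modelsNo_{\request}-1\}$ is a finite sum of concave functions and is therefore concave, which establishes that $\systemGain(\requestBatchVec_t,\loadVec_t, \allocVecFrac)$ is concave over $\allocSetFrac$.

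There is no substantive obstacle here: the argument is essentially a sequence of elementary concavity-preserving operations. The only point that deserves explicit mention is the sign of the coefficients $\smallCost^{k+1}_{\request} - \smallCost^{k}_{\request}$ — if the $\smallCost^{k}_{\request}$ were not monotonically ordered, scaling by these differences could destroy concavity; it is precisely the ordering convention introduced in Sec.~\ref{subsec:serving} that makes the argument go through. (Note also that this concavity is over the convex set $\allocSetFrac$, on which all the affine pieces are well-defined; the domain restriction plays no further role.)
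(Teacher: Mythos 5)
Your proof is correct and follows essentially the same route as the paper's: each $\Auxalloc^k_\request(\requestBatchVec_t,\loadVec_t,\allocVecFrac)$ is the minimum of the constant $\requestBatch^t_\request$ and an affine function of $\allocVecFrac$ (hence concave), the repository term is a constant, and the weights $\smallCost^{k+1}_{\request}-\smallCost^{k}_{\request}\ge 0$ by the ordering of costs, so the nonnegative-weighted sum is concave. No gaps; your explicit remark about why the sign of the cost differences matters is a nice touch but does not change the argument.
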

\begin{proof}
Since $\auxload^k_\rho$ is defined to be the $k$-th smallest cost for any $k \in [K_\request]$ (see Eq.~\eqref{eq:several-definitions}), then the factors $\smallCost^{k+1}_{\request} - \smallCost^{k}_{\request}$ are always non-negative. Moreover $\auxalloc^{k}_{  \request}(\loadVec_t, \allocVecFrac) = \allocFrac^{v}_{m} \load^{t,v}_{\request,m}$, where $v,m$ are such that $\kappa_{\request} (v, m)=k$. Therefore, $\Auxalloc^k_\request(\loadVec_t, \allocVecFrac)$ is the minimum between a constant $\requestBatch^t_\request$ and a sum  $\sum^{k}_{k'=1} {\auxalloc}^{k'}_{\request}(\loadVec_t, \allocVec)$ of linear functions of $\allocVecFrac$. Such minimum is thus a concave function of $\allocVecFrac$. Therefore, the gain in Eq.~\eqref{eq:gain-compact} is a weighted sum with positive weights of concave functions in $\allocVecFrac$, which is concave. 
\end{proof}

\subsection{Strong convexity of the Mirror Map}
\label{sec:strong_convexity}
\begin{lemma}
The global mirror map $\Phi(\vec{y}) = \sum_{v \in \vertices} \Phi(\vec{y}^v) = \sum_{v \in \vertices} \sum_{m \in \modelSet} s^v_m y^v_m \log(y^v_m)$ defined over the domain $\mathcal{D} = \mathbb{R}_{>0}^{|\modelSet| \times |\vertices|}$  is 
$\theta$-strongly convex w.r.t. the norm $\norm{\,\cdot\,}_{l_1(\vec{s})}$ over $\mathcal{Y} \cap \mathcal{D}$, where 
\begin{align}
    \theta &\triangleq\frac{1}{ s_{\max} |\vertices| |\modelSet|},
    \label{eq:theta}\\
  \textstyle \norm{\vec{y}}_{l_1(\vec{s})} &\triangleq \sum_{(v,m) \in \vertices \times \modelSet} {s}^v_m|y^v_m| \quad \text{(weighted $l_1$ norm)},
  \label{eq:primary_norm}
\end{align}
and $s_{\max} \triangleq \left\{ s^v_m : (v,m) \in \vertices\times\modelSet \right\}$ is the maximum model size. 
% The map $\sigma^v: \modelSet \to \modelSet$ gives the ordering of the models sizes at node $v \in \vertices$, such that $\textstyle s^v_{\sigma^v_1} \leq s^v_{\sigma^v_2} \leq\dots \leq s^v_{\sigma^v_{|\modelSet|}}$. 
In words, this means that the mirror map $\Phi$'s growth is lower bounded by a quadratic with curvature $\frac{1}{ s_{\max} |\vertices| |\modelSet|}$. 
% and $y^v_{\sigma^v_i}$ is the associated fractional variable with size order $i$.
\label{lemma:strong_convexity}
\end{lemma}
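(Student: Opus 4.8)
The plan is to establish $\theta$-strong convexity by the standard second-order route, exploiting that $\Phi$ is separable and smooth on the open set $\mathcal{D}$. First I would observe that the Hessian of $\Phi$ at any $\vec{y}\in\allocSetFrac\cap\mathcal{D}$ is diagonal with strictly positive entries $\partial^2\Phi/\partial (y^v_m)^2 = s^v_m/y^v_m$, so that for every direction $\vec{h}$
\[
   \vec{h}^{\top}\nabla^2\Phi(\vec{y})\,\vec{h} \;=\; \sum_{(v,m)\in\vertices\times\modelSet}\frac{s^v_m}{y^v_m}\,h_{v,m}^2 .
\]
Then, via Taylor's theorem with integral remainder along the segment joining two points of the \emph{convex} set $\allocSetFrac\cap\mathcal{D}$ (which stays inside $\allocSetFrac\cap\mathcal{D}$, so the Hessian formula remains valid at every intermediate point), the Bregman divergence satisfies $D_{\Phi}(\vec{y}',\vec{y}) = \int_0^1 (1-t)\,\vec{h}^{\top}\nabla^2\Phi(\vec{y}+t\vec{h})\,\vec{h}\,dt$ with $\vec{h}=\vec{y}'-\vec{y}$, so that it suffices to prove the uniform quadratic-form bound
\[
   \sum_{(v,m)\in\vertices\times\modelSet}\frac{s^v_m}{y^v_m}\,h_{v,m}^2 \;\ge\; \theta\,\Big(\sum_{(v,m)\in\vertices\times\modelSet}s^v_m|h_{v,m}|\Big)^{2} \;=\; \theta\,\norm{\vec{h}}_{l_1(\vec{s})}^{2},
\]
which then yields $D_{\Phi}(\vec{y}',\vec{y})\ge\frac{\theta}{2}\norm{\vec{y}'-\vec{y}}_{l_1(\vec{s})}^{2}$, the defining inequality of $\theta$-strong convexity w.r.t.\ $\norm{\,\cdot\,}_{l_1(\vec{s})}$.

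The core step is a weighted Cauchy--Schwarz inequality, the analogue of the Pinsker-type estimate underlying the $1$-strong convexity of the negative entropy on the simplex. Splitting each summand as $s^v_m|h_{v,m}| = \big(\sqrt{s^v_m/y^v_m}\,|h_{v,m}|\big)\cdot\big(\sqrt{s^v_m y^v_m}\big)$ and applying Cauchy--Schwarz gives
\[
   \norm{\vec{h}}_{l_1(\vec{s})}^{2} \;\le\; \Big(\sum_{(v,m)\in\vertices\times\modelSet}\frac{s^v_m}{y^v_m}\,h_{v,m}^{2}\Big)\Big(\sum_{(v,m)\in\vertices\times\modelSet}s^v_m\,y^v_m\Big).
\]
The proof is finished by bounding the normalizing factor: on $\allocSetFrac\cap\mathcal{D}$ every component obeys $y^v_m\le 1$ and every weight obeys $s^v_m\le s_{\max}$, so $\sum_{(v,m)}s^v_m y^v_m \le |\vertices|\,|\modelSet|\,s_{\max} = 1/\theta$; rearranging yields the desired bound.

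A couple of remarks on the bookkeeping rather than on obstacles. It is essential that the Cauchy--Schwarz split uses $\sqrt{s^v_m y^v_m}$ (not $\sqrt{y^v_m}$), so that the factor appearing on the right is $\sum s^v_m y^v_m$ and not $\sum y^v_m$; this is exactly what couples the model sizes into the curvature. In the corner case $\norm{\vec{s}^v}_1<b^v$ the node-$v$ feasible set is the singleton $\{[1]^\modelSet\}$, which contributes nothing and does not affect the argument. Finally, one could obtain the sharper curvature $1/\sum_{v\in\vertices}b^v$ from the identity $\sum_{(v,m)}s^v_m y^v_m=\sum_{v\in\vertices}b^v$ valid on $\allocSetFrac$, but since the downstream regret constant is expressed through $s_{\max}|\vertices||\modelSet|$ I would deliberately keep the looser bound $y^v_m\le 1$ to land on exactly $\theta=1/(s_{\max}|\vertices||\modelSet|)$. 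I do not foresee any genuine difficulty; the only care needed is keeping the weighted norm and the feasibility of intermediate points straight.
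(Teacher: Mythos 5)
Your proof is correct, but it follows a genuinely different route from the paper's. You argue via the second-order characterization: the Hessian of $\Phi$ is diagonal with entries $s^v_m/y^v_m$, Taylor's theorem with integral remainder along the (feasible, since $\allocSetFrac\cap\mathcal{D}$ is convex) segment reduces the claim to a pointwise quadratic-form bound, and that bound follows from the weighted Cauchy--Schwarz split $s^v_m|h^v_m|=\bigl(\sqrt{s^v_m/y^v_m}\,|h^v_m|\bigr)\sqrt{s^v_m y^v_m}$ together with $\sum_{(v,m)}s^v_m y^v_m\le s_{\max}|\vertices||\modelSet|$ (using $y^v_m\le 1$ on the feasible set). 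The paper instead works first-order, extending Shalev-Shwartz's Lemma~16: it verifies the gradient-monotonicity criterion $(\nabla\Phi(\vec y')-\nabla\Phi(\vec y))^{\top}(\vec y'-\vec y)\ge\theta\norm{\vec y'-\vec y}^2_{l_1(\vec s)}$ directly, applying Cauchy--Schwarz with the weights $\mu^v_m=s^v_m(\log y'^v_m-\log y^v_m)(y'^v_m-y^v_m)$ and then bounding the resulting factor with the logarithmic-mean inequality $\tfrac{y'-y}{\log y'-\log y}\le\tfrac{y'+y}{2}\le 1$, and the same crude bound $s^v_m\le s_{\max}$ summed over $|\vertices||\modelSet|$ terms. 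The two arguments are equivalent in substance (both reduce to Cauchy--Schwarz plus the bound $\sum s^v_m y^v_m\le s_{\max}|\vertices||\modelSet|$) and land on the same constant $\theta$; yours trades the log-mean inequality and the case handling of $\mu^v_m=0$ for the requirement that $\Phi$ be $C^2$ on $\mathcal{D}$ and that intermediate points stay feasible, both of which you correctly verify, while the paper's first-order route avoids any appeal to Taylor's theorem and stays closer to the standard online-learning template. Your side remark that the sharper curvature $1/\sum_v\min\{\capacity^v,\norm{\vec s^v}_1\}$ is available but deliberately relaxed to match the stated $\theta$ is also accurate and harmless.
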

\begin{proof}
% Assume that  We define the constant $U$  as 
% \begin{align}
    % \textstyle U \triangleq\max\left\{ i \in \modelSet: \sum^{i}_{j=1} s^v_{\sigma^v_j} \geq  \capacity^v : (i,v) \in \taskcatalog \times \vertices\right\}.
%     \label{eq:U_def}
% \end{align}
% Consider that we want to maximize the number of stored models at a node $v \in \vertices$, then we need fit as many small models as we can  without exceeding the budget constraint. Then, the value $U$ provides then an upper bound to the maximum number of models we can store at a node. It also represents the maximum number of components that can be set to 1 for any  $\vec{y}^v \in \mathcal{Y}^v$. Obviously, $U$ is bounded by $|\modelSet|$. 
% %
% %
% From the definition of $U$ we have for any $\allocVecFrac \in \allocSetFrac$
% \begin{align}
%     \sum_{m \in \modelSet} y^v_m \leq U, \forall v \in \vertices.
%     \label{eq:upper_bound_on_sum_variables}
% \end{align}

We extend the proof of the strong convexity of the negative entropy w.r.t. to the $l_1$ norm over the simplex given in~\cite[Lemma 16]{shalev2007online}. The map $\Phi(\vec{y})$ is differentiable over $\mathcal{Y}  \cap \mathcal{D} $, so a sufficient (and also necessary) condition  for $\Phi(\vec{y})$ to be $\theta$-strongly convex w.r.t. $\norm{\,\cdot\,}_{l_1(\vec{s})}$ is:
\begin{align}
    (\nabla\Phi(\vec{y'}) -\nabla\Phi(\vec{y}))^T (\vec{y'} - \vec{y}) \geq \theta \norm{\vec{y'} - \vec{y}}^2_{l_1(\vec{s})}, \quad\forall \vec{y'},\vec{y} \in \mathcal{Y}  \cap \mathcal{D} .
\end{align}
We have
\begin{align}
 (\nabla\Phi(\vec{y'}) -\nabla\Phi(\vec{y}))^T (\vec{y'} - \vec{y}) =
\sum_{(v,m) \in \vertices \times \modelSet} s^v_m(\log({y'}^v_m) - \log({y}^v_m)) ({y'}^v_m - y^v_m).
\end{align}
Take $\mu^v_m \triangleq s^v_m(\log({y'}^v_m) - \log(y^v_m)) ({y'}^v_m - y^v_m)$, and note that $\mu^v_m \geq 0$ (because $\log$ is an increasing function).
\begin{align*}
\norm{\vec{y'}- \vec{y}}^2_{l_1(\vec{s})} &= \left(\sum_{(v,m) \in \vertices \times \modelSet} s^v_m|{y'}^v_m - y^v_m| \right)^2= \left( \sum_{(v,m) \in \vertices \times \modelSet: \mu^v_m \neq 0} \sqrt{\mu^v_m} \frac{ s^v_m |{y'}^v_m - y^v_m|}{\sqrt{\mu^v_m}} \right)^2 \\
						&\leq \left(\sum_{(v,m) \in \vertices \times \modelSet: \mu^v_m \neq 0} \mu^v_m\right) \left( \sum_{(v,m) \in \vertices \times \modelSet: \mu^v_m \neq 0} (s^v_m)^2 \frac{({y'}^v_m - y^v_m)^2}{\mu^v_m}\right)\\
						&= \left(\sum_{(v,m) \in \vertices \times \modelSet: \mu^v_m \neq 0}  s^v_m(\log({y'}^v_m) {-} \log(y^v_m)) ({y'}^v_m {-} y^v_m) \right) \left( \sum_{(v,m) \in \vertices \times \modelSet: \mu^v_m \neq 0} s^v_m \frac{{y'}^v_m - y^v_m}{\log({y'}^v_m) - \log(y^v_m)}\right).
\end{align*}
The inequality is obtained using Cauchy–Schwarz inequality. Take ${s^v_m} \leq s_{\max}, \forall (v,m) \in \vertices \times \modelSet$, we obtain: 
\begin{align*}
\sum_{(v,m) \in \vertices \times \modelSet: \mu^v_m \neq 0}   {s^v_m} \frac{{y'}^v_m - y^v_m}{\log({y'}^v_m) - \log(y^v_m)}    &\leq s_{\max} \sum_{(v,m) \in \vertices \times \modelSet: \mu^v_m \neq 0} \frac{{y'}^v_m - y^v_m}{\log({y'}^v_m) - \log(y^v_m)} \\
&= s_{\max} \sum_{v \in \vertices} \sum_ {m \in \modelSet: \mu^v_m \neq 0} \frac{{y'}^v_m - y^v_m}{\log({y'}^v_m) - \log(y^v_m)} \\
&\leq s_{\max} \sum_{v \in \vertices} \sum_{m \in  \modelSet} \frac{{y'}^v_m + y^v_m}{2}\\
&\leq  s_{\max} |\vertices| |\modelSet|.  
\end{align*}

The second inequality is shown in~\cite[Eq.~(A.16)]{shalev2007online}. We find that $\forall \vec{y'}, \vec{y} \in \mathcal{Y}  \cap \mathcal{D}$:
\begin{align}
\frac{1}{ s_{\max} |\vertices| U}\norm{\vec{y'} - \vec{y}}^2_{l_1(\vec{s})} \leq  \sum_{(v,m) \in \vertices \times \modelSet: \mu_m^v \neq 0}  {s^v_m}^v(\log({y'}_m^v) - \log(y_m^v)) ({y'}_m^v - y_m^v) = (\nabla\Phi(\vec{y'}) -\nabla\Phi(\vec{y}))^T (\vec{y'}- \vec{y}).
\end{align}
The strong convexity constant $\theta$ is $\frac{1}{s_{\max} |\vertices| |\modelSet|}$.
\end{proof}

\subsection{Subgradient Bound}
\label{sec:subgradient_bound}
\begin{lemma}
For any $(\vec r_t, \vec l_t) \in \advSet$, the subgradients $\vec{g}_t$ of the gain function in Eq.~\eqref{eq:gain-compact} at point $\allocVecFrac_t \in \allocSetFrac$ are bounded under the norm $\norm{\,\cdot\,}_{l_\infty (\frac{1}{\vec{s}})}$ by $\sigma = \frac{R L_{\max} \Delta_C}{s_{\min}}$, where $s_{\min} \triangleq \min\{s^v_m: \forall (v,m) \in \vertices \times \modelSet\}$, $L_{\max} \triangleq \max\{L^v_m: \forall (v,m) \in \vertices \times \modelSet\}$,  $R = |\requestSet|$, and $\Delta_C \triangleq \max \left\{\left(\sum_{m \in \modelSet} \repo^{\nu(\requestPathVec)}_{m'} C^{\nu(\requestPathVec)}_{\requestPathVec,m'}\right) - C^{v}_{\requestPathVec,m}: \forall (i, \requestPathVec) \in \requestSet, (v,m) \in \requestPathVec \times \modelSet \right\}$ is the maximum serving cost difference between serving at a repository node $\nu(\requestPathVec)$ and at any other node $v \in \requestPathVec$. The norm  $\norm{\,\cdot\,}_{l_\infty (\frac{1}{\vec{s}})}$ is defined as

\begin{align}
  \textstyle \norm{\vec{y}}_{l_\infty(\frac{1}{\vec{s}})} &\triangleq \max \left\{ \frac{|y^v_m|}{{s}^v_m} : (v,m) \in \vertices \times \modelSet\right\}.
  \label{eq:dual_norm}
\end{align}

\label{lemma:subgradient_bound}
\end{lemma}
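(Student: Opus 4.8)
The plan is to bound each component $g^v_{t,m}$ of the subgradient given by Lemma~\ref{lemma:subgradient} and then take the weighted maximum. Fix a node–model pair $(v,m)\in\vertices\times\modelSet$, a time slot $t$, and an adversarial choice $(\vec r_t,\vec l_t)\in\advSet$. Recall that
\begin{align}
g^v_{t,m}=\sum_{\request\in\requestSet} l^{t,v}_{\rho,m}\left(\smallCost^{\modelsNo^*_{\request}(\allocVecFrac_t)}_{\request}-C^v_{\requestPathVec,m}\right)\mathds{1}_{\{\kappa_{\request}(v,m)<\modelsNo^*_{\request}(\allocVecFrac_t)\}}.
\end{align}
First I would argue that every summand is nonnegative: when the indicator is $1$ we have $\kappa_{\request}(v,m)<\modelsNo^*_{\request}(\allocVecFrac_t)$, so, since the costs $\smallCost^1_\request\le\smallCost^2_\request\le\cdots$ are sorted in increasing order, $C^v_{\requestPathVec,m}=\smallCost^{\kappa_{\request}(v,m)}_{\request}\le\smallCost^{\modelsNo^*_{\request}(\allocVecFrac_t)}_{\request}$, hence the parenthesized gap is $\ge 0$; moreover $l^{t,v}_{\rho,m}\ge 0$. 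Consequently $g^v_{t,m}\ge 0$ and $|g^v_{t,m}|$ is a sum of at most $R=|\requestSet|$ nonnegative terms.

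Next I would bound a single summand. Since $(\vec r_t,\vec l_t)\in\advSet$, the potential available capacity satisfies $l^{t,v}_{\rho,m}\le\min\{L^v_m,r_\rho\}\le L^v_m\le L_{\max}$. For the cost gap, the key observation is that the ``worst needed'' model cannot be more costly than the repository models on the path. Indeed, by the repository feasibility constraint~\eqref{eq:repo_feasibility_constraint} the repository models at $\nu(\requestPathVec)$ alone serve all the $r^t_\rho$ requests, so the prefix of the sorted cost list that contains all of them already meets the saturation condition defining $\modelsNo^*_{\request}(\allocVecFrac_t)$; therefore $\modelsNo^*_{\request}(\allocVecFrac_t)\le\max\{\kappa_{\request}(\nu(\requestPathVec),m')\colon \repo^{\nu(\requestPathVec)}_{m'}=1\}$, and, using the increasing ordering of costs once more, $\smallCost^{\modelsNo^*_{\request}(\allocVecFrac_t)}_{\request}\le\max\{C^{\nu(\requestPathVec)}_{\requestPathVec,m'}\colon \repo^{\nu(\requestPathVec)}_{m'}=1\}\le\sum_{m'\in\modelSet}\repo^{\nu(\requestPathVec)}_{m'}C^{\nu(\requestPathVec)}_{\requestPathVec,m'}$. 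Subtracting $C^v_{\requestPathVec,m}$ and invoking the definition of $\Delta_C$ gives $\smallCost^{\modelsNo^*_{\request}(\allocVecFrac_t)}_{\request}-C^v_{\requestPathVec,m}\le\Delta_C$. Hence each summand is at most $L_{\max}\Delta_C$ and $|g^v_{t,m}|\le R\,L_{\max}\,\Delta_C$.

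Finally, dividing by $s^v_m\ge s_{\min}$ and taking the maximum over $(v,m)$ yields $\norm{\vec{g}_t}_{l_\infty(\frac{1}{\vec{s}})}=\max_{(v,m)}|g^v_{t,m}|/s^v_m\le R\,L_{\max}\,\Delta_C/s_{\min}=\sigma$; since $(\vec r_t,\vec l_t)\in\advSet$ was arbitrary, the bound is uniform. The main obstacle is the middle step: making rigorous the claim that $\modelsNo^*_{\request}(\allocVecFrac_t)$ never exceeds the worst rank of a repository model at $\nu(\requestPathVec)$, which is precisely what lets us replace the a priori unbounded quantity $\smallCost^{\modelsNo^*_{\request}(\allocVecFrac_t)}_{\request}$ by the repository costs appearing in $\Delta_C$. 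Everything else is a routine application of the triangle inequality together with the capacity constraints built into $\advSet$.
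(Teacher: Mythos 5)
Your overall skeleton is the same as the paper's: each component of the subgradient in \eqref{eq:subgradient_expression} is a sum of at most $R$ nonnegative terms, each bounded by $L_{\max}$ times a cost gap, and the weighted norm is then obtained by dividing by $s^v_m\ge s_{\min}$. The divergence --- and the gap --- is in the middle step. You bound the gap $\smallCost^{\modelsNo^*_{\request}(\allocVecFrac_t)}_{\request}-C^v_{\requestPathVec,m}$ by claiming that the repository models saturate the demand, so that $\modelsNo^*_{\request}(\allocVecFrac_t)$ cannot exceed the worst repository rank. This does not follow from the definitions the lemma works with. The saturation condition defining $\modelsNo^*_{\request}(\allocVecFrac_t)$ involves the \emph{fractional effective} capacities $\auxalloc^{k'}_{\request}(\loadVec_t,\allocVecFrac_t)=y^{v}_{m}\,l^{t,v}_{\rho,m}$, and (i) the adversary set $\advSet$ only \emph{upper}-bounds the potential available capacities ($l^{v}_{\rho,m}\le\min\{L^v_m,r_\rho\}$), while constraint \eqref{eq:repo_feasibility_constraint} is stated in terms of the maximum capacities $L^{\nu(\requestPathVec)}_{m}$; the adversary may therefore set $l^{t,\nu(\requestPathVec)}_{\rho,m'}$ strictly below $L^{\nu(\requestPathVec)}_{m'}$, even to zero, for the repository models; and (ii) the fractional set \eqref{eq:allocSetFrac} contains only the budget constraint, not the repository constraint \eqref{eq:repo_ineq}, so $y^{\nu(\requestPathVec)}_{m'}<1$ is possible. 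Consequently the prefix of ranks containing the repository models need not satisfy $\sum_{k'\le k}\auxalloc^{k'}_{\request}(\loadVec_t,\allocVecFrac_t)\ge r^t_\request$, the quantity $\modelsNo^*_{\request}(\allocVecFrac_t)$ can exceed the worst repository rank (indeed the defining set can even be empty), and your inequality $\smallCost^{\modelsNo^*_{\request}(\allocVecFrac_t)}_{\request}\le\sum_{m'\in\modelSet}\repo^{\nu(\requestPathVec)}_{m'}C^{\nu(\requestPathVec)}_{\requestPathVec,m'}$ is not justified.

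The paper does not need such a saturation argument: it bounds the gap crudely by the full cost spread along the path, $\smallCost^{\modelsNo^*_{\request}(\allocVecFrac_t)}_{\request}-C^v_{\requestPathVec,m}\le\smallCost^{\modelsNo_\request}_{\request}-\smallCost^{1}_{\request}$, and then invokes $\Delta_C$ as an upper bound on that spread (consistently with how $\Delta_C$ is described in Theorem~\ref{th:regret_bound}, namely as an upper bound on the largest serving-cost difference), so no statement about the availability of repository capacity under the adversary is required. To repair your proof, either fall back to that coarser bound, or keep your sharper route but add the missing hypotheses (repository allocations pinned to one in $\allocSetFrac$ and repository capacities not reduced by the adversary), neither of which is guaranteed by $\advSet$ and \eqref{eq:allocSetFrac} as written. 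The remaining steps of your argument --- nonnegativity of the summands, $l^{t,v}_{\rho,m}\le L_{\max}$, at most $R$ request types, and division by $s_{\min}$ --- are correct and coincide with the paper's proof.
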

\begin{proof}
  We have for any $t \in [T]$
 \begin{align}
    \norm{\vec{g}_t}_{l_\infty(\frac{1}{\vec{s}})}  &=  \max\left\{\frac{|g^v_{t,m}|}{s^v_m} , \forall (v,m) \in \vertices \times \modelSet\right\} \leq  \max\left\{\frac{|g^v_{t,m}|}{s_{\min}}, \forall (v,m) \in \vertices \times \modelSet\right\}\\
    &\stackrel{\eqref{eq:subgradient_expression}}{\leq} \frac{ L_{\max} }{s_{\min}} \max\left\{ \sum_{\request \in \requestSet}
     \left(
     \smallCost^{\modelsNo^*_{\request}(\allocVecFrac_t)}_{\request} 
     - C^v_{\requestPathVec, m}
     \right)
     \cdot\mathds{1}_{\{\kappa_{\request} (v, m) < \modelsNo^*_{\request}(\allocVecFrac_t)\}} ,\forall  (v,m) \in \vertices \times \modelSet\right\} \\
     &\stackrel{\eqref{eq:several-definitions}}{\leq} \frac{L_{\max} R}{s_{\min}} \max\{ \smallCost_\request^{K_\request}- \smallCost_\request^{1}, \forall \request \in \requestSet\} \leq \frac{L_{\max} R \Delta_C}{s_{\min}} = \sigma.
     \end{align}
\end{proof}
% $\Delta_C$ is an upper bound on cost difference between serving at a repository node and at any other node. When considering the cost model in Eq.~\eqref{eq:serving_cost2}, it is given by the weight of the heaviest path, the largest accuracy difference, and the largest inference delay difference between the repository and the source node.
\subsection{Dual Norm}
\begin{lemma}
$\norm{\,\cdot\,}_{l_\infty(\frac{1}{\vec{s}})}$ is the dual norm of $\norm{\,\cdot\,}_{l_1(\vec{s})}$  defined in \eqref{eq:dual_norm} and \eqref{eq:primary_norm}, respectively. 

\label{lemma:dual_norm}
\end{lemma}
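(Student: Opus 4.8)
The plan is to reduce the claim to the classical duality between $\ell_1$ and $\ell_\infty$ via a diagonal change of variables, since both weighted norms here are just rescalings of the standard ones. Recall that by definition the dual norm of $\norm{\,\cdot\,}_{l_1(\vec{s})}$ is $\norm{\vec z}_{l_1(\vec s),*} = \sup\{\vec z^T \vec y : \norm{\vec y}_{l_1(\vec s)} \le 1\}$, and the goal is to show this equals $\norm{\vec z}_{l_\infty(1/\vec s)} = \max_{(v,m)\in\vertices\times\modelSet} |z^v_m|/s^v_m$. I would prove the two inequalities separately.

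First I would establish a weighted Hölder inequality: for all $\vec z,\vec y$,
$\vec z^T \vec y = \sum_{(v,m)} \frac{z^v_m}{s^v_m}\, (s^v_m y^v_m) \le \sum_{(v,m)} \frac{|z^v_m|}{s^v_m}\, s^v_m |y^v_m| \le \big(\max_{(v,m)} \tfrac{|z^v_m|}{s^v_m}\big)\sum_{(v,m)} s^v_m |y^v_m| = \norm{\vec z}_{l_\infty(1/\vec s)}\,\norm{\vec y}_{l_1(\vec s)}$, which uses only that each $s^v_m > 0$ so the sign manipulations are legitimate. Taking the supremum over $\norm{\vec y}_{l_1(\vec s)} \le 1$ gives $\norm{\vec z}_{l_1(\vec s),*} \le \norm{\vec z}_{l_\infty(1/\vec s)}$. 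For the reverse direction I would exhibit an extremal vector: pick $(v^*,m^*) \in \argmax_{(v,m)} |z^v_m|/s^v_m$ and let $\vec y^*$ be supported on the single coordinate $(v^*,m^*)$ with $y^{v^*}_{m^*} = \operatorname{sign}(z^{v^*}_{m^*})/s^{v^*}_{m^*}$. Then $\norm{\vec y^*}_{l_1(\vec s)} = s^{v^*}_{m^*}\cdot|y^{v^*}_{m^*}| = 1$ and $\vec z^T \vec y^* = |z^{v^*}_{m^*}|/s^{v^*}_{m^*} = \norm{\vec z}_{l_\infty(1/\vec s)}$, so the supremum is at least $\norm{\vec z}_{l_\infty(1/\vec s)}$. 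Combining the two bounds yields the equality. Equivalently, one can just substitute $u^v_m = s^v_m y^v_m$ (a bijection, again because $s^v_m>0$), which maps $\norm{\,\cdot\,}_{l_1(\vec s)}$ to the plain $\ell_1$ norm and $\vec z^T \vec y$ to $\vec w^T \vec u$ with $w^v_m = z^v_m/s^v_m$, and then invoke the textbook fact $(\ell_1)^* = \ell_\infty$ directly.

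I do not expect any real obstacle: the statement is essentially a bookkeeping exercise. The only point worth stating carefully is that the diagonal rescaling $\vec y \mapsto (s^v_m y^v_m)_{(v,m)}$ is invertible, which holds precisely because every model size $s^v_m$ is strictly positive; this is what makes both the Hölder step tight and the extremal vector $\vec y^*$ admissible.
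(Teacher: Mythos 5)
Your proof is correct and follows essentially the same route as the paper: the weighted H\"older bound $\vec z^T\vec y \le \norm{\vec y}_{l_1(\vec s)}\norm{\vec z}_{l_\infty(1/\vec s)}$ for one direction, and the single-coordinate extremal vector $y^{v_*}_{m_*}=\mathrm{sign}(z^{v_*}_{m_*})/s^{v_*}_{m_*}$ attaining the supremum for the other. The remark about reducing to the classical $\ell_1$--$\ell_\infty$ duality via the diagonal rescaling is a valid shortcut but not needed, since the direct argument already matches the paper's proof.
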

\begin{proof}

The dual norm  $\norm{\,\cdot\,}_*$ of $\norm{\,\cdot\,}_{l_1(\vec{s})}$  is defined as (e.g., \cite{bubeck2015convexbook})
\begin{align}
    \norm{\vec z}_* \triangleq \sup_{\vec y \in \mathbb{R}^{\vertices\times\modelSet}} \left\{ \vec z^T \vec y: \norm{\vec y }_{l_1(\vec{s})} \leq 1\right\}, \forall \vec z \in  \mathbb{R}^{\vertices\times\modelSet}.
    \label{eq:supremum}
\end{align}

We thus need to show that 
$\norm{\vec z}_{l_\infty(\frac{1}{\vec{s}})}
=
\sup_{\vec y \in \mathbb{R}^{\vertices\times\modelSet}} \left\{ \vec z^T \vec y: \norm{\vec y }_{l_1(\vec{s})} \leq 1\right\}, \forall \vec z \in  \mathbb{R}^{\vertices\times\modelSet}.
$

Consider any two vectors $\vec{y}$ and $\vec{z}$  in $\mathbb{R}^{\vertices\times\modelSet}$. We have 
\begin{align}
 \vec z^T  \vec y  &=   \sum_{(v,m) \in \vertices \times \modelSet} y^v_m z^v_m= 
 \sum_{(v,m) \in \vertices \times \modelSet} (s^v_m y^v_m) \left(\frac{z^v_m}{s^v_m} \right)
 \le 
 \sum_{(v,m) \in \vertices \times \modelSet} (s^v_m \cdot |y^v_m|) \left(\frac{|z^v_m|}{s^v_m} \right)
 \\ 
  &\leq  \textstyle \left(\sum_{(v,m) \in \vertices \times \modelSet} s^v_m |y^v_m|\right) \max\left\{\frac{|z^v_m|}{s^v_m}:(v,m) \in \vertices \times \modelSet \right\}  = \norm{\vec{y}}_{l_1(\vec{s})} \norm{\vec{z}}_{l_\infty(\frac{1}{\vec{s}})}.
\end{align}
Observe that
%By taking $\norm{\vec{y}}_{l_1(\vec{s})} \leq 1$ we get 
\begin{align}
    \label{eq:aux}
    \vec y^T \vec z \leq \norm{\vec{z}}_{l_\infty(\frac{1}{\vec{s}})}, \,\,\,\, \forall \vec{y}: \norm{\vec{y}}_{l_1(\vec{s})} \leq 1.
\end{align}
Let $(v_*,m_*) = \underset{_{(v,m) \in \vertices \times \modelSet}}{\argmax} \left\{\frac{|z^v_m|}{s^v_m}\right\}$. The equality is achieved in~\eqref{eq:aux} when $\vec y_* = \left[\frac{\mathrm{sign}(z^v_m)}{s^v_m} \mathds{1}_{\{(v,m)=(v_*, m_*)\}}\right]_{(v,m) \in \vertices \times \modelSet}$. Note that $\norm{\vec y_* }_{l_1(\vec{s})}=1 \leq 1$, then the supremum in~\eqref{eq:supremum} is attained for $\vec y = \vec y_*$ and  has value $\norm{\vec{z}}_{l_\infty(\frac{1}{\vec{s}})}$; therefore, $\norm{\,\cdot\,}_{l_\infty(\frac{1}{\vec{s}})}$ is the dual norm of $\norm{\,\cdot\,}_{l_1(\vec{s})}$.

\end{proof}
\subsection{Bregman Divergence Bound}
\label{sec:bregman_divergence_bound}

\begin{lemma}
The value of the Bregman divergence $D_\Phi(\allocVecFrac, \allocVecFrac_1)$ in Eq.~\eqref{def:bregman_global_mirrormap} associated with the mirror map $\Phi(\allocVecFrac) = \sum_{v \in \vertices} \Phi^v(\allocVecFrac^v) =  \sum_{v\in\vertices} \sum_{m\in\modelSet}s^v_{m}\allocFrac^v_{m} \log(\allocFrac^v_{m})$ is upper bounded by a constant \begin{align}
    D_{\max} \triangleq  \sum_{v \in \vertices} \min\{\capacity^v,\norm{\vec{s}^v}_1\} \log\left(\frac{\norm{\vec{s}^v}_1}{\min\{\capacity^v,\norm{\vec{s}^v}_1\}}\right)  \geq D_\Phi(\allocVecFrac, \allocVecFrac_1).
\end{align}
where  $y^v_{1,m} = \frac{\min\{\capacity^v,\norm{\vec{s}^v}_1\}}{\norm{\vec{s}^v}_1},\forall (v,m) \in \vertices \times \modelSet$ and $\vec s^v = [s^v_m]_{m\in\modelSet}$ for every $v \in \vertices$.
\label{lemma:bregman_bound}
\end{lemma}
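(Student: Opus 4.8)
The plan is to exploit the separable structure of the mirror map together with the closed form of the Bregman divergence of the weighted negative entropy. By \eqref{eq:sum_divergences} it suffices to bound each local term $D_{\Phi^v}(\allocVecFrac^v,\allocVecFrac^v_1)$ and then sum over $v\in\vertices$. First I would write $D_{\Phi^v}$ explicitly: since $\nabla\Phi^v(\allocVecFrac'^v)=[s^v_m(\log \allocFrac'^v_m+1)]_{m\in\modelSet}$, substituting into \eqref{def:bregman_local_mirrormap} yields the generalized Kullback--Leibler form
\[
D_{\Phi^v}(\allocVecFrac^v,\allocVecFrac'^v)=\sum_{m\in\modelSet}s^v_m\left(\allocFrac^v_m\log\frac{\allocFrac^v_m}{\allocFrac'^v_m}-\allocFrac^v_m+\allocFrac'^v_m\right).
\]
Next I would identify the initial iterate $\allocVecFrac^v_1=\argmin_{\allocVecFrac^v\in\allocSetFrac^v\cap\mathcal D^v}\Phi^v(\allocVecFrac^v)$ from Algorithm~\ref{algo:idn}: a Lagrangian argument shows that the minimizer of $\sum_m s^v_m y^v_m\log y^v_m$ under $\sum_m s^v_m y^v_m=\capacity^v$ is the uniform vector, and the box constraints $\allocFrac^v_m\le 1$ are inactive because this (non-corner) case only occurs when $\capacity^v\le\norm{\vec s^v}_1$; hence $\allocFrac^v_{1,m}=c^v/S^v$, where $c^v\triangleq\min\{\capacity^v,\norm{\vec s^v}_1\}$ and $S^v\triangleq\norm{\vec s^v}_1$. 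In the corner case $\norm{\vec s^v}_1<\capacity^v$ we have $\allocSetFrac^v=\{[1]^\modelSet\}$, i.e. $\allocFrac^v_{1,m}=1=c^v/S^v$, consistently.

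The key observation is that $\sum_{m\in\modelSet}s^v_m\allocFrac^v_m=c^v$ for every $\allocVecFrac^v\in\allocSetFrac^v$ (it equals $\capacity^v$ in the generic case by the definition of $\allocSetFrac^v$, and equals $\norm{\vec s^v}_1$ in the corner case), and the same identity holds for $\allocVecFrac^v_1$ since $\sum_m s^v_m(c^v/S^v)=c^v$. Plugging $\allocVecFrac'^v=\allocVecFrac^v_1$ into the formula above, the linear terms $-\sum_m s^v_m\allocFrac^v_m+\sum_m s^v_m\allocFrac^v_{1,m}$ cancel, and splitting $\log(\allocFrac^v_m S^v/c^v)=\log\allocFrac^v_m+\log(S^v/c^v)$ I obtain
\[
D_{\Phi^v}(\allocVecFrac^v,\allocVecFrac^v_1)=\sum_{m\in\modelSet}s^v_m\allocFrac^v_m\log\allocFrac^v_m+c^v\log\frac{S^v}{c^v}.
\]

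Finally, since every component satisfies $\allocFrac^v_m\in[0,1]$ we have $\log\allocFrac^v_m\le 0$ (with the convention $0\log 0=0$ by continuity of the extension to $\mathcal D^v$), so the first sum is non-positive and $D_{\Phi^v}(\allocVecFrac^v,\allocVecFrac^v_1)\le c^v\log(S^v/c^v)$. Summing over $v\in\vertices$ and invoking \eqref{eq:sum_divergences} gives $D_\Phi(\allocVecFrac,\allocVecFrac_1)\le\sum_{v\in\vertices}c^v\log(S^v/c^v)=D_{\max}$ for every feasible $\allocVecFrac\in\allocSetFrac$, which is exactly the claim. I do not expect any substantive obstacle: the only points requiring care are the bookkeeping of the corner case $\norm{\vec s^v}_1<\capacity^v$ and confirming that the starting fractional allocation is the uniform vector; both are elementary.
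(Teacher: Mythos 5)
Your proposal is correct and follows essentially the same route as the paper: both arguments hinge on the budget equality $\sum_{m}s^v_m\allocFrac^v_m=\min\{\capacity^v,\norm{\vec s^v}_1\}$ (which makes the cross/linear terms vanish, so that $D_{\Phi^v}(\allocVecFrac^v,\allocVecFrac^v_1)=\Phi^v(\allocVecFrac^v)-\Phi^v(\allocVecFrac^v_1)$), on dropping $\Phi^v(\allocVecFrac^v)\le 0$, and on evaluating $-\Phi^v(\allocVecFrac^v_1)$ at the uniform initial point, then summing over $v$ via \eqref{eq:sum_divergences}. The only difference is presentational: you compute the divergence exactly in generalized-KL form, while the paper phrases the same cancellation as the first-order optimality condition of $\allocVecFrac^v_1$; both yield the identical bound $\min\{\capacity^v,\norm{\vec s^v}_1\}\log\bigl(\norm{\vec s^v}_1/\min\{\capacity^v,\norm{\vec s^v}_1\}\bigr)$ per node.
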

\begin{proof}
 We prove that $\allocVecFrac^v_1$ is the minimizer of $\Phi^v$ over $\allocSetFrac^v$. As $\Phi^v$ is convex over $\allocSetFrac^v$ and differentiable in $\allocVecFrac^v_1$,  $\allocVecFrac^v_1$ is a minimizer if and only if ${\nabla \Phi^v(\allocVecFrac^v_1)}^T (\allocVecFrac^v_1 - \allocVecFrac) \leq 0, \forall \allocVecFrac^v \in \allocSetFrac^v$~\cite[Proposition 1.3]{bubeck2015convexbook} (first order optimality condition). Note that from the definition of $\allocSetFrac^v$ (see Sec.~\ref{sec:algorithm}) we have for any $\allocVecFrac^v \in \allocSetFrac^v$
\begin{align}
    \sum_{m \in \modelSet} s^v_m y^v_m =
    % \capacity^v
    % = %  See definition of Y^v I removed the added part.
    \min\{\capacity^v, \norm{\vec s^v}_1\}.
    \label{eq:def2_repeated}
\end{align}
% where the last equality is due to the fact that $\norm{\vec s^v}_1=\sum_{m \in \modelSet} s^v_m\ge \sum_{m \in \modelSet} s^v_m y^v_m$, since $y^v_m\le 1$. 
% All of this is from the definition. (sorry for commenting the text).

Let $c =  \frac{\min\{\capacity^v,\norm{\vec{s}^v}_1\}}{\norm{\vec{s}^v}_1}$, we get
\begin{align}
    {\nabla \Phi^v(\allocVecFrac^v_1)}^T (\allocVecFrac^v_1 - \allocVecFrac) &= \sum_{m \in \modelSet} s^v_m (\log(c) + 1) (c - y^v_m) = (\log(c) + 1) (c \norm{\vec s^v}_1 - \sum_{m \in \modelSet} s^v_m y^v_m)\\
    &\stackrel{\eqref{eq:def2_repeated}}{=} (\log(c)  + 1) \left( c\norm{\vec s^v}_1 - \sum_{m \in \modelSet} s^v_m y^v_m \right) = (\log(c)  + 1) \left(\min\{\capacity^v,\norm{\vec{s}^v}_1\} - \min\{\capacity^v,\norm{\vec{s}^v}_1\}\right)\\
    &= 0.
\end{align}
We confirmed that $\allocVecFrac^v_1$ is a minimizer of $\Phi^v$ over $\allocSetFrac^v$. We have $\Phi^v(\allocVecFrac^v) \leq 0, \forall \allocVecFrac^v \in \mathcal{Y}^v$, and using the first order optimality condition we obtain
\begin{align}
    D_{\Phi^v}(\allocVecFrac^v, \allocVecFrac^v_1) &\stackrel{\eqref{def:bregman_local_mirrormap}}{=} \Phi^v(\allocVecFrac^v) - \Phi^v(\allocVecFrac^v_1) + {\nabla \Phi^v(\allocVecFrac^v_1)}^T (\allocVecFrac^v_1 - \allocVecFrac^v)\leq \Phi^v(\allocVecFrac^v) - \Phi^v(\allocVecFrac^v_1) \leq  -\Phi^v(\allocVecFrac^v_1) \\
    &= \min\{\capacity^v,\norm{\vec{s}^v}_1\} \log\left(\frac{\norm{\vec{s}^v}_1}{ \min\{\capacity^v,\norm{\vec{s}^v}_1\}}\right).
\end{align}
Thus, we obtain
\begin{align}
\sum_{v\in\vertices} D_{\Phi^v}(\allocVecFrac^v, \allocVecFrac^v_1) \stackrel{\eqref{eq:sum_divergences}}{=} D_\Phi(\allocVecFrac, \allocVecFrac_1) \leq \sum_{v \in \vertices} \min\{\capacity^v,\norm{\vec{s}^v}_1\} \log\left(\frac{\norm{\vec{s}^v}_1}{\min\{\capacity^v,\norm{\vec{s}^v}_1\}}\right)    .
\end{align}
\end{proof}
% \subsection{The Overall Regret Constant}
% The regret constant $A$ is given by:
% \begin{align}
%     A =   \psi   &\frac{R L_{\max} \Delta_C}{s_{\min} } \sqrt{s_{\max} |\vertices| U }\sqrt{2 \sum_{v \in \vertices} \min\{\capacity^v,\norm{\vec{s}^v}_1\} \log\left(\frac{\norm{\vec{s}^v}_1}{\min\{\capacity^v,\norm{\vec{s}^v}_1\}}\right)  }
%     \label{eq:overall_regret_constant}
% \end{align}
\subsection{Bounds on the Gain Function}
\label{appendix:bounds}
%%%%%%%%%%%%%%%%%%%%%%%%%%%%%%%%%%%%%%%%

Upper and lower bounds on the gain function in Eq.~\eqref{eq:gain-compact} will be established using the following bounding function

\begin{align}
     \El(\requestBatchVec_t,\loadVec_t, &\allocVecFrac) \triangleq
     \sum_{\request \in \supp{\vec r^t}}\sum_{k=1}^{\modelsNo_{\request}-1}   \left(\smallCost^{k+1}_{\request} - \smallCost^{k}_{\request}\right)
     \requestBatch^t_\request 
     \left(1 - \prod_{k'=1}^{k} \left(1 - \auxalloc_{
    \request}^{k'}(\loadVec_t, \allocVecFrac) / r^t_\rho \right) \right)\mathds{1}_{\{\Auxalloc^k_\request(\requestBatchVec_t,\loadVec_t, \repoVec)  = 0 \}}, \forall \allocVecFrac\in \allocSet \cup \allocSetFrac,
    \label{eq:gain-surrogate}
\end{align}
% \todo{aa: $\cdot$ added to increase readibility. Replaced $\requestBatch_{t,\request}$ with $\requestBatch^t_{\request}$ everywhere}
%  g( \tfrac{\auxalloc^{k'}_{\request}(\loadVec_t, \allocVecFrac)}{\requestBatch^t_\request})
where 
\begin{align}
\supp{\vec r^t} \triangleq \left\{ \rho \in \requestSet:  r^t_\rho \neq  0  \right\}\label{eq:supp_def}
\end{align}
is the set of request types for which there is a non-zero number of requests in the request batch $\vec r^t$.  
%
% Observe that $\Auxalloc^k_\request(\requestBatchVec_t,\loadVec_t, \repoVec)$ is zero when the model at  the repository mode has a rank higher than $k$. In other words, the model at the repository is not among the $k$ best models. Therefore, the indicator function at the end of~\eqref{eq:gain-surrogate} keeps in the sum only the low-enough ranks such that no repository model is among the best ones.
% I think it may just add confusion to the reader. It is just a function used to sandwich the value of the gain function, so we don't need to provide interpretations on it. 

\begin{lemma}
The gain function in Eq.~\eqref{eq:gain-compact} can be equivalently expressed as
\begin{align}
     \systemGain(\requestBatchVec_t,\loadVec_t, \allocVecFrac) =
    \sum_{\request \in \supp{\vec r^t}}\sum_{k=1}^{\modelsNo_{\request}-1}   \left(\smallCost^{k+1}_{\request} - \smallCost^{k}_{\request}\right)  \min\left\{r^t_\rho,   \sum^{k}_{k'=1}\auxalloc^{k'}_{\request}(\loadVec_t, \allocVecFrac) \right\}  \mathds{1}_{\{\Auxalloc^k_\request(\requestBatchVec_t,\loadVec_t, \repoVec)  = 0\}}, \forall \allocVecFrac \in \allocSet \cup \allocSetFrac.
    \label{eq:equivalent_gain_for_proof}
\end{align}
\end{lemma}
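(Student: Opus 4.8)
The plan is to derive \eqref{eq:equivalent_gain_for_proof} directly from the compact form of the gain established in Lemma~\ref{lemma:gain},
\begin{align}
    \systemGain(\requestBatchVec_t,\loadVec_t, \allocVecFrac) =
    \sum_{\request \in \requestSet}\sum_{k=1}^{\modelsNo_{\request}-1}
    \left(\smallCost^{k+1}_{\request} - \smallCost^{k}_{\request}\right)
    \left(\Auxalloc^k_\request(\requestBatchVec_t,\loadVec_t, \allocVecFrac)-\Auxalloc^k_\request(\requestBatchVec_t,\loadVec_t, \repoVec)\right),\nonumber
\end{align}
by two reductions of the double sum. First, restrict the outer sum from $\requestSet$ to $\supp{\requestBatchVec_t}$; second, show that every inner term with $\Auxalloc^k_\request(\requestBatchVec_t,\loadVec_t, \repoVec)>0$ already equals zero and may therefore be dropped, while every surviving term (those with $\Auxalloc^k_\request(\requestBatchVec_t,\loadVec_t, \repoVec)=0$) equals $\left(\smallCost^{k+1}_{\request} - \smallCost^{k}_{\request}\right)\min\{\requestBatch^t_\request,\sum_{k'=1}^{k}\auxalloc^{k'}_\request(\loadVec_t,\allocVecFrac)\}$, which is exactly the corresponding term of \eqref{eq:equivalent_gain_for_proof} (indicator equal to $1$). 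Since $\auxalloc^{k'}_\request(\loadVec_t,\cdot)$ is linear in the allocation variables, every step below applies verbatim whether the argument is $\allocVec\in\allocSet$ or $\allocVecFrac\in\allocSetFrac$.

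The first reduction is immediate: if $\requestBatch^t_\request=0$ then $\Auxalloc^k_\request(\requestBatchVec_t,\loadVec_t,\cdot)=\min\{0,\cdot\}=0$ for both $\allocVecFrac$ and $\repoVec$, so the whole inner sum for that $\request$ vanishes. The core of the argument is the dichotomy, for $\request=(i,\requestPathVec)\in\supp{\requestBatchVec_t}$ and $k\in[\modelsNo_\request]$,
\begin{align}
    \Auxalloc^k_\request(\requestBatchVec_t,\loadVec_t,\repoVec)\in\{0,\requestBatch^t_\request\}.\nonumber
\end{align}
To see this, note that the repository model serving $\request$ lies at node $\nu(\requestPathVec)$, is permanently allocated, and by the feasibility constraint~\eqref{eq:repo_feasibility_constraint} has potential available capacity $\min\{\loadmodel^{\nu(\requestPathVec)}_{m},\requestBatch^t_\request\}=\requestBatch^t_\request$; let $k^\star=\kappa_\request(\nu(\requestPathVec),m)$ be its rank. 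For $k<k^\star$ none of the $k$ lowest-cost node--model pairs is a repository model, so $\sum_{k'=1}^{k}\auxalloc^{k'}_\request(\loadVec_t,\repoVec)=0$ and $\Auxalloc^k_\request(\requestBatchVec_t,\loadVec_t,\repoVec)=0$; for $k\ge k^\star$ the prefix sum includes the repository contribution $\requestBatch^t_\request$, hence it is $\ge\requestBatch^t_\request$ and $\Auxalloc^k_\request(\requestBatchVec_t,\loadVec_t,\repoVec)=\requestBatch^t_\request$. (Monotonicity of $k\mapsto\Auxalloc^k_\request(\requestBatchVec_t,\loadVec_t,\cdot)$, each $\auxalloc^{k'}_\request\ge0$, makes this consistent.)

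With the dichotomy the claim follows term by term. If $\Auxalloc^k_\request(\requestBatchVec_t,\loadVec_t,\repoVec)=\requestBatch^t_\request$, then using $\repoVec\le\allocVecFrac$ componentwise (so $\auxalloc^{k'}_\request(\loadVec_t,\repoVec)\le\auxalloc^{k'}_\request(\loadVec_t,\allocVecFrac)$ for every $k'$) we get $\requestBatch^t_\request=\Auxalloc^k_\request(\requestBatchVec_t,\loadVec_t,\repoVec)\le\Auxalloc^k_\request(\requestBatchVec_t,\loadVec_t,\allocVecFrac)\le\requestBatch^t_\request$, so the difference $\Auxalloc^k_\request(\requestBatchVec_t,\loadVec_t,\allocVecFrac)-\Auxalloc^k_\request(\requestBatchVec_t,\loadVec_t,\repoVec)$ is zero and the term drops, which matches the indicator being $0$ in \eqref{eq:equivalent_gain_for_proof}. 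If instead $\Auxalloc^k_\request(\requestBatchVec_t,\loadVec_t,\repoVec)=0$, the compact term equals $\left(\smallCost^{k+1}_{\request}-\smallCost^{k}_{\request}\right)\Auxalloc^k_\request(\requestBatchVec_t,\loadVec_t,\allocVecFrac)=\left(\smallCost^{k+1}_{\request}-\smallCost^{k}_{\request}\right)\min\{\requestBatch^t_\request,\sum_{k'=1}^{k}\auxalloc^{k'}_\request(\loadVec_t,\allocVecFrac)\}$, i.e.\ the term of \eqref{eq:equivalent_gain_for_proof} with indicator $1$. Summing over $(\request,k)$ with $\request\in\supp{\requestBatchVec_t}$ gives \eqref{eq:equivalent_gain_for_proof}.

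The main obstacle—and the only place where the system structure is genuinely used—is the dichotomy $\Auxalloc^k_\request(\requestBatchVec_t,\loadVec_t,\repoVec)\in\{0,\requestBatch^t_\request\}$: it hinges on \eqref{eq:repo_feasibility_constraint} forcing the permanently deployed repository model of the path to absorb the entire batch $\requestBatch^t_\request$ on its own, so that any prefix of the cost-ordered list either holds no repository capacity or already holds enough to serve all of $\request$; everything else is bookkeeping with the non-negative telescoped differences $\smallCost^{k+1}_{\request}-\smallCost^{k}_{\request}\ge0$ and the monotonicity of $\Auxalloc^k_\request$ in $k$ and in the allocation.
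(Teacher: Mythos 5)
Your proposal is correct and takes essentially the same route as the paper's own proof: both start from the compact form in Lemma~\ref{lemma:gain}, use monotonicity in the allocation (via $\allocVecFrac \geq \repoVec$) together with the dichotomy $\Auxalloc^k_\request(\requestBatchVec_t,\loadVec_t,\repoVec) \in \{0, \requestBatch^t_\request\}$ to kill the terms where the repository already saturates the batch, and then restrict the sum to $\supp{\requestBatchVec_t}$. The only (minor) difference is cosmetic: you justify the dichotomy explicitly through the repository model's rank and constraint~\eqref{eq:repo_feasibility_constraint}, and you handle the zero-request restriction first rather than last, whereas the paper asserts the dichotomy and folds the support condition into a combined indicator at the end.
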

\begin{proof}
Remember from the definition in Eq.~\eqref{eq:sum_of_auxvars} that ${\Auxalloc^k_\request(\requestBatchVec_t,\loadVec_t, \allocVecFrac)} = \min\left\{\requestBatch^t_{\request} ,
    \sum^{k}_{k'=1}\auxalloc^{k'}_{\request}(\loadVec_t, \allocVecFrac) \right\} $.
    We observe that $\Auxalloc^k_\request(\requestBatchVec_t,\loadVec_t, \repoVec)$ is not a function of $\allocVecFrac$ and it is equal to $0$, 
%when a repository model's rank is greater than $k$, or $r^t_\rho$ when its rank is less than k.
when there is no repository with model's rank smaller or equal to $k$, and to $r^t_\rho$, otherwise; therefore, $\Auxalloc^k_\request(\requestBatchVec_t,\loadVec_t, \repoVec) \in \{0, r^t_\rho\}$.

When $\Auxalloc^k_\request(\requestBatchVec_t,\loadVec_t, \repoVec)  \neq 0$, and thus 
$\Auxalloc^k_\request(\requestBatchVec_t,\loadVec_t, \repoVec)  = r^t_\rho$, the following holds
\begin{align}
   {\Auxalloc^k_\request(\requestBatchVec_t,\loadVec_t, \allocVecFrac)} - {\Auxalloc^k_\request(\requestBatchVec_t,\loadVec_t, \repoVec)} 
   &=  \min\left\{\requestBatch^t_{\request} ,
    \sum^{k}_{k'=1}\auxalloc^{k'}_{\request}(\loadVec_t, \allocVecFrac) \right\}  -{\Auxalloc^k_\request(\requestBatchVec_t,\loadVec_t, \repoVec)}  \\
   &= \min\left\{0,    \sum^{k}_{k'=1}\auxalloc^{k'}_{\request}(\loadVec_t, \allocVecFrac)  - \Auxalloc^k_\request(\requestBatchVec_t,\loadVec_t, \repoVec)\right\} = 0.
\end{align}
The last equality holds because $\sum^{k}_{k'=1} \auxalloc^{k'}_{\request}(\loadVec_t, \allocVecFrac)  - \Auxalloc^k_\request(\requestBatchVec_t,\loadVec_t, \repoVec) \geq 0, \forall \allocVecFrac \in \allocSet \cup \allocSetFrac$ from Eq.~\eqref{eq:repo_ineq}.
Otherwise, when $\Auxalloc^k_\request(\requestBatchVec_t,\loadVec_t, \repoVec)  = 0$, we have
${\Auxalloc^k_\request(\requestBatchVec_t,\loadVec_t, \allocVecFrac)} - {\Auxalloc^k_\request(\requestBatchVec_t,\loadVec_t, \repoVec)}={\Auxalloc^k_\request(\requestBatchVec_t,\loadVec_t, \allocVecFrac)}
\stackrel{\eqref{eq:sum_of_auxvars}}{=}
\min\left\{r^t_\rho,   \sum^{k}_{k'=1}\auxalloc^{k'}_{\request}(\loadVec_t, \allocVecFrac) \right\}$.

Hence, 
we can succinctly write, for any value of $ \Auxalloc^k_\request(\requestBatchVec_t,\loadVec_t, \repoVec)$, that
%we have for a generic $ \Auxalloc^k_\request(\requestBatchVec_t,\loadVec_t, \repoVec) \in \{0, r^t_\rho\}$
\begin{align}
   {\Auxalloc^k_\request(\requestBatchVec_t,\loadVec_t, \allocVecFrac)} -{\Auxalloc^k_\request(\requestBatchVec_t,\loadVec_t, \repoVec)}=  \min\left\{r^t_\rho,   \sum^{k}_{k'=1}\auxalloc^{k'}_{\request}(\loadVec_t, \allocVecFrac) \right\} \mathds{1}_{\{\Auxalloc^k_\request(\requestBatchVec_t,\loadVec_t, \repoVec)  = 0\}}.
    \label{eq:part_aa}
\end{align}
%Moreover, if $r^t_\rho = 0$,
Since $\sum^{k}_{k'=1}\auxalloc^{k'}_{\request}(\loadVec_t, \allocVecFrac)$ is non negative, the previous formula implies that
%then we have 
\begin{align}
    {\Auxalloc^k_\request(\requestBatchVec_t,\loadVec_t, \allocVecFrac)} -{\Auxalloc^k_\request(\requestBatchVec_t,\loadVec_t, \repoVec)}=0, \ \ \ \ \text{if  }r^t_\rho = 0.
    \label{eq:part_bb}
\end{align}
%\begin{align}
%     {\Auxalloc^k_\request(\requestBatchVec_t,\loadVec_t, \allocVecFrac)} -{\Auxalloc^k_\request(\requestBatchVec_t,\loadVec_t, \repoVec)}&=  \min\left\{r^t_\rho,   \sum^{k}_{k'=1}\auxalloc^{k'}_{\request}(\loadVec_t, \allocVecFrac) \right\} \mathds{1}_{\{\Auxalloc^k_\request(\requestBatchVec_t,\loadVec_t, \repoVec)  = 0\}}\\
%     &=   \min\left\{0,   \sum^{k}_{k'=1}\auxalloc^{k'}_{\request}(\loadVec_t, \allocVecFrac) \right\} \mathds{1}_{\{\Auxalloc^k_\request(\requestBatchVec_t,\loadVec_t, \repoVec)  = 0\}}= 0.
%     \label{eq:part_bb}
%\end{align}
%The last equality is obtained considering that $\sum^{k}_{k'=1}\auxalloc^{k'}_{\request}(\loadVec_t, \allocVecFrac) \geq 0$. 
By combining Eq.~\eqref{eq:part_aa} and \eqref{eq:part_bb} that 
\begin{align}
       {\Auxalloc^k_\request(\requestBatchVec_t,\loadVec_t, \allocVecFrac)} -{\Auxalloc^k_\request(\requestBatchVec_t,\loadVec_t, \repoVec)}= \min\left\{r^t_\rho,   \sum^{k}_{k'=1}\auxalloc^{k'}_{\request}(\loadVec_t, \allocVecFrac) \right\}  \mathds{1}_{\{\Auxalloc^k_\request(\requestBatchVec_t,\loadVec_t, \repoVec)  = 0 \,\,\land\,\, r^t_\rho \neq 0\}}.
   \label{eq:shortcut_a}
\end{align}
Hence, applying the above equalities on the gain expression we get 
\begin{align}
% \nonumber
    \systemGain(\requestBatchVec_t,\loadVec_t, \allocVecFrac)&\stackrel{\eqref{eq:gain-compact}}{=}
    \sum_{\request \in \requestSet}\sum_{k=1}^{\modelsNo_{\request}-1}   \left(\smallCost^{k+1}_{\request} - \smallCost^{k}_{\request}\right) \left( {\Auxalloc^k_\request(\requestBatchVec_t,\loadVec_t, \allocVecFrac)} -{\Auxalloc^k_\request(\requestBatchVec_t,\loadVec_t, \repoVec)}\right)\\
   &\stackrel{\eqref{eq:shortcut_a}}{=} \sum_{\request \in \requestSet}\sum_{k=1}^{\modelsNo_{\request}-1}   \left(\smallCost^{k+1}_{\request} - \smallCost^{k}_{\request}\right) \min\left\{r^t_\rho,   \sum^{k}_{k'=1}\auxalloc^{k'}_{\request}(\loadVec_t, \allocVecFrac) \right\}  \mathds{1}_{\{\Auxalloc^k_\request(\requestBatchVec_t,\loadVec_t, \repoVec)  = 0 \,\,\land\,\, r^t_\rho \neq 0\}}\\
    &\stackrel{\eqref{eq:supp_def}}{=}
    \sum_{\request \in \supp{\vec r^t}}\sum_{k=1}^{\modelsNo_{\request}-1}   \left(\smallCost^{k+1}_{\request} - \smallCost^{k}_{\request}\right)  \min\left\{r^t_\rho,   \sum^{k}_{k'=1}\auxalloc^{k'}_{\request}(\loadVec_t, \allocVecFrac) \right\}  \mathds{1}_{\{\Auxalloc^k_\request(\requestBatchVec_t,\loadVec_t, \repoVec)  = 0\}}.
\end{align}
\end{proof}

\begin{lemma}
Consider $n \in \naturals$, $\allocVecFrac \in [0,1]^n$, $\vec{q} \in \naturals^n$, and $c \in \naturals$. We assume that $q_i \leq c, \forall i \in [n]$. The following holds
\begin{align}
  \min\left\{c , \sum_{i \in [n]} \allocFrac_i q_i \right\} \geq c - c \prod_{i \in [n]} (1 - \allocFrac_i  q_i /c).
  \label{eq:lower_bound_piece}
\end{align}
\label{lemma:util_bounds_lower}
\end{lemma}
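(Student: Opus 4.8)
The plan is to normalize the summands so that the inequality becomes a statement about numbers in $[0,1]$, and then to recognize it as a union‑bound (Weierstrass product) inequality combined with a trivial bound. First I would set $a_i \triangleq \allocFrac_i q_i / c$ for each $i \in [n]$; here we may assume $c \ge 1$ since $c \in \naturals$ (if $c$ were $0$ the hypothesis $q_i \le c$ would force every $q_i = 0$ and the claim would read $0 \ge 0$). Because $\allocFrac_i \in [0,1]$ and $q_i \le c$ by assumption, we have $a_i \in [0,1]$. Dividing both sides of \eqref{eq:lower_bound_piece} by $c$ and substituting, the claim becomes
\begin{align}
    \min\left\{1,\; \sum_{i \in [n]} a_i\right\} \;\ge\; 1 - \prod_{i \in [n]}(1 - a_i), \qquad a_i \in [0,1].
\end{align}

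Next I would bound the right‑hand side in two complementary ways. The bound $1 - \prod_{i}(1-a_i) \le 1$ is immediate since each factor $1-a_i$ is nonnegative, so the product is $\ge 0$. The bound $1 - \prod_i(1-a_i) \le \sum_i a_i$ is the Weierstrass inequality; I would prove it by induction on $n$: the base case $n=1$ is an equality, and in the inductive step one multiplies the hypothesis $\prod_{i\le n}(1-a_i) \ge 1 - \sum_{i\le n} a_i$ by the nonnegative factor $1 - a_{n+1}$, getting $\prod_{i\le n+1}(1-a_i) \ge (1-a_{n+1})\bigl(1 - \sum_{i\le n}a_i\bigr) = 1 - \sum_{i\le n+1} a_i + a_{n+1}\sum_{i\le n}a_i \ge 1 - \sum_{i\le n+1} a_i$, where one distinguishes whether $1 - \sum_{i\le n}a_i$ is nonnegative (so the multiplication preserves the inequality) or negative (in which case $1 - \sum_{i \le n+1}a_i < 0 \le \prod_{i\le n+1}(1-a_i)$ trivially). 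Equivalently, interpreting the $a_i$ as probabilities of independent events $A_i$, one has $1 - \prod_i(1-a_i) = \mathbb{P}\bigl[\bigcup_i A_i\bigr] \le \sum_i \mathbb{P}[A_i] = \sum_i a_i$ by the union bound.

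Finally I would combine the two bounds to obtain $1 - \prod_i(1-a_i) \le \min\{1, \sum_i a_i\}$, and then undo the normalization by multiplying through by $c$ and replacing $a_i$ with $\allocFrac_i q_i / c$, which recovers \eqref{eq:lower_bound_piece} exactly. I do not expect any genuine obstacle: the only point that needs a little care is the sign bookkeeping in the inductive step (preserving the direction of the inequality when multiplying by $1 - a_{n+1}$), which is handled by the case split on the sign of $1 - \sum_{i \le n} a_i$.
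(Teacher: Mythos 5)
Your proposal is correct. It differs in structure from the paper's argument: the paper proves the inequality by a single induction on $n$ applied directly to the un-normalized quantities, defining $a_n = c - c\prod_{i\in[n]}(1-\allocFrac_i q_i/c)$ and $b_n = \min\{c,\sum_{i\in[n]}\allocFrac_i q_i\}$ and carrying the minimum through the inductive step (showing $a_{n+1}\le a_n + \allocFrac_{n+1}q_{n+1}$ and $a_{n+1}\le c$, then a case analysis on whether $b_n$ equals $c$ or the sum). You instead normalize by $c$ and split the minimum into two independent bounds: $1-\prod_i(1-a_i)\le 1$ trivially, and $1-\prod_i(1-a_i)\le\sum_i a_i$ by the Weierstrass product inequality (equivalently the union bound), then recombine. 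Your decomposition is more modular and reduces the statement to a classical inequality that could simply be cited, avoiding the bookkeeping of the $\min$ inside the induction; the paper's version is self-contained and works directly with the specific truncated-sum form it needs downstream. One small remark: the case split you flag in the inductive step of the Weierstrass inequality is unnecessary, since multiplying the induction hypothesis by the nonnegative factor $1-a_{n+1}$ preserves the inequality regardless of the sign of $1-\sum_{i\le n}a_i$, and the discarded term $a_{n+1}\sum_{i\le n}a_i$ is nonnegative in all cases; the extra caution is harmless but not needed.
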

\begin{proof}
We define $a_n \triangleq c - c \prod_{i \in [n]} (1 - \allocFrac_i  q_i/c)$ and $b_n \triangleq \min\left\{c, \sum_{i \in [n]} \allocFrac_i  q_i\right\}$. 

{We first show by induction that, if $a_n \leq b_n$, then this inequality holds also for $n+1$.}

\noindent\emph{Base case ($n=1$).}
\begin{align}
    a_1 &= c - c + y_1 q_1 = y_1 q_1 = \min\{c, q_1 y_1\} = b_1.
\end{align}

\noindent\emph{Induction step.}

\begin{align}
    a_{n+1} &= c - c \prod_{i \in [n+1]} (1 - \allocFrac_i  q_i/c )
    \label{eq:an+1}
    \\
    &=c - c \prod_{i \in [n]} (1 - \allocFrac_i  q_i/c ) (1 - \allocFrac_{n+1} q_{n+1}/c) \\
    &= c - c \prod_{i \in [n]} (1 - \allocFrac_i q_i/c) +  (c \allocFrac_{n+1} q_{n+1}/c) \prod_{i \in [n]} (1 - \allocFrac_i q_i/c)\\
    &= a_n +  \allocFrac_{n+1} q_{n+1}\prod_{i \in [n]} (1 - \allocFrac_i q_i/c) \\
    &\leq a_n +  \allocFrac_{n+1} q_{n+1}.
  \end{align}
The last inequality holds since by construction $q_i\le c$ and thus $0\le y_i q_i /c \le 1$, and $0 \leq \prod_{i \in [n]} \left(1- {y_i q_i}/{c}\right)\leq 1$. For the same reason, $0\le\prod_{i \in [n+1]} (1 - \allocFrac_i  q_i/c )\le 1$ and thus, by~\eqref{eq:an+1}, we have
%. Note that 
$a_{n+1} \leq c$. Moreover, note that if $b_n = c$ then $b_{n+1} = c$. Therefore:

\begin{align}
    a_{n+1} &\leq \min\left\{c, a_n +  \allocFrac_{n+1}  q_{n+1}\right\} \leq \min\left\{c, b_n +  \allocFrac_{n+1}  q_{n+1} \right\}\\
    &= \begin{cases}
    \min\left\{c, \sum^{n+1}_{i=1} \allocFrac_i q_i\right\} = b_{n+1}, &\text{ if}\quad b_n \leq c,  \\
    \min\left\{c, c + \allocFrac_{n+1}  q_{n+1} \right\} = c = b_{n+1}, &\text{ if}\quad b_n = c,
    \end{cases}
\end{align}
and the proof by induction is completed.

\end{proof}

\begin{lemma}
Consider $\allocVecFrac \in [0,1]^n$, $\vec{q} \in \naturals^n$, $c \in \naturals$ and $n \in \naturals$. We assume that $q_i \leq c, \forall i \in [n]$. The following holds
\begin{align}
    c - c\prod_{i \in [n]}  (1- \allocFrac_i q_i / c)  \geq (1-1/e)   \min\left\{c,  \sum_{i \in [n]} \allocFrac_i  q_i\right\}.
     \label{eq:upper_bound_piece}
\end{align}

\label{lemma:util_bounds_upper}
\end{lemma}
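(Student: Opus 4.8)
The plan is to prove the inequality $c - c\prod_{i\in[n]}(1-y_iq_i/c) \geq (1-1/e)\min\{c,\sum_{i\in[n]}y_iq_i\}$ by splitting on whether the running sum $\sum_{i\in[n]}y_iq_i$ exceeds the cap $c$ or not. In both branches the key analytic fact is the elementary inequality $1 - \prod_i(1-z_i) \geq (1-1/e)\min\{1,\sum_i z_i\}$ for $z_i\in[0,1]$, which after dividing through by $c$ and setting $z_i = y_iq_i/c$ (legitimate since $q_i\le c$ and $y_i\in[0,1]$ force $z_i\in[0,1]$) is exactly what we need. So really the whole lemma reduces to this one-variable-style inequality on the product.

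First I would reduce to the normalized statement: dividing both sides of \eqref{eq:upper_bound_piece} by $c>0$ and writing $z_i = y_iq_i/c \in [0,1]$, it suffices to show $1 - \prod_{i\in[n]}(1-z_i) \geq (1-1/e)\min\{1,\sum_{i\in[n]}z_i\}$. Then I would handle the case $\sum_i z_i \leq 1$ first. Here $\min\{1,\sum_i z_i\} = \sum_i z_i$, and I want $1-\prod_i(1-z_i) \geq (1-1/e)\sum_i z_i$. The standard argument is to use concavity: the function $g(z) = 1-\prod_i(1-z_i)$ restricted to the segment from $\vec 0$ to the point $\vec z$, or more simply to observe that $\prod_i(1-z_i)\le e^{-\sum_i z_i}$ so $1-\prod_i(1-z_i) \geq 1 - e^{-\sum_i z_i}$, and then use that on $[0,1]$ the function $s\mapsto 1-e^{-s}$ is concave with value $0$ at $s=0$ and value $1-1/e$ at $s=1$, hence $1-e^{-s}\geq (1-1/e)s$ for all $s\in[0,1]$. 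Combining gives the bound when $\sum_i z_i \le 1$.

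Next I would handle the case $\sum_i z_i > 1$, where $\min\{1,\sum_i z_i\}=1$ and I must show $1-\prod_i(1-z_i)\geq 1-1/e$, i.e. $\prod_i(1-z_i)\leq 1/e$. Again $\prod_i(1-z_i)\leq e^{-\sum_i z_i} \leq e^{-1} = 1/e$ since $\sum_i z_i>1$; this closes the second case. Multiplying back by $c$ recovers \eqref{eq:upper_bound_piece}. An alternative to the $1-x\le e^{-x}$ route in the first case, matching the style of Lemma~\ref{lemma:util_bounds_lower}, would be to invoke \eqref{eq:lower_bound_piece} to get $\min\{c,\sum_i y_iq_i\}\le \min\{c, \text{something}\}$... but actually the cleanest is: by Lemma~\ref{lemma:util_bounds_lower} we already know $c - c\prod_i(1-y_iq_i/c) $ is sandwiched, so I would instead just directly prove the normalized inequality as above.

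The main obstacle, such as it is, is purely bookkeeping: being careful that $z_i\in[0,1]$ so that $1-z_i\ge 0$ and the bound $1-z_i\le e^{-z_i}$ applies termwise, and that the concavity bound $1-e^{-s}\ge (1-1/e)s$ is only claimed on $s\in[0,1]$ — outside that interval it fails, which is precisely why the case split at $\sum_i z_i = 1$ is needed. There is no deep difficulty; the proof is a two-case argument, each case one line after the reduction. I would present it as: reduce by homogeneity to $z_i=y_iq_i/c$; apply $\prod(1-z_i)\le e^{-\sum z_i}$; then split on $\sum z_i \lessgtr 1$ using concavity of $1-e^{-s}$ on $[0,1]$ in the first branch and monotonicity of $e^{-s}$ in the second; multiply back by $c$.
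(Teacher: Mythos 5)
Your proof is correct, but it takes a different route from the paper's. The paper follows the Goemans--Williamson style argument: it applies the arithmetic/geometric mean inequality to the factors $1-y_iq_i/c$ to get $\prod_i(1-y_iq_i/c)\le\bigl(1-\tfrac{1}{n}\sum_i y_iq_i/c\bigr)^n$, then (using monotonicity to replace $\sum_i y_iq_i/c$ by $\min\{1,\sum_i y_iq_i/c\}$, which avoids any case split) exploits concavity of $f(z)=1-(1-z/n)^n$ on $[0,1]$ to conclude $f(z)\ge z\,f(1)=\bigl(1-(1-1/n)^n\bigr)z\ge(1-1/e)z$. You instead bound each factor by $1-z_i\le e^{-z_i}$, giving $1-\prod_i(1-z_i)\ge 1-e^{-\sum_i z_i}$, and then split on whether $\sum_i z_i\le 1$ (concavity of $s\mapsto 1-e^{-s}$ on $[0,1]$, which vanishes at $0$ and equals $1-1/e$ at $1$) or $\sum_i z_i>1$ (monotonicity of $e^{-s}$). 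Both arguments are sound and need the same bookkeeping ($q_i\le c$ and $y_i\in[0,1]$ ensuring $z_i\in[0,1]$ so the factors are nonnegative). The trade-off is minor: your exponential bound is arguably more elementary and yields the constant $1-1/e$ uniformly in $n$ in one stroke, at the cost of a two-case analysis; the paper's AM--GM route needs no case split and in fact yields the slightly sharper $n$-dependent constant $1-(1-1/n)^n$ before relaxing it to $1-1/e$. The digression in your write-up about invoking Lemma~\ref{lemma:util_bounds_lower} is unnecessary (and you correctly abandon it), since that lemma bounds the same quantity from the other side and contributes nothing here.
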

\begin{proof}
Our proof follows the same lines of the proof of~\cite[Lemma 3.1]{goemans1994new}. We use the  arithmetic/geometric mean inequality~{\cite{wilf1963some}} on the non-negative variables $1 - y_i q_i/c, i \in [n]$ to obtain:
\begin{align}
   \frac{1}{n} \sum_{i \in [n]}  (1 - \allocFrac_i  q_i / c) \geq\left( \prod_{i \in [n]}  (1- \allocFrac_i q_i/c) \right)^{\frac{1}{n}}.
\end{align}
We reformulate the above as:
\begin{align}
 1 - \prod_{i \in [n]}  (1- \allocFrac_i q_i/c)  &\geq  1 - \left( {\frac{1}{n}} \sum_{i \in [n]} \left(1 - \allocFrac_i q_i/c\right) \right)^{n} =    1 - 
 \left(1 -  {\frac{1}{n}} \sum_{i \in [n]}  \allocFrac_i q_i/c\right)^{n}
 \\
 &\geq  1 - \left(1 -  {\frac{1}{n}} \min\left\{1,  \sum_{i \in [n]} \allocFrac_i q_i/c \right\}\right)^{n}.
 \label{eq:ineq-1}
\end{align}
{To obtain the last inequality, consider that, for any number $z$, we have $z\ge\min\{1,z\}$, and thus 
%that 
$\sum_{i \in [n]}  \allocFrac_i q_i/c \geq \min\left\{1, \sum_{i \in [n]}  \allocFrac_i q_i/c \right\}$.
}

The function $f(z) = 1 - (1-z/n)^{n}$ is concave for $z \in [0,1]$, then, for $z \in [0,1]$, $f(z) \ge f(0) + z \frac{f(1)-f(0)}{1-0}= z f(1)$, as $f(0)=0$.
%~\cite[Lemma 3.1]{goemans1994new}, where  $f(0) = 0$. Therefore, $f(z) \geq z f(1)$. 
Setting $z=\min\left\{1,  \sum_{i \in [n]}  \allocFrac_i  q_i/c\right\}$
%Replacing $z$ with $\min\left\{1,  \sum_{i \in [n]}  \allocFrac_i  q_i/c\right\}$
, we obtain the following:
\begin{align}
  1 - \prod_{i \in [n]}  (1- \allocFrac_i q_i / c)   
  \stackrel{\eqref{eq:ineq-1}}{\geq}
  1 - \left(1 -  {\frac{1}{n}} z\right)^{n}
  \geq \left(1 - (1 - 1/n)^{n}\right)  z
  \geq (1-1/e)  z.
\end{align}
%\begin{align}
%  1 - \prod_{i \in [n]}  (1- \allocFrac_i q_i / c)   &\geq 
%  1 - \left(1 -  {\frac{1}{n}} \min\left\{1,  \sum_{i \in [n]}  \allocFrac_i  q_i/c\right\}\right)^{n}\\
% & \geq \left(1 - (1 - 1/n)^{n}\right)  \min\left\{1,  \sum_{i \in [n]} \allocFrac_i  q_i/c\right\}\\
% &\geq (1-1/e)  \min\left\{1,  \sum_{i \in [n]} \allocFrac_i  q_i/c \right\}.
%\end{align}
The last inequality is obtained since $1 - (1-1/n) ^{n}$ decreases in $n$, and it is  lower bounded by $1-1/e$.  By multiplying both sides of the above inequality by $c \in \naturals$, and replacing $z$ with its value we conclude the proof.
% \begin{align}
%  c\left( 1 - \prod_{i \in [n]}  (1- \allocFrac_i q_i / c)  \right) \geq (1-1/e)  c \min\left\{1,  \sum_{i \in [n]} \allocFrac_i  q_i/c \right\}.
% \end{align}
% This gives 
% \begin{align}
% c - c\prod_{i \in [n]}  (1- \allocFrac_i q_i / c)  \geq (1-1/e)   \min\left\{c,  \sum_{i \in [n]} \allocFrac_i  q_i\right\}.
% % \label{eq:upper_bound_piece}
% \end{align}
\end{proof}

\begin{lemma}
\label{lemma:gain_lower_upper_bound}
for any request batch $\vec r_t$ and potential available capacity $\vec l_t$ such that $(\vec r_t, \vec l_t) \in \mathcal{A}$, the allocation gain $\systemGain(\requestBatchVec_t,\loadVec_t, \allocVecFrac)$ has the following lower  and upper bounds 
\begin{align}
\label{eq:gain_lower_upper_bound}
   \left(1-\tfrac{1}{e}\right)^{-1}\El(\requestBatchVec_t,\loadVec_t, \allocVecFrac) \geq    G(\requestBatchVec_t,\loadVec_t, \allocVecFrac) 
     &\geq \El(\requestBatchVec_t,\loadVec_t, \allocVecFrac), 
     &\forall \allocVecFrac\in\allocSet \cup\allocSetFrac.
\end{align}
\end{lemma}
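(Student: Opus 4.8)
The plan is to work from the equivalent expression of the gain in~\eqref{eq:equivalent_gain_for_proof} and the definition of the bounding function $\El$ in~\eqref{eq:gain-surrogate}, and to compare the two sum by sum, term by term. Both expressions are sums over $\request\in\supp{\vec r^t}$ and over $k\in[\modelsNo_\request-1]$ of a common nonnegative weight $(\smallCost^{k+1}_\request-\smallCost^k_\request)$ times the indicator $\mathds{1}_{\{\Auxalloc^k_\request(\requestBatchVec_t,\loadVec_t,\repoVec)=0\}}$ — which is identical in both — multiplied by a quantity that depends on the effective capacities. So it suffices to show, for each fixed $\request$ and $k$, the two inequalities
\begin{align*}
  r^t_\rho\Bigl(1-\prod_{k'=1}^k(1-\auxalloc^{k'}_\request/r^t_\rho)\Bigr)
  \;\le\;
  \min\Bigl\{r^t_\rho,\textstyle\sum_{k'=1}^k\auxalloc^{k'}_\request\Bigr\}
  \;\le\;
  \Bigl(1-\tfrac1e\Bigr)^{-1} r^t_\rho\Bigl(1-\prod_{k'=1}^k(1-\auxalloc^{k'}_\request/r^t_\rho)\Bigr),
\end{align*}
where I abbreviate $\auxalloc^{k'}_\request=\auxalloc^{k'}_\request(\loadVec_t,\allocVecFrac)$. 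Summing these over all $\request,k$ with the common nonnegative factor then yields~\eqref{eq:gain_lower_upper_bound}.

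The first (lower) inequality is exactly Lemma~\ref{lemma:util_bounds_lower} with $n=k$, $c=r^t_\rho$, $\allocFrac_i$ the allocation variable for the $i$-th best model, and $q_i=\auxload^i_\request(\loadVec_t)$ (so that $\allocFrac_i q_i=\auxalloc^i_\request$); I only need to check the hypothesis $q_i\le c$, i.e.\ $\auxload^i_\request(\loadVec_t)\le r^t_\rho$, which holds because $(\vec r_t,\vec l_t)\in\advSet$ forces $l^v_{\rho,m}\le\min\{L^v_m,r_\rho\}\le r^t_\rho$, and hence $\auxalloc^{k'}_\request\le r^t_\rho$. Rewriting $c-c\prod(1-\allocFrac_i q_i/c)=c(1-\prod(1-\allocFrac_i q_i/c))$ matches the left-hand side of the per-term claim. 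The second (upper) inequality is exactly Lemma~\ref{lemma:util_bounds_upper} under the same substitution and the same verified hypothesis, after moving the $(1-1/e)$ factor to the other side; note that in the integral case $\allocVecFrac\in\allocSet$ all $\allocFrac_i\in\{0,1\}$, which is still inside $[0,1]^n$, so both auxiliary lemmas apply uniformly to $\allocVecFrac\in\allocSet\cup\allocSetFrac$.

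The only genuine subtlety — and the step I expect to need the most care — is the boundary case $r^t_\rho=0$: then $\request\notin\supp{\vec r^t}$ and that term is simply absent from both sums, so no division by zero occurs and the per-term argument is only invoked for $r^t_\rho\ge 1$; I would state this explicitly at the start. A minor second point is making sure the indicator $\mathds{1}_{\{\Auxalloc^k_\request(\requestBatchVec_t,\loadVec_t,\repoVec)=0\}}$ is treated as a fixed $\{0,1\}$ constant (it does not depend on $\allocVecFrac$), so that multiplying each per-term inequality by it preserves the direction of the inequalities; then I sum everything up, factoring out the nonnegative $(\smallCost^{k+1}_\request-\smallCost^k_\request)$, to conclude.
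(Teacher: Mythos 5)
Your proposal is correct and follows essentially the same route as the paper: it invokes the equivalent gain expression of Eq.~\eqref{eq:equivalent_gain_for_proof} and applies Lemmas~\ref{lemma:util_bounds_lower} and~\ref{lemma:util_bounds_upper} term by term with the substitution $c = r^t_\rho$, $q_{k'} = \auxload^{k'}_{\request}(\loadVec_t)$, $n = k$, exactly as the paper does. Your explicit checks of the hypothesis $q_{k'} \le c$ via the adversary set and of the $r^t_\rho = 0$ boundary case are sound and only make explicit what the paper leaves implicit.
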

\begin{proof}

We have the following
\begin{align}
% \nonumber
    \systemGain(\requestBatchVec_t,\loadVec_t, \allocVecFrac)&\stackrel{\eqref{eq:equivalent_gain_for_proof}}{=} 
    \sum_{\request \in \supp{\vec r^t}}\sum_{k=1}^{\modelsNo_{\request}-1}   \left(\smallCost^{k+1}_{\request} - \smallCost^{k}_{\request}\right)  \min\left\{r^t_\rho,   \sum^{k}_{k'=1}\auxalloc^{k'}_{\request}(\loadVec_t, \allocVecFrac) \right\}  \mathds{1}_{\{\Auxalloc^k_\request(\requestBatchVec_t,\loadVec_t, \repoVec)  = 0\}}\\
    &\stackrel{\eqref{eq:lower_bound_piece}}{\geq} \sum_{\request \in \supp{\vec r^t}}\sum_{k=1}^{\modelsNo_{\request}-1}   \left(\smallCost^{k+1}_{\request} - \smallCost^{k}_{\request}\right) r^t_\rho \left(1 - \prod_{k'=1}^{k} \left(1 - \auxalloc_{
    \request, k'}(\loadVec_t, \allocVecFrac) / r^t_\rho \right) \right) \mathds{1}_{\{\Auxalloc^k_\request(\requestBatchVec_t,\loadVec_t, \repoVec)  = 0\}}\label{eq:ineq_lo}  \\
    % &\stackrel{\eqref{eq:shortcut_b}}{=}\sum_{\request \in \supp{\vec r^t}}\sum_{k=1}^{\modelsNo_{\request}-1}   \left(\smallCost^{k+1}_{\request} - \smallCost^{k}_{\request}\right) \left(r^t_\rho - \Auxalloc^k_\request(\requestBatchVec_t,\loadVec_t, \repoVec)\right) \left(1 - \prod_{k'=1}^{k} \left(1 - \auxalloc_{
    % \request, k'}(\loadVec_t, \allocVecFrac) / r^t_\rho \right) \right) \\
    &\stackrel{\eqref{eq:gain-surrogate}}{=} \El(\requestBatchVec_t,\loadVec_t, \allocVecFrac), \forall \allocVecFrac\in\allocSet \cup\allocSetFrac, 
\end{align}
and 
\begin{align}
% \nonumber
   (1- 1/e) \systemGain(\requestBatchVec_t,\loadVec_t, \allocVecFrac)
   &\stackrel{\eqref{eq:equivalent_gain_for_proof}}{=} (1-1/e)
    \sum_{\request \in \supp{\vec r^t}}\sum_{k=1}^{\modelsNo_{\request}-1}   \left(\smallCost^{k+1}_{\request} - \smallCost^{k}_{\request}\right)  \min\left\{r^t_\rho,   \sum^{k}_{k'=1}\auxalloc^{k'}_{\request}(\loadVec_t, \allocVecFrac) \right\}  \mathds{1}_{\{\Auxalloc^k_\request(\requestBatchVec_t,\loadVec_t, \repoVec)  = 0\}}\\
    &\stackrel{\eqref{eq:upper_bound_piece}}{\leq} \sum_{\request \in \supp{\vec r^t}}\sum_{k=1}^{\modelsNo_{\request}-1}   \left(\smallCost^{k+1}_{\request} - \smallCost^{k}_{\request}\right) r^t_\rho \left(1 - \prod_{k'=1}^{k} \left(1 - \auxalloc_{
    \request, k'}(\loadVec_t, \allocVecFrac) / r^t_\rho \right) \right) \mathds{1}_{\{\Auxalloc^k_\request(\requestBatchVec_t,\loadVec_t, \repoVec)  = 0\}} \label{eq:ineq_up} \\
    % &\stackrel{\eqref{eq:shortcut_b}}{=} \sum_{\request \in \supp{\vec r^t}}\sum_{k=1}^{\modelsNo_{\request}-1}   \left(\smallCost^{k+1}_{\request} - \smallCost^{k}_{\request}\right) \left(r^t_\rho - \Auxalloc^k_\request(\requestBatchVec_t,\loadVec_t, \repoVec)\right) \left(1 - \prod_{k'=1}^{k} \left(1 - \auxalloc_{
    % \request, k'}(\loadVec_t, \allocVecFrac) / r^t_\rho \right) \right) \\
    &\stackrel{\eqref{eq:gain-surrogate}}{=} \El(\requestBatchVec_t,\loadVec_t, \allocVecFrac), \forall \allocVecFrac\in\allocSet \cup\allocSetFrac.
\end{align}

Inequalities in Eq.~\eqref{eq:ineq_lo} and Eq.~\eqref{eq:ineq_up} follow from Eq.~\eqref{eq:lower_bound_piece} and Eq.~\eqref{eq:upper_bound_piece}, respectively,  by replacing $c = r^t_\rho$, $q_{k'} = \auxload^{k'}_{\request}(\loadVec_t)$, $x_{k'} = \auxalloc_{
    \request, k'}(\loadVec_t, \allocVecFrac) / \auxload^{k'}_{\request}(\loadVec_t)$, and  $n = k$.
\end{proof}

\begin{lemma}
\label{lemma:depround}
Let the allocation $\vec x^v$ be the random output of { \DepRound} on node $v\in\vertices$ given the fractional allocation $\vec y^v \in \allocSetFrac^v$.
%, $\forall m \in \modelSet$. 
For any subset of the model catalog $S \subset \modelSet$ and any number $c_m \in [0,1], \forall m \in S$, { \DepRound} satisfies the following:
\begin{align}
      \mathbb{E}\left[ \prod_{m \in S} (1 - \alloc^v_{m} c_m)\right] \leq \prod_{m \in S} (1 - \allocFrac^v_{m} c_m).
     \label{eq:sampling_upperbound}
 \end{align}
\end{lemma}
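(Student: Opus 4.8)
The plan is to treat \DepRound{} as a sequence of elementary rounding moves, each touching at most two coordinates, and to show that the desired inequality~\eqref{eq:sampling_upperbound} is a supermartingale property preserved move by move. Let $\vec y^v = \vec Y^{(0)}, \vec Y^{(1)}, \dots, \vec Y^{(\tau)} = \vec x^v$ denote the (random) sequence of vectors produced by \DepRound{} on node $v$, where $\tau = \mathcal{O}(|\modelSet|)$. By construction every step but the last acts on some pair of coordinates $(i,j)$, leaves all other coordinates fixed, keeps $\vec Y^{(t+1)}\in[0,1]^{\modelSet}$ (feasibility is maintained throughout), and, conditioned on $\vec Y^{(t)}$, replaces $(Y^{(t)}_i,Y^{(t)}_j)$ by a two-valued pair with unchanged marginal means, $\mathbb{E}[Y^{(t+1)}_i\mid\vec Y^{(t)}]=Y^{(t)}_i$ and $\mathbb{E}[Y^{(t+1)}_j\mid\vec Y^{(t)}]=Y^{(t)}_j$; the final step (when a residual fractional variable is left) rounds a single coordinate by an unbiased coin flip.

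First I would isolate the following elementary fact: if $(U,V)$ is a random pair with $U,V\in[0,1]$, $\mathbb{E}[U]=u$, $\mathbb{E}[V]=v$ and $\mathrm{Cov}(U,V)\le 0$, then for all $c_1,c_2\in[0,1]$
\[
\mathbb{E}\!\left[(1-c_1 U)(1-c_2 V)\right]=1-c_1 u-c_2 v+c_1 c_2\,\mathbb{E}[UV]\le 1-c_1 u-c_2 v+c_1 c_2\,uv=(1-c_1 u)(1-c_2 v),
\]
using only $c_1 c_2\ge 0$ and $\mathbb{E}[UV]=uv+\mathrm{Cov}(U,V)$. Then I would check that a single \DepRound{} pair-move has non-positive conditional covariance. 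Writing the move (conditioned on $\vec Y^{(t)}$) as $(Y^{(t)}_i,Y^{(t)}_j)\mapsto(Y^{(t)}_i-a,\;Y^{(t)}_j+b)$ with probability $\alpha$ and $(Y^{(t)}_i,Y^{(t)}_j)\mapsto(Y^{(t)}_i+a',\;Y^{(t)}_j-b')$ with probability $1-\alpha$ — the two branches move in opposite directions because the weighted sum $s^v_iY_i+s^v_jY_j$ is preserved and $s^v_i,s^v_j>0$, so $a,b,a',b'\ge 0$ — mean preservation forces $\alpha a=(1-\alpha)a'$ and $\alpha b=(1-\alpha)b'$, whence a direct expansion gives
\[
\mathbb{E}\!\left[Y^{(t+1)}_i Y^{(t+1)}_j\mid\vec Y^{(t)}\right]=Y^{(t)}_i Y^{(t)}_j-\alpha ab-(1-\alpha)a'b'\le Y^{(t)}_i Y^{(t)}_j.
\]
For the residual single-coordinate step the analogous statement is just the identity $\mathbb{E}[1-c_{\bar m}Y^{(t+1)}_{\bar m}\mid\vec Y^{(t)}]=1-c_{\bar m}Y^{(t)}_{\bar m}$.

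Combining these, conditioned on $\vec Y^{(t)}$ and on the pair $(i,j)$ selected at step $t$, the untouched factors $\prod_{m\in S\setminus\{i,j\}}(1-c_m Y^{(t)}_m)$ are constant and non-negative (each $c_m Y^{(t)}_m\in[0,1]$), so multiplying the elementary fact by this constant yields $\mathbb{E}[\prod_{m\in S}(1-c_m Y^{(t+1)}_m)\mid\vec Y^{(t)}]\le\prod_{m\in S}(1-c_m Y^{(t)}_m)$; the same bound holds trivially when $|\{i,j\}\cap S|<2$. Taking expectation over the choice of $(i,j)$ and iterating from $t=0$ to $t=\tau$ through the tower property, with $\vec Y^{(0)}=\vec y^v$ and $\vec Y^{(\tau)}=\vec x^v$, gives exactly~\eqref{eq:sampling_upperbound}. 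I expect the single-step covariance computation to be the technical crux — in particular the observation that the mean-preservation constraints are precisely what makes the cross term $\alpha ab+(1-\alpha)a'b'$ enter with a negative sign — while the remaining ingredients (feasibility of intermediate states, non-negativity of the untouched factors, the martingale recursion) are routine.
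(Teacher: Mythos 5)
Your argument is correct and rests on the same key mechanism as the paper's proof: negative correlation of the two coordinates touched by each \Simplify{} step, combined with preservation of the marginal means, gives the two-factor bound $\mathbb{E}[(1-c_iU)(1-c_jV)]\le(1-c_iu)(1-c_jv)$ for $c_i,c_j\in[0,1]$, which is then propagated over all rounding steps to an arbitrary subset $S$. The difference is one of self-containedness rather than of route: the paper imports the pairwise property $\mathbb{E}[x_m x_{m'}]\le y_m y_{m'}$ (property (B3) of \cite{byrka2014improved}) and appeals to ``induction as in Lemma 2.2'' of that reference, whereas you re-derive the one-step covariance bound directly from the two-branch structure of the move --- mean preservation forcing $\alpha a=(1-\alpha)a'$ and $\alpha b=(1-\alpha)b'$, so the cross term enters as $-\alpha ab-(1-\alpha)a'b'\le 0$ --- and you spell out the extension to $S$ as an explicit supermartingale/tower-property iteration, including the cases where the selected pair meets $S$ in fewer than two coordinates and the final unbiased rounding of the residual fractional variable. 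Your version is therefore fully self-contained (it does not rely on the external lemmas beyond the structural description of \DepRound{}) and makes the conditioning on the state-dependent pair selection explicit, at the cost of being somewhat longer; the paper's version is shorter by delegating both the negative-correlation property and the induction to the cited reference.
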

\begin{proof}

\DepRound{} uses a subroutine \Simplify, which, given input variables $y_m, y_{m'} \in (0,1)$, outputs $x_m, x_{m'} \in [0,1]$ with at least one of them being integral ($0$ or $1$).  Note that the input to \Simplify{} is never integral since it is only called on fractional and yet unrounded variables. 
The property (B3) in~\cite[Lemma 2.1]{byrka2014improved} implies that the output variables $x_m$ and $x_{m'}$ satisfy the following inequality:
%obtained from $y_m$ and $y_{m'}$ returned by the \Simplify{} (elementary rounding subroutine of \DepRound) satisfy for $x_m, x_{m'} \in \{0,1\}$ and $y_m, y_{m'} \in [0,1]$  $\forall m,m' \in S$:
\begin{align}
    \mathbb{E}[x_m x_{m'}] &\leq y_m y_{m'}.
\end{align}
We have for any $c_m, c_{m'} \in [0,1]$:
\begin{align}
    \mathbb{E}[(1-x_m c_m) (1 - x_{m'} c_{m'})] &=\mathbb{E}[1 - x_{m} c_m - x_{m'}c_{m'} + x_{m} x_{m'} c_{m} c_{m'}] \label{eq:eq-1} \\
    &= 1 - y_{m} c_{m} - y_{m'} c_{m'} + \mathbb{E}[x_{m} x_{m'}] c_{m} c_{m'}  \label{eq:eq-2} \\
     &\leq1 - y_{m} c_{m} - y_{m'} c_{m'} +  y_{m} y_{m'} c_{m} c_{m'}   \nonumber \\
    &=(1-y_{m} c_{m}) (1 - y_{m'} c_{m'}) .\label{eq:expectation_dependent_vars}
\end{align}
where the second equality is obtained recalling that, by construction, $\mathbb{E}[\alloc^v_m]=y_m,\forall m\in\mathcal{M}$ (Sec.~\ref{subsec:rounding}).

Thus, the two fractional inputs $y_{m}$ and $y_{m'}$ to the \Simplify{} subroutine, return $x_{m}$ and $x_{m'}$ satisfying the property \eqref{eq:expectation_dependent_vars}. By induction as in the proof in~\cite[Lemma 2.2]{byrka2014improved}, we obtain for any $S \subset \modelSet$:
\begin{align}
  \mathbb{E}\left[ \prod_{m \in S} (1 - x_{m} c_{m})\right] \leq \prod_{m \in S} (1 - y_{m}  c_{m}).
\end{align}
Note that the above property is satisfied with equality if the components of  $\vec x \in \{0,1\}^{|\modelSet|}$ are sampled independently with  $\mathbb{E}[x_{m}] = y_{m}$. 
%Independent sampling satisfies the above property with equality for $x \in \{0,1\}^{|\modelSet|}$, since each $\mathbb{E}[x_{m}] = y_{m}$ and the random variables $x_{m}, x_{m'}$ are independent for $m \neq m' $.
\end{proof}

\begin{lemma}
\label{lemma:el_last}
Let the allocation $\vec x^v$ be the random output of { \DepRound} on node $v\in\vertices$ given the fractional allocation $\vec y^v \in \allocSetFrac^v$.  The following holds
\begin{align}
    \mathbb{E}\left[\El(\requestBatchVec_t,\loadVec_t, \allocVec_t)\right] \geq \El(\requestBatchVec_t,\loadVec_t, \allocVecFrac_t).
    \label{eq:El_lowerbound}
\end{align}
\end{lemma}
\begin{proof}
{
For any $k'$, assume $m\in\modelSet$ and $v\in\vertices$ are such that $\kappa_{\request} (v, m)=k'$ (Sec.~\ref{subsec:serving}). }Since $\auxalloc^{k'}_{  \request}(\loadVec_t, \allocVecFrac) = \allocFrac^{v}_{m} \load^{t,v}_{\request,m}$ for a given $(m,v) \in \modelSet \times \vertices$, then $\auxalloc_{\request, k'}(\loadVec_t, \allocVecFrac) / r^t_\rho$ can be written as:
\begin{align}
    \auxalloc_{
    \request, k'}(\loadVec_t, \allocVecFrac) / r^t_\rho = \frac{\load^{t,v}_{\request,m}}{r^t_\rho} y_m^v.
    \label{eq:simplified_surrogate}
\end{align}
where $\frac{\load^{t,v}_{\request,m}}{r^t_\rho}$ is a constant in $[0,1]$ ($\load^{t,v}_{\request,m} \leq \min\{r^t_\rho, L_m^v\}$ - see Sec.~\ref{subsec:load}) {that scales} variable $\allocFrac^{v}_{m}$
% We have the following
% \begin{align}
%     \left(r^t_\rho - \Auxalloc^k_\request(\requestBatchVec_t,\loadVec_t, \repoVec)\right)  = \mathds{1}_{\{\Auxalloc^k_\request(\requestBatchVec_t,\loadVec_t, \repoVec) = 0\}} r^t_\rho.
%     \label{eq:shortcut_b}
% \end{align}
% Recall from Eq.~\eqref{eq:simplified_surrogate} that $\auxalloc_{
%     \request, k'}(\loadVec_t, \allocVecFrac_t)=l^{t,v}_{\rho,m} \allocFrac^v_{t,m}$ for a given $(v,m)$ such that $\kappa_\rho(v,m) = k'$, and 
%     $l^{t,v}_{\rho,m} \leq r^t_\rho$
;therefore, by applying Lemma~\ref{lemma:depround}, we obtain the following upper bound on the bounding function. Consider for all $v\in \vertices$ and $t \in [T]$ that $\vec{x}^v_t$ is the random allocation obtained by running \DepRound{} on the fractional allocation $\vec{y}^v_t$, then

\begin{align}
    \mathbb{E}\left[\El(\requestBatchVec_t,\loadVec_t, \allocVec_t)\right] & \stackrel{\eqref{eq:gain-surrogate}}{=}
     \mathbb{E} \left[\sum_{\request \in \supp{\vec r^t}}\sum_{k=1}^{\modelsNo_{\request}-1}   \left(\smallCost^{k+1}_{\request} - \smallCost^{k}_{\request}\right)
     \requestBatch^t_\request 
     \left(1 - \prod_{k'=1}^{k} \left(1 - \auxalloc_{
    \request, k'}(\loadVec_t, \allocVec_t) / r^t_\rho \right) \right) \mathds{1}_{\{\Auxalloc^k_\request(\requestBatchVec_t,\loadVec_t, \repoVec)  = 0\}}\right] \\
    &= \sum_{\request \in \supp{\vec r^t}}\sum_{k=1}^{\modelsNo_{\request}-1}   \left(\smallCost^{k+1}_{\request} - \smallCost^{k}_{\request}\right)
     \requestBatch^t_\request 
     \left(1 - \mathbb{E} \left[\prod_{k'=1}^{k} \left(1 - \auxalloc_{
    \request, k'}(\loadVec_t, \allocVec_t) / r^t_\rho \right)\right] \right) \mathds{1}_{\{\Auxalloc^k_\request(\requestBatchVec_t,\loadVec_t, \repoVec)  = 0\}} \\
    &\geq \sum_{\request \in \supp{\vec r^t}}\sum_{k=1}^{\modelsNo_{\request}-1}   \left(\smallCost^{k+1}_{\request} - \smallCost^{k}_{\request}\right)
     \requestBatch^t_\request 
     \left(1 - \prod_{k'=1}^{k} \left(1 - \auxalloc_{
    \request, k'}(\loadVec_t, \allocVecFrac_t) / r^t_\rho \right)\right) \mathds{1}_{\{\Auxalloc^k_\request(\requestBatchVec_t,\loadVec_t, \repoVec)  = 0\}}\\
    &= \El(\requestBatchVec_t,\loadVec_t, \allocVecFrac_t).
\end{align}
The equality is obtained using the linearity of the expectation, and the inequality is obtained by applying directly Lemma~\ref{lemma:depround}.
\end{proof}
\section{Proof of Theorem \ref{th:regret_bound}}
\label{appendix:regret_proof}

\begin{proof}To prove the $\psi$-regret guarantee: \emph{(i)} we first  establish an upper bound on the regret of the \AlgoName{} policy over its fractional allocations domain $\allocSetFrac$ against a fractional optimum, then \emph{(ii)} we 
%then 
use it to derive a corresponding $\psi$-regret guarantee over the integral allocations domain $\allocSet$. % in expectation.
%guarantee is transferred to $\psi$-regret guarantee over the integral allocations domain $\allocSet$ in expectation.

\noindent\textbf{Fractional domain regret guarantee.} To establish the regret guarantee of running Algorithm~\ref{algo:idn} at the level of each computing node $v \in \vertices$, we showed that the following properties hold:
\begin{enumerate}
    \item The function $G$ is concave over its domain $\allocSetFrac$ (Lemma~\ref{lemma:concavity}).
    \item  The mirror map $\Phi:\mathcal{D} \to \mathbb{R}$ is $\theta$-strongly convex w.r.t. the norm $\norm{\,\cdot\,}_{l_1(\vec{s})}$ over $\allocSetFrac \cap \mathcal{D}$, where $\theta$ is equal to Eq.~\eqref{eq:theta} (Lemma~\ref{lemma:strong_convexity}).
    \item The gain function $G:\mathcal{Y} \to \mathbb{R}$ is $\sigma$-Lipchitz w.r.t $\norm{\,\cdot\,}_{l_1(\vec s)}$: the subgradients are bounded under the norm $\norm{\cdot}_{l_\infty (\frac{1}{\vec{s}})}$  by $\sigma$, i.e., the subgradient of $G (\vec r_t, \vec l_t, \vec y)$ at point $\vec y_t \in \allocSetFrac$ is upper bounded ($\norm{\vec{g}_t}_{l_\infty (\frac{1}{\vec{s}})} \leq \sigma$) for any $(\vec r_t, \vec l_t) \in \advSet$  (Lemma~\ref{lemma:subgradient_bound}).
    
    \item $\norm{\,\cdot\,}_{l_\infty(\frac{1}{\vec{s}})}$ is the dual norm of $\norm{\,\cdot\,}_{l_1(\vec{s})}$ (Lemma~\ref{lemma:dual_norm}).
    \item The Bregman divergence $D_\Phi(\allocVecFrac_*, \allocVecFrac_1)$ in Eq.~\eqref{def:bregman_global_mirrormap} is upper bounded by a constant $D_{\max}$ where $\allocVecFrac_* = {\arg\max}_{\allocVecFrac \in \allocSetFrac} \sum^T_{t=1} \systemGain(\requestBatchVec_t,\loadVec_t, \allocVecFrac)${ and $\allocVecFrac_1 = \underset{\allocVecFrac  \in \allocSetFrac \cap \mathcal{D}}{\arg\min} \,\Phi(\allocVecFrac)$ is the initial allocation} (Lemma~\ref{lemma:bregman_bound}).
\end{enumerate}

% By taking $L_{\max} \triangleq \max\{L^v_m, \forall (v,m) \in \vertices \times \modelSet\}$, then the set of all possible potential capacities\footnote{To simplify the analysis w.l.g. we allowed the adversary to select potential capacities from the larger set $\loadSet$. The constraints $\sum_{\request \in \requestSet^i} l^v_{\request, m } \leq  L^v_{m}$ are ignored, and replaced by the simplified constraints $l^v_{\request, m } \leq L_{\max}$.}  denoted by $\loadSet$ can be relaxed as
% \begin{align}
%     \mathcal{L} \triangleq \left\{ \vec{l} \in \naturals^{\bigcup_{i \in \taskcatalog} \requestSet_i \times \modelSet_i \times \vertices} :  l^v_{\request, m } \leq L_{\max} \right\}.
% \end{align}

% Furthermore, we assume that the system receives a finite number of requests for any time slot (see Eq.~\eqref{eq:repo_feasibility_constraint}); therefore, the set of all possible request batches is defined as 

% \begin{align}
%     \mathcal{B} \triangleq \left\{ \vec{r} \in \naturals^\requestSet: \vec{r}~\mathrm{satisfies~Eq.~\eqref{eq:repo_feasibility_constraint}}\right\}.
% \end{align}

Because of properties 1--5 above, the following bound holds for the regret of \AlgoName{} over its fractional domain $\allocSetFrac$ (vector field point of view of Mirror Descent in \cite[Sec.~4.2]{bubeck2015convexbook} combined with~\cite[Theorem 4.2]{bubeck2015convexbook}):
\begin{align}
\mathrm{Regret}_{T, \allocSetFrac} &= \underset{
{{\{(\requestBatchVec_t, \loadVec_t) \}_{t=1}^{T} \in \advSet^T}}
}{\sup}\hspace{-0.1em}
    \left\{ \sum^T_{t=1}    \systemGain(\requestBatchVec_t, \loadVec_t, \allocVecFrac_*){-}\sum^T_{t=1}    \systemGain(\requestBatchVec_t, \loadVec_t, \allocVecFrac_t) \right\}\hspace{-0.1em}
    \\
    &\leq \frac{D_\Phi(\allocVecFrac_*, \allocVecFrac_1)}{\eta} + \frac{\eta}{2\theta} \sum^T_{t=1} \norm{\vec{g}_t}^2_{l_\infty (\frac{1}{\vec{s}})} \leq \frac{D_{\max}}{\eta} + \frac{\eta \sigma^2 T}{2\theta}.
\end{align}
where $\eta$ is the learning rate of INFIDA (Algorithm~\ref{algo:idn}, line~\ref{ln:update}).
By selecting the learning rate $\eta=\frac{1}{\sigma}\sqrt{\frac{2\theta D_{\max}}{T}}$ giving the tightest upper bound we obtain 
\begin{align}
    \mathrm{Regret}_{T, \allocSetFrac} \leq \sigma \sqrt{\frac{2D_{\max}}{\theta} T}.
    \label{eq:fractional_regret_bound}
\end{align}

\noindent\textbf{Integral domain regret guarantee.} Note that, by restricting the maximization to the subset of integral allocations $\allocVec\in\allocSet$, the optimal allocation $\allocVec_* = {\arg\max}_{\allocVec \in \allocSet} \sum^T_{t=1} \systemGain(\requestBatchVec_t,\loadVec_t, \allocVec)$ can only lead to a lower gain, i.e.,
\begin{align}
   \sum^T_{t=1} \systemGain(\requestBatchVec_t,\loadVec_t,\allocVec_*)
  \leq
  \sum^T_{t=1} \systemGain(\requestBatchVec_t,\loadVec_t, \allocVecFrac_*). 
  \label{eq:optimality_in_convx}
\end{align}

By taking $\psi = 1 - \frac{1}{e}$, and using the bounding function $\El$ defined in Eq.~\eqref{eq:gain-surrogate}, with the expectation taken over the random choices of the policy (\DepRound{} at line 8 in Algorithm~\ref{algo:idn}) we obtain
\begin{align}
    \mathbb{E} \left[\sum^T_{t=1} \systemGain(\requestBatchVec_t,\loadVec_t,\allocVec_t)\right] 
    &\stackrel{\eqref{eq:gain_lower_upper_bound}}{\geq} \mathbb{E} \left[\sum^T_{t=1} \El(\requestBatchVec_t,\loadVec_t,\allocVec_t)\right] \nonumber 
    \stackrel{\eqref{eq:El_lowerbound}}{\geq} \sum^T_{t=1} \El(\requestBatchVec_t,\loadVec_t,\allocVecFrac_t)
    \stackrel{\eqref{eq:gain_lower_upper_bound}}{\geq} \psi \sum^T_{t=1} \systemGain(\requestBatchVec_t,\loadVec_t,\allocVecFrac_t) \nonumber \\
    &\stackrel{\eqref{eq:fractional_regret_bound}}{\geq} \psi \sum^T_{t=1} \systemGain(\requestBatchVec_t,\loadVec_t,\allocVecFrac_*) - \psi \sigma \sqrt{\frac{2D_{\max}}{\theta} T}
    \stackrel{\eqref{eq:optimality_in_convx}}{\geq} \psi \sum^T_{t=1} \systemGain(\requestBatchVec_t,\loadVec_t,\allocVec_*) - \psi \sigma \sqrt{\frac{2D_{\max}}{\theta} T}.
    \label{eq:proof_main_theorem}
\end{align}
Thus, we have
\begin{align}
   \psi \sum^T_{t=1} \systemGain(\requestBatchVec_t,\loadVec_t,\allocVec_*)  - \mathbb{E} \left[\sum^T_{t=1} \systemGain(\requestBatchVec_t,\loadVec_t,\allocVec_t)\right] \leq \psi \sigma \sqrt{\frac{2D_{\max}}{\theta} T}.
\end{align}
The above inequality holds for any sequence $\{(\requestBatchVec_t, \loadVec_t) \}_{t=1}^{T} \in \advSet^T$. Thus, the $\psi$-regret is given by
\begin{align}
    \psi \text{-} \mathrm{Regret}_{T, \allocSet} \stackrel{\eqref{eq:regret-definition}}{=}  \underset{
{{\{\requestBatchVec_t, \loadVec_t \}_{t=1}^{T} \in \advSet^T}}
}{\sup}\hspace{-0.1em} \left\{\psi \sum^T_{t=1} \systemGain(\requestBatchVec_t,\loadVec_t,\allocVec_*) - \mathbb{E} \left[\sum^T_{t=1} \systemGain(\requestBatchVec_t,\loadVec_t,\allocVec_t)\right] \right\} \leq A \sqrt{T},
\end{align}
where
$$A =\psi \sigma \sqrt{\frac{2D_{\max}}{\theta}} =  \psi   \frac{R L_{\max} \Delta_C}{s_{\min} } \sqrt{s_{\max} |\vertices| |\modelSet| }\sqrt{2 \sum_{v \in \vertices} \min\{\capacity^v,\norm{\vec{s}^v}_1\} \log\left(\frac{\norm{\vec{s}^v}_1}{\min\{\capacity^v,\norm{\vec{s}^v}_1\}}\right)  }, $$
using the upper bounds on $\theta$, $\sigma$, and $D_{\max}$ determined in Lemmas \ref{lemma:strong_convexity}, \ref{lemma:subgradient_bound}, and \ref{lemma:bregman_bound}, respectively.

This proves Theorem~\ref{th:regret_bound}.
\end{proof}

\section{Proof of Proposition~\ref{corollary:offline_solution}}
\label{proof:offline_solution}
\begin{proof}
Let $\bar{\allocVecFrac}$ be the average fractional allocation $\bar{\allocVecFrac} =\frac{1}{\tilde T} \sum^{\tilde T}_{t=1} \allocVecFrac_t$ of \textnormal{\AlgoName{}}, and $\bar{\allocVec}$ the random state sampled from $\bar{\allocVecFrac}$ using \textnormal{\DepRound{}}. We take $G_T(\allocVecFrac) = \frac{1}{{T}} \sum^{{T}}_{{t}=1} \systemGain(\requestBatchVec_{{t}},\loadVec_{{t}},{\allocVecFrac}), \forall \allocVecFrac \in \allocSetFrac$. We have
\begin{align}
\mathbb{E} \left[ G_T(\bar{\allocVec})\right] 
    &\stackrel{\eqref{eq:gain_lower_upper_bound}}{\geq} \mathbb{E} \left[\frac{1}{{T}}\sum^{{T}}_{{t}=1} \El(\requestBatchVec_{{t}},\loadVec_{{t}},\bar{\allocVec})\right]  
    \stackrel{\eqref{eq:El_lowerbound}}{\geq} \frac{1}{{T}}\sum^{{T}}_{{t}=1} \El(\requestBatchVec_{{t}},\loadVec_{{t}},\bar{\allocVecFrac})
    \stackrel{\eqref{eq:gain_lower_upper_bound}}{\geq} \psi G_T(\bar{\allocVecFrac}).
    \label{eq:p1_corollary}
\end{align}
Using Jensen's inequality we get 
\begin{align}
  \psi G_T(\bar{\allocVecFrac})  \geq  \psi \frac{1}{\tilde T} \sum^{\tilde T}_{t=1} G_T(\allocVecFrac_t).
    \label{eq:p2_corollary}
\end{align}
\end{proof}
It straightforward to check that $G_T$ satisfies the same properties 1 (concavity) and 3 (subgradient boundedness) as $G$ and the remaining properties are preserved under the same mirror map and convex decision set. With properties 1--5 satisfied, we can apply \cite[Theorem 4.2]{bubeck2015convexbook} to obtain 
\begin{align}
  \sum^{\tilde T}_{t=1}  G_T(\allocVecFrac_*)  -   \sum^{\tilde T}_{t=1} G_T(\allocVecFrac_t) =  \tilde T G_T(\allocVecFrac_*)  -   \sum^{\tilde T}_{t=1} G_T(\allocVecFrac_t)  \leq   \sigma \sqrt{\frac{2D_{\max}}{\theta} \tilde T}.
\end{align}
Dividing both sides of the above inequality by $\tilde T$ gives
\begin{align}
  \frac{1}{\tilde T} \sum^{\tilde T}_{t=1} G_T(\allocVecFrac_t)  \geq G_T(\allocVecFrac_*)-\sigma \sqrt{\frac{2D_{\max}}{\theta \tilde T} }.
\end{align}
Using the same argument to obtain Eq.~\eqref{eq:optimality_in_convx}, i.e., restricting the maximization to the integral domain gives a lower value, we get 
\begin{align}
  \frac{1}{\tilde T} \sum^{\tilde T}_{t=1} G_T(\allocVecFrac_t)  \geq G_T(\allocVec_*)-\sigma \sqrt{\frac{2D_{\max}}{\theta \tilde T} }.
  \label{eq:new_part_bound_corollary}
\end{align}
Using Eq.~\eqref{eq:p1_corollary}, and Eq.~\eqref{eq:new_part_bound_corollary} we obtain
\begin{align}
   \mathbb{E} \left[ G_T(\bar{\allocVec})\right]  \geq  \psi G_T(\allocVec_*) - \psi \sigma \sqrt{\frac{2D}{\theta \tilde T} }.
\end{align}
Thus, $\forall \epsilon > 0$ and over a sufficiently large running time $\tilde T$ for \AlgoName{}, $\bar{\allocVec}$ satisfies %the following:
\begin{align}
\mathbb{E} \left[ G_T(\bar{\allocVec}) \right] \geq \left(1-\frac{1}{e} - \epsilon\right) G_T({\allocVec}_*) .
\end{align}

\fi

% that's all folks
\end{document}